\newif\iflongversion
\def\leftrule{L}%
\def\rightrule{R}%
\newcommand{\bebecomes}{\mathrel{::=}}
\newcommand{\alternative}{~|~}
\newcommand{\fvarA}{\phi}
\newcommand{\fvarB}{\psi}
\newcommand{\rfvar}{P}
\newcommand{\relaxineq}[1]{#1_{\scriptscriptstyle{>}}^{\scriptscriptstyle{\geq}}}
\newcommand{\strictineq}[1]{#1_{\scriptscriptstyle{\geq}}^{\scriptscriptstyle{>}}}
\newcommand{\invvar}{I}
\newcommand{\rcfvar}{C}
\newcommand{\rgvar}{G}
\newcommand{\rrfvar}{R}
\newcommand{\rsfvar}{S}
\newcommand{\rsfvarhat}{\widetilde{S}}
\newcommand{\rtfvar}{T}
\newsavebox{\Lightningval}%
\sbox{\Lightningval}{\mbox{\lightning}}
\newsavebox{\Rval}%
\sbox{\Rval}{$\scriptstyle\mathbb{R}$}
  \newdimen\linferenceRulehskipamount%
  \newdimen\lcalculuscollectionvskipamount%
\definecolor{vblue}{rgb}{.1,.15,.62}
\definecolor{vgreen}{rgb}{.1,.5,0}
\definecolor{vgray}{rgb}{.35,.35,.35}
\definecolor{vred}{rgb}{.7,0,0}
\renewcommand{\axkey}[1]{#1}
\renewcommand*{\lie}[3][]
{\mathcal{L}_{#2}^{\ifthenelse{\equal{#1}{}}{}{^{\left(#1\right)}}}(#3)}
\renewcommand*{\lied}[3][]{\overset{\bm .}{#3}\ifthenelse{\equal{#1}{}}{}{{}^{(#1)}}}
\renewcommand{\siglied}[3][]{\overset{\bm .}{#3}{}^{\Dostar{#1}}}
\renewcommand{\Dostar}[1]{\ifthenelse{\equal{#1}{}}{(*)}{-(*)}}
\newcommand{\sigliedsai}[3][]{\siglied[#1]{#2}{#3}}
\newcommand{\cmp}{\succcurlyeq}
\newcommand{\pmc}{\preccurlyeq}
\newcommand{\bdr}[1]{\partial #1}
\renewcommand{\interior}[1]{{\mathring{#1}}}
\newcommand{\timeset}{\mathbb{T}}
\newcommand{\rref}[2][]{\prettyref{#2}}
\newcommand{\prooflink}[2]{\hyperlink{#1}{Proof} in~\rref{#2}}
\newenvironment{proofsketcha}[2][ ]{\proof[Proof Sketch (\prooflink{#2}{#1})]}{\endproof}
\newenvironment{proofsketchb}[2][ ]{\proof[\prooflink{#2}{#1}]}{\endproof}
\newtheorem{theorem}{Theorem}
\newtheorem{lemma}[theorem]{Lemma}
\newtheorem{proposition}[theorem]{Proposition}
\newtheorem{corollary}[theorem]{Corollary}
\newtheorem{conjecture}[theorem]{Conjecture}
\theoremstyle{remark}
\newtheorem{remark}{Remark}
\newtheorem{example}{Example}
\newtheorem{counterexample}[example]{Counterexample}{\itshape}{}
\definecolor{highlightred}{rgb}{.7, 0.0, 0.0}
\newcommand{\highlight}[1]{\textcolor{vblue}{\uline{#1}}}
\newcommand{\bluec}[1]{\textcolor{vblue}{#1}}
\newcommand{\greenc}[1]{\textcolor{vgreen}{#1}}
\newcommand{\redc}[1]{\textcolor{vred}{#1}}
\newcommand{\highlight}[1]{\textcolor{vblue}{\underline{#1}}}
\newcommand{\ptermA}{p}
\newcommand{\ptermB}{q}
\newcommand{\cofterm}{g}
\renewcommand{\allvars}{\mathbb{V}}
\newcommand{\States}{\mathbb{S}}
\newcommand{\I}{\dLint[const=I,state=\omega]}
\newcommand{\solvar}{\boldsymbol\upvarphi}
\newcommand{\timevar}{t}
\newcommand{\boundedf}{B}
\newcommand{\constt}[1]{#1()}
\newcommand{\exlinear}{\ensuremath{\alpha_l}}
\newcommand{\exnonlinear}{\ensuremath{\alpha_n}}
\newcommand{\exblowup}{\ensuremath{\alpha_b}}
\newcommand{\lnorm}[1]{{{\norm{#1}}_{\infty}}}
\newcommand{\dotp}[2]{#1 \stimes #2}
\newcommand{\bigchi}{\ensuremath{\mathcal{X}}}
\newcommand{\footnotesizeoff}{}%
\newcommand{\Vast}{\bBigg@{4.8}}
\newcommand{\oset}[3][0ex]{%
  \mathrel{\mathop{#3}\limits^{
    \vbox to#1{\kern-2\ex@
    \hbox{$\scriptstyle#2$}\vss}}}}
\newcommand{\osetf}[2]{\oset[1.6ex]{#1}{#2}}
\newcommand{\openset}{\mathcal{O}}
\newcommand*{\lrefine}{\longrightarrow}
\newcommand*{\limprefinechain}[1]{\osetf{#1}{\lrefine}}
\newcommand*\circled[1]{\tikz[baseline=(char.base)]{
            \node[shape=circle,draw,inner sep=1pt] (char) {#1};}}
\newcommand{\bform}{\circled{B}\xspace}
\newcommand{\dform}{\circled{D}\xspace}
\begin{document}

\title{An Axiomatic Approach to Existence and Liveness for Differential Equations}

\author{Yong Kiam Tan \and
Andr\'e Platzer
\thanks{
  Computer Science Department, Carnegie Mellon University, Pittsburgh, USA
  {\{yongkiat$|$aplatzer\}@cs.cmu.edu}
  }
}
\date{}

\maketitle
\thispagestyle{empty}

\begin{abstract}
This article presents an axiomatic approach for deductive verification of existence and liveness for ordinary differential equations (ODEs) with differential dynamic logic (\dL).
The approach yields proofs that the solution of a given ODE exists long enough to reach a given target region without leaving a given evolution domain.
Numerous subtleties complicate the generalization of discrete liveness verification techniques, such as loop variants, to the continuous setting.
For example, ODE solutions may blow up in finite time or their progress towards the goal may converge to zero.
These subtleties are handled in \dL by successively refining ODE liveness properties using ODE invariance properties which have a complete axiomatization.
This approach is widely applicable: several liveness arguments from the literature are surveyed and derived as special instances of axiomatic refinement in \dL.
These derivations also correct several soundness errors in the surveyed literature, which further highlights the subtlety of ODE liveness reasoning and the utility of an axiomatic approach.
An important special case of this approach deduces (global) existence properties of ODEs, which are a fundamental part of every ODE liveness argument.
Thus, all generalizations of existence properties and their proofs immediately lead to corresponding generalizations of ODE liveness arguments.
Overall, the resulting library of common refinement steps enables both the sound development and justification of new ODE existence and of liveness proof rules from \dL axioms.
These insights are put into practice through an implementation of ODE liveness proofs in the \KeYmaeraX theorem prover for hybrid systems.
\end{abstract}

\textbf{Keywords:} {differential equations, liveness, global existence, differential dynamic logic}

\section{Introduction}
\label{sec:introduction}

Hybrid systems are mathematical models describing discrete and continuous dynamics, and interactions thereof.
This flexibility makes them natural models of cyber-physical systems (CPSs) which feature interactions between discrete computational control and continuous real world physics \cite{10.2307/j.ctt17kkb0d,Platzer18}.
Formal verification of hybrid systems is of significant practical interest because the CPSs they model frequently operate in safety-critical settings.
Verifying properties of the differential equations describing the continuous dynamics present in hybrid system models is a key aspect of any such endeavor.

This article focuses on deductive verification of \emph{existence} and \emph{liveness}\footnote{%
The form of ODE liveness considered in this article is in the sense of Owicki and Lamport~\cite{DBLP:journals/toplas/OwickiL82} for concurrent programs within their (linear) temporal logic.
Liveness for ODEs has sometimes been called \emph{eventuality}~\cite{DBLP:journals/siamco/PrajnaR07,DBLP:conf/fm/SogokonJ15} and \emph{reachability}~\cite{DBLP:conf/emsoft/TalyT10}.
To minimize ambiguity, this article refers to the property as \emph{liveness}, with a precise formal definition in~\rref{sec:background}.
Other advanced notions of liveness for ODEs are discussed in~\rref{sec:relatedwork}.} properties
for ordinary differential equations (ODEs), i.e., the question whether an ODE solution exists for long enough to reach a given region without leaving its domain of evolution.
Such questions can be phrased naturally in differential dynamic logic (\dL)~\cite{DBLP:journals/logcom/Platzer10,DBLP:conf/lics/Platzer12a,DBLP:journals/jar/Platzer17,Platzer18}, a logic for \emph{deductive verification} of hybrid systems whose relatively complete axiomatization \cite{DBLP:conf/lics/Platzer12b,DBLP:journals/jar/Platzer17} lifts ODE verification results to hybrid systems, and whose theorem prover, \KeYmaeraX~\cite{DBLP:conf/cade/FultonMQVP15}, enables an implementation.

For discrete systems, methods for proving liveness are well-known: loop variants show that discrete loops eventually reach a desired goal \cite{DBLP:books/sp/Harel79}, while temporal logic is used to specify and study liveness properties in concurrent and infinitary settings~\cite{DBLP:books/daglib/0077033,DBLP:journals/toplas/OwickiL82}.
However, the deduction of (continuous) ODE liveness properties is hampered by several difficulties:
\begin{inparaenum}[\it i)]
\item solutions of ODEs may converge towards a goal without ever reaching it,
\item solutions of nonlinear ODEs may blow up in finite time leaving insufficient time for the desired goal to be reached, and
\item the goal may be reachable but only by (illegally) leaving the evolution domain constraint.
\end{inparaenum}
In contrast, invariance properties for ODEs are better understood~\cite{DBLP:journals/logcom/Platzer10,DBLP:conf/tacas/GhorbalP14,DBLP:conf/emsoft/LiuZZ11} and have a complete \dL axiomatization~\cite{DBLP:journals/jacm/PlatzerT20}.
Motivated by the aforementioned difficulties, this article presents \dL axioms enabling systematic, step-by-step refinement of ODE liveness properties, where each step is justified using an ODE invariance property.
This refinement approach is a powerful framework for understanding ODE liveness arguments because it brings the full deductive power of \dL's ODE invariance proof rules to bear on liveness proofs.
Using invariance (or safety) properties to deduce liveness is a well-known proof technique for (discrete) concurrent systems~\cite{DBLP:books/daglib/0077033,DBLP:journals/toplas/OwickiL82}.
This article shows that those ideas work just as well in the continuous setting---as long as the aforementioned difficulties are appropriately handled.

To demonstrate the applicability of the approach, this article surveys several arguments from the literature and derives them all as (corrected) \dL proof rules, see \rref{tab:survey}.
This logical presentation has two key benefits:
\begin{itemize}
\item The proof rules are \emph{syntactically derived} from sound axioms of \dL, which guarantees their correctness.
Many of the surveyed arguments contain subtle soundness errors (\rref{tab:survey}, middle and right).
These errors do not diminish the surveyed work.
Rather, they emphasize the need for an axiomatic, uniform way of presenting and analyzing ODE liveness arguments instead of relying on ad hoc approaches.
\item The approach identifies common refinement steps that form a basis for the surveyed liveness arguments drawn from various applications (\rref{tab:survey}, left).
This library of building blocks enables sound development and justification of new ODE liveness proof rules, e.g., by generalizing individual refinement steps or by exploring different combinations of those steps.
Corollaries~\ref{cor:higherdv},~\ref{cor:boundedandcompact}, and~\ref{cor:combination} are examples of new ODE liveness proof rules that can be derived and justified from the uniform approach that this article follows.
\end{itemize}

\begin{table}
\centering
\caption{Surveyed ODE liveness arguments with \highlight{highlighting in blue} for soundness-critical corrections identified in this article. The applications (and corrections, if any) for each surveyed argument is briefly described here. The referenced corollaries are corresponding derived proof rules with details of the corrections.}
\label{tab:survey}

\begin{tabular}{@{}lllll@{}}
\toprule
Application & \multicolumn{2}{l}{Without Domain Constraints} & \multicolumn{2}{l}{With Domain Constraints} \\ \midrule
Hybrid systems verification~\cite{DBLP:journals/logcom/Platzer10}  & OK                         & (Cor.~\ref{cor:atomicdvcmp})  & \highlight{if open/closed, initially false} & (Cor.~\ref{cor:atomicdvcmpQ})   \\
Auto. ODE verification~\cite{DBLP:conf/hybrid/PrajnaR05,DBLP:journals/siamco/PrajnaR07} & \multicolumn{2}{l}{\cite[Rem. 3.6]{DBLP:journals/siamco/PrajnaR07} \highlight{is incorrect}} & \highlight{if conditions checked globally} & (Cor.~\ref{cor:prq})     \\
Finding basin of attraction~\cite{DBLP:journals/siamco/RatschanS10} &  \highlight{if compact} & (Cor.~\ref{cor:rs}) & \highlight{if  compact} & (Cor.~\ref{cor:rsq}) \\
Staging set ODE liveness proofs~\cite{DBLP:conf/fm/SogokonJ15}          & OK                      & (Cor.~\ref{cor:SP}) &  OK                     & (Cor.~\ref{cor:SPQ}) \\
Switching logic synthesis~\cite{DBLP:conf/emsoft/TalyT10}         &  \highlight{if global solutions} & (Cor.~\ref{cor:tt}) & \highlight{if global solutions} & (Cor.~\ref{cor:ttq}) \\ \bottomrule
\end{tabular}
\end{table}

This article extends the authors' earlier conference version~\cite{DBLP:conf/fm/TanP19}.
The key new insight is that all of the aforementioned liveness arguments (\rref{tab:survey}) are based on reducing liveness properties of ODEs to assumptions about sufficient existence duration for their solutions.
In fact, many of those arguments become significantly simpler (and sound) when the ODEs of concern are assumed to have global solutions, i.e., they do not blow up in finite time.
It is reasonable and commonplace to make such an assumption for the continuous dynamics in models of CPSs~\cite[Section 6]{10.2307/j.ctt17kkb0d}.
After all, mechanical systems do not simply cease to exist after a short time!
An example from control theory is Lyapunov stability which guarantees global solutions near stable equilibria~\cite[Definition 4.1]{MR1201326}\cite[Theorem 3.1]{MR2381711}.
Control systems are designed to always operate near stable equilibria and so always have global solutions.
Logically though, making an \emph{a priori} assumption of global existence for ODEs means that the correctness of any subsequent verification results for the ODEs and hybrid system models are conditional on an unproved existence duration hypothesis.
While global existence is known to hold for linear systems, even the simplest nonlinear ODEs (see \rref{sec:globexist}) fail to meet the hypothesis without further assumptions.
This article therefore adopts the view that (global) existence should be \emph{proved} rather than \emph{assumed} for the continuous dynamics in hybrid systems.

The new contributions of this article beyond the authors' earlier conference version \cite{DBLP:conf/fm/TanP19} are:
\begin{itemize}
\item \rref{sec:globexist} presents deductive \dL proofs of global existence for ODE solutions.
Together with the liveness proofs of Sections~\ref{sec:nodomconstraint} and~\ref{sec:withdomconstraint}, this enables \emph{unconditional} proofs of ODE liveness properties entirely within the uniform \dL refinement framework without existence presuppositions.

\item \rref{sec:impl} shows how to apply the insights from Sections~\ref{sec:globexist}--\ref{sec:withdomconstraint} in practice.
This includes:
\begin{inparaenum}[\it i)]
\item the design of proof rules that are practically useful and well-suited for implementation (\rref{subsec:complex}) and
\item the design of proof support to aid users in existence and liveness proofs (\rref{subsec:support}).
\end{inparaenum}
\end{itemize}

The practical insights of~\rref{sec:impl} are drawn from an implementation of ODE liveness proof rules in the \KeYmaeraX theorem prover for hybrid systems~\cite{DBLP:conf/cade/FultonMQVP15}.
The unconditional liveness proofs enabled by Sections~\ref{sec:globexist}--\ref{sec:withdomconstraint} fit particularly well with an implementation in \KeYmaeraX because axiomatic refinement closely mirrors \KeYmaeraX's design principles.
\KeYmaeraX implements \dL's uniform substitution calculus~\cite{DBLP:journals/jar/Platzer17} and it is designed to minimize the soundness-critical code that has to be trusted in order to guarantee its verification results.
On top of this soundness-critical core, \KeYmaeraX's non-soundness-critical tactics framework~\cite{DBLP:conf/itp/FultonMBP17} adds support and automation for proofs.
Liveness proofs are similarly based on a series of small refinement steps which are, in turn, implemented as (untrusted) tactics based on a small basis of derived refinement axioms.
More complicated liveness arguments, such as those from~\rref{tab:survey} or from new user insights, are implemented by piecing those tactics together using tactic combinators~\cite{DBLP:conf/itp/FultonMBP17}.
The implementation required minor changes to ${\approx}155$ lines of soundness-critical code in \KeYmaeraX, while the remaining ${\approx}1500$ lines implement ODE existence and liveness proof rules as tactics.
These additions suffice to prove all of the examples in this article and in ODE models elsewhere~\cite{DBLP:conf/fm/SogokonJ15,DBLP:journals/ral/BohrerTMSP19} (\rref{subsec:autoexamples}).

Throughout this article, core \dL axioms underlying the refinement approach are presented as lemmas, which are summarized and proved in~\rref{app:coreproofs}.
Existence and liveness proof rules that are derived syntactically from those axioms, e.g.,~\rref{tab:survey}, are listed as corollaries and their derivations are given in~\rref{app:derivationproofs}.
Counterexamples explaining the soundness errors in~\rref{tab:survey} are given in~\rref{app:counterexamples}.

\section{Background: Differential Dynamic Logic}
\label{sec:background}

This section reviews the syntax and semantics of \dL, focusing on its continuous fragment which has a complete axiomatization for ODE invariants~\cite{DBLP:journals/jacm/PlatzerT20}.
Full presentations of \dL, including its discrete fragment, are available elsewhere~\cite{DBLP:conf/lics/Platzer12a,DBLP:journals/jar/Platzer17,Platzer18}.

\subsection{Syntax}
\label{subsec:syntax}

The grammar of \dL terms is as follows, where $x \in \allvars$ is a variable and $c \in \rationals$ is a rational constant.
These terms are polynomials over the set of variables $\allvars$:
\[
	\ptermA,\ptermB~\bebecomes~x \alternative c \alternative \ptermA + \ptermB \alternative \ptermA \cdot \ptermB
\]

Notably, \dL also supports term language extensions~\cite{DBLP:journals/jacm/PlatzerT20} that would enable, e.g., the use of exponentials and trigonometric functions.
These extensions are compatible with the results presented in this article, but are omitted for simplicity as they are not the main focus of this article.

The grammar of \dL formulas is as follows, where $\sim {\in}~\{=,\neq,\geq,>,\leq,<\}$ is a comparison operator and $\alpha$ is a hybrid program:
\begin{align*}
  \fvarA,\fvarB~\bebecomes&~\overbrace{\ptermA \sim \ptermB \alternative \fvarA \land \fvarB \alternative \fvarA \lor \fvarB \alternative \lnot{\fvarA} \alternative \lforall{v}{\fvarA} \alternative \lexists{v}{\fvarA}}^{\text{First-order formulas of real arithmetic }\rfvar,\ivr} \alternative \dbox{\alpha}{\fvarA} \alternative\ddiamond{\alpha}{\fvarA}
\end{align*}

The notation $\ptermA \cmp \ptermB$ (resp.~$\pmc$) is used when the comparison operator can be either $\geq$ or $>$ (resp.~$\leq$ or $<$).
Other standard logical connectives, e.g., $\limply,\lbisubjunct$, are definable as in classical logic.
Formulas not containing the modalities $\dibox{\cdot},\didia{\cdot}$ are formulas of first-order real arithmetic and are written as $\rfvar,\ivr$.
The box ($\dbox{\alpha}{\fvarA}$) and diamond ($\ddiamond{\alpha}{\fvarA}$) modality formulas express dynamic properties of the hybrid program $\alpha$.
This article focuses on \emph{continuous} programs, where $\alpha$ is given by a system of ODEs $\pevolvein{\D{x}=\genDE{x}}{\ivr}$.
Here, $\D{x}=\genDE{x}$ is an $n$-dimensional system of differential equations, $\D{x_1}=f_1(x), \dots, \D{x_n}=f_n(x)$, over variables $x = (x_1,\dots,x_n)$, where the LHS $\D{x_i}$ is the time derivative of $x_i$ and the RHS $f_i(x)$ is a polynomial over variables $x$.
The ODEs $\D{x}=\genDE{x}$ are \emph{autonomous} as they do not depend explicitly on time on the RHS~\cite{MR1201326}.
A useful transformation is to add a clock variable $t$ to the system with $\D{x}=\genDE{x,t}, \D{t}=1$ if time dependency on the RHS is needed.
The domain constraint $\ivr$ specifies the set of states in which the ODE is allowed to evolve continuously.
When there is no domain constraint, i.e., $\ivr$ is the formula $\ltrue$, the ODE is also written as $\D{x}=\genDE{x}$.
For $n$-dimensional vectors $x,y$, the dot product is $\dotp{x}{y} \mdefeq \sum_{i=1}^{n}{x_i y_i} $ and $\norm{x}^2\mdefeq \sum_{i=1}^{n}{x_i^2}$ denotes the squared Euclidean norm.
Other norms are explicitly defined in this article when used.

\begin{figure}
\centering
\includegraphics[width=0.22\textwidth]{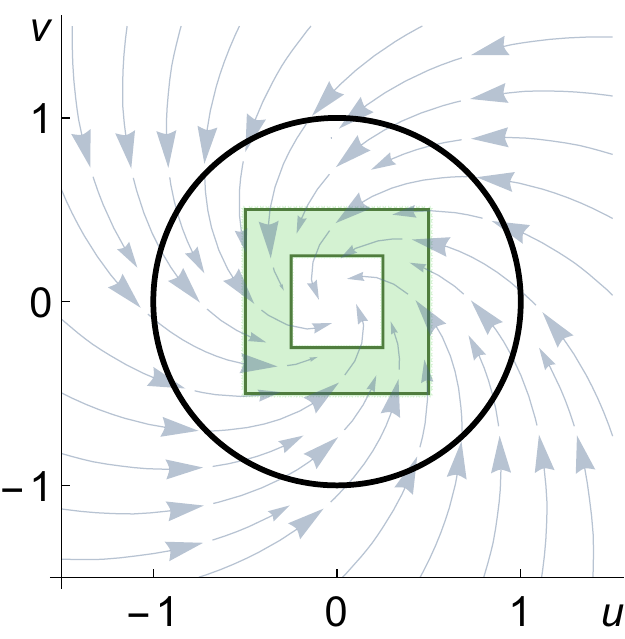}
\qquad\qquad
\includegraphics[width=0.22\textwidth]{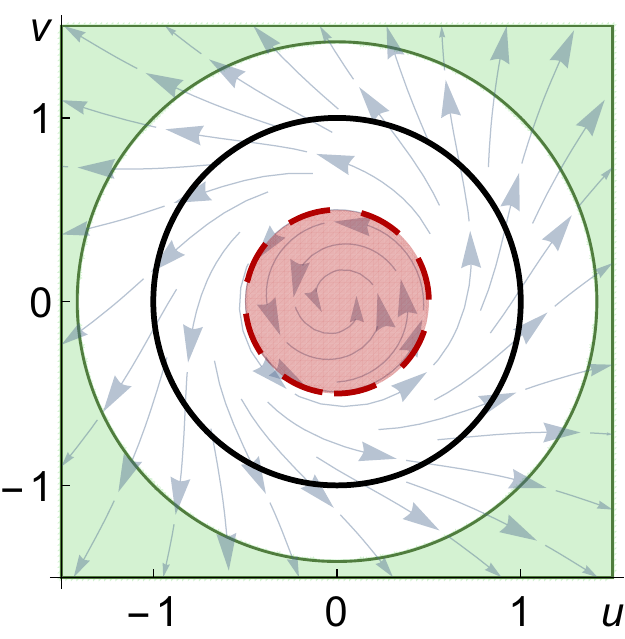}
\caption{Visualization of $\exlinear$ (left) and $\exnonlinear$ (right). Solutions of $\exlinear$ globally spiral towards the origin. In contrast, solutions of $\exnonlinear$ spiral inwards within the inner \redc{red} disk (dashed boundary), but spiral outwards otherwise. For both ODEs, solutions starting on the black unit circle eventually enter their respective shaded \greenc{green} goal regions.
The ODE $\exnonlinear$ also exhibits finite-time blow up of solutions outside the \redc{red} disk.}
\label{fig:odeexamples}
\end{figure}

\begin{figure}
\centering
\includegraphics[width=0.3\textwidth]{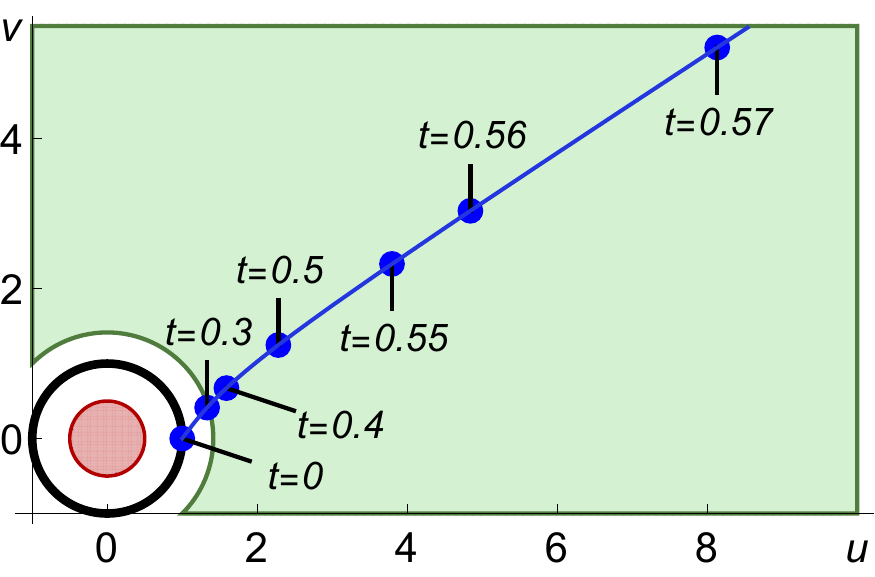}
\qquad\qquad
\includegraphics[width=0.45\textwidth]{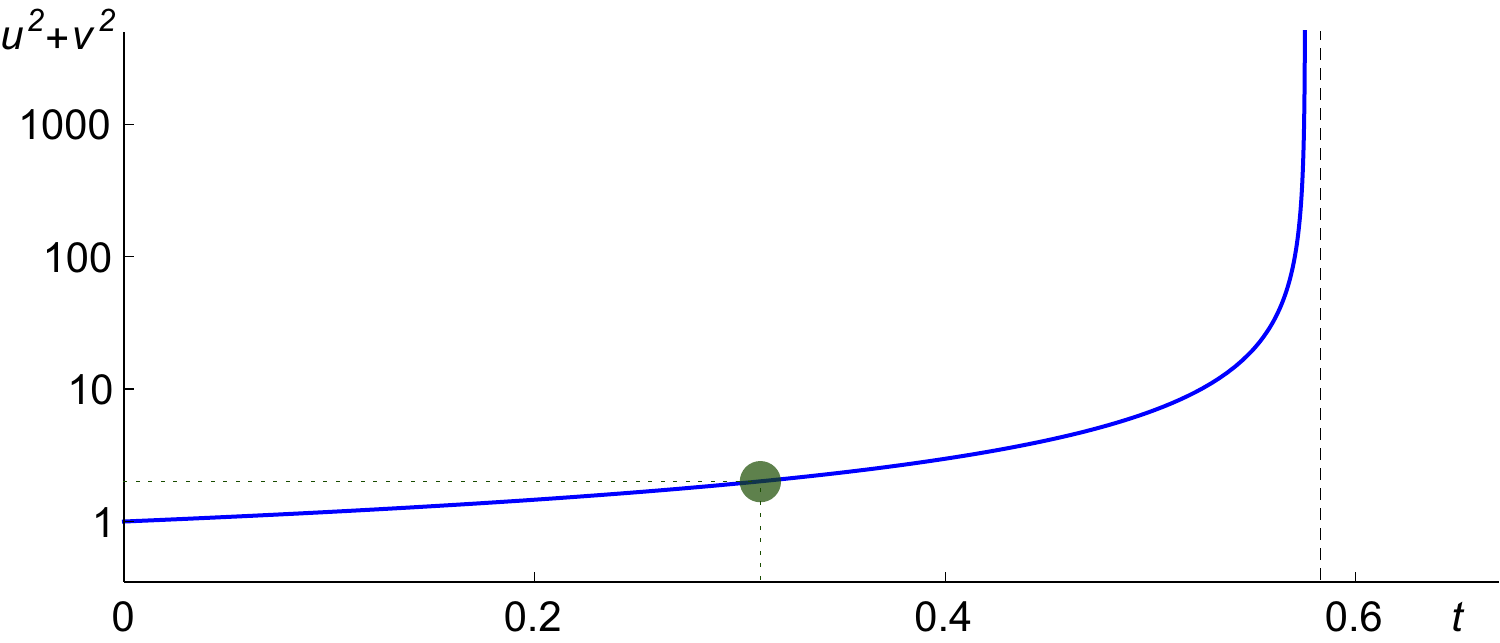}
\caption{Two views of the ODE $\exnonlinear$ evolving from initial state $u=1, v = 0$ over time $\timevar$.
The left plot shows its trajectory in the $u,v$ plane (cf.~\rref{fig:odeexamples}) while the right plot shows the squared Euclidean norm $u^2+v^2$ evolving over time $\timevar$ (with logarithmic scaling for the vertical axis).
The solution blows up in finite time with norm approaching $\infty$ as $t$ approaches $0.58$ (rounded up, black dashed asymptote).
Nevertheless, the solution reaches the \greenc{green} goal region $u^2 + v^2 \geq 2$ from~\rref{fig:odeexamples} at $t \approx 0.31$ (rounded up, \greenc{green} dot) before blowing up.}
\label{fig:odeblowup}
\end{figure}

The following two running example ODEs $\exlinear$ and $\exnonlinear$ are visualized in~\rref{fig:odeexamples} with directional arrows corresponding to their RHS evaluated at points on the plane:
\begin{align}
\exlinear    &\mnodefequiv \D{u}=-v-u,\D{v}=u-v \label{eq:exlinear} \\
\exnonlinear &\mnodefequiv \D{u}=-v-u(\frac{1}{4}-u^2-v^2),\D{v}=u-v(\frac{1}{4}-u^2-v^2) \label{eq:exnonlinear}
\end{align}

The ODE $\exlinear$ is \emph{linear} because its RHS depends linearly on $u$ and $v$ while $\exnonlinear$ is \emph{nonlinear} because of the cubic terms in its RHS.
The nonlinearity of $\exnonlinear$ results in more complex behavior for its solutions, e.g., the difference in spiraling behavior inside or outside the \redc{red} disk shown in~\rref{fig:odeexamples}.
In fact, solutions of $\exnonlinear$ blow up in finite time iff they start outside the disk characterized by $u^2+v^2 \leq \frac{1}{4}$, whereas finite-time blow up is impossible for linear ODEs like $\exlinear$~\cite{Chicone2006,Walter1998}.
An illustration of finite-time blow up for $\exnonlinear$ from an initial state outside the \redc{red} disk is shown in~\rref{fig:odeblowup}.
This phenomenon is precisely defined and investigated in~\rref{sec:globexist}, which enables formal proofs of the aforementioned (absence of) finite-time blow up.

When terms (or formulas) appear in contexts involving ODEs $\D{x}=\genDE{x}$, it is sometimes necessary to restrict the set of free variables they are allowed to mention.
In this article, these restrictions are always stated explicitly and are also indicated as arguments to terms (or formulas), e.g., $\ptermA()$ means the term $\ptermA$ does not mention any of $x_1,\dots,x_n$ as free variables, while $\rfvar(x)$ means the formula $\rfvar$ may mention all of them.
This understanding of variable dependencies is made precise using function and predicate symbols in \dL's uniform substitution calculus~\cite{DBLP:journals/jar/Platzer17}.

\subsection{Semantics}
\label{subsec:semantics}

States $\iget[state]{\I} : \allvars \to \reals$ assign real values to each variable in $\allvars$; the set of all states is written $\States$.
The semantics of polynomial term $\ptermA$ in state $\iget[state]{\I} \in \States$ is the real value $\ivaluation{\I}{\ptermA}$ of the corresponding polynomial function evaluated at $\iget[state]{\I}$.
The semantics of \dL formula $\fvarA$ is defined compositionally~\cite{DBLP:journals/jar/Platzer17,Platzer18} as the set of states $\imodel{\I}{\fvarA} \subseteq \States$ in which that formula is true.
The semantics of first-order logical connectives are defined as usual, e.g., $\imodel{\I}{\fvarA \land \fvarB} = \imodel{\I}{\fvarA} \cap \imodel{\I}{\fvarB}$.
For ODEs, the semantics of the modal operators is defined directly as follows.
Let $\iget[state]{\I} \in \States$ and $\solvar : [0, T) \to \States$ (for some $0<T\leq\infty$), be the unique solution maximally extended to the right~\cite{Chicone2006,Walter1998} for the ODE $\D{x}=\genDE{x}$ with initial value $\solvar(0)=\iget[state]{\I}$, then:
\begin{align*}
 \m{\imodels{\I}{\dbox{\pevolvein{\D{x}=\genDE{x}}{\ivr} }{\fvarA}}}~\text{iff}~&\text{for all}~0 \leq \tau < T~\text{where}~\solvar(\zeta)\,{\in}\,\imodel{\I}{\ivr}~\text{for all}~0 \leq \zeta \leq \tau\text{:}~ \solvar(\tau) \in \imodel{\I}{\fvarA} \\
 \m{\imodels{\I}{\ddiamond{\pevolvein{\D{x}=\genDE{x}}{\ivr}}{\fvarA}}}~\text{iff}~&\text{there exists}~0 \leq \tau < T~\text{such that}~
 \solvar(\zeta) \in \imodel{\I}{\ivr}~\text{for all}~0 \leq \zeta \leq \tau
 ~\text{and}~\solvar(\tau) \in \imodel{\I}{\fvarA}
\end{align*}

Informally, the \emph{safety} property $\dbox{\pevolvein{\D{x}=\genDE{x}}{\ivr}}{\fvarA}$ is true in initial state $\iget[state]{\I}$ if \emph{all} states reached by following the ODE from $\iget[state]{\I}$ while remaining in the domain constraint $\ivr$ satisfy postcondition $\fvarA$.
Dually, the \emph{liveness} property $\ddiamond{\pevolvein{\D{x}=\genDE{x}}{\ivr}}{\fvarA}$ is true in initial state $\iget[state]{\I}$ if \emph{some} state which satisfies the postcondition $\fvarA$ is eventually reached in \emph{finite} time by following the ODE from $\iget[state]{\I}$ while staying in domain constraint $\ivr$ at all times.
Figure~\ref{fig:odeexamples} suggests that formulas\footnote{$\lnorm{\cdot}$ denotes the supremum norm, with $\lnorm{x} \mnodefequiv \max_{i=1}^{n} \abs{x_i}$ for an $n$-dimensional vector $x$. The inequality $\lnorm{(u,v)} \leq \frac{1}{2}$ is expressible in first-order real arithmetic as $u^2 \leq \frac{1}{4} \land v^2 \leq \frac{1}{4}$. Similarly, $\frac{1}{4} \leq \lnorm{(u,v)}$ is expressible as $\frac{1}{16} \leq u^2 \lor \frac{1}{16}\leq v^2$.}
$\ddiamond{\exlinear}{\big(\frac{1}{4} \leq \lnorm{(u,v)} \leq \frac{1}{2}\big)}$ and
$\ddiamond{\exnonlinear}{u^2 + v^2 \geq 2}$ are true for initial states $\iget[state]{\I}$ on the unit circle.
These liveness properties are rigorously proved in Examples~\ref{ex:linproof} and~\ref{ex:nonlinproof} respectively.

Variables $y \in \allvars \setminus \{x\}$ not occurring on the LHS of ODE $\D{x}=\genDE{x}$ remain constant along solutions $\solvar : [0, T) \to \States$ of the ODE, with $\solvar(\tau)(y) = \solvar(0)(y)$ for all $\tau \in [0,T)$.
Since only the values of $x=(x_1,\dots,x_n)$ change along the solution $\solvar$, the solution may also be viewed geometrically as a trajectory in $\reals^n$, dependent on the initial values of the constant \emph{parameters} $y$.
Similarly, the values of terms and formulas depend only on the values of their free variables~\cite{DBLP:journals/jar/Platzer17}.
Thus, terms (or formulas) whose free variables are all parameters for $\D{x}=\genDE{x}$ also have provably constant (truth) values along solutions of the ODE.
For formulas $\fvarA$ that only mention free variables $x$, $\imodel{\I}{\fvarA}$ can also be viewed geometrically as a subset of $\reals^n$.
Such a formula is said to \emph{characterize} a (topologically) open (resp. closed, bounded, compact) set with respect to variables $x$ iff the set $\imodel{\I}{\fvarA} \subseteq \reals^n$ is topologically open (resp. closed, bounded, compact) with respect to the Euclidean topology.
These topological conditions are used as side conditions for some of the axioms and proof rules in this article.
In~\rref{app:topsidecalculus}, a more general definition of these side conditions is given for formulas $\fvarA$ that mention parameters $y$.
These side conditions are decidable~\cite{Bochnak1998} when $\fvarA$ is a formula of first-order real arithmetic and there are simple syntactic criteria for checking if they hold (\rref{app:topsidecalculus}).

Formula $\fvarA$ is valid iff $\imodel{\I}{\fvarA} = \States$, i.e., $\fvarA$ is true in all states.
If the formula \(\invvar \limply \dbox{\pevolvein{\D{x}=\genDE{x}}{\ivr}}{\invvar}\) is valid, the formula $\invvar$ is an \emph{invariant} of the ODE $\pevolvein{\D{x}=\genDE{x}}{\ivr}$.
Unfolding the semantics, this means that from any initial state $\iget[state]{\I}$ satisfying $\invvar$, all states reached by the solution of the ODE $\D{x}=\genDE{x}$ from $\iget[state]{\I}$ while staying in the domain constraint $\ivr$ satisfy $\invvar$.
Similarly, if the liveness formula $\rrfvar \limply \ddiamond{\pevolvein{\D{x}=\genDE{x}}{\ivr}}{\rfvar}$ is valid then, for all initial states $\iget[state]{\I}$ satisfying assumptions $\rrfvar$, the target region $\rfvar$ can be reached in finite time by following the ODE solution from $\iget[state]{\I}$ while remaining in the domain constraint $\ivr$.

\subsection{Proof Calculus}
\label{subsec:proofcalculus}
\irlabel{qear|\usebox{\Rval}}
\irlabel{notr|$\lnot$\rightrule}
\irlabel{notl|$\lnot$\leftrule}
\irlabel{orr|$\lor$\rightrule}
\irlabel{orl|$\lor$\leftrule}
\irlabel{andr|$\land$\rightrule}
\irlabel{andl|$\land$\leftrule}
\irlabel{implyr|$\limply$\rightrule}
\irlabel{implyl|$\limply$\leftrule}
\irlabel{equivr|$\lbisubjunct$\rightrule}
\irlabel{equivl|$\lbisubjunct$\leftrule}
\irlabel{id|id}
\irlabel{cut|cut}
\irlabel{weakenr|W\rightrule}
\irlabel{weakenl|W\leftrule}
\irlabel{existsr|$\exists$\rightrule}
\irlabel{existsrinst|$\exists$\rightrule}
\irlabel{alll|$\forall$\leftrule}
\irlabel{alllinst|$\forall$\leftrule}
\irlabel{allr|$\forall$\rightrule}
\irlabel{existsl|$\exists$\leftrule}
\irlabel{iallr|i$\forall$}
\irlabel{iexistsr|i$\exists$}

All derivations are presented in a classical sequent calculus with the usual rules for manipulating logical connectives and sequents.
The semantics of \emph{sequent} \(\lsequent{\Gamma}{\fvarA}\) is equivalent to the formula \((\landfold_{\fvarB \in\Gamma} \fvarB) \limply \fvarA\) and a sequent is \emph{valid} iff its corresponding formula is valid.
Completed branches in a sequent proof are marked with $\lclose$.
First-order real arithmetic is decidable~\cite{Bochnak1998} so proof steps are labeled with \irref{qear} whenever they follow from real arithmetic.
An axiom (schema) is \emph{sound} iff all its instances are valid.
Proof rules are \emph{sound} iff validity of all premises (above the rule bar) entails validity of the conclusion (below the rule bar).
Axioms and proof rules are \emph{derivable} if they can be deduced from sound \dL axioms and proof rules.
Soundness of the \dL axiomatization ensures that derived axioms and proof rules are sound~\cite{DBLP:journals/jar/Platzer17,Platzer18}.

The \dL proof calculus (briefly recalled below) is \emph{complete} for ODE invariants~\cite{DBLP:journals/jacm/PlatzerT20}, i.e., any true ODE invariant expressible in first-order real arithmetic can be proved in the calculus.
The proof rule~\irref{dIcmp} (below) uses the \emph{Lie derivative} of polynomial $\ptermA$ with respect to the ODE $\D{x}=\genDE{x}$, which is defined as the term $\lie[]{\genDE{x}}{\ptermA} \mdefeq \sum_{x_i\in x} \Dp[x_i]{\ptermA} f_i(x)$. Higher Lie derivatives $\lied[i]{\genDE{x}}{\ptermA}{}$ are defined inductively:
$ \lied[0]{\genDE{x}}{\ptermA}{} \mdefeq p, \lied[i+1]{\genDE{x}}{\ptermA}{} \mdefeq \lie{\genDE{x}}{\lied[i]{\genDE{x}}{\ptermA}{}}{}, \lied[]{\genDE{x}}{\ptermA} \mdefeq \lied[1]{\genDE{x}}{\ptermA}{}$.
Syntactically, Lie derivatives $\lied[i]{\genDE{x}}{\ptermA}$ are polynomials in the term language and they are provably definable in \dL using differentials~\cite{DBLP:journals/jar/Platzer17}.
Semantically, the value of Lie derivative $\lied[]{\genDE{x}}{\ptermA}$ is equal to the time derivative of the value of $\ptermA$ along solution $\solvar$ of the ODE $\D{x}=\genDE{x}$.

\begin{lemma}[Axioms and proof rules of \dL~\cite{DBLP:journals/jar/Platzer17,Platzer18,DBLP:journals/jacm/PlatzerT20}]
\label{lem:dlaxioms}
The following are sound axioms and proof rules of \dL.

\begin{calculuscollection}
\begin{calculus}
\cinferenceRule[diamond|$\didia{\cdot}$]{diamond axiom}
{\linferenceRule[equiv]
  {\lnot\dbox{\alpha}{\lnot{\rfvar}}}
  {\axkey{\ddiamond{\alpha}{\rfvar}}}
}
{}
\end{calculus}
\quad\quad\quad\quad
\begin{calculus}
\cinferenceRule[K|K]{K axiom / modal modus ponens} %
{\linferenceRule[impl]
  {\dbox{\alpha}{(\rrfvar \limply \rfvar)}}
  {(\dbox{\alpha}{\rrfvar}\limply\axkey{\dbox{\alpha}{\rfvar}})}
}{}
\end{calculus}\\
\begin{calculus}
\dinferenceRule[dIcmp|dI$_\cmp$]{}
{\linferenceRule
  {\lsequent{\ivr}{\lied[]{\genDE{x}}{\ptermA}\geq\lied[]{\genDE{x}}{\ptermB}}
  }
  {\lsequent{\Gamma,\ptermA \cmp \ptermB }{\dbox{\pevolvein{\D{x}=\genDE{x}}{\ivr}}{\ptermA \cmp \ptermB}} }
  \quad
}{where $\cmp$ is either $\geq$ or $>$}
\end{calculus}\\
\begin{calculus}
\dinferenceRule[dC|dC]{}
{\linferenceRule
  {\lsequent{\Gamma}{\dbox{\pevolvein{\D{x}=\genDE{x}}{\ivr}}{\rcfvar}}
  &\lsequent{\Gamma}{\dbox{\pevolvein{\D{x}=\genDE{x}}{\ivr \land \rcfvar}}{\rfvar}}
  }
  {\lsequent{\Gamma}{\dbox{\pevolvein{\D{x}=\genDE{x}}{\ivr}}{\rfvar}}}
}{}
\dinferenceRule[MbW|M${\dibox{'}}$]{}
{\linferenceRule
  {\lsequent{\ivr,\rrfvar}{\rfvar} \quad \lsequent{\Gamma}{\dbox{\pevolvein{\D{x}=\genDE{x}}{\ivr}}{\rrfvar}}}
  {\lsequent{\Gamma}{\dbox{\pevolvein{\D{x}=\genDE{x}}{\ivr}}{\rfvar}}}
}{}
\end{calculus}
\qquad
\begin{calculus}
\dinferenceRule[dW|dW]{}
{\linferenceRule
  {\lsequent{\ivr}{\rfvar}}
  {\lsequent{\Gamma}{\dbox{\pevolvein{\D{x}=\genDE{x}}{\ivr}}{\rfvar}}}
}{}
\dinferenceRule[MdW|M${\didia{'}}$]{}
{\linferenceRule
  {\lsequent{\ivr,\rrfvar}{\rfvar} \quad \lsequent{\Gamma}{\ddiamond{\pevolvein{\D{x}=\genDE{x}}{\ivr}}{\rrfvar}}}
  {\lsequent{\Gamma}{\ddiamond{\pevolvein{\D{x}=\genDE{x}}{\ivr}}{\rfvar}}}
}{}
\end{calculus}
\end{calculuscollection}
\end{lemma}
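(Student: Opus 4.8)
I would verify each item directly against the semantics of \rref{subsec:semantics}, working throughout with the unique maximal rightward solution $\solvar : [0,T) \to \States$ of $\D{x}=\genDE{x}$ from $\solvar(0) = \iget[state]{\I}$ together with the gating ``staying-in-domain'' condition ``$\solvar(\zeta)\in\imodel{\I}{\ivr}$ for all $0\le\zeta\le\tau$'' that determines which states count as reachable. The organizing observation is that the reachable-state set does not depend on the postcondition, so \irref{diamond} and \irref{K} reduce to elementary facts about that set, whereas \irref{dC}, \irref{dW}, \irref{MbW}, and \irref{MdW} amount to bookkeeping over the nested ``$\exists\tau$ / $\forall\zeta\le\tau$'' quantifier structure; only \irref{dIcmp} requires genuine analysis.

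\textbf{The straightforward items.} For \irref{diamond}, push the negation in $\lnot\dbox{\pevolvein{\D{x}=\genDE{x}}{\ivr}}{\lnot\rfvar}$ through the universally quantified implication over $\tau$: using that $\solvar(\tau)\notin\imodel{\I}{\lnot\rfvar}$ iff $\solvar(\tau)\in\imodel{\I}{\rfvar}$, this is exactly the existential statement defining $\ddiamond{\pevolvein{\D{x}=\genDE{x}}{\ivr}}{\rfvar}$ (and the same De Morgan step handles an arbitrary hybrid program via its general reachability semantics). For \irref{K}, if every reachable state lies in both $\imodel{\I}{\rrfvar\limply\rfvar}$ and $\imodel{\I}{\rrfvar}$, it lies in $\imodel{\I}{\rfvar}$. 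For \irref{dW}, instantiating the domain condition at $\zeta=\tau$ forces $\solvar(\tau)\in\imodel{\I}{\ivr}$, so a valid premise $\lsequent{\ivr}{\rfvar}$ gives $\solvar(\tau)\in\imodel{\I}{\rfvar}$; \irref{MbW} and \irref{MdW} then package this ``endpoint in domain'' fact with the respective box/diamond premise, since at the reached state $\solvar(\tau)$ one has both $\ivr$ and $\rrfvar$, hence $\rfvar$ by the first premise. For \irref{dC}, assuming $\solvar$ stays in $\imodel{\I}{\ivr}$ on all of $[0,\tau]$, apply the first premise with $\tau$ replaced by each $\zeta\le\tau$ (valid because the domain condition also holds on $[0,\zeta]$) to get $\solvar(\zeta)\in\imodel{\I}{\rcfvar}$ throughout $[0,\tau]$; this verifies the stronger domain condition for $\ivr\land\rcfvar$, so the second premise yields $\solvar(\tau)\in\imodel{\I}{\rfvar}$.

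\textbf{Analytic core.} For \irref{dIcmp}, fix $\iget[state]{\I}$ satisfying $\Gamma$ and $\ptermA\cmp\ptermB$ and set $h(\tau) \mdefeq \ivaluation{\solvar(\tau)}{\ptermA} - \ivaluation{\solvar(\tau)}{\ptermB}$ on $[0,T)$. Since $\genDE{x}$ is polynomial, $\solvar$ is continuously differentiable, hence so is $h$, and the chain rule makes $h'(\tau)$ equal to the value at $\solvar(\tau)$ of $\lied[]{\genDE{x}}{\ptermA} - \lied[]{\genDE{x}}{\ptermB}$ --- precisely the semantic characterization of the Lie derivative recalled just before the lemma. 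For any $\tau<T$ at which $\solvar$ stays in $\imodel{\I}{\ivr}$ on $[0,\tau]$, the context-free premise $\lsequent{\ivr}{\lied[]{\genDE{x}}{\ptermA}\geq\lied[]{\genDE{x}}{\ptermB}}$ gives $h'(\zeta)\ge 0$ for every $\zeta\le\tau$, so $h$ is nondecreasing on $[0,\tau]$ and $h(\tau)\ge h(0)$. Since $h(0) = \ivaluation{\I}{\ptermA} - \ivaluation{\I}{\ptermB}\cmp 0$ by the initial assumption and a nonstrict increase preserves both $\ge 0$ and $>0$, also $h(\tau)\cmp 0$, i.e., $\solvar(\tau)\in\imodel{\I}{\ptermA\cmp\ptermB}$; this handles the $\ge$ and $>$ cases uniformly.

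\textbf{Main obstacle.} The only step with real content is \irref{dIcmp}, and its delicate points are the standard ODE facts underneath it: differentiability of the maximal solution of a polynomial (hence smooth, locally Lipschitz) right-hand side, the identity between the time-derivative of a term along $\solvar$ and the value of its Lie derivative, and the monotonicity step turning the sign condition on $h'$ into $h(\tau)\ge h(0)$. All of these are established in the \dL literature~\cite{DBLP:journals/jar/Platzer17,DBLP:journals/jacm/PlatzerT20}, so an equivalent route is simply to cite the soundness proofs there; since the statements above are schemas, soundness in the uniform substitution calculus then follows from soundness of the concrete axioms together with the uniform substitution lemma. Everything else is routine once the quantifier structure of the ODE semantics is made explicit.
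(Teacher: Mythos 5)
Your proposal is correct, but it does more work than the paper itself: the paper's entire proof of this lemma is a citation, deferring soundness of \irref{diamond}, \irref{K}, \irref{dIcmp}, \irref{dC}, \irref{dW}, and the monotonicity rules to the prior \dL literature (and noting that \irref{MbW} and \irref{MdW} are derivable from \irref{dW}, \irref{K}, and \irref{diamond}). You instead reconstruct the semantic soundness arguments directly against the ODE semantics of \rref{subsec:semantics}, which is a legitimate alternative route and your individual verifications are all sound: the De Morgan computation for \irref{diamond}, the observation that the reachable-state set is postcondition-independent for \irref{K}, the endpoint-in-domain instantiation $\zeta=\tau$ for \irref{dW}/\irref{MbW}/\irref{MdW}, the re-application of the first premise on every prefix $[0,\zeta]$ for \irref{dC}, and the monotonicity argument $h'\geq 0 \Rightarrow h(\tau)\geq h(0)$ for \irref{dIcmp} (which correctly handles both $\geq$ and $>$, and correctly relies on the premise being checked only under $\ivr$, which the domain-gating condition supplies along the whole prefix). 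What the citation route buys, as you note yourself at the end, is that the published proofs already handle the points your direct argument glosses over: \irref{diamond} and \irref{K} are stated for arbitrary hybrid programs $\alpha$, not just ODEs, so one needs the general relational semantics rather than the ODE-specific unfolding; the rules are schemas whose soundness in the uniform substitution calculus also covers the treatment of constant contexts $\constt{\rfvar}$ discussed after the lemma; and the identity between the time derivative of a term along $\solvar$ and the value of its Lie derivative is itself a lemma of that framework. Since you explicitly fall back on those references for exactly these points, your argument and the paper's ultimately rest on the same foundation; yours simply makes the routine semantic content explicit.
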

\begin{proofsketchb}[app:basecalculus]{proof:proof1}\end{proofsketchb}

Axiom~\irref{diamond} expresses the duality between the box and diamond modalities.
It is used to switch between the two in proofs and to dualize axioms between the box and diamond modalities.
Axiom~\irref{K} is the modus ponens principle for the box modality.
Differential invariants~\irref{dIcmp} say that if the Lie derivatives obey the inequality $\lied[]{\genDE{x}}{\ptermA} \geq \lied[]{\genDE{x}}{\ptermB}$, then $\ptermA \cmp \ptermB$ is an invariant of the ODE.
Differential cuts~\irref{dC} say that if one can separately prove that formula $\rcfvar$ is always satisfied along the solution, then $\rcfvar$ may be assumed in the domain constraint when proving the same for formula $\rfvar$.
In the box modality, solutions are restricted to stay in the domain constraint $\ivr$.
Thus, differential weakening~\irref{dW} says that postcondition $\rfvar$ is always satisfied along solutions if it is already implied by the domain constraint.
Using \irref{dW+K+diamond}, the final two monotonicity proof rules~\irref{MbW+MdW} for differential equations are derivable.
They strengthen the postcondition from $\rfvar$ to $\rrfvar$, assuming domain constraint $\ivr$, for the box and diamond modalities respectively.

Notice that the premises of several proof rules in~\rref{lem:dlaxioms}, e.g.,~\irref{dIcmp+dW}, discard all assumptions $\Gamma$ on initial states when moving from conclusion to premises.
This is necessary for soundness because the premises of these rules internalize reasoning that happens \emph{along} solutions of the ODE $\pevolvein{\D{x}=\genDE{x}}{\ivr}$ rather than in the initial state.
On the other hand, the truth value of constant assumptions $\constt{\rfvar}$ do not change along solutions, so they can be soundly kept across rule applications~\cite{Platzer18}.
These additional constant contexts are useful when working with assumptions on symbolic parameters, e.g., $\constt{v} > 0$ to model a (constant) positive velocity.

Besides rules~\irref{dIcmp+dC+dW}, the key to completeness for ODE invariance proofs in \dL is the \emph{differential ghosts}~\cite{DBLP:journals/jar/Platzer17,DBLP:journals/jacm/PlatzerT20} axiom shown below.
The $\exists$ quantifier in the axiom can be replaced with a $\forall$ quantifier.
\[
\cinferenceRule[DG|DG]{differential ghost}
{\linferenceRule[equiv]
  {\lexists{y}{\dbox{\pevolvein{\D{x}=\genDE{x},\D{y}=a(x)y+b(x)}{\ivr(x)}}{\rfvar(x)}}}
  {\axkey{\dbox{\pevolvein{\D{x}=\genDE{x}}{\ivr(x)}}{\rfvar(x)}}}
}{}
\]

Axiom~\irref{DG} says that, in order to prove safety postcondition $\rfvar(x)$ for the ODE $\D{x}=\genDE{x}$, it suffices to prove $\rfvar(x)$ for a larger system with an added ODE $\D{y}=a(x)y+b(x)$ that is linear in the ghost variable $y$ (because $a(x),b(x)$ do not mention $y$).
Intuitively, this addition is sound because the ODE $\D{x}=\genDE{x}$ does not mention the added variables $y$, and so the evolution of $\D{x}=\genDE{x}$ should be unaffected by the addition of an ODE for $y$.
However, this intuition only works if the additional ODEs do not unsoundly restrict the duration of the original solution by blowing up too early~\cite{DBLP:journals/jar/Platzer17}.
The linearity restriction prevents such a blow up.
Using axiom~\irref{DG} in a proof appears counterintuitive because the axiom tries to prove a property for a seemingly easier (lower-dimensional) ODE by instead studying a more difficult (higher-dimensional) one!
Yet, the~\irref{DG} axiom, is crucially used for completeness because it enables mathematical (or geometric) transformations to be carried out syntactically in the \dL proof calculus~\cite{DBLP:journals/jacm/PlatzerT20}.
This completeness result only requires a scalar version of~\irref{DG} that adds one ghost variable at time.
More general vectorial versions of the axiom (where $a(x)$ is a matrix and $b(x)$ is a vector) have also been used elsewhere~\cite{DBLP:journals/jacm/PlatzerT20}.
This article uses a new vectorial generalization that allows differential ghosts with provably bounded ODEs to be added.

\begin{lemma}[Bounded differential ghosts]
\label{lem:boundeddg}
The following bounded differential ghosts axiom~\irref{BDG} is sound, where $y=(y_1,\dots,y_m)$ is a $m$-dimensional vector of fresh variables (not appearing in $x$), $g(x,y)$ is a corresponding $m$-dimensional vector of terms, and $\norm{y}^2$ is the squared Euclidean norm of $y$. Term $\ptermA(x)$ and formulas $\rfvar(x),\ivr(x)$ are dependent only on free variables $x$ (and not $y$).

\begin{calculus}
\cinferenceRule[BDG|BDG]{}
{\linferenceRule[impll]
  {\dbox{\pevolvein{\D{x}=\genDE{x},\D{y}=g(x,y)}{\ivr(x)}}{\,\norm{y}^2 \leq \ptermA(x)}}
  {\big( \dbox{\pevolvein{\D{x}=\genDE{x}}{\ivr(x)}}{\rfvar(x)} \lbisubjunct \axkey{\dbox{\pevolvein{\D{x}=\genDE{x}, \D{y}=g(x,y)}{\ivr(x)}}{\rfvar(x)}}\big)}
}{}
\end{calculus}
\end{lemma}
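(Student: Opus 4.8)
The plan is to prove soundness of \irref{BDG} by a direct semantic argument rather than a syntactic derivation: since the added ODE $\D{y}=g(x,y)$ need not be linear in $y$, it lies outside the reach of the ordinary differential ghost axiom \irref{DG}, so the usual ghost-introduction bookkeeping does not suffice. Fix an interpretation $\I$ with state $\iget[state]{\I}$ and assume the hypothesis $\dbox{\pevolvein{\D{x}=\genDE{x},\D{y}=g(x,y)}{\ivr(x)}}{\norm{y}^2 \leq \ptermA(x)}$ holds in $\iget[state]{\I}$; it remains to establish the equivalence $\dbox{\pevolvein{\D{x}=\genDE{x}}{\ivr(x)}}{\rfvar(x)} \lbisubjunct \dbox{\pevolvein{\D{x}=\genDE{x},\D{y}=g(x,y)}{\ivr(x)}}{\rfvar(x)}$ there. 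First I would set up the two maximal right solutions: $\solvar:[0,T)\to\States$ for $\D{x}=\genDE{x}$ from $\iget[state]{\I}$, and $\widetilde{\solvar}:[0,\widetilde{T})\to\States$ for the extended system from $\iget[state]{\I}$ (both unique, because polynomial right-hand sides are locally Lipschitz). Since the $x$-components of $\widetilde{\solvar}$ solve $\D{x}=\genDE{x}$ with the same initial state, uniqueness and maximality of $\solvar$ yield $\widetilde{T}\leq T$ together with the coordinatewise agreement $\widetilde{\solvar}(\zeta)(x_i)=\solvar(\zeta)(x_i)$ for all $\zeta<\widetilde{T}$. The structural fact used throughout is that $\ivr(x)$, $\rfvar(x)$, and $\ptermA(x)$ mention only $x$, so their (truth) values along $\widetilde{\solvar}$ and $\solvar$ coincide on $[0,\widetilde{T})$.

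The direction ($\Rightarrow$) needs no boundedness hypothesis: any prefix of $\widetilde{\solvar}$ up to some $\tau<\widetilde{T}$ that stays in $\ivr(x)$ projects on the $x$-coordinates to a prefix of $\solvar$ up to $\tau<\widetilde{T}\leq T$ that stays in $\ivr(x)$, so the box assumption for $\D{x}=\genDE{x}$ gives $\solvar(\tau)\in\imodel{\I}{\rfvar(x)}$, hence $\widetilde{\solvar}(\tau)\in\imodel{\I}{\rfvar(x)}$.

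For ($\Leftarrow$), given any prefix of $\solvar$ up to $\tau<T$ staying in $\ivr(x)$, I would derive $\solvar(\tau)\in\imodel{\I}{\rfvar(x)}$ by first showing $\tau<\widetilde{T}$ and then reducing to the box assumption for the extended system exactly as in the forward direction. To prove $\tau<\widetilde{T}$, suppose instead that $\widetilde{T}\leq\tau<T$. Then $\widetilde{\solvar}$ stays in $\ivr(x)$ on $[0,\widetilde{T})$, so the hypothesis bounds $\norm{\widetilde{\solvar}(\zeta)(y)}^2$ from above by the value of $\ptermA$ in state $\solvar(\zeta)$, for every $\zeta<\widetilde{T}$. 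Because $\widetilde{T}\leq\tau<T$, the trajectory $\solvar$ is defined and continuous on the \emph{compact} interval $[0,\widetilde{T}]$, so both the value of $\ptermA$ along $\solvar$ and the $x$-part of $\widetilde{\solvar}$ are bounded on $[0,\widetilde{T})$; together with the constancy of all variables outside $x,y$ along $\widetilde{\solvar}$, this confines $\widetilde{\solvar}$ to a bounded region of $\reals^{n+m}$ even though $\widetilde{T}<\infty$. That contradicts maximality of $\widetilde{\solvar}$ by the standard continuation (escape) theorem for locally Lipschitz ODEs: a maximal right solution with a finite right endpoint must eventually leave every compact set~\cite{Chicone2006,Walter1998}. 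Hence $\tau<\widetilde{T}$, completing the argument.

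I expect the ($\Leftarrow$) direction---precisely, ruling out $\widetilde{T}\leq\tau<T$---to be the main obstacle, and the boundedness hypothesis is exactly the lever that makes it work: it traps $\norm{y}^2$ beneath a continuous function of the already-bounded $x$-trajectory, so the larger system cannot blow up before the original one does. Everything else---uniqueness of solutions, agreement of the $x$-components, and insensitivity of $\ivr(x),\rfvar(x),\ptermA(x)$ to $y$---is routine. One point worth stating explicitly in the final write-up is that $\ptermA(x)$ need not be assumed nonnegative, since the hypothesis is only ever used to bound $\norm{y}^2$ from above along in-domain trajectories.
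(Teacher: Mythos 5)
Your proposal is correct and follows essentially the same route as the paper's proof: uniqueness gives agreement of the $x$-components, the ``$\limply$'' direction needs no boundedness, and the ``$\lylpmi$'' direction uses the hypothesis to trap $\norm{y}^2$ beneath the maximum of $\ptermA$ over the compact prefix $[0,\tau]$ of the $x$-trajectory, so the continuation/escape theorem rules out premature blow-up of the extended system. The only cosmetic difference is that the paper applies the escape argument to the time-dependent ODE $\D{y}=g(x(t),y)$ (citing Chicone's Theorem 1.4) and defers the forward direction to the soundness proof of \irref{DG}, whereas you invoke the escape theorem directly on the autonomous extended system and argue the forward direction semantically; both are fine.
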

\begin{proofsketchb}[app:basecalculus]{proof:proof2}\end{proofsketchb}

Like~\irref{DG}, axiom~\irref{BDG} allows an arbitrary vector of ghost ODEs $\D{y}=g(x,y)$ to be added syntactically to the ODEs.
However, it places no syntactic restriction on the RHS of the ODE (such as linearity in axiom~\irref{DG}).
For soundness,~\irref{BDG} instead adds a new precondition with a bound \(\norm{y}^2 \leq \ptermA(x)\) in terms of $x$ on the squared norm of $y$ along solutions of the augmented ODE.
This syntactic precondition ensures that $y$ cannot blow up before $x$, so that solutions of $\D{x}=\genDE{x},\D{y}=g(x,y)$ have existence intervals as long as those of the solutions of $\D{x}=\genDE{x}$.
\rref{sec:globexist} shows how to prove these preconditions so that axiom~\irref{BDG} enables ODE existence proofs through the refinement approach of~\rref{sec:livenessaxioms}.

\section{ODE Liveness via Box Refinements}
\label{sec:livenessaxioms}

This section explains step-by-step refinement for proving ODE liveness properties in \dL.
Suppose that an initial liveness property $\ddiamond{\pevolvein{\D{x}=\genDE{x}}{\ivr_0}}{\rfvar_0}$ is known for the ODE $\D{x}=\genDE{x}$.
How could this be used to prove a desired liveness property $\ddiamond{\pevolvein{\D{x}=\genDE{x}}{\ivr}}{\rfvar}$ for that ODE?
Logically, this amounts to proving:
\begin{equation}
\ddiamond{\pevolvein{\D{x}=\genDE{x}}{\ivr_0}}{\rfvar_0} \limply \ddiamond{\pevolvein{\D{x}=\genDE{x}}{\ivr}}{\rfvar}
\label{eq:refinementimpl}
\end{equation}

\begin{wrapfigure}[10]{r}{0.24\textwidth}
\centering
\includegraphics[width=0.22\textwidth]{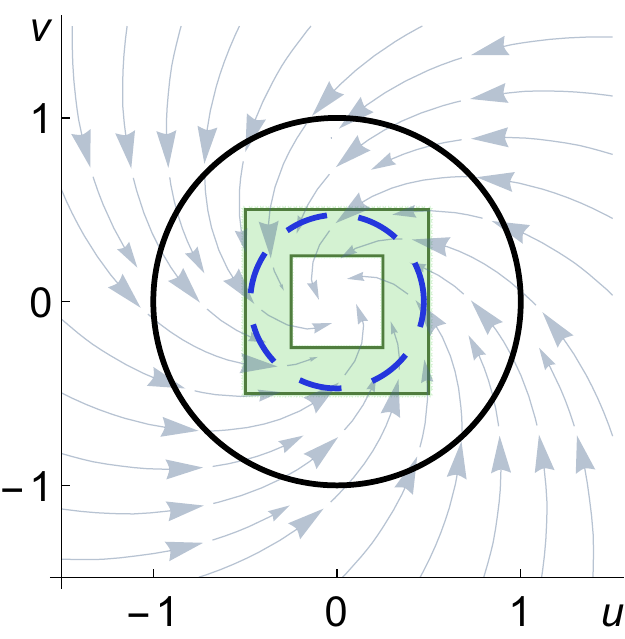}
\end{wrapfigure}

Proving implication~\rref{eq:refinementimpl} \emph{refines} knowledge of the initial liveness property to the desired liveness property.
As an example of such a refinement, consider the desired liveness property $\ddiamond{\exlinear}{\big(\frac{1}{4} \leq \lnorm{(u,v)} \leq \frac{1}{2}\big)}$ for ODE $\exlinear$~\rref{eq:exlinear} starting from the initial circle $u^2+v^2=1$ (cf.~\rref{fig:odeexamples}).
Suppose the initial liveness property $\ddiamond{\exlinear}{u^2+v^2 = \frac{1}{4}}$ is already proved, e.g., using the techniques of~\rref{sec:nodomconstraint}.
As visualized on the right, ODE solutions starting from the black circle $u^2+v^2 = 1$ eventually reach the dashed \bluec{blue} circle $u^2+v^2 = \frac{1}{4}$.
Since the \bluec{blue} circle is entirely contained in the \greenc{green} goal region, solutions that reach it must (trivially) also reach the goal region.
Formally, the following instance of implication~\rref{eq:refinementimpl} is provable because formula $\rfvar_0 \limply \rfvar$ is provable.
\begin{equation}
\ddiamond{\exlinear}{\underbrace{\big(u^2+v^2 = \frac{1}{4}\big)}_{\rfvar_0}} \limply \ddiamond{\exlinear}{\underbrace{\big(\frac{1}{4} \leq \lnorm{(u,v)} \leq \frac{1}{2}\big)}_{\rfvar}}
\label{eq:refinementimpl1}
\end{equation}

\begin{wrapfigure}[9]{r}{0.24\textwidth}
\centering
\includegraphics[width=0.22\textwidth]{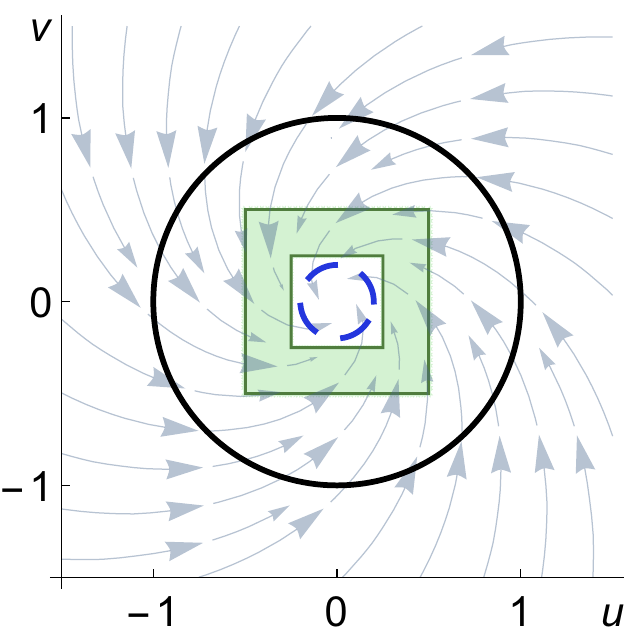}
\end{wrapfigure}

Similarly, if the implication between domain constraints $\ivr_0 \limply \ivr$ is provable, then implication~\rref{eq:refinementimpl} is proved by monotonicity, because any solution staying in the smaller domain $\ivr_0$ must also stay in the larger domain $\ivr$.
However, neither of these monotonicity-based arguments are sufficiently powerful for liveness proofs because they do not account for the specific ODE $\D{x}=\genDE{x}$ under consideration at all. Returning to the ODE $\exlinear$, suppose instead that the initial (known) liveness property is $\ddiamond{\exlinear}{u^2+v^2 = \frac{1}{25}}$. This is visualized on the right with a smaller dashed \bluec{blue} circle.
The following instance of implication~\rref{eq:refinementimpl} is also valid for solutions starting from the black circle $u^2+v^2=1$, but it does \emph{not} follow from a straightforward monotonicity argument because the smaller dashed \bluec{blue} circle $u^2+v^2 = \frac{1}{25}$ is not contained in the \greenc{green} goal region (formula $\rfvar_0 \limply \rfvar$ is not valid).
\begin{equation}
\ddiamond{\exlinear}{\underbrace{\big(u^2+v^2 = \frac{1}{25}\big)}_{\rfvar_0}} \limply \ddiamond{\exlinear}{\underbrace{\big(\frac{1}{4} \leq \lnorm{(u,v)} \leq \frac{1}{2}\big)}_{\rfvar}}
\label{eq:refinementimpl2}
\end{equation}%

A proof of implication~\rref{eq:refinementimpl2} requires additional information about solutions of the ODE $\exlinear$, namely, that they are continuous and the system $\exlinear$ is planar.
Informally, observe that it is impossible to draw a line (without lifting your pen off the page) that connects the black circle to the (smaller) dashed \bluec{blue} circle without crossing the \greenc{green} goal region.
The continuous solutions of $\exlinear$ are analogous to such lines and therefore must enter the \greenc{green} goal region before reaching the \bluec{blue} circle.
To formalize such reasoning, this article's approach is built on refinement axioms that conclude instances of implication~\rref{eq:refinementimpl}, like~\rref{eq:refinementimpl1} and~\rref{eq:refinementimpl2}, from box modality formulas involving the ODE $\D{x}=\genDE{x}$.
The following are four ODE refinement axioms of \dL that are used for the approach.
Crucially, these axioms are \emph{derived} from the box modality axioms presented in~\rref{subsec:proofcalculus} by exploiting the logical duality between the box and diamond modalities of \dL.
This makes it possible to build liveness arguments from a sound and parsimonious logical foundation.

\begin{lemma}[Diamond ODE refinement axioms]
\label{lem:diarefaxioms}
The following $\didia{\cdot}$ ODE refinement axioms are derivable in \dL.
In axioms~\irref{dBDG+dDDG}, $y=(y_1,\dots,y_m)$ is an $m$-dimensional vector of fresh variables (not appearing in $x$) and $g(x,y)$ is a corresponding $m$-dimensional vector of terms.
Terms $p(x),L(x),M(x)$ and formulas $\rfvar(x),\ivr(x)$ are dependent only on free variables $x$ (and not $y$).

\begin{calculuscollection}
\begin{calculus}
\dinferenceRule[Prog|K${\didia{\&}}$]{}
{
\linferenceRule[impl]
  {\dbox{\pevolvein{\D{x}=\genDE{x}}{\ivr \land \lnot{\rfvar}}}{\lnot{\rgvar}}}
  {\big(\ddiamond{\pevolvein{\D{x}=\genDE{x}}{\ivr}}{\rgvar} \limply \axkey{\ddiamond{\pevolvein{\D{x}=\genDE{x}}{\ivr}}{\rfvar}}\big)}
}{}

\dinferenceRule[dDR|DR${\didia{\cdot}}$]{}
{\linferenceRule[impl]
  {\dbox{\pevolvein{\D{x}=\genDE{x}}{\rrfvar}}{\ivr}}
  {\big( \ddiamond{\pevolvein{\D{x}=\genDE{x}}{\rrfvar}}{\rfvar} \limply \axkey{\ddiamond{\pevolvein{\D{x}=\genDE{x}}{\ivr}}{\rfvar}}\big)}
}{}

\dinferenceRule[dBDG|BDG${\didia{\cdot}}$]{}
{\linferenceRule[impll]
  {\dbox{\pevolvein{\D{x}=\genDE{x},\D{y}=g(x,y)}{\ivr(x)}}{\,\norm{y}^2 \leq \ptermA(x)}}
  {\big( \ddiamond{\pevolvein{\D{x}=\genDE{x}}{\ivr(x)}}{\rfvar(x)} \limply \axkey{\ddiamond{\pevolvein{\D{x}=\genDE{x}, \D{y}=g(x,y)}{\ivr(x)}}{\rfvar(x)}}\big)}
}{}

\dinferenceRule[dDDG|DDG${\didia{\cdot}}$]{}
{\linferenceRule[impll]
  {\dbox{\pevolvein{\D{x}=\genDE{x},\D{y}=g(x,y)}{\ivr(x)}}{\,2\dotp{y}{g(x,y)} \leq L(x) \norm{y}^2 + M(x)}}
  {\big( \ddiamond{\pevolvein{\D{x}=\genDE{x}}{\ivr(x)}}{\rfvar(x)} \limply \axkey{\ddiamond{\pevolvein{\D{x}=\genDE{x}, \D{y}=g(x,y)}{\ivr(x)}}{\rfvar(x)}}\big)}
}{}
\end{calculus}
\end{calculuscollection}
\end{lemma}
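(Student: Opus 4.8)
The plan is to derive all four $\didia{\cdot}$ ODE refinement axioms from the box-modality axioms of~\rref{lem:dlaxioms} and~\rref{lem:boundeddg} by going through the box/diamond duality axiom~\irref{diamond}, which rewrites every $\ddiamond{\alpha}{\rfvar}$ as $\lnot\dbox{\alpha}{\lnot\rfvar}$. Under this rewriting, three of the four axioms reduce to purely propositional rearrangements of instances of~\irref{dC} and basic monotonicity of $\dibox{\cdot}$, whereas the fourth axiom~\irref{dDDG} additionally needs a Gr\"onwall-style comparison bound for the ghost ODE $\D{y}=g(x,y)$; that last ingredient is where I expect the real work to lie.

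For~\irref{Prog}, rewriting both diamonds of the conclusion via~\irref{diamond} and contraposing turns $\ddiamond{\pevolvein{\D{x}=\genDE{x}}{\ivr}}{\rgvar} \limply \ddiamond{\pevolvein{\D{x}=\genDE{x}}{\ivr}}{\rfvar}$ into $\dbox{\pevolvein{\D{x}=\genDE{x}}{\ivr}}{\lnot\rfvar} \limply \dbox{\pevolvein{\D{x}=\genDE{x}}{\ivr}}{\lnot\rgvar}$. Combined with the axiom's premise $\dbox{\pevolvein{\D{x}=\genDE{x}}{\ivr\land\lnot\rfvar}}{\lnot\rgvar}$, the remaining obligation is exactly the conclusion of~\irref{dC} with cut formula $\lnot\rfvar$ and postcondition $\lnot\rgvar$: from $\dbox{\pevolvein{\D{x}=\genDE{x}}{\ivr}}{\lnot\rfvar}$ and $\dbox{\pevolvein{\D{x}=\genDE{x}}{\ivr\land\lnot\rfvar}}{\lnot\rgvar}$, \irref{dC} concludes $\dbox{\pevolvein{\D{x}=\genDE{x}}{\ivr}}{\lnot\rgvar}$. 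So~\irref{Prog} follows from~\irref{diamond}, \irref{dC}, and propositional reasoning. For~\irref{dDR}, the same dualization turns $\ddiamond{\pevolvein{\D{x}=\genDE{x}}{\rrfvar}}{\rfvar} \limply \ddiamond{\pevolvein{\D{x}=\genDE{x}}{\ivr}}{\rfvar}$ into the obligation of deriving $\dbox{\pevolvein{\D{x}=\genDE{x}}{\rrfvar}}{\lnot\rfvar}$ from the premise $\dbox{\pevolvein{\D{x}=\genDE{x}}{\rrfvar}}{\ivr}$ and $\dbox{\pevolvein{\D{x}=\genDE{x}}{\ivr}}{\lnot\rfvar}$, which is discharged by a differential cut (\irref{dC}) on $\ivr$ followed by the monotonicity of $\dibox{\cdot}$ under strengthening the domain constraint from $\ivr$ to $\ivr\land\rrfvar$; equivalently, \irref{dDR} is the diamond dual of the standard (derivable) box differential-refinement rule of \dL.

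For~\irref{dBDG} the derivation is short. I would instantiate the~\irref{BDG} axiom of~\rref{lem:boundeddg} with $\lnot\rfvar(x)$ in place of its postcondition $\rfvar(x)$: the precondition $\dbox{\pevolvein{\D{x}=\genDE{x},\D{y}=g(x,y)}{\ivr(x)}}{\norm{y}^2\leq\ptermA(x)}$ does not mention the postcondition and is unchanged, while applying~\irref{diamond} to both sides of the resulting inner box equivalence $\dbox{\pevolvein{\D{x}=\genDE{x}}{\ivr(x)}}{\lnot\rfvar(x)} \lbisubjunct \dbox{\pevolvein{\D{x}=\genDE{x},\D{y}=g(x,y)}{\ivr(x)}}{\lnot\rfvar(x)}$ yields $\ddiamond{\pevolvein{\D{x}=\genDE{x}}{\ivr(x)}}{\rfvar(x)} \lbisubjunct \ddiamond{\pevolvein{\D{x}=\genDE{x},\D{y}=g(x,y)}{\ivr(x)}}{\rfvar(x)}$; keeping the left-to-right direction is exactly~\irref{dBDG}.

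The remaining and main obstacle is~\irref{dDDG}, which I plan to reduce to~\irref{dBDG} by discharging the latter's norm-bound precondition from the differential inequality $2\dotp{y}{g(x,y)} \leq L(x)\norm{y}^2 + M(x)$. Since the time derivative of $w \mdefeq \norm{y}^2$ along solutions equals $2\dotp{y}{g(x,y)}$, the hypothesis says $w$ is dominated by the solution of the affine comparison ODE $\D{z}=L(x)z+M(x)$; as $z$ occurs only linearly there and $L(x),M(x)$ do not mention $z$, this $z$ is a legitimate~\irref{DG} ghost that does not shorten the existence interval, and---choosing its initial value via~\irref{DG}'s quantifier so that $\norm{y}^2\leq z$ holds at the start---$w\leq z$ is maintained along solutions of the augmented system. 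Because $L(x)$ may be negative, $w\leq z$ is not directly provable by~\irref{dIcmp}; following the linear-ghost-chaining technique behind \dL's invariance completeness~\cite{DBLP:journals/jacm/PlatzerT20}, I would add a further ghost $r$ with $\D{r}=-L(x)r$ and establish $r>0$ along solutions (via the standard auxiliary ghost certifying $r\neq 0$ plus continuity), after which the time derivative of $r(z-w)$ equals $r\big(L(x)z+M(x)-2\dotp{y}{g(x,y)}\big)-L(x)\,r(z-w)$, which the differential inequality together with $r>0$ makes nonnegative, so $r(z-w)\geq 0$---hence $w\leq z$---follows from~\irref{dIcmp}. The step I expect to require the most care is turning this bound by the ghost variable $z$ into the bound $\norm{y}^2\leq\ptermA(x)$ by a term in the original variables $x$ that~\irref{dBDG} demands: the key observation is that for the diamond conclusion the bound need only be maintained up to the finite time at which the $x$-solution reaches $\rfvar(x)$, over which $x$---and hence $L(x)$ and $M(x)$---stays in a bounded region, so a polynomial majorant $\ptermA(x)$ of the comparison solution over that region can be exhibited and certified by the same~\irref{dIcmp}/ghost combination; with that precondition in hand,~\irref{dBDG} closes the derivation.
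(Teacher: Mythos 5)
Your derivations of \irref{Prog}, \irref{dDR}, and \irref{dBDG} are correct and essentially the paper's: \irref{Prog} is \irref{diamond} plus a \irref{dC} with cut formula $\lnot\rfvar$, \irref{dDR} is the dual of domain-constraint monotonicity (the paper routes this through \irref{DMP}, which is exactly the ``strengthen the domain constraint'' principle you invoke after the cut), and \irref{dBDG} is the dualized inner equivalence of \irref{BDG}.

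The gap is in the final step of your \irref{dDDG} derivation. The comparison ghost $\D{z}=L(x)z+M(x)$ and the Darboux-style argument for $\norm{y}^2\leq z$ (via the cofactor ghost $r$ with $\D{r}=-L(x)r$ and the product $r(z-\norm{y}^2)$) are exactly right---this is the derived rule \irref{dbxineq} with cofactor $L(x)$, which the paper uses at the same point. But your conversion of the bound $\norm{y}^2\leq z$ into a premise $\norm{y}^2\leq\ptermA(x)$ with $\ptermA$ a term in the original variables $x$ does not go through. The premise of \irref{dBDG} is a box modality over the \emph{entire} domain-constrained solution, not merely up to the time the solution reaches $\rfvar(x)$, so the ``bound need only hold until the reach time'' reasoning does not match the syntactic obligation; over the full (possibly unbounded) horizon, $x$ need not stay in any bounded region. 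Even restricted to the reach time, the majorant you want is a constant depending on the particular trajectory, and no arithmetic cut exhibits it as a syntactic term. The paper sidesteps this entirely: since $\D{z}=L(x)z+M(x)$ is linear in $z$, it is a legitimate \irref{DG} ghost, and \irref{BDG} is applied to the \emph{already $z$-augmented} system with bounding term $\ptermA\mnodefeq z$ itself (permissible, since $z$ is not among the $y$ variables there). The remaining premise is then precisely the invariance $\dbox{\pevolvein{\D{x}=\genDE{x},\D{y}=g(x,y),\D{z}=L(x)z+M(x)}{\ivr(x)}}{\,\norm{y}^2\leq z}$, with $z$ initialized to $\norm{y}^2$ via \irref{DG}'s quantifier, which your Darboux argument discharges.
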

\begin{proofsketchb}[app:refinementcalculus]{proof:proof3}\end{proofsketchb}

Axiom~\irref{Prog} is best understood in the contrapositive.
Formula $\dbox{\pevolvein{\D{x}=\genDE{x}}{\ivr \land \lnot{\rfvar}}}{\lnot{\rgvar}}$ says $\rgvar$ never happens along the solution while $\lnot{\rfvar}$ holds.
Thus, the solution cannot get to $\rgvar$ unless it gets to $\rfvar$ first.
Axiom~\irref{Prog} formalizes the informal reasoning used for implication~\rref{eq:refinementimpl2} above in the contrapositive, with $\rgvar \mnodefequiv u^2+v^2 = \frac{1}{25}$ and $\rfvar \mnodefequiv \big(\frac{1}{4} \leq \lnorm{(u,v)} \leq \frac{1}{2}\big)$.
In the (partial) derivation shown below, the left premise requires a proof that the dashed \bluec{blue} circle $\rgvar$ cannot be reached while staying outside the \greenc{green} goal region $\rfvar$ while the right premise requires a proof of the initial liveness property $\ddiamond{\exlinear}{\big(u^2+v^2 = \frac{1}{25}\big)}$ for $\exlinear$.
In a sequent calculus proof, refinement steps are naturally read from top-to-bottom (downwards), while deduction steps, i.e., axiom or rule applications, are read bottom-to-top (upwards).

~\\
\noindent%
\begin{minipage}[c]{0.01\textwidth}%
\noindent$\overset{\substack{\scalebox{1}{\textbf{Deduction}\hidewidth}\mathstrut}}{\Vast\uparrow}$
\end{minipage}
\begin{minipage}[c]{0.95\textwidth}
{\footnotesizeoff\renewcommand{\arraystretch}{1.3}%
\begin{sequentdeduction}[array]
  \linfer[Prog]{
    \linfer[]{
      \vdots
    }
    {\lsequent{u^2+v^2=1}{\dbox{\pevolvein{\exlinear}{\lnot{\big(\frac{1}{4} \leq \lnorm{(u,v)} \leq \frac{1}{2}\big)}}}{\lnot{\big(u^2+v^2 = \frac{1}{25}\big)}}}}
    !
    \linfer[]{
      \vdots
    }
    {\lsequent{u^2+v^2=1}{\ddiamond{\exlinear}{\big(u^2+v^2 = \frac{1}{25}\big)}}}
  }
  {\lsequent{u^2+v^2=1}{\ddiamond{\exlinear}{\big(\frac{1}{4} \leq \lnorm{(u,v)} \leq \frac{1}{2}\big)}}}
\end{sequentdeduction}
}%
\end{minipage}
\begin{minipage}[c]{0.01\textwidth}
$\underset{\hidewidth\substack{\mathstrut\scalebox{1}{\textbf{Refinement}}}}{\Vast\downarrow}$
\end{minipage}
~\\~\\

In refinement axiom~\irref{dDR}, formula $\dbox{\pevolvein{\D{x}=\genDE{x}}{\rrfvar}}{\ivr}$ says that the ODE solution never leaves $\ivr$ while staying in $\rrfvar$, so if the solution gets to $\rfvar$ within $\rrfvar$, then it also gets to $\rfvar$ within $\ivr$.
The latter two refinement axioms~\irref{dBDG+dDDG} are both derived from~\irref{BDG}.
The (nested) refinement in both axioms says that, if the ODE $\D{x}=\genDE{x}$ can reach $\rfvar(x)$, then the ODE $\D{x}=\genDE{x},\D{y}=g(x,y)$, with the added variables $y$, can also reach $\rfvar(x)$.
Axiom~\irref{dBDG} is the derived diamond version of~\irref{BDG}, obtained by directly dualizing the inner equivalence of~\irref{BDG} with~\irref{diamond} and propositional simplification.
The intuition behind~\irref{dBDG} is identical to~\irref{BDG}: if the added ghost ODEs $y$ never blow up in norm, then they do not affect whether the solution of the original ODEs $\D{x}=\genDE{x}$ can reach $\rfvar(x)$.

Axiom~\irref{dDDG} is a derived, differential version of~\irref{dBDG}.
Instead of bounding the squared norm $\norm{y}^2$ explicitly,~\irref{dDDG} instead limits the rate of growth of the ghost ODEs by bounding the Lie derivative\footnote{In \dL's uniform substitution calculus~\cite{DBLP:journals/jar/Platzer17}, this Lie derivative is written directly as the differential term $\D{(\norm{y}^2)}$ which can be soundly and syntactically rewritten using \dL's differential axioms~\cite{DBLP:journals/jar/Platzer17}.} $\lie{\D{x}=\genDE{x},\D{y}=g(x,y)}{\norm{y}^2} = 2\dotp{y}{g(x,y)}$ of the squared norm.
This derivative bound in turn implicitly bounds the squared norm of the ghost ODEs by the solution of the linear differential equation $\D{z} = L(x)z+M(x)$, with dependency on the value of $x$ along solutions of the ODE $\D{x}=\genDE{x}$.
This ensures that premature blow-up of $y$ before $x$ itself blows up is impossible.
Any refinement step using axiom~\irref{dDDG} can also use axiom~\irref{dBDG} since the former is derived from the latter.
The advantage of~\irref{dDDG} is it builds in canonical differential reasoning steps once-and-for-all (see proof of~\rref{lem:diarefaxioms} and~\rref{sec:globexist}) which simplifies the refinement proof.

Axioms~\irref{Prog+dDR+dBDG+dDDG} all prove implication~\rref{eq:refinementimpl} in just one refinement step.
Logical implication is transitive though, so a sequence of such steps can be chained together to prove implication~\rref{eq:refinementimpl}.
This is shown in~\rref{eq:refinementchain}, with neighboring implications informally chained together for illustration:
{\footnotesizeoff
\begin{align}
\ddiamond{\pevolvein{\D{x}=\genDE{x}}{\ivr_0}}{\rfvar_0} &\overbrace{\lrefine}^{\hidewidth \irref{dDR}~\text{with}~\dbox{\pevolvein{\D{x}=\genDE{x}}{\ivr_0}}{\ivr_1} \quad\quad\;\; \hidewidth} \ddiamond{\pevolvein{\D{x}=\genDE{x}}{\ivr_1}}{\rfvar_0} \overbrace{\lrefine}^{\hidewidth \;\;\quad\quad \irref{Prog}~\text{with}~\dbox{\pevolvein{\D{x}=\genDE{x}}{\ivr_1 \land \lnot{P_1}}}{\lnot{P_0}} \hidewidth} \ddiamond{\pevolvein{\D{x}=\genDE{x}}{\ivr_1}}{\rfvar_1} \lrefine \cdots \lrefine \ddiamond{\pevolvein{\D{x}=\genDE{x}}{\ivr}}{\rfvar}
\label{eq:refinementchain}
\end{align}
}%

With its side conditions, i.e., the box modality formulas, proven, the chain of refinements~\rref{eq:refinementchain} proves the desired implication~\rref{eq:refinementimpl}.
However, a proof of the liveness property \(\ddiamond{\pevolvein{\D{x}=\genDE{x}}{\ivr}}{\rfvar}\) on the right still needs a proof of the hypothesis \(\ddiamond{\pevolvein{\D{x}=\genDE{x}}{\ivr_0}}{\rfvar_0}\) at the beginning of the chain.
Typically, this hypothesis is a (simple) existence assumption for the differential equation.
Formalizing and proving such existence properties is the focus of~\rref{sec:globexist}.
Those proofs are also based on refinements and make use of axioms~\irref{dBDG+dDDG}.

Refinement with axiom~\irref{dDR} requires proving the formula $\dbox{\pevolvein{\D{x}=\genDE{x}}{\rrfvar}}{\ivr}$.
Na\"ively, one might expect that adding $\lnot{\rfvar}$ to the domain constraint should also work, i.e., the solution only needs to be in $\ivr$ while it has not yet gotten to $\rfvar$:
\[
\cinferenceRule[badaxiom|DR$\didia{\cdot}$\usebox{\Lightningval}]{}
{
\linferenceRule[impl]
  {\dbox{\pevolvein{\D{x}=\genDE{x}}{\rrfvar \land \lnot{\rfvar}}}{\ivr} }
  {\big(\ddiamond{\pevolvein{\D{x}=\genDE{x}}{\rrfvar}}{\rfvar} \limply \axkey{\ddiamond{\pevolvein{\D{x}=\genDE{x}}{\ivr}}{\rfvar}}\big)}
}{}
\]

This conjectured axiom is unsound (indicated by $\mbox{\lightning}$) as the solution could sneak out of $\ivr$ exactly when it crosses from $\lnot{\rfvar}$ into $\rfvar$.
In continuous settings, the language of topology makes precise what this means.
The following topological refinement axioms soundly restrict what happens at the crossover point:

\begin{lemma}[Topological ODE refinement axioms]
\label{lem:diatopaxioms}
The following topological $\didia{\cdot}$ ODE refinement axioms are sound.
In axiom~\irref{CORef}, $\rfvar,\ivr$ either both characterize topologically open or both characterize topologically closed sets over variables $x$.

\begin{calculuscollection}
\begin{calculus}
\cinferenceRule[CORef|COR]{}
{
\linferenceRule[impl]
  {\lnot{\rfvar} \land \dbox{\pevolvein{\D{x}=\genDE{x}}{\rrfvar \land \lnot{\rfvar}}}{\ivr} }
  {\big(\ddiamond{\pevolvein{\D{x}=\genDE{x}}{\rrfvar}}{\rfvar} \limply \axkey{\ddiamond{\pevolvein{\D{x}=\genDE{x}}{\ivr}}{\rfvar}}\big)}
}{}

\cinferenceRule[SARef|SAR]{}
{
\linferenceRule[impl]
  {\dbox{\pevolvein{\D{x}=\genDE{x}}{\rrfvar \land \lnot{(\rfvar \land \ivr)}}}{\ivr}}
  {\big( \ddiamond{\pevolvein{\D{x}=\genDE{x}}{\rrfvar}}{\rfvar} \limply \axkey{\ddiamond{\pevolvein{\D{x}=\genDE{x}}{\ivr}}{\rfvar}}\big)}
}{}
\end{calculus}
\end{calculuscollection}
\end{lemma}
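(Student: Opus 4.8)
The plan is to establish both axioms directly from the semantics, since they are claimed only sound and not derivable. Fix a state $\iget[state]{\I}$ and let $\solvar : [0,T) \to \States$ be the maximal rightward solution of $\D{x}=\genDE{x}$ with $\solvar(0)=\iget[state]{\I}$. In each axiom, assume the box-modality hypothesis together with the source liveness $\ddiamond{\pevolvein{\D{x}=\genDE{x}}{\rrfvar}}{\rfvar}$; the latter gives some $\tau \in [0,T)$ with $\solvar(\zeta) \in \imodel{\I}{\rrfvar}$ for all $\zeta \in [0,\tau]$ and $\solvar(\tau) \in \imodel{\I}{\rfvar}$, and the goal is to produce $\sigma \in [0,\tau]$ with $\solvar(\zeta) \in \imodel{\I}{\ivr}$ for all $\zeta \in [0,\sigma]$ and $\solvar(\sigma) \in \imodel{\I}{\rfvar}$. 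In both cases the witness $\sigma$ will be the \emph{first} time the solution enters the relevant target set, and the box hypothesis is exactly what pins the solution inside $\ivr$ strictly before that time; the crux in both is the behavior at the crossover point, which is the very pathology that makes the conjectured axiom \irref{badaxiom} unsound.

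For \irref{CORef}, put $\sigma_0 \mdefeq \inf\{\zeta \in [0,\tau] : \solvar(\zeta) \in \imodel{\I}{\rfvar}\}$, which is well defined since $\tau$ belongs to this set. The extra antecedent $\lnot\rfvar$ gives $\solvar(0) \notin \imodel{\I}{\rfvar}$, so $\sigma_0 > 0$, and for every $\zeta < \sigma_0$ the prefix stays in $\imodel{\I}{\rrfvar \land \lnot\rfvar}$, so the box hypothesis $\dbox{\pevolvein{\D{x}=\genDE{x}}{\rrfvar \land \lnot\rfvar}}{\ivr}$ yields $\solvar(\zeta) \in \imodel{\I}{\ivr}$ for all $\zeta < \sigma_0$. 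The open/closed side condition now controls $\solvar(\sigma_0)$. If $\rfvar,\ivr$ are both closed: $\solvar(\sigma_0) \in \imodel{\I}{\rfvar}$ (a closed preimage attains its infimum) and $\solvar(\sigma_0) = \lim_{\zeta \uparrow \sigma_0}\solvar(\zeta) \in \imodel{\I}{\ivr}$, so $\sigma \mdefeq \sigma_0$ works. If $\rfvar,\ivr$ are both open: $\solvar(\sigma_0) \notin \imodel{\I}{\rfvar}$ (the closed complement contains the left limit), so applying the box hypothesis at $\sigma_0$ still gives $\solvar(\sigma_0) \in \imodel{\I}{\ivr}$; moreover $\sigma_0 < \tau$ (as $\solvar(\tau) \in \imodel{\I}{\rfvar}$) and $\sigma_0$ is an accumulation point from the right of $\{\zeta : \solvar(\zeta) \in \imodel{\I}{\rfvar}\}$, while openness of $\ivr$ plus continuity yields $\delta \in (0,\tau-\sigma_0)$ with $\solvar([0,\sigma_0+\delta]) \subseteq \imodel{\I}{\ivr}$, so some $\zeta \in (\sigma_0,\sigma_0+\delta]$ has $\solvar(\zeta) \in \imodel{\I}{\rfvar}$; take $\sigma \mdefeq \zeta$.

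For \irref{SARef} the target set is $\imodel{\I}{\rfvar \land \ivr}$ and there is no topological side condition. Let $V \mdefeq \{\zeta \in [0,\tau] : \solvar(\zeta) \in \imodel{\I}{\rfvar \land \ivr}\}$. First $V \neq \emptyset$: otherwise $\solvar([0,\tau]) \subseteq \imodel{\I}{\rrfvar \land \lnot(\rfvar \land \ivr)}$, so the box hypothesis $\dbox{\pevolvein{\D{x}=\genDE{x}}{\rrfvar \land \lnot(\rfvar \land \ivr)}}{\ivr}$ forces $\solvar(\tau) \in \imodel{\I}{\ivr}$ and hence $\solvar(\tau) \in \imodel{\I}{\rfvar \land \ivr}$, a contradiction. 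Put $\sigma_0 \mdefeq \inf V$; as before the box hypothesis gives $\solvar(\zeta) \in \imodel{\I}{\ivr} \setminus \imodel{\I}{\rfvar}$ for every $\zeta < \sigma_0$, so the prefix $[0,\sigma_0)$ lies in $\imodel{\I}{\rrfvar \land \lnot(\rfvar \land \ivr)}$. The extra ingredient here is that $\solvar$ is real-analytic (polynomial ODE) while $\imodel{\I}{\rfvar \land \ivr}$ is semialgebraic, so $V$ is a finite union of points and intervals of $[0,\tau]$. If $\sigma_0 \in V$, then $\sigma \mdefeq \sigma_0$ works: its prefix is in $\imodel{\I}{\ivr}$ by the box hypothesis and $\solvar(\sigma_0) \in \imodel{\I}{\rfvar \land \ivr}$. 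If $\sigma_0 \notin V$, then since $\sigma_0 = \inf V$ there is $b > \sigma_0$ with $(\sigma_0,b) \subseteq V$; applying the box hypothesis at $\sigma_0$ (whose whole prefix, endpoint included, is in $\imodel{\I}{\rrfvar \land \lnot(\rfvar \land \ivr)}$ because $\sigma_0 \notin V$) gives $\solvar(\sigma_0) \in \imodel{\I}{\ivr}$, so for any $\zeta \in (\sigma_0,b)$ we get $\solvar([0,\zeta]) \subseteq \imodel{\I}{\ivr}$ and $\solvar(\zeta) \in \imodel{\I}{\rfvar}$; take $\sigma \mdefeq \zeta$.

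I expect the \irref{SARef} case to carry the real weight: without a topological hypothesis to tame the crossover point, one must lean both on the special shape of the domain $\rrfvar \land \lnot(\rfvar \land \ivr)$ — which forces the crossing into $\rfvar$ to occur simultaneously with membership in $\ivr$ — and on finiteness of the sign-change structure of real-analytic solutions against semialgebraic sets, which is what guarantees a genuine witness even when $\inf V$ is not attained. Making that finite-union claim precise and spelling out the case split at $\sigma_0$ is the main obstacle; the \irref{CORef} argument and the routine unfolding of the diamond/box semantics are comparatively mechanical.
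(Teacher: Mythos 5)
Your proof is correct, and for \irref{CORef} it is essentially the paper's own argument: the paper takes the supremum of the set of times $t$ such that $\solvar(\zeta)\notin\imodel{\I}{\rfvar}$ for all $\zeta\leq t$, which coincides with your infimum $\sigma_0$, and runs the same open/closed case split with the same left-limit and local-openness arguments at the crossover point. For \irref{SARef} your organization differs in a worthwhile way. The paper also locates the first crossover time and uses the box hypothesis to pin the prefix inside $\ivr$, but at the crossover it invokes a \emph{local progress} dichotomy for the domain formula $\ivr$ (the solution locally enters either $\ivr$ or $\lnot{\ivr}$ for some nonzero duration, citing the semialgebraic progress machinery behind the paper's invariance results) and refutes the bad branch by reapplying the box hypothesis. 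You instead apply the tameness directly to the target: the preimage $V$ of $\imodel{\I}{\rfvar\land\ivr}$ along the real-analytic solution restricted to $[0,\tau]$ is a finite union of points and intervals, so when $\inf V$ is not attained an entire interval of target times begins immediately to its right, and any point of that interval is a witness. Both arguments rest on exactly the same non-trivial fact --- finiteness of the sign-change structure of semialgebraic conditions along polynomial ODE solutions, which the paper flags in a footnote as the soundness-critical ingredient and which fails for arbitrary sets --- and you correctly identified it as the crux; your sketch of why $V$ has this structure (identity theorem for analytic functions applied to each polynomial sign condition on a compact interval) is the right one and is no harder to make precise than the paper's citation. Your version trades the paper's dichotomy-plus-contradiction for a single structural claim about $V$, and it also absorbs the paper's separate duration-zero base case uniformly (if the initial state already satisfies $\rfvar\land\ivr$ then $\sigma_0=0\in V$). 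The only content of the paper's proof you do not reproduce is the observation that \irref{SARef} stays sound when $\rfvar$ is closed or $\ivr$ is open without the real-arithmetic assumption, but those extra cases are not part of the lemma statement.
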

\begin{proofsketchb}[app:refinementcalculus]{proof:proof4}\end{proofsketchb}

Axiom~\irref{CORef} is the more informative topological refinement axiom.
Like the (unsound) axiom candidate~\irref{badaxiom}, it allows formula $\lnot{\rfvar}$ to be assumed in the domain constraint when proving the box refinement.
For soundness though, axiom~\irref{CORef} has crucial topological side conditions on formulas $\rfvar,\ivr$ so it can only be used when these conditions are met.
Several variations of~\irref{CORef} are possible (with similar soundness proofs), but they require alternative topological restrictions and additional topological notions.
One useful variation involving the topological interior is given in~\rref{lem:closeddomref}.
When these topological restrictions are enforced syntactically, axiom~\irref{CORef} is derived from \dL's real induction axiom~\cite{DBLP:journals/jacm/PlatzerT20}.
For the sake of generality, this article gives semantic topological side conditions with associated semantic soundness proofs in~\rref{app:refinementcalculus}.

Axiom~\irref{SARef} applies more generally than~\irref{CORef} but only assumes the less informative formula $\lnot{(\rfvar \land \ivr)}$ in the domain constraint for the box modality.
Its proof crucially relies on $\ivr$ being a formula of real arithmetic so that the set it characterizes has tame topological behavior~\cite{Bochnak1998}, see the proof in~\rref{app:refinementcalculus} for more details.
By topological considerations, axiom~\irref{SARef} is also sound if formula $\rfvar$ (or resp. $\ivr$) characterizes a topologically closed (resp. open) set over the ODE variables $x$.
These additional cases are also proved in~\rref{app:refinementcalculus} without relying on the fact that $\ivr$ is a formula of real arithmetic.

\section{Finite-Time Blow Up and Global Existence}
\label{sec:globexist}

This section explains how global existence properties can be proved for a given ODE $\D{x}=\genDE{x}$, subject to assumptions $\Gamma$ about the initial states for the ODE.
The existence and uniqueness theorems for ODEs~\cite{Chicone2006,Walter1998} guarantee that polynomial ODEs like $\D{x}=\genDE{x}$ always have a unique, right-maximal solution from any initial state, $\solvar : [0, T) \to \States$ for some $0<T\leq\infty$.
However, these theorems give no guarantees about the precise duration $T$.
In particular, ODEs can exhibit a technical phenomenon known as \emph{finite-time blow up of solutions}~\cite{Chicone2006}, where $\solvar$ is only defined on a bounded time interval $[0,T)$ with $T < \infty$.
Additionally, it is possible that such finite-time blow up phenomena only happens for \emph{some} initial conditions (and corresponding solutions) of the ODE.
Moreover, these initial conditions (with finite-time blow up) may not be relevant to the model of concern, especially when the dynamics of real world systems are controlled to stay away from the blow up.
For example, $\exnonlinear$~\rref{eq:exnonlinear} exhibits finite blow up of solutions only outside the \redc{red} disk as shown in~\rref{fig:odeexamples} and the blow up occurs well after its solutions have reached the target region, see~\rref{fig:odeblowup}.

As an additional example for this section, consider the following nonlinear ODE:
\begin{align}
\exblowup \mnodefequiv \D{v}=-v^2 \label{eq:exblowup}
\end{align}

The solution to this ODE is $v(\timevar) = \frac{v_0}{v_0+\timevar}$, where $v_0 \neq 0$ is the initial value of $v$ at time $\timevar=0$ (if $v_0=0$, then $v(\timevar)=0$ for all $\timevar$).
If $v_0 < 0$ initially, then this solution is only defined to the right for the finite time interval $[0,-v_0)$, because the denominator $v_0+\timevar$ is $0$ at $t=-v_0$.
On the other hand, for $v_0 \geq 0$, the existence interval to the right is $[0,\infty)$.
Thus, $\exblowup$ exhibits finite-time blow up of solutions, but only for $v_0 < 0$.

\subsection{Global Existence Proofs}
\label{sec:proveglobexist}

The discussion above uses the mathematical solution $v(\timevar)$ of the ODE \exblowup~\rref{eq:exblowup} as a function of time.
For deductive proofs, the (global) existence of solutions can be expressed in \dL as a special form of an ODE liveness property.
The first step is to add a fresh variable $\timevar$ with $\D{\timevar}=1$ that tracks the progress of time\footnote{For consistency, the ODE $\D{x}=\genDE{x}$ is assumed to not mention $\timevar$ even if this is not always strictly necessary.}, see~\rref{subsec:syntax}.
Then, using a fresh variable $\tau$ not in $x,\timevar$, the following formula syntactically expresses that the ODE has a global solution because its solutions can reach time $\tau$, for any arbitrary $\tau$:
\begin{align}
\lforall{\tau}{\ddiamond{\pevolve{\D{x}=\genDE{x},\D{\timevar}=1}}{\,\timevar > \tau}}
\label{eq:existence}
\end{align}

Proving formula~\rref{eq:existence} shows global existence of solutions for the ODE $\D{x}=\genDE{x}$.
The simplest instance of~\rref{eq:existence} is for the ODE $\D{\timevar}=1$ by itself without any ODE $\D{x}=\genDE{x}$.
In this case, the formula~\rref{eq:existence} is valid because $\D{\timevar}=1$ is an ODE with constant RHS and its solution exists for all time.
The axiom~\irref{TEx} below expresses this fact and it is derived directly from the solution axiom of \dL~\cite{DBLP:journals/jar/Platzer17}:

\begin{lemma}[Time existence]
\label{lem:timeexist}
The following axiom is derivable in \dL.

\begin{calculuscollection}
\begin{calculus}
\dinferenceRule[TEx|TEx]{}
{
  \axkey{\lforall{\tau}{\ddiamond{\pevolve{\D{\timevar}=1}}{\timevar > \tau}}}
}{}
\end{calculus}
\end{calculuscollection}
\end{lemma}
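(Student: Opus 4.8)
The plan is to reduce the statement to a trivially valid formula of first-order real arithmetic by applying \dL's solution axiom to the constant ODE $\D{\timevar}=1$. First I strip the outer universal quantifier with \irref{allr}, so that it suffices to prove $\ddiamond{\pevolve{\D{\timevar}=1}}{\timevar > \tau}$ for a fixed but arbitrary fresh variable $\tau$; since $\tau$ does not occur on the left-hand side of the ODE, it is a parameter that stays constant along the solution. Dualizing with axiom~\irref{diamond} turns this goal into $\lnot\dbox{\pevolve{\D{\timevar}=1}}{\timevar \leq \tau}$.

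Next I invoke the solution axiom of \dL for $\D{\timevar}=1$. This ODE has constant right-hand side and no domain constraint, so its right-maximal solution from initial value $\timevar$ is $s \mapsto \timevar + s$ with existence interval $[0,\infty)$; the solution axiom then rewrites $\dbox{\pevolve{\D{\timevar}=1}}{\timevar \leq \tau}$ to $\lforall{s}{(s \geq 0 \limply \timevar + s \leq \tau)}$, the usual domain-constraint hypothesis of the solution axiom being vacuous here. The goal thus becomes $\lnot\lforall{s}{(s \geq 0 \limply \timevar + s \leq \tau)}$, equivalently $\lexists{s}{(s \geq 0 \land \timevar + s > \tau)}$.

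This last formula is valid in first-order real arithmetic---the witness $s = \max(0,\tau - \timevar) + 1$ is nonnegative and satisfies $\timevar + s > \tau$---so the branch closes by \irref{qear}. There is no genuine obstacle here; the only points requiring minor care are that the solution axiom is instantiated for the trivially global ODE $\D{\timevar}=1$ so that no existence-duration side condition ever arises, and that $\tau$ is treated as a constant parameter (unaffected by the ODE) throughout the derivation.
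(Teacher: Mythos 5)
Your proposal is correct and follows essentially the same route as the paper: the paper states that \irref{TEx} is derived directly from \dL's solution axiom (and then only spells out the easy semantic argument in detail), whereas you write out the syntactic derivation via \irref{allr}, \irref{diamond}, the solution axiom for the constant ODE $\D{\timevar}=1$, and \irref{qear} on the resulting valid arithmetic formula $\lexists{s}{(s \geq 0 \land \timevar + s > \tau)}$. Your attention to $\tau$ being a constant parameter and to the vacuity of any existence-duration side condition for $\D{\timevar}=1$ matches the paper's reasoning.
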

\begin{proofsketchb}[app:refinementcalculus]{proof:proof5}\end{proofsketchb}

Other instances of \rref{eq:existence} can be proved using axioms~\irref{dBDG+dDDG} with appropriate assumptions about the initial conditions for the additional ODEs $\D{x}=\genDE{x}$.
This is exemplified for the ODE $\exblowup$ next.

\begin{example}[Velocity of particle with air resistance]
\label{ex:velocity}
The ODE $\exblowup$ can be viewed as a model of the velocity of a particle that is slowing down due to air resistance.
Of course, it does not make physical sense for the velocity of such a particle to ``blow up''.
However, the solution of $\exblowup$ only exists globally if the particle starts with positive initial velocity $v>0$, otherwise, it only has short-lived solutions.
The reason is that $\exblowup$ only makes physical sense for positive velocities $v>0$, so that the air resistance term $-v^2$ slows the particle down instead of speeding it up.
Indeed, global existence~\rref{eq:existence} can be proved for $\exblowup$ if its initial velocity is positive, i.e., the \dL formula \(v > 0 \limply \lforall{\tau}{\ddiamond{\pevolve{\D{v}=-v^2,\D{\timevar}=1}}{\timevar > \tau}}\) is valid.

The derivation below starts with basic propositional steps (\irref{implyr+allr}), after which axiom~\irref{dDDG} is used with $\D{v}=-v^2$ as the differential ghost equation with the trivial choice of bounds $L \mnodefeq 0, M \mnodefeq 0$.
This yields two premises, the right of which is proved by~\irref{TEx}.
The resulting left premise requires proving the formula $2\dotp{v}{(-v^2)} \leq 0$ along the ODE.
Mathematically, this says that the derivative of the squared norm $v^2$ is non-negative along $\exblowup$, so that $v^2$ is non-increasing and cannot blow up.\footnote{The fact that $v^2$ is non-increasing can also be used in an alternative derivation with axiom~\irref{dBDG} and the bound $\ptermA \mnodefeq v_0^2$, where $v_0$ syntactically stores the initial value of $v$.}
An~\irref{MbW} step strengthens the postcondition to $v>0$ since $v>0$ implies $2\dotp{v}{(-v^2)} \leq 0$ in real arithmetic.
The resulting premise is an invariance property for $v > 0$ which is provable in \dL (proof omitted~\cite{DBLP:journals/jacm/PlatzerT20}).
The initial assumption $v>0$ is crucially used in this step, as expected.

{\footnotesizeoff\renewcommand{\arraystretch}{1.2}
\begin{sequentdeduction}[array]
  \linfer[implyr+allr]{
  \linfer[dDDG]{
    \linfer[MbW]{
    \linfer[]{
      \lclose
    }
    {\lsequent{v > 0}{\dbox{\pevolve{\D{v}=-v^2,\D{\timevar}=1}}{\,v > 0}}}
    }
    {\lsequent{v > 0}{\dbox{\pevolve{\D{v}=-v^2,\D{\timevar}=1}}{\,2\dotp{v}{(-v^2)} \leq 0}}} !
    \linfer[TEx]{
      \lclose
    }
    {\lsequent{}{\ddiamond{\D{\timevar}=1}}{\timevar > \tau}}
  }
    {\lsequent{v > 0}{\ddiamond{\pevolve{\D{v}=-v^2,\D{\timevar}=1}}{\timevar > \tau}}}
  }
  {\lsequent{}{v>0 \limply \lforall{\tau}{\ddiamond{\pevolve{\D{v}=-v^2,\D{\timevar}=1}}{\timevar > \tau}}}}
\end{sequentdeduction}
}%

\end{example}

\rref{sec:livenessaxioms} offers another view of the derivation above as a single refinement step in the chain~\rref{eq:refinementchain}, recall that refinement steps are read from top-to-bottom.
Here, an initial existence property for the ODE $\D{\timevar}=1$ is refined to the desired existence property for the ODE $\D{v}=-v^2,\D{\timevar}=1$.
The refinement step is justified using~\irref{dDDG} with the box modality formula $\dbox{\pevolve{\D{v}=-v^2,\D{\timevar}=1}}{\,2\dotp{v}{(-v^2)} \leq 0}$.
{\footnotesizeoff%
\begin{align*}
  \ddiamond{\D{\timevar}=1}{\timevar > \tau}
  {\limprefinechain{\irref{dDDG}}}
  \ddiamond{\D{v}=-v^2,\D{\timevar}=1}{\timevar > \tau}
\end{align*}
}%

This chain can be extended to prove global existence for more complicated ODEs $\D{x}=\genDE{x}$ in a stepwise fashion, and (possibly) alternating between uses of~\irref{dDDG} or~\irref{dBDG} for the refinement step.
To do this, note that any ODE $\D{x}=\genDE{x}$ can be written in \emph{dependency order}, where each group $y_i$ is a vector of variables and each $g_i$ corresponds to the respective vectorial RHS of the ODE for $y_i$ for $i=1,\dots,k$.
The RHS of each $\D{y_i}$ is only allowed to depend on the preceding vectors of variables (inclusive) $y_1,\dots,y_i$.
\begin{align}
\underbrace{\D{y_1} = g_1(y_1), \D{y_2} = g_2(y_1,y_2), \D{y_3} = g_3(y_1,y_2,y_3), \dots, \D{y_k} = g_k(y_1,y_2,y_3,\dots,y_k)}_{\D{x}=\genDE{x}~\text{written in dependency order}}
\label{eq:deporder}
\end{align}

\begin{corollary}[Dependency order existence]
\label{cor:deporderexist}
Let the ODE $\D{x}=\genDE{x}$ be in dependency order~\rref{eq:deporder}, and $\tau$ be a fresh variable not in $x,\timevar$.
The following rule with $k$ stacked premises is derived from~\irref{dBDG+dDDG} and~\irref{TEx}, where the postcondition of each premise $\rfvar_i$ for $1 \leq i \leq k$ can be chosen to be either of the form:
\begin{enumerate}
\item[\bform] $\rfvar_i \mnodefequiv \norm{y_i}^2 \leq \ptermA_i(t,y_1,\dots,y_{i-1})$ for some term $\ptermA_i$ with the indicated dependencies, or,
\item[\dform] $\rfvar_i \mnodefequiv 2\dotp{y_i}{g_i(y_1,\dots,y_i)} \leq L_i(t,y_1,\dots,y_{i-1}) \norm{y_i}^2+M_i(t,y_1,\dots,y_{i-1})$ for some terms $L_i,M_i$ with the indicated dependencies.
\end{enumerate}
\begin{calculuscollection}
\begin{calculus}
\dinferenceRule[DEx|DEx]{}
{\linferenceRule
  {
      \begin{array}{l}
    \lsequent{\Gamma}{\dbox{\pevolve{\D{y_1}=g_1(y_1),\D{\timevar}=1}}{\rfvar_1}} \\
    \lsequent{\Gamma}{\dbox{\pevolve{\D{y_1}=g_1(y_1),\D{y_2}=g_2(y_1,y_2),\D{\timevar}=1}}{\rfvar_2}} \\
    \quad\vdots \\
    \lsequent{\Gamma}{\dbox{\pevolve{\D{y_1}=g_1(y_1),\dots,\D{y_k}=g_k(y_1,\dots,y_k),\D{\timevar}=1}}{\rfvar_k}}
    \end{array}
  }
  {\lsequent{\Gamma}{\lforall{\tau}{\ddiamond{\pevolve{\D{x}=\genDE{x},\D{\timevar}=1}}{\timevar > \tau}}}}
}{}
\end{calculus}
\end{calculuscollection}
\end{corollary}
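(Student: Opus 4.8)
The plan is to realize~\irref{DEx} as the chain of refinements sketched right after~\rref{ex:velocity}, namely to peel the dependency-ordered ODE~\rref{eq:deporder} back to the bare clock $\D{\timevar}=1$ one variable group at a time, justifying each peeling step by~\irref{dBDG} or~\irref{dDDG}. Reading the derivation bottom-up: first apply~\irref{allr} to the conclusion (sound since $\tau$ is fresh, not in $x,\timevar$, and may be assumed not in $\Gamma$), so it remains to derive $\lsequent{\Gamma}{\ddiamond{\pevolve{\D{x}=\genDE{x},\D{\timevar}=1}}{\timevar > \tau}}$ with $\tau$ now a symbolic parameter. I will then perform $k$ refinement steps indexed $i = k, k-1, \dots, 1$, where step $i$ removes the ghost group $\D{y_i}=g_i(y_1,\dots,y_i)$ from the current ODE. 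After step $1$ the remaining goal is $\lsequent{\Gamma}{\ddiamond{\pevolve{\D{\timevar}=1}}{\timevar > \tau}}$, which is discharged by~\irref{TEx} (instantiate its quantifier at the current $\tau$ and weaken $\Gamma$ into the antecedent).

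\textbf{The generic step.} For step $i$, use the implication~\irref{dBDG} when $\rfvar_i$ has form~\bform, or~\irref{dDDG} when $\rfvar_i$ has form~\dform, instantiated with ghost variables $y \mnodefeq y_i$, base variables $x \mnodefeq (y_1,\dots,y_{i-1},\timevar)$ carrying the ODE $\D{y_1}=g_1(y_1),\dots,\D{y_{i-1}}=g_{i-1}(y_1,\dots,y_{i-1}),\D{\timevar}=1$, ghost right-hand side $g(x,y) \mnodefeq g_i(y_1,\dots,y_i)$, trivial domain $\ivr \mnodefeq \ltrue$, and postcondition $\rfvar(x) \mnodefeq (\timevar > \tau)$. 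Since~\irref{dBDG} and~\irref{dDDG} are implications (not the context-discarding ODE rules), $\Gamma$ threads through unchanged. Applying the implication, the step branches into: the box-modality premise $\lsequent{\Gamma}{\dbox{\pevolve{\D{y_1}=g_1(y_1),\dots,\D{y_i}=g_i(y_1,\dots,y_i),\D{\timevar}=1}}{\rfvar_i}}$, which up to reordering ODE equations is exactly the $i$-th premise of~\irref{DEx}; and the same liveness goal for the truncated ODE, $\lsequent{\Gamma}{\ddiamond{\pevolve{\D{y_1}=g_1(y_1),\dots,\D{y_{i-1}}=g_{i-1}(y_1,\dots,y_{i-1}),\D{\timevar}=1}}{\timevar > \tau}}$, which feeds into step $i-1$.

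\textbf{Side conditions.} Each application goes through because its hypotheses hold by construction, not by any nontrivial argument. Because~\rref{eq:deporder} is in dependency order, the equations $\D{y_j}=g_j$ for $j<i$ and $\D{\timevar}=1$ do not mention $y_i$, so $y_i$ is a legitimate fresh ghost vector for the base ODE; the postcondition $\timevar>\tau$ mentions only the base variable $\timevar$ and the parameter $\tau$ and never any $y_j$, so it qualifies as $\rfvar(x)$; and forms~\bform and~\dform are phrased precisely so that $\ptermA_i$ (resp. the coefficients $L_i,M_i$) mention only $t,y_1,\dots,y_{i-1}$, i.e. only the base variables $x = (y_1,\dots,y_{i-1},\timevar)$, which is exactly what~\irref{dBDG} demands of $\ptermA(x)$ (resp. what~\irref{dDDG} demands of $L(x),M(x)$).

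\textbf{Main obstacle.} There is no deep difficulty here; this is simply the $k$-fold iteration of the single refinement step illustrated for $\exblowup$ after~\rref{ex:velocity}. The only things needing care are the bookkeeping of the nested variable groupings $x = (y_1,\dots,y_{i-1},\timevar)$ versus ghost $y_i$ at each stage, the observation that $\tau$ must be treated as a symbolic parameter throughout (so that stripping $\forall\tau$ before the chain is sound and $\timevar>\tau$ meets the variable-dependency hypothesis of the refinement axioms), and the trivial commutation of ODE equations needed to match each generated box premise with the corresponding premise of~\irref{DEx}.
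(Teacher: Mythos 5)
Your proposal is correct and follows essentially the same route as the paper's own derivation: Skolemize with \irref{allr}, then peel off $y_k,\dots,y_1$ in reverse dependency order via \irref{dBDG} (form \bform) or \irref{dDDG} (form \dform), with the dependency order guaranteeing the freshness and variable-dependency side conditions, and close the final premise with \irref{TEx}. No gaps.
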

\begin{proofsketcha}[app:globexistproofs]{proof:proof6}
The derivation proceeds (backwards) by successive refinements using either~\irref{dBDG} for premises corresponding to the form \bform or~\irref{dDDG} for those corresponding to \dform, with the ghost equations for $g_i$ and the respective bounds $\ptermA_i$ or $L_i,M_i$ at each step for $i=k,\dots,1$.
\end{proofsketcha}

Rule~\irref{DEx} corresponds to a refinement chain~\rref{eq:refinementchain} of length $k$, with successive~\irref{dBDG+dDDG} steps, e.g.:
{\footnotesizeoff%
\begin{align*}
  \ddiamond{\D{\timevar} = 1}{\timevar > \tau}
  {\limprefinechain{\irref{dBDG}}}
  \ddiamond{\D{y_1}{=}g_1(y_1),\D{\timevar} = 1}{\timevar > \tau}
  {\limprefinechain{\irref{dDDG}}}
  \cdots
  \limprefinechain{}
  \ddiamond{\D{y_1}{=}g_1(y_1),\dots,\D{y_k}{=}g_k(y_1,\dots,y_k),\D{\timevar} = 1}{\timevar > \tau}
\end{align*}
}%

In rule~\irref{DEx} any choice of the shape of premises (\bform and \dform) is sound as these correspond to an underlying choice of axiom~\irref{dBDG+dDDG} to apply at each refinement step, respectively.
Another source of flexibility arises when choosing the dependency ordering~\rref{eq:deporder} for the ODE $\D{x}=\genDE{x}$, as long as the requisite dependency requirements are met.
For example, one can always choose the coarsest dependency order $y_1 \mnodefequiv x, g_1 \mnodefequiv \genDE{x}$ to directly prove global existence in one step using appropriate choice of bounds $L_1,M_1$.
The advantage of using finer dependency orders in~\irref{DEx} is it allows the user to choose the bounds $L_i,M_i$ in a step-by-step manner for $i=1,\dots,k$.
On the other hand, the flexibility of rule~\irref{DEx} can also be a drawback because it relies on manual effort from users to choose the partition and to prove the resulting premises.
\rref{sec:derivedexist} explains useful recipes for using the flexibility behind rule~\irref{DEx}, e.g.,  Corollaries~\ref{cor:globalexistaffine} and~\ref{cor:boundedexistgen}, while~\rref{subsec:support} further explains how proof support can help users in those proofs.

The discussion thus far proves global existence for ODEs with an explicit time variable $\timevar$.
This is not a restriction for the liveness proofs in later sections of this article because such a fresh time variable can always be added using the rule~\irref{dGt} below, which is derived from~\irref{DG}.
The rule also adds the assumption $\timevar=0$ initially without loss of generality for ease of proof.
\[
\dinferenceRule[dGt|dGt]{diff ghost clock}
{\linferenceRule
  {\lsequent{\Gamma,\timevar=0} {\ddiamond{\pevolvein{\D{x}=\genDE{x},\D{\timevar}=1}{\ivr}}{\rfvar}}}
  {\lsequent{\Gamma}{\ddiamond{\pevolvein{\D{x}=\genDE{x}}{\ivr}}{\rfvar}}}
}{}
\]

\subsection{Derived Existence Axioms}
\label{sec:derivedexist}
For certain classes of ODEs and initial conditions, there are well-known mathematical techniques to prove global existence of solutions.
These techniques have purely syntactic renderings in \dL as special cases of~\irref{dBDG+dDDG}, and \irref{DEx}.
In particular, this section shows how axioms~\irref{GEx+BEx} (shown below), which were proved semantically in the earlier conference version~\cite{DBLP:conf/fm/TanP19}, can be derived syntactically.
The refinement approach also yields natural generalizations of these axioms.

\subsubsection{Globally Lipschitz ODEs}
A function $f : \reals^m \to \reals^n$ is \emph{globally Lipschitz continuous} if there is a (positive) Lipschitz constant $C \in \reals$ such that the inequality $\norm{\genDE{x}-\genDE{y}} \leq C \norm{x-y}$ holds for all $x,y \in \reals^m$, where $\norm{\cdot}$ are appropriate norms.
Since norms are equivalent on finite dimensional vector spaces~\cite[\S5.V]{Walter1998}, without loss of generality, the Euclidean norm is used for the following discussion.
An ODE $\D{x}=\genDE{x}$ is \emph{globally Lipschitz} if its RHS $\genDE{x}$ is globally Lipschitz continuous and solutions of such ODEs always exist globally for all time~\cite[\S10.VII]{Walter1998}.
Global Lipschitz continuity is satisfied, e.g., by $\exlinear$~\rref{eq:exlinear}, and more generally by linear (or even affine) ODEs of the form $\D{x}=Ax$, where $A$ is a matrix of (constant) parameters~\cite{Walter1998} because of the following (mathematical) inequality with Lipschitz constant $\norm{A}$, i.e., the (matrix-Euclidean) Frobenius norm of $A$:
\[ \norm{Ax - Ay} = \norm{A (x-y)} \leq \norm{A}\norm{x-y} \]

This calculation uses the Euclidean norm $\norm{\cdot}$, which is not a term in \dL (\rref{subsec:syntax}) because it is not a polynomial.
This syntactic exclusion is not an oversight: it is crucial to the soundness of \dL that such non-differentiable terms are excluded from its syntax.
For example, $\norm{x}$ is not differentiable at $x=0$.
Thus, a subtle technical challenge in proofs is to appropriately rephrase mathematical inequalities, typically involving norms, into ones that can be reasoned about soundly also in the presence of differentiation.
In this respect, the Euclidean norm is useful, because expanding the inequality $0 \leq (1-\norm{x})^2$ and rearranging yields:
\begin{align}
2\norm{x} \leq 1+\norm{x}^2
\label{eq:norminequality}
\end{align}

Notice that, unlike the Euclidean norm $\norm{x}$, the RHS of the square inequality~\rref{eq:norminequality} can be represented syntactically.
Indeed, the squared Euclidean norm is already used in axioms~\irref{BDG+dBDG+dDDG}.
To support intuition, the proof sketches below continue to use mathematical inequalities involving Euclidean norms, while the proofs in the appendix use rephrasings with~\rref{eq:norminequality} instead.
The following corollary shows how global existence for globally Lipschitz ODEs is derived using a norm inequality as a special case of rule~\irref{DEx}.

\begin{corollary}[Global existence]
\label{cor:globalexistbase}
The following global existence axiom is derived from~\irref{dDDG} in \dL, where $\tau$ is a fresh variable not in $x,\timevar$, and $\D{x}=\genDE{x}$ is globally Lipschitz.

\begin{calculuscollection}
\begin{calculus}
\cinferenceRule[GEx|GEx]{}
{
  \axkey{\lforall{\tau}{\ddiamond{\pevolve{\D{x}=\genDE{x},\D{\timevar}=1}}{\timevar > \tau}}}
  \qquad
}{}
\end{calculus}
\end{calculuscollection}
\end{corollary}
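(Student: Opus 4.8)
The plan is to read~\irref{GEx} off rule~\irref{DEx} using the coarsest dependency ordering of $\D{x}=\genDE{x}$, namely $k=1$ with $y_1 \mnodefequiv x$, $g_1 \mnodefequiv \genDE{x}$, and a single premise of shape~\dform. Working backwards from the conclusion $\lforall{\tau}{\ddiamond{\pevolve{\D{x}=\genDE{x},\D{\timevar}=1}}{\timevar > \tau}}$ --- which is literally the formula of~\irref{GEx} --- this one~\irref{DEx} step (it unfolds into~\irref{TEx} followed by one~\irref{dDDG} refinement gluing the ghost ODE $\D{x}=\genDE{x}$ onto $\D{\timevar}=1$) leaves exactly the premise
\[
  \lsequent{}{\dbox{\pevolve{\D{x}=\genDE{x},\D{\timevar}=1}}{\,2\dotp{x}{\genDE{x}} \leq L_1 \norm{x}^2 + M_1}}
\]
for rational constants $L_1, M_1$ still to be picked; constants are admissible instances of the $\timevar$-dependencies $L_1(\timevar), M_1(\timevar)$ permitted by shape~\dform, so the result is genuinely a context-free axiom.

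Since this premise carries no domain constraint, I would discharge it by~\irref{dW}, reducing it to the first-order sequent $\lsequent{}{2\dotp{x}{\genDE{x}} \leq L_1 \norm{x}^2 + M_1}$, which~\irref{qear} closes once $L_1, M_1$ are chosen large enough. Suitable constants exist by the global Lipschitz hypothesis: for a Lipschitz constant $C$ and $K \mnodefeq \norm{\genDE{0}}$ (the norm of the vector field at the origin), the triangle inequality gives $\norm{\genDE{x}} \leq K + C\norm{x}$, and Cauchy--Schwarz then gives $2\dotp{x}{\genDE{x}} \leq 2\norm{x}\norm{\genDE{x}} \leq 2C\norm{x}^2 + 2K\norm{x}$; replacing $2K\norm{x}$ with $K(1+\norm{x}^2)$ via the square inequality~\rref{eq:norminequality} yields $2\dotp{x}{\genDE{x}} \leq (2C+K)\norm{x}^2 + K$, so (using $\norm{x}^2 \geq 0$) any rationals $L_1 \geq 2C+K$ and $M_1 \geq K$ work. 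As in the appendix derivation, one would actually run the whole computation through~\rref{eq:norminequality} from the outset, so that every intermediate term is a bona fide \dL polynomial and the final inequality is a decidable sentence of real arithmetic; global Lipschitzness is exactly what guarantees such a polynomial bound exists. (Indeed a globally Lipschitz polynomial vector field must be affine, $\genDE{x} = Ax+b$, so concretely any rational $L_1 \geq 2\norm{A} + \norm{b}$ and $M_1 \geq \norm{b}$ suffice.)

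The main obstacle is the one the surrounding discussion already stresses: the textbook existence certificate is naturally phrased with the Euclidean norm $\norm{x}$, which is deliberately excluded from the \dL term language, so the real content is the rewriting via~\rref{eq:norminequality} into a polynomial inequality that stays sound under differentiation and is then decided by~\irref{qear}. The remaining steps --- the single~\irref{DEx}/\irref{dDDG} refinement and the~\irref{dW} reduction to arithmetic --- are routine.
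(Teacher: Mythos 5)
Your proposal is correct and follows essentially the same route as the paper: the paper's derivation is exactly one \irref{allr} step, one \irref{dDDG} refinement gluing $\D{x}=\genDE{x}$ onto $\D{\timevar}=1$ (closed by \irref{TEx} on the right), and a \irref{dW}+\irref{qear} discharge of the polynomial bound $2\dotp{x}{f(x)} \leq L\norm{x}^2 + M$ obtained from the Lipschitz estimate, Cauchy--Schwarz, and~\rref{eq:norminequality} --- which is precisely your \irref{DEx} instance with the coarsest dependency order. Your choice $L_1 \geq 2C+K$, $M_1 \geq K$ matches the paper's inequality~\rref{eq:globlipschitzboundtwo}, and your remark about running the computation through~\rref{eq:norminequality} so that only squared norms appear is exactly the extra step the appendix takes to land on bona fide \dL terms.
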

\begin{proofsketcha}[app:globexistproofs]{proof:proof7}
Let $C$ be the Lipschitz constant for $f$.
The proof uses~\irref{dDDG} and two (mathematical) inequalities.
The first inequality~\rref{eq:globlipschitzboundone} bounds $\norm{f(x)}$ linearly in $\norm{x}$.
The constant $0$ is chosen here to simplify the resulting arithmetic.
\begin{align}
\norm{f(x)} &= \norm{f(x) - f(0) + f(0)} \leq \norm{f(x)-f(0)} + \norm{f(0)} \leq C\norm{x-0} + \norm{f(0)} = C\norm{x} + \norm{f(0)}
\label{eq:globlipschitzboundone}
\end{align}

The second inequality uses bound~\rref{eq:globlipschitzboundone} on $\norm{f(x)}$ to further bound $2\dotp{x}{f(x)}$ linearly in $\norm{x}^2$ along the ODE with appropriate choices of $L,M$ that only depend on the (positive) Lipschitz constant $C$ and $\norm{f(0)}$.
\begin{align}
2\dotp{x}{f(x)} &\leq 2\norm{x}\norm{f(x)} \overset{\rref{eq:globlipschitzboundone}}{\leq} 2\norm{x} \big( C \norm{x} + \norm{f(0)} \big) = 2C \norm{x}^2  + 2\norm{x}\norm{f(0)} \label{eq:globlipschitzboundtwo}\\
& \overset{\rref{eq:norminequality}}{\leq} 2C \norm{x}^2  + (1+\norm{x}^2) \norm{f(0)} = \underbrace{\big(2C + \norm{f(0)}\big)}_{L} \norm{x}^2  + \underbrace{\norm{f(0)}}_{M} \nonumber \qedhere
\end{align}
\end{proofsketcha}

The derivation of axiom~\irref{GEx} uses~\irref{dDDG}, but global existence extends to more complicated ODEs with the aid of~\irref{DEx} as long as appropriate choices of $L,M$ can be made.
A useful example of such an extension is global existence for ODEs that have an \emph{affine dependency order}~\rref{eq:deporder}, i.e., each $\D{y_i}=g_i(y_1,\dots,y_i)$ is affine in $y_i$ with $\D{y_i} = A_i(y_1,\dots,y_{i-1}) y_i + b_i(y_1,\dots,y_{i-1})$ where $A_i,b_i$ are respectively matrix and vector terms with appropriate dimensions and the indicated variable dependencies.

\begin{corollary}[Affine dependency order global existence]
\label{cor:globalexistaffine}
Axiom~\irref{GEx} is derivable from~\irref{dDDG} in \dL for ODEs $\D{x}=\genDE{x}$ with affine dependency order.
\end{corollary}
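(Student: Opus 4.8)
The plan is to instantiate the rule~\irref{DEx} of~\rref{cor:deporderexist} at the given affine dependency order~\rref{eq:deporder} of $\D{x}=\genDE{x}$ (with empty $\Gamma$), choosing every one of its $k$ premises to have the differential form~\dform. Each such premise will turn out to be a valid formula of first-order real arithmetic, so the whole derivation closes and produces exactly the validity asserted by axiom~\irref{GEx}. Since only~\dform premises are used, this derivation relies only on~\irref{dDDG} (and~\irref{TEx}), as claimed; it is the multi-step generalization of the single-step~\irref{dDDG} derivation used for~\rref{cor:globalexistbase}.

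The heart of the argument is the choice of the bounds $L_i,M_i$ for each group. Write the $i$-th equation of~\rref{eq:deporder} as $\D{y_i}=A_i(y_1,\dots,y_{i-1})\,y_i + b_i(y_1,\dots,y_{i-1})$, so that
\[
2\dotp{y_i}{g_i} = 2\dotp{y_i}{A_i y_i} + 2\dotp{y_i}{b_i}.
\]
By the Cauchy--Schwarz inequality applied rowwise, $\norm{A_i y_i} \leq \norm{A_i}\norm{y_i}$ with $\norm{A_i}$ the Frobenius norm of $A_i$, hence $2\dotp{y_i}{A_i y_i} \leq 2\norm{A_i}\norm{y_i}^2$; applying the squaring trick behind inequality~\rref{eq:norminequality} to the scalar $\norm{A_i}$ gives $2\norm{A_i}\leq 1+\norm{A_i}^2$, so $2\dotp{y_i}{A_i y_i} \leq (1+\norm{A_i}^2)\norm{y_i}^2$. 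Similarly $2\dotp{y_i}{b_i} \leq 2\norm{y_i}\norm{b_i} \leq \norm{y_i}^2 + \norm{b_i}^2$. Thus the choices
\[
L_i \mnodefeq 2 + \norm{A_i}^2, \qquad M_i \mnodefeq \norm{b_i}^2
\]
yield $2\dotp{y_i}{g_i} \leq L_i\norm{y_i}^2 + M_i$ for all real values. Crucially, $\norm{A_i}^2$ and $\norm{b_i}^2$ are just the sums of squares of the entries of $A_i$ and $b_i$, hence honest \dL terms (polynomials), and they mention only the preceding variables $y_1,\dots,y_{i-1}$, which is exactly what the dependency constraint on form~\dform demands; this mirrors the squaring step in the proof of~\rref{cor:globalexistbase}.

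It remains to discharge each premise, i.e.\ the sequent $\lsequent{}{\dbox{\pevolve{\D{y_1}=g_1,\dots,\D{y_i}=g_i,\D{\timevar}=1}}{2\dotp{y_i}{g_i}\leq L_i\norm{y_i}^2+M_i}}$. By the pointwise inequality just established, the postcondition is a valid formula of first-order real arithmetic, so (there being no domain constraint) the premise follows by~\irref{dW}, which reduces it to $\lsequent{\ltrue}{2\dotp{y_i}{g_i}\leq L_i\norm{y_i}^2+M_i}$, closed by~\irref{qear}. Once all $k$ premises of~\irref{DEx} are closed, its conclusion $\lforall{\tau}{\ddiamond{\pevolve{\D{x}=\genDE{x},\D{\timevar}=1}}{\timevar>\tau}}$ is derived, which is precisely axiom~\irref{GEx}.

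The ODE reasoning here is routine because~\irref{DEx} packages it; the main obstacle is the arithmetic bookkeeping of keeping the bounds $L_i,M_i$ inside the polynomial term language rather than letting Euclidean or matrix norms (square roots) leak in. The squaring trick of~\rref{eq:norminequality}, used both on $\norm{A_i}$ and on the cross term $2\norm{y_i}\norm{b_i}$, is what makes this work. Note also that form~\dform is essential: an explicit form~\bform bound $\norm{y_i}^2\leq\ptermA_i$ would have to account for the exponential-in-$\timevar$ growth of affine systems and so could not be a polynomial term, whereas bounding the Lie derivative of $\norm{y_i}^2$ linearly, as in~\dform, is always possible for affine right-hand sides.
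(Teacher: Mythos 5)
Your proof is correct and follows essentially the same route as the paper: apply~\irref{DEx} with every premise in form~\dform, pick polynomial bounds $L_i,M_i$ in the entries of $A_i,b_i$ (hence depending only on $y_1,\dots,y_{i-1}$) via the squaring trick of~\rref{eq:norminequality}, and close each premise with~\irref{dW} and~\irref{qear}. The only difference is cosmetic — you take $L_i \mnodefeq 2+\norm{A_i}^2$, $M_i \mnodefeq \norm{b_i}^2$ where the paper arranges the same estimates into $L_i \mnodefeq 1+\norm{A_i}^2+\frac{1}{2}(1+\norm{b_i}^2)$, $M_i \mnodefeq \frac{1}{2}(1+\norm{b_i}^2)$ — and both choices satisfy the dependency requirements of~\irref{dDDG}.
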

\begin{proofsketcha}[app:globexistproofs]{proof:proof8}
The proof is similar to~\rref{cor:globalexistbase} but uses~\irref{DEx} to prove global existence step-by-step for the dependency order.
It uses the following (mathematical) inequality and corresponding choices of $L_i,M_i$ (shown below) for $i=1,\dots,k$ at each step:
\begin{align}
2\dotp{y_i}{(A_iy_i + b_i)} &= 2(\dotp{y_i}{(A_iy_i)} + \dotp{y_i}{b_i}) \leq 2\norm{A_i}\norm{y_i}^2 + 2\norm{y_i}\norm{b_i} \nonumber \\
&\leq 2\norm{A_i}\norm{y_i}^2 + (1+\norm{y_i}^2)\norm{b_i} = \underbrace{(2\norm{A_i}+\norm{b_i})}_{L_i}\norm{y_i}^2 + \underbrace{\norm{b_i}}_{M_i}
\label{eq:globexistaffinebound}
\end{align}

This inequality is very similar to the one used for~\rref{cor:globalexistbase}, where $\norm{A_i}$ corresponds to $C$, and $\norm{b_i}$ corresponds to $\norm{f(0)}$.
The difference is that terms $L_i,M_i$ are allowed to depend on the preceding variables $y_1,\dots,y_{i-1}$.
Importantly for soundness, both terms meet the appropriate variable dependency requirements of~\irref{dDDG} because the terms $A_i,b_i$ are not allowed to depend on $y_i$ in the affine dependency order.
\end{proofsketcha}

With the extended refinement chain underlying~\irref{DEx},~\rref{cor:globalexistaffine} enables more general proofs of global existence for certain multi-affine ODEs that are not necessarily globally Lipschitz.

\begin{example}[Multi-affine ODE]
\label{ex:multiaffine}
Consider the multi-affine ODE $\D{u}=u, \D{v}=uv$.
The RHS of this ODE is given by the function $\begin{pmatrix} u \\ v \end{pmatrix} \mapsto \begin{pmatrix} u \\ uv \end{pmatrix}$ which is not globally Lipschitz.\footnote{
For the function to be globally Lipschitz, there must exist a constant $C \in \reals$ such that for all $\begin{pmatrix} u_1 \\ v_1 \end{pmatrix}, \begin{pmatrix} u_2 \\ v_2 \end{pmatrix} \in \reals^2$, the norm inequality $\norm{\begin{pmatrix} u_1-u_2 \\ u_1v_1 -u_2v_2\end{pmatrix}} \leq C \norm{\begin{pmatrix} u_1-u_2 \\ v_1 -v_2\end{pmatrix}}$ is satisfied.
No such $C$ exists because the $u_1v_1 -u_2v_2$ component on the LHS grows quadratically while the corresponding component $v_1-v_2$ on the RHS grows linearly (consider setting $u_i=v_i$ for $i=1,2$).}
Nevertheless, the ODE meets the dependency requirements of~\rref{cor:globalexistaffine} and has provable global solutions.

The following derivation illustrates the proof of~\rref{cor:globalexistaffine}.
In the first step, rule~\irref{DEx} is used with dependency order $y_1 \mnodefequiv u, y_2 \mnodefequiv v$ and Lipschitz constants $L_1(t) \mnodefeq 2, L_2(u,t) \mnodefeq 2u, M_1(t) \mnodefeq 0, M_2(u,t) \mnodefeq 0$.
The dependency requirements of the Lipschitz constants, notably for $L_2$, are satisfied by these choices and the resulting premises are proved by~\irref{dW+qear} because the postconditions are valid real arithmetic formulas.
{\footnotesizeoff%
\begin{sequentdeduction}[array]
\linfer[DEx]{
  \linfer[dW]{
  \linfer[qear]{
    \lclose
  }
    {\lsequent{}{2(u)(u) \leq (2) u^2}}
  }
  {\lsequent{}{\dbox{\pevolve{\D{u}=u,\D{\timevar}=1}}{2(u)(u) \leq (2) u^2}}} !
  \linfer[dW]{
  \linfer[qear]{
    \lclose
  }
    {\lsequent{}{2(v)(uv) \leq (2u) v^2}}
  }
  {\lsequent{}{\dbox{\pevolve{\D{u}=u,\D{v}=uv,\D{\timevar}=1}}{2(v)(uv) \leq (2u) v^2}}}
}
  {\lsequent{}{\lforall{\tau}{\ddiamond{\pevolve{\D{u}=u, \D{v}=uv,\D{\timevar}=1}}{\timevar > \tau}}}}
\end{sequentdeduction}
}%

Observe that the premises of~\irref{DEx} remove the ODEs for $u,v$ in a step-by-step fashion.
This is the key for generalizing global existence for globally Lipschitz ODEs~\cite[\S10.VII]{Walter1998} to more general classes of ODEs.
\end{example}

\subsubsection{Bounded Existence}
Returning to the example ODEs $\exnonlinear$~\rref{eq:exnonlinear} and $\exblowup$~\rref{eq:exblowup}, observe that axiom~\irref{GEx} applies to neither of those ODEs because they do not have affine dependency order.
As observed earlier in~\rref{fig:odeexamples} and~\rref{ex:velocity} respectively, neither $\exnonlinear$ nor $\exblowup$ have global solutions from all initial states.
Although~\rref{ex:velocity} shows how global existence for $\exblowup$ can be proved from assumptions motivated by physics, it is also useful to have general axioms (similar to~\irref{GEx}) corresponding to well-known mathematical techniques for proving global existence of solutions for nonlinear ODEs under particular assumptions.
One such mathematical technique is briefly recalled next.

Suppose that the solution of ODE $\D{x}=\genDE{x}$ is trapped within a bounded set (whose compact closure is contained in the domain of the ODE), then, the ODE solution exists globally~\cite[Corollary 2.5]{MR2381711}\cite[Theorem 3.3]{MR1201326}.
In control theory, this principle is used to show the global existence of solutions near stable equilibria~\cite{MR2381711,MR1201326}.
It also applies in case the model of interest has state variables that are \emph{a priori} known to range within a bounded set~\cite[Section 6]{10.2307/j.ctt17kkb0d}.

This discussion suggests that the following formula is valid for any ODE $\D{x}=\genDE{x}$, where $\boundedf(x)$ characterizes a bounded set over the variables $x$ so the assumption $\dbox{\D{x}=\genDE{x}}{\boundedf(x)}$ says that the ODE solution is always trapped within the bounded set characterized by $\boundedf(x)$.
\begin{align}
\dbox{\D{x}=\genDE{x}}{\boundedf(x)} \limply \lforall{\tau}{\ddiamond{\pevolve{\D{x}=\genDE{x},\D{\timevar}=1}}{\timevar > \tau}}
\label{eq:existence3}
\end{align}

Formula~\rref{eq:existence3} is (equivalently) rewritten succinctly in the following corollary by negating the box modality.

\begin{corollary}[Bounded existence]
\label{cor:boundedexistbase}
The following bounded existence axiom is derived from~\irref{dBDG} in \dL, where $\tau$ is a fresh variable not in $x,\timevar$, and formula $\boundedf(x)$ characterizes a bounded set over variables $x$.

\begin{calculuscollection}
\begin{calculus}
\dinferenceRule[BEx|BEx]{}
{
  \axkey{\lforall{\tau}{\ddiamond{\pevolve{\D{x}=\genDE{x},\D{\timevar}=1}}{(\timevar > \tau \lor \lnot{\boundedf(x)})}}}
}{}
\end{calculus}
\end{calculuscollection}
\end{corollary}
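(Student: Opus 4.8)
The plan is to prove axiom~\irref{BEx} not by a single refinement chain anchored at~\irref{TEx}, but by a case split on whether the solution of $\D{x}=\genDE{x}$ stays trapped in the bounded set characterized by $\boundedf(x)$. The reason a single chain cannot work is that turning $\D{\timevar}=1$ into $\D{x}=\genDE{x},\D{\timevar}=1$ by a bounded-ghost step would require an a priori bound on $\norm{x}^2$ that is unavailable in general because of finite-time blow up. So, after~\irref{allr} fixes an arbitrary instantiation $\tau$, I would~\irref{cut} in the tautology $\dbox{\pevolve{\D{x}=\genDE{x}}}{\boundedf(x)} \lor \lnot\dbox{\pevolve{\D{x}=\genDE{x}}}{\boundedf(x)}$ and~\irref{orl} into two cases. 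This exploits that the postcondition $\timevar > \tau \lor \lnot\boundedf(x)$ offers two ``escape routes'': either the solution eventually leaves $\boundedf(x)$, or it is trapped in a bounded set, hence globally defined, so $\timevar$ overtakes $\tau$.

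In the trapped case, assuming $\dbox{\pevolve{\D{x}=\genDE{x}}}{\boundedf(x)}$: since $\boundedf(x)$ characterizes a bounded set, there is a term $\ptermA()$ (over parameters only) making $\boundedf(x) \limply \norm{x}^2 \leq \ptermA()$ valid in real arithmetic, so~\irref{MbW} gives $\dbox{\pevolve{\D{x}=\genDE{x}}}{\norm{x}^2 \leq \ptermA()}$, and a routine~\irref{DG} step inserts the fresh clock to obtain $\dbox{\pevolve{\D{x}=\genDE{x},\D{\timevar}=1}}{\norm{x}^2 \leq \ptermA()}$. This is exactly the premise of~\irref{dBDG} with $\D{\timevar}=1$ as the base ODE, $\D{x}=\genDE{x}$ as the (vector of) bounded ghost(s), and the \emph{constant} bound $\ptermA()$; applied to the~\irref{TEx} instance $\ddiamond{\pevolve{\D{\timevar}=1}}{\timevar > \tau}$ it yields $\ddiamond{\pevolve{\D{x}=\genDE{x},\D{\timevar}=1}}{\timevar > \tau}$, which~\irref{MdW} weakens to $\timevar > \tau \lor \lnot\boundedf(x)$. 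This is the crux: the constant norm bound, available precisely because the solution never leaves the bounded $\boundedf(x)$, is what forbids premature blow up and so licenses the bounded-ghost addition. In the escaping case, assuming $\lnot\dbox{\pevolve{\D{x}=\genDE{x}}}{\boundedf(x)}$: axiom~\irref{diamond} turns this into $\ddiamond{\pevolve{\D{x}=\genDE{x}}}{\lnot\boundedf(x)}$, a further~\irref{DG} step adds the clock (the postcondition $\lnot\boundedf(x)$ mentions neither $\timevar$ nor $\tau$) to get $\ddiamond{\pevolve{\D{x}=\genDE{x},\D{\timevar}=1}}{\lnot\boundedf(x)}$, and~\irref{MdW} again weakens the postcondition to $\timevar > \tau \lor \lnot\boundedf(x)$, closing the case.

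The main obstacle is conceptual rather than computational: recognizing that the two disjuncts of the postcondition correspond to the trapped-versus-escaping dichotomy and that only the trapped behavior comes equipped with the norm bound needed to feed~\irref{dBDG}. The remaining ingredients are routine but must be spelled out with care: the~\irref{DG}-based lemmas for inserting the fresh clock $\D{\timevar}=1$ into box and diamond modalities, the immaterial reordering of equations in an ODE system, and the real-arithmetic extraction of the bounding term $\ptermA()$ from the boundedness side condition --- which, when $\boundedf$ mentions parameters, rests on the fact that a set semialgebraically bounded in $x$ is contained in a polynomially sized ball in $x$.
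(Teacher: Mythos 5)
Your proof is, at its core, the paper's proof in different packaging. The paper also reduces \irref{BEx} to the single sequent \(\lsequent{\dbox{\pevolve{\D{x}=\genDE{x},\D{\timevar}=1}}{\boundedf(x)}}{\ddiamond{\pevolve{\D{x}=\genDE{x},\D{\timevar}=1}}{\,\timevar>\tau}}\) --- it gets there by \irref{diamond}, de Morgan, and \irref{band} rather than by your excluded-middle cut on \(\dbox{\pevolve{\D{x}=\genDE{x}}}{\boundedf(x)}\), which yields the same trapped case plus an easy escaping case --- and then applies \irref{dBDG} with a constant bound against \irref{TEx} exactly as you do. Working with the clock-free ODE and inserting \(\D{\timevar}=1\) afterwards by \irref{DG} is a harmless variation.

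The one genuine flaw is your justification for the bounding term \(\ptermA()\). It is \emph{not} true that a formula characterizing a bounded set over \(x\) is contained in a ball whose radius is a polynomial term in the parameters: take \(\boundedf \equiv (0\le x \land x\cdot y^2\le 1 \land y\ne 0)\) over ODE variable \(x\) with parameter \(y\). Every fiber is bounded, but the supremum of \(x^2\) on the fiber is \(1/y^4\), which exceeds every polynomial in \(y\) near \(y=0\), so no term \(\ptermA()\) of the language makes \(\boundedf(x)\limply\norm{x}^2\le\ptermA()\) valid. The repair is the standard one the paper uses: the formula \(\lexists{D}{\lforall{x}{(\boundedf(x)\limply\norm{x}^2\le D)}}\) \emph{is} valid (for each fixed parameter value some bound exists), hence provable by \irref{qear}; cut it in, Skolemize \(D\) to a fresh variable, and use that variable --- constant for the ODE --- as the bound fed to \irref{dBDG}, keeping the universally quantified assumption as constant context through your \irref{MbW} step (the paper does this with \irref{dC}, \irref{dW}, and \irref{alll}). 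With that substitution the rest of your argument goes through.
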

\begin{proofsketcha}[app:globexistproofs]{proof:proof9}
The squared norm $\norm{x}^2$ function is continuous in $x$ so it is bounded above by a constant $D$ on the compact closure of the set characterized by $\boundedf(x)$.
The proof uses axiom~\irref{dBDG} with $\ptermA(x) \mnodefeq D$ and rephrases formula~\rref{eq:existence3} with axiom~\irref{diamond}.\qedhere
\end{proofsketcha}

\begin{example}[Trapped solutions]
\label{ex:trappedsol}
Axiom~\irref{BEx} proves global existence for $\exnonlinear$~\rref{eq:exnonlinear} within the compact disk $u^2+v^2 \leq \frac{1}{4}$ by showing that solutions starting in the disk are trapped in it.
After the first~\irref{allr} step, a~\irref{Prog} step adds a disjunction to the postcondition.
On the resulting right premise, axiom~\irref{BEx} finishes the proof.
The left premise is an invariance property of the ODE (see~\rref{fig:odeexamples}), whose elided proof is easy~\cite{DBLP:journals/jacm/PlatzerT20}.%
{\footnotesizeoff\renewcommand{\arraystretch}{1.3}%
\begin{sequentdeduction}[array]
\linfer[allr]{
\linfer[Prog]{
  \linfer[]{
    \lclose
  }
  {\lsequent{u^2+v^2 \leq \frac{1}{4}}{\dbox{\pevolvein{\exnonlinear,\D{\timevar}=1}{\lnot{(\timevar > \tau)}}}{(u^2+v^2 \leq \frac{1}{4})}}} !
  \linfer[BEx]{
    \lclose
  }
  {\lsequent{}{\ddiamond{\pevolve{\exnonlinear,\D{\timevar}=1}}{(\timevar > \tau \lor \lnot{(u^2+v^2 \leq \frac{1}{4})})}}}
}
  {\lsequent{u^2+v^2 \leq \frac{1}{4}}{\ddiamond{\pevolve{\exnonlinear,\D{\timevar}=1}}{\timevar > \tau}}}
}
  {\lsequent{u^2+v^2 \leq \frac{1}{4}}{\lforall{\tau}{\ddiamond{\pevolve{\exnonlinear,\D{\timevar}=1}}{\timevar > \tau}}}}
\end{sequentdeduction}
}%
\end{example}

Axiom~\irref{BEx} removes the global Lipschitz (or affine dependency) requirement of~\irref{GEx} but weakens the postcondition to say that solutions must either exist for sufficient duration or blow up and leave the bounded set characterized by formula $\boundedf(x)$.
Like axiom~\irref{GEx}, axiom~\irref{BEx} is derived by refinement using axiom~\irref{dBDG}.
This commonality yields a more general version of~\irref{BEx}, which also incorporates ideas from~\irref{GEx}.

\begin{corollary}[Dependency order bounded existence]
\label{cor:boundedexistgen}
Consider the ODE $\D{x}=\genDE{x}$ in dependency order~\rref{eq:deporder}, and where $\tau$ is a fresh variable not in $x,\timevar$.
The following axiom is derived from~\irref{dBDG+dDDG} in \dL, where the indices $i=1\dots,k$ are partitioned into two disjoint index sets $L,N$ such that:
\begin{itemize}
\item For each $i \in L$, $\D{y_i}=g_i(y_1,\dots,y_i)$ is affine in $y_i$.
\item For each $i \in N$, $B_i(y_i)$ characterizes a bounded set over the variables $y_i$.
\end{itemize}

\begin{calculuscollection}
\begin{calculus}
\dinferenceRule[GBEx|GBEx]{}
{
  \axkey{\lforall{\tau}{\ddiamond{\pevolve{\D{x}=\genDE{x},\D{\timevar}=1}}{\big(\timevar > \tau \lor \lorfold_{i\in N}\lnot{\boundedf_i(y_i)}\big)}}}
}{}
\end{calculus}
\end{calculuscollection}
\end{corollary}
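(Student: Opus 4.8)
The plan is to derive~\irref{GBEx} by a refinement chain of length $k$ built on top of~\irref{TEx}, merging the dependency-order argument behind~\rref{cor:globalexistaffine} for the affine groups with the case-analysis trick behind~\rref{cor:boundedexistbase} (i.e.\ \irref{BEx}) for the bounded groups. After an~\irref{allr} step fixing $\tau$, write $\Phi_i \mnodefequiv \timevar > \tau \lor \lorfold_{j\in N,\,j\le i}\lnot{\boundedf_j(y_j)}$ for $0\le i\le k$, so $\Phi_0$ is $\timevar > \tau$, $\Phi_k$ is the postcondition of~\irref{GBEx}, and $\Phi_{i-1}$ never mentions $y_i$. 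I would refine the goal $\ddiamond{\pevolve{\D{x}=\genDE{x},\D{\timevar}=1}}{\Phi_k}$ downward along a chain from $\ddiamond{\pevolve{\D{\timevar}=1}}{\Phi_0}$ up to $\ddiamond{\pevolve{\D{y_1}=g_1(y_1),\dots,\D{y_k}=g_k(y_1,\dots,y_k),\D{\timevar}=1}}{\Phi_k}$, where the stage-$i$ deduction step (read bottom-up) removes the ODE group $\D{y_i}=g_i(y_1,\dots,y_i)$ and, when $i\in N$, also drops the disjunct $\lnot{\boundedf_i(y_i)}$; the bottom of the chain is discharged by~\irref{TEx}.

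For an affine index $i\in L$, with $\D{y_i}=A_i(y_1,\dots,y_{i-1})y_i+b_i(y_1,\dots,y_{i-1})$, the stage-$i$ step is a single use of~\irref{dDDG} with the bounds of~\rref{cor:globalexistaffine}, namely $L_i\mnodefeq 2\norm{A_i}+\norm{b_i}$ and $M_i\mnodefeq\norm{b_i}$ (rephrased via~\rref{eq:norminequality} so as to be polynomial terms); here $\Phi_{i-1}=\Phi_i$ since $i\notin N$. The box premise $\dbox{\pevolve{\D{y_1}=g_1,\dots,\D{y_i}=g_i,\D{\timevar}=1}}{\,2\dotp{y_i}{(A_iy_i+b_i)}\le L_i\norm{y_i}^2+M_i}$ closes by~\irref{dW} and~\irref{qear} since inequality~\rref{eq:globexistaffinebound} is valid, and the variable-dependency conditions of~\irref{dDDG} are met because neither $A_i,b_i$ nor $\Phi_{i-1}$ mention the ghost $y_i$.

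For a bounded index $i\in N$, fix a rational constant $D_i$ such that $\boundedf_i(y_i)\limply\norm{y_i}^2\le D_i$ is valid --- such a $D_i$ exists because $\norm{y_i}^2$ is continuous and the closure of the set characterized by $\boundedf_i(y_i)$ is compact, and the implication is then provable by~\irref{qear}. The stage-$i$ step performs a propositional~\irref{cut} on excluded middle for $\dbox{\pevolve{\D{y_1}=g_1,\dots,\D{y_i}=g_i,\D{\timevar}=1}}{\boundedf_i(y_i)}$ followed by an~\irref{orl} split. In the branch assuming $\dbox{\cdots}{\boundedf_i(y_i)}$: an~\irref{MdW} step replaces the postcondition $\Phi_i=\Phi_{i-1}\lor\lnot{\boundedf_i(y_i)}$ by the stronger $\Phi_{i-1}$, then~\irref{dBDG} with $\ivr(x)\mnodefeq\ltrue$ and bound $\ptermA(x)\mnodefeq D_i$ removes $\D{y_i}=g_i$, leaving the chain goal $\ddiamond{\pevolve{\D{y_1}=g_1,\dots,\D{y_{i-1}}=g_{i-1},\D{\timevar}=1}}{\Phi_{i-1}}$ together with a box side condition $\dbox{\pevolve{\D{y_1}=g_1,\dots,\D{y_i}=g_i,\D{\timevar}=1}}{\norm{y_i}^2\le D_i}$ that follows from the branch assumption by~\irref{MbW} and~\irref{qear}; the dependency conditions of~\irref{dBDG} hold since neither $\Phi_{i-1}$ nor the constant $D_i$ mention $y_i$. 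In the branch assuming $\lnot\dbox{\cdots}{\boundedf_i(y_i)}$, axiom~\irref{diamond} (after a double negation) turns it into $\ddiamond{\pevolve{\D{y_1}=g_1,\dots,\D{y_i}=g_i,\D{\timevar}=1}}{\lnot{\boundedf_i(y_i)}}$, which an~\irref{MdW} step matches against the disjunct $\lnot{\boundedf_i(y_i)}$ of $\Phi_i$, closing this branch.

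The step I expect to be the main obstacle is the handling of the bounded groups. Unlike affine groups, a group $\D{y_i}=g_i$ with $i\in N$ cannot be adjoined as an ordinary ghost, because adjoining an ODE whose solutions may blow up to a diamond is unsound, and the bounded set $\boundedf_i(y_i)$ need not be invariant along the flow. The resolution --- and the essential structural point --- is that the case split on invariance of $\boundedf_i(y_i)$ must be taken at the truncated system $\D{y_1}=g_1,\dots,\D{y_i}=g_i,\D{\timevar}=1$ exactly at the stage where $y_i$ enters the chain, rather than once up front on the full ODE $\D{x}=\genDE{x}$: taken at that stage, the branch assumption is precisely the side condition required by~\irref{dBDG}, so no transfer of invariance between ODEs of different dimension is needed, and it is exactly this split that forces the disjuncts $\lorfold_{i\in N}\lnot{\boundedf_i(y_i)}$ into the postcondition, generalizing the single disjunct of~\irref{BEx}. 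As sanity checks, $N=\emptyset$ recovers~\irref{GEx} on affine dependency orders, while $k=1$ with $L=\emptyset$ recovers~\irref{BEx}.
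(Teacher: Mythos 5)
Your proposal is correct and follows essentially the same route as the paper's derivation: an \irref{allr} step, a refinement chain down to \irref{TEx} that peels off one dependency group per stage, \irref{dDDG} with the bounds of~\rref{cor:globalexistaffine} for $i\in L$, and \irref{dBDG} with a compactness bound $D_i$ for $i\in N$, with the crucial point (which you identify explicitly) that the boundedness assumption is obtained at the truncated system of stage $i$ so that it is exactly the side condition \irref{dBDG} needs. The only cosmetic difference is propositional bookkeeping for the bounded groups: the paper extracts the antecedent $\dbox{\cdots}{\boundedf_i(y_i)}$ directly by dualizing the disjunctive postcondition with \irref{diamond} and splitting with \irref{band}, whereas you reach the same two proof obligations via an excluded-middle cut on that box formula and close the negative branch against the $\lnot{\boundedf_i(y_i)}$ disjunct.
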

\begin{proofsketcha}[app:globexistproofs]{proof:proof10}
The derivation is similar to rule~\irref{DEx}, with an internal~\irref{dDDG} step (similar to~\irref{GEx}) for $i \in L$ and an internal~\irref{dBDG} step (similar to~\irref{BEx}) for $i \in N$.\qedhere
\end{proofsketcha}

The index set $L$ in~\rref{cor:boundedexistgen} indicates those variables of $\D{x}=\genDE{x}$ whose solutions are guaranteed to exist globally (with respect to the other variables).
On the other hand, the index set $N$ indicates the variables that may cause finite-time blow up of solutions.
The postcondition of axiom~\irref{GBEx} says that solutions either exist for sufficient duration or they blow up and leave one of the bounded sets indexed by $N$.
An immediate modeling application of~\rref{cor:boundedexistgen} is to identify which of the state variables in a model must be proved (or assumed) to take on bounded values~\cite[Section 6]{10.2307/j.ctt17kkb0d}.
This idea underlies the automated existence proof support discussed in~\rref{subsec:support}.

\subsection{Completeness for Global Existence}
The derivation of the existence axioms~\irref{GEx+BEx+GBEx} and rule~\irref{DEx} illustrate the use of liveness refinement for proving existence properties.
Moreover,~\irref{dBDG} is the sole ODE diamond refinement axiom underlying these derivations (recall~\irref{dDDG} is derived from~\irref{dBDG}).
This yields a natural question: are there ODEs whose solutions exist globally, but whose global existence \emph{cannot} be proved syntactically using~\irref{dBDG}?
The next completeness result gives a conditional completeness answer: \emph{all} global existence properties can be proved using~\irref{dBDG}, if the corresponding ODE solutions are \emph{syntactically representable}.

\begin{proposition}[Global existence completeness]
\label{prop:globalexistcomplete}
If the ODE $\D{x}=\genDE{x}$ has a global solution representable in the \dL term language, then the global existence formula~\rref{eq:existence} is derivable for $\D{x}=\genDE{x}$ from axiom~\irref{dBDG}.
\end{proposition}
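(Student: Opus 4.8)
The plan is to reduce the global existence formula~\rref{eq:existence} for $\D{x}=\genDE{x}$ back to global existence of the bare clock ODE $\D{\timevar}=1$ (axiom~\irref{TEx}) by a single refinement step with~\irref{dBDG} that reintroduces $\D{x}=\genDE{x}$ as a ghost, using the syntactically representable solution as the norm bound. After~\irref{allr} fixes $\tau$, the goal is $\ddiamond{\pevolve{\D{x}=\genDE{x},\D{\timevar}=1}}{\timevar>\tau}$. By hypothesis there is a vector of terms $\mathbf{e}$ in the \dL term language representing the solution, so that $\mathbf{e}(0,z)=z$ and $\Dp[s]{\mathbf{e}}=\genDE{\mathbf{e}}$ are valid (extended) real-arithmetic identities. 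To turn $\mathbf{e}$ into a bounding term admissible for~\irref{dBDG}, I first use~\irref{DG} (twice, via its diamond form) to adjoin fresh ghost parameters $x_0,\timevar_0$ governed by $\D{x_0}=0,\D{\timevar_0}=0$ and instantiate their existential witnesses to the current values of $x$ and $\timevar$; thus $x_0=x \land \timevar_0=\timevar$ holds initially and $x_0,\timevar_0$ stay constant along the enlarged ODE $\D{x}=\genDE{x},\D{\timevar}=1,\D{x_0}=0,\D{\timevar_0}=0$. Since $\timevar-\timevar_0$ then tracks elapsed time and $x_0$ records the initial state, the equation $x = \mathbf{e}(\timevar-\timevar_0,x_0)$ is an invariant of this enlarged ODE: at the start it reads $x = \mathbf{e}(0,x_0) = x_0 = x$, and its differential-induction condition reduces (using the identity $\Dp[s]{\mathbf{e}}=\genDE{\mathbf{e}}$) to $\genDE{x} = \genDE{\mathbf{e}(\timevar-\timevar_0,x_0)}$, which holds on the invariant set itself; hence it is provable by the complete \dL invariance reasoning (\irref{dC} and \irref{dIcmp}; cf.~\cite{DBLP:journals/jacm/PlatzerT20}).

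Now apply~\irref{dBDG} with base ODE $\D{\timevar}=1,\D{x_0}=0,\D{\timevar_0}=0$, ghost equation $\D{x}=\genDE{x}$, and norm bound $\norm{\mathbf{e}(\timevar-\timevar_0,x_0)}^2$ in the role of $\ptermA$. This term mentions only the base variables $\timevar,\timevar_0,x_0$ and not the ghosted variables $x$, exactly as~\irref{dBDG} requires, and its box side condition $\dbox{\pevolve{\D{\timevar}=1,\D{x_0}=0,\D{\timevar_0}=0,\D{x}=\genDE{x}}}{\,\norm{x}^2 \leq \norm{\mathbf{e}(\timevar-\timevar_0,x_0)}^2}$ follows from the invariant $x = \mathbf{e}(\timevar-\timevar_0,x_0)$ by~\irref{dW} and arithmetic (equal coordinates yield equal, hence $\leq$, squared norms). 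Axiom~\irref{dBDG} thereby reduces the goal (after reordering the ODE equations, a sound equivalence) to the residual hypothesis $\ddiamond{\pevolve{\D{\timevar}=1,\D{x_0}=0,\D{\timevar_0}=0}}{\timevar>\tau}$. This ODE is affine---indeed $x_0,\timevar_0$ are constant---so this residual liveness property is itself derivable from~\irref{dBDG}: one~\irref{dDDG} step (which is derived from~\irref{dBDG}) with $L=M=0$ adjoins the $(x_0,\timevar_0)$ block to $\D{\timevar}=1$, its box premise $0\leq 0$ being immediate by~\irref{dW}, and what remains is exactly~\irref{TEx}.

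The main obstacle is purely the variable-dependency bookkeeping demanded by~\irref{dBDG}: the admissible norm bound may not mention the ghosted state $x$, so the initial state must first be frozen into ghost parameters $x_0,\timevar_0$ before the solution term $\mathbf{e}(\timevar-\timevar_0,x_0)$ becomes a legal bound. Establishing the solution invariant $x = \mathbf{e}(\timevar-\timevar_0,x_0)$---and hence extracting that bound---is the only place where syntactic representability of the global solution is essential; everything else (the~\irref{DG} ghost introductions and their instantiations, the affine residual existence, and the propositional and real-arithmetic leaves) is routine.
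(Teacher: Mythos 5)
Your proposal is correct and follows essentially the same route as the paper: after \irref{allr}, fix the initial values in fresh constants, prove the equational invariant $x = X(t-t_0)$ by the cited completeness result for equational invariants, and use $\norm{X(t-t_0)}^2$ as the bound in a single \irref{dBDG} refinement down to \irref{TEx}. The only difference is bookkeeping: the paper introduces $x_0,t_0$ as Skolem constants via arithmetic cuts (so they never enter the ODE and no extra removal step is needed), whereas you adjoin them as constant differential ghosts via \irref{DG} and later strip them off with an additional \irref{dDDG} step---both are sound.
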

\begin{proofsketcha}[app:globexistproofs]{proof:proof11}
Suppose that ODE $\D{x}=\genDE{x}$ has a global solution syntactically represented in \dL as term $X(t)$ dependent only on the free variable $t$, the (symbolic) initial values $x_0$ of variables $x$, and the (constant) parameters for the ODE.
The equality $x=X(t)$ is provable along the ODE $\D{x}=\genDE{x},\D{t}=1$ because solutions are equational invariants~\cite{DBLP:journals/jar/Platzer17,DBLP:journals/jacm/PlatzerT20}.
The proof uses~\irref{dBDG} with the bounding term $\ptermA \mnodefeq \norm{X(t)}^2$, so that the required hypothesis of~\irref{dBDG}, i.e., $\dbox{\pevolve{\D{x}=\genDE{x},\D{t}=1}}{\norm{x}^2 \leq \norm{X(t)}^2}$ proves trivially using the equality $x=X(t)$.
\end{proofsketcha}

The following remark illustrates the usage and limitations of~\rref{prop:globalexistcomplete}.

\begin{remark}[Syntactically representable solutions]
\label{rem:repsol}
Consider the example ODE $\D{u}=u, \D{v}=uv$ proved to have global solutions in~\rref{ex:multiaffine}.
Mathematically, its solution is given by the following functions (defined for all $\timevar \in \reals$), where $u_0, v_0$ are the initial values of $u, v$ at time $\timevar=0$ and $\exp$ is the real exponential function.
\begin{align}
u(\timevar) = u_0\exp(\timevar), v(\timevar) = \frac{v_0}{\exp(u_0)} \exp(u_0\exp(\timevar))
\label{eq:solution}
\end{align}

Since the solution~\rref{eq:solution} is defined globally, \rref{prop:globalexistcomplete} seemingly provides an alternative way to prove global existence for the ODE.
The caveat is that~\rref{prop:globalexistcomplete} only applies when the solution is \emph{syntactically representable} in the term language.
The term language of this article (\rref{subsec:syntax}) only accepts polynomial solutions.
However, \dL's term language extensions~\cite{DBLP:journals/jacm/PlatzerT20} considerably extends the class of syntactically representable solutions to include\footnote{Nevertheless, even such a syntactic extension is insufficient because Turing machines can be simulated by solutions of polynomial differential equations~\cite[Theorem 2]{GRACA2008330}.
It is possible to construct polynomial ODEs whose solutions do not blow up, but grow like the (Turing-computable) Ackermann function, i.e., faster than any tower of exponentials.}, e.g., towers of exponentials, like those found in~\rref{eq:solution}.
Thus, the usefulness of~\rref{prop:globalexistcomplete} is limited by which solutions are syntactically representable.

Notably though, the proof in~\rref{prop:globalexistcomplete} actually only requires a provable upper bound $X(t)$ with $\norm{x}^2 \leq \norm{X(t)}^2$, rather than an equality.
Such an upper bound, if syntactically representable in \dL, also suffices for proving global existence.
The complicated closed form solution~\rref{eq:solution} also highlights the advantage of axioms~\irref{dBDG+dDDG} and their use in the derived axioms of Corollaries~\ref{cor:globalexistbase}--\ref{cor:boundedexistgen} because they implicitly deduce global existence~\emph{without} needing an explicitly representable solution for the ODEs.
\end{remark}

\section{Liveness Without Domain Constraints}
\label{sec:nodomconstraint}

This section presents proof rules for liveness properties of ODEs $\D{x}=\genDE{x}$ without domain constraints, i.e., where $\ivr$ is the formula $\ltrue$.
Errors and omissions in the surveyed techniques are \highlight{highlighted in blue}.

\subsection{Differential Variants}
The fundamental technique for verifying liveness of discrete loops are loop variants, i.e., well-founded quantities that increase (or decrease) across each loop iteration.
\emph{Differential variants}~\cite{DBLP:journals/logcom/Platzer10} are their continuous analog, where the value of a given term $\ptermA$ is shown to increase along ODE solutions by showing that its rate of change is bounded below by a positive constant $\constt{\varepsilon} > 0$ along those solutions.
Recall this article's syntactic convention (\rref{subsec:syntax}) that term $\constt{\varepsilon}$ is not allowed to depend on any of the free variables $x_1, \dots, x_n$ appearing in the ODE and must therefore remain constant along the ODE solution.
\begin{corollary}[Atomic differential variants~\cite{DBLP:journals/logcom/Platzer10}]
\label{cor:atomicdvcmp}
The following proof rules (where $\cmp$ is either $\geq$ or $>$) are derivable in \dL.
Terms $\constt{\varepsilon},\constt{\ptermA_0}$ are constant for ODE $\D{x}=\genDE{x},\D{\timevar}=1$.
In rule~\irref{dVcmp}, the ODE $\D{x}=\genDE{x}$ has provable global solutions.\\
\begin{calculuscollection}
\begin{calculus}
\dinferenceRule[dVcmpA|dV$_\cmp^{\Gamma}$]{}
{\linferenceRule
  { \lsequent{\lnot{(\ptermA \cmp 0)}}{\lied[]{\genDE{x}}{\ptermA}\geq \constt{\varepsilon}}
  }
  {\lsequent{\Gamma,\ptermA = \constt{\ptermA_0},\timevar = 0,\ddiamond{\pevolve{\D{x}=\genDE{x},\D{\timevar}=1}}{\big(\constt{\ptermA_0} + \constt{\varepsilon} \timevar > 0\big)}}{\ddiamond{\pevolve{\D{x}=\genDE{x},\D{\timevar}=1}}{\ptermA \cmp 0}} }
}{}

\dinferenceRule[dVcmp|dV$_\cmp$]{}
{\linferenceRule
  { \lsequent{\lnot{(\ptermA \cmp 0)}}{\lied[]{\genDE{x}}{\ptermA}\geq \constt{\varepsilon}}
  }
  {\lsequent{\Gamma, \constt{\varepsilon} > 0}{\ddiamond{\pevolve{\D{x}=\genDE{x}}}{\ptermA \cmp 0}} }
}{}
\end{calculus}
\end{calculuscollection}
\end{corollary}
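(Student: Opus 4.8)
The plan is to derive both rules inside the diamond refinement framework of~\rref{lem:diarefaxioms}, with the progress axiom~\irref{Prog} carrying the argument and~\irref{dIcmp} supplying the arithmetic of the variant. Consider rule~\irref{dVcmpA} first. Its antecedent already contains the liveness fact $\ddiamond{\pevolve{\D{x}=\genDE{x},\D{\timevar}=1}}{(\constt{\ptermA_0} + \constt{\varepsilon} \timevar > 0)}$, so apply~\irref{Prog} to the ODE $\D{x}=\genDE{x},\D{\timevar}=1$ with $\rgvar$ instantiated to the formula $\constt{\ptermA_0} + \constt{\varepsilon} \timevar > 0$ and $\rfvar$ to $\ptermA \cmp 0$. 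Since $\ivr$ is $\ltrue$, the liveness branch is discharged by the antecedent, leaving the single box obligation that the linearly moving lower estimate $\constt{\ptermA_0} + \constt{\varepsilon} \timevar$ stays non-positive for as long as the target $\ptermA \cmp 0$ has not been reached, i.e., $\dbox{\pevolvein{\D{x}=\genDE{x},\D{\timevar}=1}{\lnot(\ptermA \cmp 0)}}{\constt{\ptermA_0} + \constt{\varepsilon} \timevar \leq 0}$.

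To prove this box formula, cut in with~\irref{dC} the auxiliary invariant $\ptermA \geq \constt{\ptermA_0} + \constt{\varepsilon} \timevar$, which is the continuous analog of a linearly growing loop variant. The first~\irref{dC} premise --- invariance of $\ptermA \geq \constt{\ptermA_0} + \constt{\varepsilon} \timevar$ under the ODE with domain $\lnot(\ptermA\cmp 0)$ --- closes by~\irref{dIcmp}: the required initial condition $\ptermA \geq \constt{\ptermA_0} + \constt{\varepsilon} \timevar$ follows from the antecedents $\ptermA = \constt{\ptermA_0}$ and $\timevar = 0$ by~\irref{qear}, and the Lie derivative obligation of~\irref{dIcmp} reduces to exactly the rule's premise $\lsequent{\lnot(\ptermA \cmp 0)}{\lied[]{\genDE{x}}{\ptermA}\geq\constt{\varepsilon}}$, because $\constt{\ptermA_0},\constt{\varepsilon}$ are constant and $\D{\timevar}=1$, so the Lie derivative of $\constt{\ptermA_0} + \constt{\varepsilon} \timevar$ along the augmented ODE is $\constt{\varepsilon}$ while that of $\ptermA$ is $\lied[]{\genDE{x}}{\ptermA}$ (as $\ptermA$ does not mention $\timevar$). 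The second~\irref{dC} premise has domain $\lnot(\ptermA\cmp 0) \land \ptermA \geq \constt{\ptermA_0} + \constt{\varepsilon} \timevar$ and closes by~\irref{dW} with~\irref{qear}, since that domain gives $\constt{\ptermA_0} + \constt{\varepsilon} \timevar \leq \ptermA$ together with $\ptermA < 0$ (when $\cmp$ is $\geq$) or $\ptermA \leq 0$ (when $\cmp$ is $>$), whence $\constt{\ptermA_0} + \constt{\varepsilon} \timevar \leq 0$ in both cases.

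Rule~\irref{dVcmp} is reduced to~\irref{dVcmpA}. First add a fresh clock with~\irref{dGt}, which also supplies $\timevar = 0$; then introduce a fresh constant $\constt{\ptermA_0}$ with $\ptermA = \constt{\ptermA_0}$ by cutting the valid formula $\lexists{\constt{\ptermA_0}}{\ptermA = \constt{\ptermA_0}}$. It remains to furnish the liveness hypothesis $\ddiamond{\pevolve{\D{x}=\genDE{x},\D{\timevar}=1}}{(\constt{\ptermA_0} + \constt{\varepsilon} \timevar > 0)}$ required by~\irref{dVcmpA}; this is where the assumption that $\D{x}=\genDE{x}$ has provable global solutions is used, namely that formula~\rref{eq:existence}, $\lforall{\tau}{\ddiamond{\pevolve{\D{x}=\genDE{x},\D{\timevar}=1}}{\timevar > \tau}}$, is derivable. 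From $\constt{\varepsilon} > 0$, introduce by~\irref{qear} a fresh constant $\constt{c}$ with $\constt{\ptermA_0} + \constt{\varepsilon}\constt{c} > 0$ (valid real arithmetic given $\constt{\varepsilon} > 0$), instantiate formula~\rref{eq:existence} at $\tau \mnodefeq \constt{c}$ to obtain $\ddiamond{\pevolve{\D{x}=\genDE{x},\D{\timevar}=1}}{\timevar > \constt{c}}$, and apply~\irref{MdW}: from $\timevar > \constt{c}$ together with the constant assumptions $\constt{\varepsilon} > 0$ and $\constt{\ptermA_0} + \constt{\varepsilon}\constt{c} > 0$ one obtains $\constt{\ptermA_0} + \constt{\varepsilon}\timevar > 0$ by~\irref{qear}. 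Rule~\irref{dVcmpA} now applies, its premise being identical to that of~\irref{dVcmp}.

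The main obstacle is not conceptual but a matter of careful bookkeeping. Three points deserve attention. First, using~\irref{Prog} (rather than, e.g.,~\irref{dDR}) is exactly what turns the refinement goal into the clean invariance statement that the variant is designed to witness, and the crossover-point phenomenon behind the unsound candidate~\irref{badaxiom} does not arise here because $\ivr$ is $\ltrue$ and $\lnot(\ptermA\cmp 0)$ is placed in the \emph{box}, not the diamond. Second, the~\irref{dW} step genuinely splits on whether $\cmp$ is $\geq$ or $>$, since $\lnot(\ptermA\cmp 0)$ is $\ptermA < 0$ in the former case and $\ptermA \leq 0$ in the latter. Third, one must track which assumptions survive rules that reason along the ODE: $\constt{\varepsilon}$, $\constt{\ptermA_0}$, and the auxiliary $\constt{c}$ are constant for the ODE and so may be retained across~\irref{dIcmp} and~\irref{MdW}, whereas $\ptermA = \constt{\ptermA_0}$ and $\timevar = 0$ are needed only in the conclusion of the~\irref{dIcmp} step and hence pose no difficulty.
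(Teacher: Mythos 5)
Your proposal is correct and follows essentially the same route as the paper: rule~\irref{dVcmpA} is obtained from~\irref{Prog} with $\rgvar \mnodefequiv (\constt{\ptermA_0} + \constt{\varepsilon}\timevar > 0)$ and the linear variant $\ptermA \geq \constt{\ptermA_0} + \constt{\varepsilon}\timevar$ handled by~\irref{dIcmp}, and~\irref{dVcmp} reduces to~\irref{dVcmpA} via~\irref{dGt}, fresh-constant cuts, and the global existence formula~\rref{eq:existence} discharged through~\irref{MdW}. The only cosmetic differences are that the paper strengthens the postcondition with a single~\irref{MbW} step where you use~\irref{dC} followed by~\irref{dW}, and that the paper instantiates $\tau \mnodefeq -i\constt{\ptermA_0}$ with $i$ the multiplicative inverse of $\constt{\varepsilon}$ where you existentially introduce an abstract witness $\constt{c}$; both choices are equivalent.
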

\begin{proofsketcha}[app:livenodomproofs]{proof:proof12}
Rule~\irref{dVcmpA} is derived from axiom \irref{Prog} with $\rgvar \mnodefequiv \big( \constt{\ptermA_0} + \constt{\varepsilon} \timevar > 0 \big)$:

{\footnotesizeoff%
\begin{sequentdeduction}[array]
\linfer[Prog]{
  \lsequent{\Gamma,\ptermA = \constt{\ptermA_0},\timevar = 0}{\dbox{\pevolvein{\D{x}=\genDE{x},\D{\timevar}=1}{\lnot{(\ptermA \cmp 0)}}}{\big( \constt{\ptermA_0} + \constt{\varepsilon} \timevar \leq 0 \big)}}
}
  {\lsequent{\Gamma,\ptermA = \constt{\ptermA_0},\timevar = 0,\ddiamond{\pevolve{\D{x}=\genDE{x},\D{\timevar}=1}}{\big(\constt{\ptermA_0} + \constt{\varepsilon} \timevar > 0\big)}}{\ddiamond{\pevolve{\D{x}=\genDE{x},\D{\timevar}=1}}{\ptermA \cmp 0}}}
\end{sequentdeduction}
}%

Monotonicity \irref{MbW} strengthens the postcondition to $\ptermA \geq \constt{\ptermA_0} + \constt{\varepsilon} \timevar$ with the domain constraint $\lnot{(\ptermA \cmp 0)}$.
A subsequent use of \irref{dIcmp} completes the derivation:
{\footnotesizeoff%
\begin{sequentdeduction}[array]
  \linfer[MbW]{
  \linfer[dIcmp]{
    \lsequent{\lnot{(\ptermA \cmp 0)}}{\lied[]{\genDE{x}}{\ptermA}\geq \constt{\varepsilon}}
  }
    {\lsequent{\Gamma, \ptermA = \constt{\ptermA_0}, \timevar=0}{\dbox{\pevolvein{\D{x}=\genDE{x},\D{\timevar}=1}{\lnot{(\ptermA \cmp 0)}}}{\big( \ptermA \geq \constt{\ptermA_0} + \constt{\varepsilon} \timevar \big)}}}
  }
  {\lsequent{\Gamma,\ptermA = \constt{\ptermA_0}, \timevar=0}{\dbox{\pevolvein{\D{x}=\genDE{x},\D{\timevar}=1}{\lnot{(\ptermA \cmp 0)}}}{\big( \constt{\ptermA_0} + \constt{\varepsilon} \timevar \leq 0 \big)}}}
\end{sequentdeduction}
}%

Rule~\irref{dVcmp} is derived in~\rref{app:livenodomproofs} as a corollary of rule~\irref{dVcmpA} because the ODE $\D{x}=\genDE{x}$ is assumed to have solutions which (provably) exist globally.
\end{proofsketcha}

\begin{figure}[t]%

\begin{subfigure}{0.45\textwidth}
\centering
\includegraphics[width=0.9\textwidth]{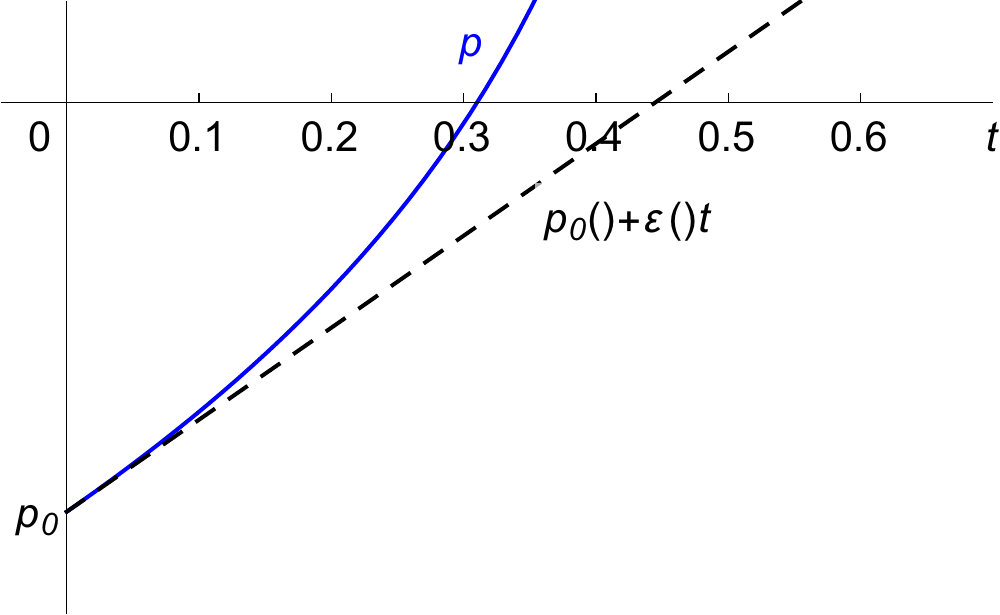}
\caption{$p \mnodefeq 2u+3v-4$, initial value $u=1, v=0$\\~}
\label{fig:dvexamples1}
\end{subfigure}%
\begin{subfigure}{0.45\textwidth}
\centering
\includegraphics[width=0.9\textwidth]{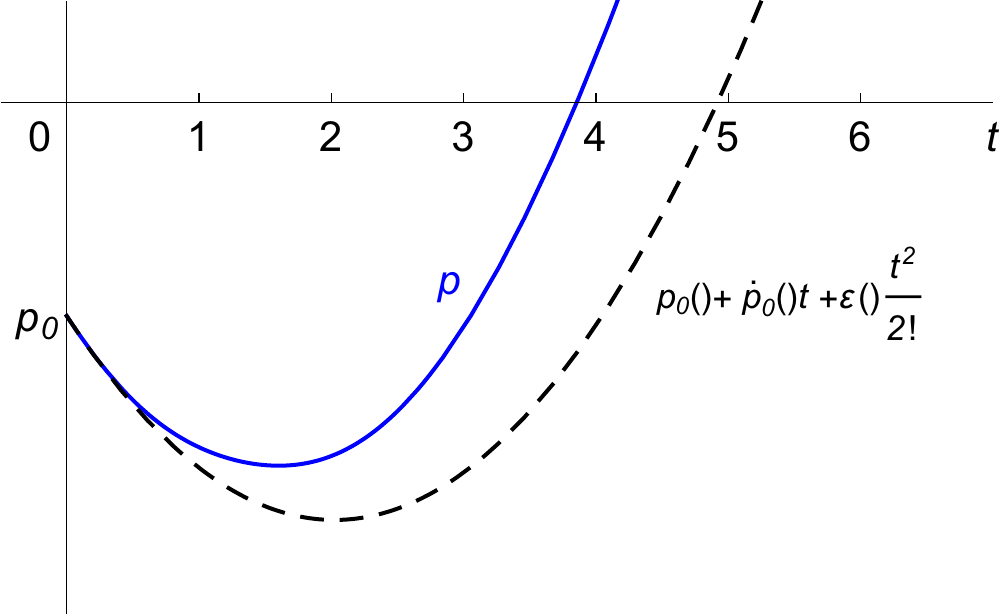}
\caption{$p \mnodefeq -2 + 2 u - u^2 - 5 u^3 + 5 v + u v - 2 u^3 v + v^2 - 5 u v^2 - u v^3$, initial value $u=-0.52, v=0$}
\label{fig:dvexamples2}
\end{subfigure}\\
\begin{subfigure}{0.45\textwidth}
\centering
\includegraphics[width=0.9\textwidth]{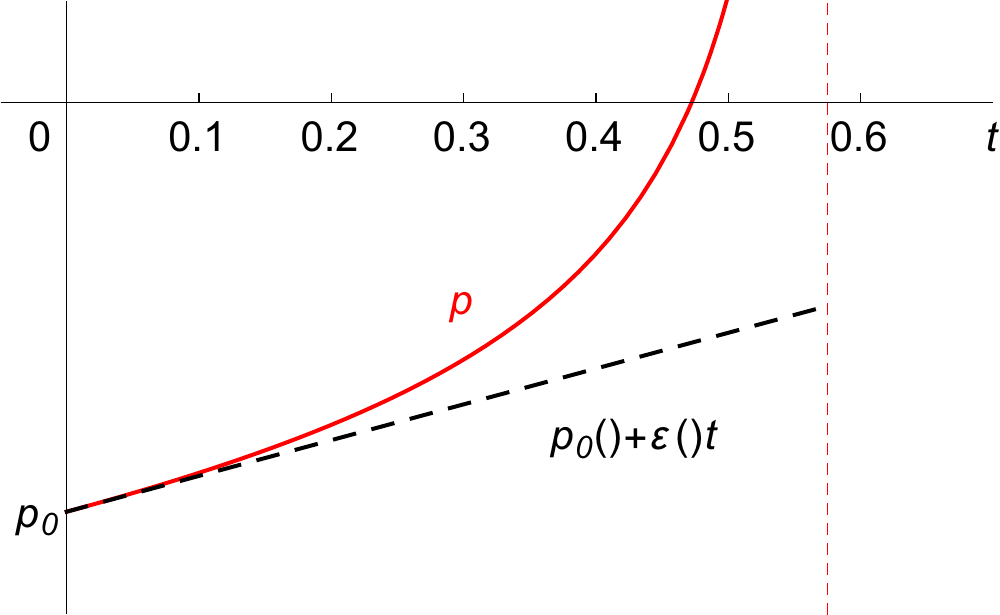}
\caption{$p \mnodefeq u+v-3$, initial value $u=1, v=0$}
\label{fig:dvexamples3}
\end{subfigure}%
\begin{subfigure}{0.45\textwidth}
\centering
\includegraphics[width=0.9\textwidth]{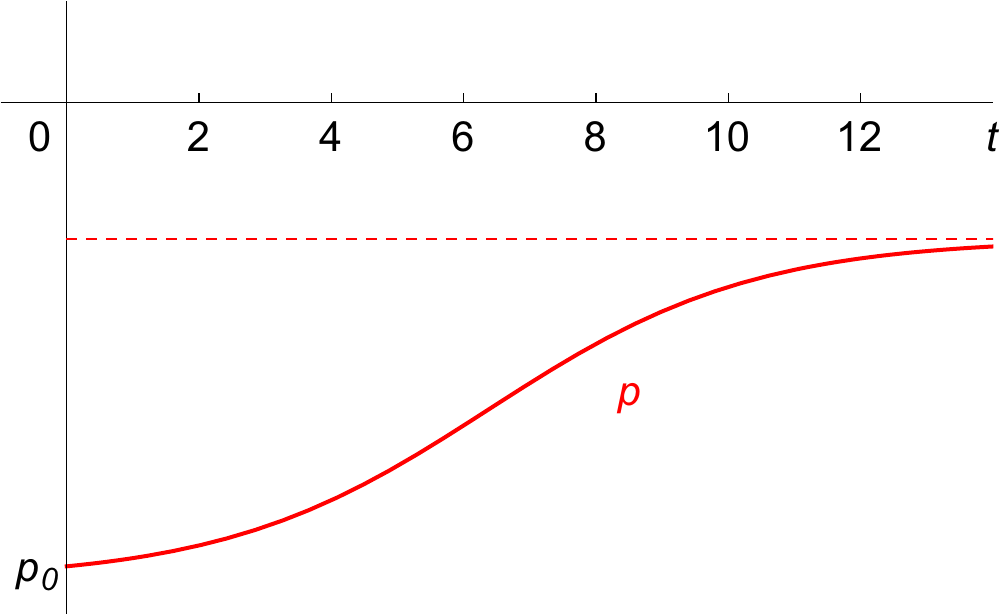}
\caption{$p \mnodefeq -10(u^2+v^2)-1$, initial value $u=0.49, v=0$}
\label{fig:dvexamples4}
\end{subfigure}%
\caption{The solid \bluec{blue} and \redc{red} curves show the value of various terms $\ptermA$ evaluated along solutions of the ODE $\exnonlinear$~\rref{eq:exnonlinear} from respective initial values $\ptermA_0$ over time $\timevar$.
The \bluec{blue} curves in Figs.~\ref{fig:dvexamples1} and~\ref{fig:dvexamples2} are respectively bounded below by a dashed black line (corresponding to~\rref{cor:atomicdvcmp}) and a dashed quadratic curve (corresponding to~\rref{cor:higherdv}) which imply that $\ptermA$ is eventually non-negative along their respective ODE solutions.
The \redc{red} curve in~\rref{fig:dvexamples3} is also bounded below by the dashed black line, but its solution only exists for $0.575$ time units (vertical \redc{red} dashed asymptote) so the direct bound from~\rref{cor:atomicdvcmp} fails, even though $\ptermA$ is eventually non-negative along the solution.
The \redc{red} curve in~\rref{fig:dvexamples4} has strictly positive time derivative but asymptotically increases towards a negative value (horizontal \redc{red} dashed asymptote).}
\label{fig:dvexamples}
\end{figure}

In both rules~\irref{dVcmpA+dVcmp}, the lower bound $\constt{\varepsilon} > 0$ on the Lie derivative $\lied[]{\genDE{x}}{\ptermA}$ ensures that the value of $\ptermA$ strictly increases along solutions to the ODE.
Geometrically, as illustrated in~\rref{fig:dvexamples1}, the value of $\ptermA$ is bounded below over time $\timevar$ by the line $\constt{\ptermA_0} + \constt{\varepsilon} \timevar$ with offset $\constt{\ptermA_0}$ and positive slope $\constt{\varepsilon}$.
Since $\constt{\ptermA_0} + \constt{\varepsilon} \timevar$ is non-negative for sufficiently large values of $\timevar$, the (lower bounded) value of $\ptermA$ is also eventually non-negative.

Two key subtleties underlying rules~\irref{dVcmpA+dVcmp} are illustrated in Figs.~\ref{fig:dvexamples3} and~\ref{fig:dvexamples4}.
The first subtlety, shown in~\rref{fig:dvexamples3}, is that ODE solutions must exist for sufficiently long for $\ptermA$ or, more precisely its lower bound, to become non-negative.
This is usually left as a soundness-critical side condition in liveness proof rules~\cite{DBLP:journals/logcom/Platzer10,DBLP:conf/fm/SogokonJ15}, but any such side condition is antithetical to approaches for minimizing the soundness-critical core in implementations~\cite{DBLP:journals/jar/Platzer17} because it requires checking the (semantic) condition that solutions exist for sufficient duration.
The conclusion of rule~\irref{dVcmpA} formalizes this side condition as an assumption.
In contrast, rule~\irref{dVcmp} requires provable global existence for the ODEs (provable as in~\rref{sec:globexist}).
The rest of this article similarly develops ODE liveness proof rules that rely on the global existence proofs from~\rref{sec:globexist}.
In all subsequent proof rules, the ODE $\D{x}=\genDE{x}$ is said to have \emph{provable global solutions} if the global existence formula~\rref{eq:existence} for $\D{x}=\genDE{x}$ is provable.
For example, if $\D{x}=\genDE{x}$ were globally Lipschitz (or, as a special case, linear), then its global existence can be proven using axiom~\irref{GEx} from Corollaries~\ref{cor:globalexistbase} and~\ref{cor:globalexistaffine}.
For uniformity, all proof steps utilizing this assumption are marked with~\irref{GEx}, although proofs of global existence could use various other techniques described in~\rref{sec:globexist}.
All subsequent proof rules can be soundly presented with sufficient duration assumptions like~\irref{dVcmpA}, but those are omitted for brevity.

The second subtlety, shown in~\rref{fig:dvexamples4}, is that rules~\irref{dVcmpA+dVcmp} crucially need a~\emph{constant} positive lower bound on the Lie derivative $\lied[]{\genDE{x}}{\ptermA}\geq \constt{\varepsilon}$ for soundness~\cite{DBLP:journals/logcom/Platzer10} instead of merely requiring $\lied[]{\genDE{x}}{\ptermA} > 0$.
In the latter case, even though the value of $\ptermA$ is strictly increasing along solutions, it is not guaranteed to become non-negative in finite time because the rate of increase can itself converge to zero.
In fact, as~\rref{fig:dvexamples4} shows, $\ptermA$ may stay negative by asymptotically increasing towards a negative value as $\timevar$ approaches $\infty$.

\begin{example}[Linear liveness]
\label{ex:linproof}
The liveness property that \rref{fig:odeexamples} suggested for the linear ODE $\exlinear$~\rref{eq:exlinear} is proved by rule~\irref{dVcmp}.
The proof is shown on the left below and visualized on the right.
The first monotonicity step~\irref{MdW} strengthens the postcondition to the inner \bluec{blue} circle $u^2+v^2 = \frac{1}{4}$ contained within the \greenc{green} goal region, see refinement~\rref{eq:refinementimpl1}.
Next, since solutions satisfy $u^2+v^2=1$ initially (black circle), the~\irref{Prog} step expresses an intermediate value property: to show that the \emph{continuous} solution eventually reaches $u^2+v^2 = \frac{1}{4}$, it suffices to show that it eventually reaches $u^2+v^2 \leq \frac{1}{4}$ (also see \rref{cor:tt} below).
The postcondition is rearranged before~\irref{dVcmp} is used with $\constt{\varepsilon} \mnodefeq \frac{1}{2}$.
Its premise is proved by~\irref{qear} because the Lie derivative of $\frac{1}{4} - (u^2+v^2)$ with respect to $\exlinear$ is $2(u^2+v^2)$, which is bounded below by $\frac{1}{2}$ under the assumption $\frac{1}{4} - (u^2+v^2) < 0$.

\noindent
\begin{minipage}[b]{0.5\textwidth}
{\footnotesizeoff\renewcommand{\arraystretch}{1.3}%
\centering
\begin{sequentdeduction}[array]
  \linfer[MdW]{
  \linfer[Prog]{
  \linfer[]{
  \linfer[dVcmp]{
  \linfer[]{
  \linfer[qear]{
    \lclose
  }
    {\lsequent{\frac{1}{4} < u^2+v^2}{2(u^2+v^2) \geq \frac{1}{2}}}
  }
    {\lsequent{\frac{1}{4} - (u^2+v^2) < 0}{2(u^2+v^2) \geq \frac{1}{2}}}
  }
    {\lsequent{u^2+v^2=1}{\ddiamond{\exlinear}{\frac{1}{4} - (u^2+v^2) \geq 0}}}
  }
    {\lsequent{u^2+v^2=1}{\ddiamond{\exlinear}{u^2+v^2 \leq \frac{1}{4}}}}
  }
    {\lsequent{u^2+v^2=1}{\ddiamond{\exlinear}{u^2+v^2 = \frac{1}{4}}}}
  }
  {\lsequent{u^2+v^2=1}{\ddiamond{\exlinear}{\big(\frac{1}{4} \leq \lnorm{(u,v)} \leq \frac{1}{2}\big)}}}
\end{sequentdeduction}
}%
~\\
\end{minipage}\qquad\qquad%
\begin{minipage}[b]{0.22\textwidth}
\includegraphics[width=\textwidth]{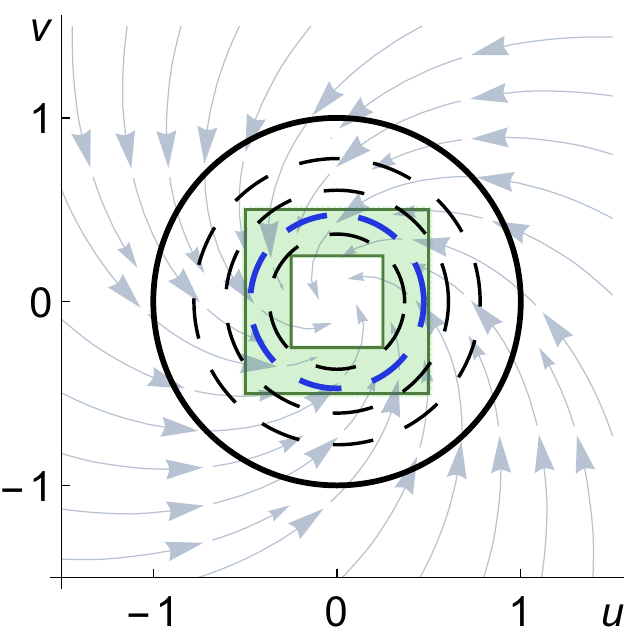}%
\vspace{3mm}
\end{minipage}%

The Lie derivative calculation shows that the value of $u^2+v^2$ decreases along solutions of $\exlinear$ with rate (at least) $\frac{1}{2}$ per unit time.
This is visualized by the shrinking (dashed) circles with radii eventually smaller than $\frac{1}{4}$.
Since the initial states satisfy $u^2+v^2 = 1$, a concrete upper bound on the time required for the solution to satisfy $u^2+v^2 \leq \frac{1}{4}$ is given by $(1 - \frac{1}{4}) \mathbin{/} \frac{1}{2} = \frac{3}{2}$ time units.
It is also instructive to examine the chain of refinements~\rref{eq:refinementchain} underlying the proof above.
Since $\exlinear$ is a linear ODE, the first~\irref{dVcmp} step refines the initial liveness property from~\irref{GEx}, i.e., that solutions exist globally (so for at least $\frac{3}{2}$ time units), to the property $u^2+v^2 \leq \frac{1}{4}$.
Subsequent refinement steps can be read off from the steps above from top-to-bottom:
{\footnotesizeoff%
\begin{align*}
  \ddiamond{\pevolve{\exlinear,\D{\timevar}=1}}{\timevar > \frac{3}{2}}
  ~{\limprefinechain{\irref{dVcmp}}}~
  \ddiamond{\exlinear}{u^2+v^2 \leq \frac{1}{4}}
  ~{\limprefinechain{\irref{Prog}}}~
  \ddiamond{\exlinear}{u^2+v^2 = \frac{1}{4}}
  ~{\limprefinechain{\irref{MdW}}}~
  \ddiamond{\exlinear}{\big(\frac{1}{4} \leq \lnorm{(u,v)} \leq \frac{1}{2}\big)}
\end{align*}
}%
\end{example}

The latter two steps illustrate the idea behind the next two surveyed proof rules.
In their original presentation~\cite{DBLP:conf/emsoft/TalyT10}, the ODE $\D{x}=\genDE{x}$ is only assumed to be \highlight{locally Lipschitz continuous}, which is insufficient for global existence of solutions, making the original rules unsound. See~\rref{app:counterexamples} for counterexamples.
Compared to~\rref{cor:atomicdvcmp},~\rref{cor:tt} below uses the fact that the value of differential variant $\ptermA$ evolves continuously along an ODE solution so it changes from $\ptermA \leq 0$ to $\ptermA > 0$ via $\ptermA = 0$.

\begin{corollary}[Equational differential variants~\cite{DBLP:conf/emsoft/TalyT10}]
\label{cor:tt}
The following proof rules are derivable in \dL.
Term $\constt{\varepsilon}$ is constant for ODE $\D{x}=\genDE{x}$, and the ODE has \highlight{provable global solutions} for both rules.\\
\begin{calculuscollection}
\begin{calculus}
\dinferenceRule[dVeq|dV$_=$]{}
{\linferenceRule
  { \lsequent{\ptermA < 0}{\lied[]{\genDE{x}}{\ptermA}\geq \constt{\varepsilon}}
  }
  {\lsequent{\Gamma,\constt{\varepsilon} > 0, \ptermA \leq 0}{\ddiamond{\pevolve{\D{x}=\genDE{x}}}{\ptermA = 0}} }
}{}
\end{calculus}
\qquad
\begin{calculus}
\dinferenceRule[TT|dV${_{=}^M}$]{}
{\linferenceRule
  { \lsequent{\ptermA = 0}{\rfvar}
   &\lsequent{\ptermA < 0}{\lied[]{\genDE{x}}{\ptermA}\geq \constt{\varepsilon}}
  }
  {\lsequent{\Gamma,\constt{\varepsilon} > 0, \ptermA \leq 0}{\ddiamond{\pevolve{\D{x}=\genDE{x}}}{\rfvar}} }
}{}
\end{calculus}
\end{calculuscollection}%
\end{corollary}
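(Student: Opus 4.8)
The plan is to derive \irref{dVeq} from the atomic differential‑variant rule \irref{dVcmp} of \rref{cor:atomicdvcmp} together with the topological refinement axiom \irref{CORef}, and then to obtain \irref{TT} from \irref{dVeq} by a single monotonicity step \irref{MdW}.

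For \irref{dVeq}, I first dispose of the degenerate case: if $\ptermA = 0$ holds in the initial state, then $\ddiamond{\pevolve{\D{x}=\genDE{x}}}{\ptermA = 0}$ is immediate since the solution already satisfies $\ptermA = 0$ at time $0$. So assume $\ptermA < 0$ initially (this uses the assumption $\ptermA \leq 0$). The first real step is \irref{dVcmp} with $\cmp$ taken to be $\geq$: its premise $\lsequent{\lnot{(\ptermA \geq 0)}}{\lied[]{\genDE{x}}{\ptermA} \geq \constt{\varepsilon}}$ is exactly the premise of \irref{dVeq} (rewriting $\lnot{(\ptermA \geq 0)}$ as $\ptermA < 0$), and \irref{dVcmp} applies since $\D{x}=\genDE{x}$ has provable global solutions (\irref{GEx}). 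This establishes the auxiliary property $\ddiamond{\pevolve{\D{x}=\genDE{x}}}{\ptermA \geq 0}$: the solution eventually makes $\ptermA$ nonnegative.

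It remains to refine $\ddiamond{\pevolve{\D{x}=\genDE{x}}}{\ptermA \geq 0}$ to $\ddiamond{\pevolve{\D{x}=\genDE{x}}}{\ptermA = 0}$, and this is where continuity of $\ptermA$ along the solution is needed — a continuous function cannot move from $\ptermA < 0$ to $\ptermA > 0$ without passing through $\ptermA = 0$. The central step is \irref{CORef} with $\rrfvar \mnodefequiv \ltrue$, domain $\ivr \mnodefequiv \ptermA \leq 0$, and postcondition $\rfvar \mnodefequiv \ptermA \geq 0$; since $\ptermA \geq 0$ and $\ptermA \leq 0$ both characterize closed sets, the topological side condition of \irref{CORef} holds, its box premise $\dbox{\pevolvein{\D{x}=\genDE{x}}{\ltrue \land \ptermA < 0}}{\ptermA \leq 0}$ is proved trivially by \irref{dW}, and $\lnot{\rfvar}$, i.e. $\ptermA < 0$, holds initially by the case assumption. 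Thus \irref{CORef} refines $\ddiamond{\pevolve{\D{x}=\genDE{x}}}{\ptermA \geq 0}$ to $\ddiamond{\pevolvein{\D{x}=\genDE{x}}{\ptermA \leq 0}}{\ptermA \geq 0}$. Now \irref{MdW} strengthens the postcondition to $\ptermA = 0$ (using $\ptermA \leq 0 \land \ptermA \geq 0 \limply \ptermA = 0$ in real arithmetic), and finally \irref{dDR} with the trivially true box premise $\dbox{\pevolvein{\D{x}=\genDE{x}}{\ptermA \leq 0}}{\ltrue}$ discards the domain constraint, yielding $\ddiamond{\pevolve{\D{x}=\genDE{x}}}{\ptermA = 0}$ as required.

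Rule \irref{TT} then follows immediately: apply \irref{MdW} to the goal $\ddiamond{\pevolve{\D{x}=\genDE{x}}}{\rfvar}$ with auxiliary postcondition $\ptermA = 0$. The monotonicity premise $\lsequent{\ltrue, \ptermA = 0}{\rfvar}$ is exactly the first premise of \irref{TT}, and the remaining premise $\ddiamond{\pevolve{\D{x}=\genDE{x}}}{\ptermA = 0}$ is discharged by \irref{dVeq}, fed by the second premise of \irref{TT}. The main obstacle is the crossover subtlety flagged after \rref{cor:atomicdvcmp}: a direct differential‑invariant attempt — e.g. trying \irref{dIcmp} to keep $\ptermA \leq 0$ while in the domain $\ptermA \neq 0$ — fails because nothing is known about $\lied[]{\genDE{x}}{\ptermA}$ on the $\ptermA > 0$ portion of that domain, and the point of routing through \irref{CORef} with the closed domain $\ptermA \leq 0$ is precisely to confine the box obligations to a region where they become trivial.
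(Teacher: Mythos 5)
Your derivation is correct, but the middle of your \irref{dVeq} proof takes a genuinely different route from the paper's. Both proofs obtain $\ddiamond{\pevolve{\D{x}=\genDE{x}}}{\ptermA \geq 0}$ from \irref{dVcmp} (with $\cmp$ as $\geq$, using provable global existence), and both derive \irref{TT} from \irref{dVeq} by a single \irref{MdW} step. The difference is in how the intermediate-value argument from $\ptermA \geq 0$ down to $\ptermA = 0$ is formalized. The paper applies \irref{Prog} with $\rgvar \mnodefequiv \ptermA \geq 0$ directly on the goal $\ddiamond{\pevolve{\D{x}=\genDE{x}}}{\ptermA = 0}$; its box side condition $\dbox{\pevolvein{\D{x}=\genDE{x}}{\ptermA \neq 0}}{\ptermA < 0}$ is an invariance property discharged by \irref{DX} (strengthening the antecedent $\ptermA \leq 0$ to $\ptermA < 0$ under the domain $\ptermA \neq 0$) followed by the barrier-certificate rule \irref{BC}, whose arithmetic premise closes vacuously because its antecedents $\ptermA \neq 0$ and $\ptermA = 0$ are contradictory; no explicit case split is needed. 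You instead route through the topological refinement axiom \irref{CORef} with the closed pair $\rfvar \mnodefequiv \ptermA \geq 0$, $\ivr \mnodefequiv \ptermA \leq 0$, followed by \irref{MdW} and \irref{dDR}; this requires an explicit case split on $\ptermA = 0$ versus $\ptermA < 0$ initially to supply the $\lnot{\rfvar}$ conjunct of \irref{CORef}, but in exchange every box obligation reduces to \irref{dW}, so no ODE invariance reasoning is needed at all. Your route leans on the heavier topological machinery of \rref{lem:diatopaxioms}, which the paper mostly reserves for the domain-constrained setting of \rref{sec:withdomconstraint}, whereas the paper's route stays within \irref{Prog} plus elementary invariance; both are sound derivations from the paper's axiom basis, and your closing observation about why a naive \irref{dIcmp} attempt fails is accurate, though the paper's \irref{BC} step shows the invariance obligation is not actually an obstacle.
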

\begin{proofsketchb}[app:livenodomproofs]{proof:proof13}\end{proofsketchb}

The view of~\irref{dVcmp} as a refinement of~\irref{GEx} in~\rref{ex:linproof} also yields generalizations of~\irref{dVcmp} to higher Lie derivatives.
Indeed, it suffices that \emph{any} higher Lie derivative $\lied[k]{\genDE{x}}{\ptermA}$ is bounded below by a positive constant $\constt{\varepsilon}$ rather than just the first.
Geometrically, this guarantees that $\ptermA$ is bounded below by a degree $k$ polynomial in time variable $\timevar$ that is non-negative for large enough $\timevar$, see~\rref{fig:dvexamples2} for an illustration with $k=2$.

\begin{corollary}[Atomic higher differential variants]
\label{cor:higherdv}
The following proof rule (where $\cmp$ is either $\geq$ or $>$) is derivable in \dL.
Term $\constt{\varepsilon}$ is constant for ODE $\D{x}=\genDE{x}$, $k \geq 1$ is a freely chosen natural number, and the ODE has provable global solutions.\\
\begin{calculuscollection}
\begin{calculus}
\dinferenceRule[dVcmpK|dV$_\cmp^k$]{}
{\linferenceRule
  { \lsequent{\lnot{(\ptermA \cmp 0)}}{\lied[k]{\genDE{x}}{\ptermA}\geq \constt{\varepsilon}}
  }
  {\lsequent{\Gamma, \constt{\varepsilon} > 0}{\ddiamond{\pevolve{\D{x}=\genDE{x}}}{\ptermA \cmp 0}} }
}{}
\end{calculus}
\end{calculuscollection}
\end{corollary}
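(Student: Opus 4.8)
The plan is to follow the same refinement skeleton used for~\rref{cor:atomicdvcmp}, but to replace the affine lower bound on $\ptermA$ by a degree-$k$ polynomial-in-time lower bound obtained by integrating the hypothesis $\lied[k]{\genDE{x}}{\ptermA}\geq\constt{\varepsilon}$ a total of $k$ times; the case $k=1$ then recovers exactly rule~\irref{dVcmp}.

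First I would introduce a clock with~\irref{dGt}, so the goal becomes $\ddiamond{\pevolve{\D{x}=\genDE{x},\D{\timevar}=1}}{\ptermA\cmp0}$ with $\timevar=0$ in the antecedent, and introduce fresh ghost variables $\constt{c_0},\dots,\constt{c_{k-1}}$, each by a trivial~\irref{cut} with $\lexists{c_i}{c_i=\lied[i]{\genDE{x}}{\ptermA}}$ followed by~\irref{existsl}, recording the initial values $\constt{c_i}=\lied[i]{\genDE{x}}{\ptermA}$ (so $\constt{c_0}=\ptermA$ initially). Being fresh, the $\constt{c_i}$ are ODE parameters and hence provably constant along the solution. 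I then define the polynomials $q_k(\timevar)\mnodefequiv\constt{\varepsilon}$ and, downward for $j=k-1,\dots,0$, $q_j(\timevar)\mnodefequiv\constt{c_j}+\int_{0}^{\timevar}q_{j+1}(\sigma)\,d\sigma$, so that each $q_j$ is a \dL term (a polynomial in $\timevar$ with rational coefficients multiplying the constants), with $q_j(0)=\constt{c_j}$, with Lie derivative of $q_j(\timevar)$ along $\D{x}=\genDE{x},\D{\timevar}=1$ equal to $q_{j+1}(\timevar)$, and with $q_0(\timevar)=\sum_{i=0}^{k-1}\frac{\constt{c_i}}{i!}\timevar^{i}+\frac{\constt{\varepsilon}}{k!}\timevar^{k}$ of degree $k$ and leading coefficient $\frac{\constt{\varepsilon}}{k!}>0$ because $\constt{\varepsilon}>0$.

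Next I would apply axiom~\irref{Prog} with $\rgvar\mnodefequiv q_0(\timevar)\cmp0$, which produces two subgoals. The box subgoal $\dbox{\pevolvein{\D{x}=\genDE{x},\D{\timevar}=1}{\lnot(\ptermA\cmp0)}}{\lnot(q_0(\timevar)\cmp0)}$ I would discharge by a chain of $k$ differential cuts~\irref{dC} from the top down: add $\lied[k-1]{\genDE{x}}{\ptermA}\geq q_{k-1}(\timevar)$ to the domain, justified by~\irref{dIcmp} since the initial values agree ($\lied[k-1]{\genDE{x}}{\ptermA}=\constt{c_{k-1}}=q_{k-1}(0)$) and the required Lie-derivative inequality is $\lied[k]{\genDE{x}}{\ptermA}\geq q_k(\timevar)=\constt{\varepsilon}$ under the domain $\lnot(\ptermA\cmp0)$, which is exactly the rule's premise; then, with that inequality now available in the domain, add $\lied[k-2]{\genDE{x}}{\ptermA}\geq q_{k-2}(\timevar)$ by~\irref{dIcmp}, whose Lie-derivative premise $\lied[k-1]{\genDE{x}}{\ptermA}\geq q_{k-1}(\timevar)$ follows from~\irref{qear} because it already sits in the domain; continue down to $\ptermA\geq q_0(\timevar)$, and close the branch with~\irref{dW}, since the accumulated domain contains both $\lnot(\ptermA\cmp0)$ and $\ptermA\geq q_0(\timevar)$, hence $\lnot(q_0(\timevar)\cmp0)$ by real arithmetic. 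The diamond subgoal $\ddiamond{\pevolve{\D{x}=\genDE{x},\D{\timevar}=1}}{(q_0(\timevar)\cmp0)}$ I would discharge from the provable global existence hypothesis~\rref{eq:existence}: since $q_0(\timevar)$ is a polynomial in $\timevar$ with positive leading coefficient, the statement that there is some $s$ past which $q_0$ stays positive is a valid formula of first-order real arithmetic (with the constants $\constt{c_i},\constt{\varepsilon}$ as parameters), so I~\irref{cut} it in, instantiate the $\lforall{\tau}$ of~\rref{eq:existence} with that $s$ to obtain $\ddiamond{\pevolve{\D{x}=\genDE{x},\D{\timevar}=1}}{\timevar>s}$, and conclude by~\irref{MdW} using $\timevar>s\limply q_0(\timevar)>0\limply q_0(\timevar)\cmp0$.

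The main obstacle is conceptual rather than computational: one has to recognize that a constant lower bound on the $k$-th Lie derivative forces a genuine degree-$k$ polynomial-in-time lower bound on $\ptermA$ whose leading coefficient inherits the sign of $\constt{\varepsilon}$, and that although the eventual positivity of such a polynomial is a valid, decidable statement of real arithmetic, it cannot be witnessed by naming an explicit threshold time as a \dL term --- a polynomial threshold would be bounded for bounded parameters, yet the true threshold blows up as $\constt{\varepsilon}\to0^{+}$. This is precisely why the final step has to be routed through~\irref{Prog} together with global existence rather than through a direct instantiation of~\rref{eq:existence}; once that is in place, the $k$-fold~\irref{dC}/\irref{dIcmp} chain realizing the iterated integration, and its arithmetic side conditions, are routine.
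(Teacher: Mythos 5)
Your proposal is correct and follows essentially the same route as the paper's derivation: introduce a clock and ghost constants for the initial values of $\ptermA,\lied[1]{\genDE{x}}{\ptermA},\dots,\lied[k-1]{\genDE{x}}{\ptermA}$, refine via~\irref{Prog} against the degree-$k$ polynomial-in-time lower bound, discharge the box premise by a descending chain of~\irref{dC}/\irref{dIcmp} steps realizing the iterated integration, and discharge the diamond premise by Skolemizing the (valid, real-arithmetic) eventual-positivity threshold and instantiating the global existence formula~\rref{eq:existence}. The only cosmetic difference is the order in which the clock, ghost constants, and arithmetic cuts are introduced; the substance matches the paper's proof of~\rref{cor:higherdv}.
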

\begin{proofsketcha}[app:livenodomproofs]{proof:proof14}
Since $\lied[k]{\genDE{x}}{\ptermA}$ is strictly positive, all lower Lie derivatives $\lied[i]{\genDE{x}}{\ptermA}$ of $\ptermA$ for $i<k$, including $\ptermA \mnodefeq \lied[0]{\genDE{x}}{\ptermA}$, eventually become positive.
The derivation uses a sequence of~\irref{dC+dIcmp} steps.
\end{proofsketcha}

\subsection{Staging Sets}
The \emph{staging sets}~\cite{DBLP:conf/fm/SogokonJ15} proof rule adds flexibility to rules such as~\irref{TT} above by allowing users to choose a staging set formula $\rsfvar$ that \emph{the ODE can only leave by entering the goal region} $\rfvar$.
Staging sets are leaky invariants in the sense that they are almost invariant, except that they can be left by reaching the goal.
This staging property is expressed in the contrapositive by the box modality formula $\dbox{\pevolvein{\D{x}=\genDE{x}}{\lnot{\rfvar}}}{\rsfvar}$.%

\begin{corollary}[Staging sets~\cite{DBLP:conf/fm/SogokonJ15}]
\label{cor:SP}
The following proof rule is derivable in \dL.
Term $\constt{\varepsilon}$ is constant for ODE $\D{x}=\genDE{x}$, and the ODE has provable global solutions.\\
\begin{calculuscollection}
\begin{calculus}
\dinferenceRule[SP|SP]{}
{
\linferenceRule
  { \lsequent{\Gamma}{\dbox{\pevolvein{\D{x}=\genDE{x}}{\lnot{\rfvar}}}{\rsfvar}}
   &\lsequent{\rsfvar}{\ptermA \leq 0 \land \lied[]{\genDE{x}}{p}\geq \constt{\varepsilon}}
  }
  {\lsequent{\Gamma,\constt{\varepsilon}>0}{\ddiamond{\pevolve{\D{x}=\genDE{x}}}{\rfvar}} }
}{}
\end{calculus}
\end{calculuscollection}
\end{corollary}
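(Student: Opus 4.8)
The plan is to derive rule~\irref{SP} by the same refinement pattern that underlies the atomic differential variants rule~\irref{dVcmpA}, reading the staging set $\rsfvar$ in place of the region ``$\lnot(\ptermA\cmp 0)$'' and the goal $\rfvar$ in place of the target ``$\ptermA\cmp 0$''. First I would normalize the setting: apply~\irref{dGt} to adjoin a fresh clock $\timevar$ with $\D{\timevar}=1$ and $\timevar=0$ initially, and introduce a fresh parameter $\constt{\ptermA_0}$ by a discrete ghost so that $\ptermA=\constt{\ptermA_0}$ holds initially while $\constt{\ptermA_0}$ is provably constant along the flow. This reduces the conclusion to proving $\ddiamond{\pevolve{\D{x}=\genDE{x},\D{\timevar}=1}}{\rfvar}$ from $\Gamma, \constt{\varepsilon}>0, \ptermA=\constt{\ptermA_0}, \timevar=0$.

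The core step is an application of~\irref{Prog} with auxiliary region $\rgvar\mnodefequiv(\constt{\ptermA_0}+\constt{\varepsilon}\timevar>0)$, splitting the goal into (a)~the box progress obligation $\dbox{\pevolvein{\D{x}=\genDE{x},\D{\timevar}=1}{\lnot{\rfvar}}}{\lnot{\rgvar}}$, i.e.\ $\dbox{\pevolvein{\D{x}=\genDE{x},\D{\timevar}=1}{\lnot{\rfvar}}}{(\constt{\ptermA_0}+\constt{\varepsilon}\timevar\leq 0)}$, and (b)~the existence obligation $\ddiamond{\pevolve{\D{x}=\genDE{x},\D{\timevar}=1}}{\rgvar}$. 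For~(a), I would first use~\irref{dC} with the first premise of~\irref{SP} (transported to the clock-augmented ODE, which is sound since $\rsfvar$ does not mention $\timevar$) to strengthen the domain constraint from $\lnot{\rfvar}$ to $\lnot{\rfvar}\land\rsfvar$. Inside that domain the second premise of~\irref{SP} supplies both $\ptermA\leq 0$ and $\lied[]{\genDE{x}}{\ptermA}\geq\constt{\varepsilon}$, so~\irref{dIcmp} proves $\dbox{}{(\ptermA\geq\constt{\ptermA_0}+\constt{\varepsilon}\timevar)}$ --- the Lie derivative of $\ptermA-\constt{\ptermA_0}-\constt{\varepsilon}\timevar$ equals $\lied[]{\genDE{x}}{\ptermA}-\constt{\varepsilon}\geq 0$, and the inequality holds initially because $\ptermA=\constt{\ptermA_0}, \timevar=0$ --- and an~\irref{MbW} step weakens this postcondition to $\constt{\ptermA_0}+\constt{\varepsilon}\timevar\leq 0$ using the real-arithmetic fact $\constt{\ptermA_0}+\constt{\varepsilon}\timevar\leq\ptermA\leq 0$ available in the domain.

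For~(b), I would discharge the existence obligation exactly as in the derivation of~\irref{dVcmp} from~\irref{dVcmpA}: since $\D{x}=\genDE{x}$ has provable global solutions,~\irref{GEx} (or whichever existence argument from~\rref{sec:globexist} applies) yields $\lforall{\tau}{\ddiamond{\pevolve{\D{x}=\genDE{x},\D{\timevar}=1}}{\timevar>\tau}}$; instantiating $\tau$ at a term $\tau_0$ with $\constt{\varepsilon}\tau_0\geq-\constt{\ptermA_0}$ (which exists because $\constt{\varepsilon}>0$) and applying~\irref{MdW} with the arithmetic implication $\timevar>\tau_0\limply\constt{\ptermA_0}+\constt{\varepsilon}\timevar>0$ finishes~(b). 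In the refinement reading of~\rref{eq:refinementchain}, the whole argument is the single step $\ddiamond{\pevolve{\D{x}=\genDE{x},\D{\timevar}=1}}{(\constt{\ptermA_0}+\constt{\varepsilon}\timevar>0)}\lrefine\ddiamond{\pevolve{\D{x}=\genDE{x},\D{\timevar}=1}}{\rfvar}$ via~\irref{Prog}, placed after the global existence step that supplies its left-hand side.

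I expect the main obstacle to be bookkeeping rather than conceptual content: ensuring that the first premise of~\irref{SP}, which is stated for $\D{x}=\genDE{x}$ alone, transports correctly across the clock ghost and the parameter ghost $\constt{\ptermA_0}$, and respecting the constant-context discipline so that $\constt{\varepsilon}>0$ and $\constt{\ptermA_0}$ remain available through~\irref{dIcmp} and~\irref{MbW}, which otherwise discard $\Gamma$. The only genuinely essential use of global existence is obligation~(b): without it the solution could blow up before $\constt{\ptermA_0}+\constt{\varepsilon}\timevar$ turns positive, and it is handled by the same pattern already used for~\irref{dVcmp}.
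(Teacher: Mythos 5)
Your derivation is correct and uses essentially the same ingredients as the paper's proof: the \irref{Prog} refinement against the affine bound $\constt{\ptermA_0}+\constt{\varepsilon}\timevar>0$, the \irref{dIcmp}/\irref{MbW} argument using both conjuncts of the second premise inside the staging set, and \irref{GEx} for the existence obligation. The only (harmless) structural difference is that the paper applies \irref{Prog} twice --- first with $\rgvar\mnodefequiv\lnot{\rsfvar}$ to reduce reaching $\rfvar$ to exiting the staging set, which makes the first premise of \irref{SP} appear directly as that step's box side condition, and then again with the time bound --- whereas you use a single \irref{Prog} step aimed at $\rfvar$ and instead fold the staging property into the box obligation via \irref{dC}, which works equally well.
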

\begin{proofsketcha}[app:livenodomproofs]{proof:proof15}
The derivation starts by using refinement axiom \irref{Prog} with $\rgvar \mnodefequiv \lnot{\rsfvar}$.
The rest of the derivation is similar to~\irref{dVcmpA+dVcmp}.
\end{proofsketcha}

The added choice of staging set formula $\rsfvar$ allows users to choose a staging set that, e.g., enables a liveness proof that uses a simpler differential variant $\ptermA$.
Furthermore, proof rules can be significantly simplified by choosing $\rsfvar$ with desirable topological properties.
For example, all of the liveness proof rules derived so far either have an explicit sufficient duration assumption (like~\irref{dVcmpA}) or assume that the ODEs have provable global solutions (like~\irref{dVcmp} using axiom~\irref{GEx}).
An alternative is to use axiom~\irref{BEx}, by choosing the staging set formula $\rsfvar(x)$ to characterize a bounded or compact set over the variables $x$.

\begin{corollary}[Bounded/compact staging sets]
\label{cor:boundedandcompact}
The following proof rules are derivable in \dL.
Term $\constt{\varepsilon}$ is constant for $\D{x}=\genDE{x}$.
In rule \irref{SPb}, formula $\rsfvar$ characterizes a bounded set over variables $x$.
In rule \irref{SPc}, it characterizes a compact, i.e., closed and bounded, set over those variables.\\
{\footnotesizeoff%
\begin{calculuscollection}
\begin{calculus}
\dinferenceRule[SPb|SP$_b$]{}
{
\linferenceRule
  { \lsequent{\Gamma}{\dbox{\pevolvein{\D{x}=\genDE{x}}{\lnot{\rfvar}}}{\rsfvar}}
   &\lsequent{\rsfvar}{\lied[]{\genDE{x}}{p} \geq \constt{\varepsilon}}
  }
  {\lsequent{\Gamma,\constt{\varepsilon} > 0}{\ddiamond{\pevolve{\D{x}=\genDE{x}}}{\rfvar}} }
}{}
\end{calculus}
\qquad
\begin{calculus}
\dinferenceRule[SPc|SP$_c$]{}
{
\linferenceRule
  { \lsequent{\Gamma}{\dbox{\pevolvein{\D{x}=\genDE{x}}{\lnot{\rfvar}}}{\rsfvar}}
   &\lsequent{\rsfvar}{\lied[]{\genDE{x}}{p} > 0}
  }
  {\lsequent{\Gamma}{\ddiamond{\pevolve{\D{x}=\genDE{x}}}{\rfvar}} }
}{}
\end{calculus}
\end{calculuscollection}
}%
\end{corollary}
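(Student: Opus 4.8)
The plan is to derive both rules along exactly the lines of the derivation of~\rref{cor:SP}, with one structural change: every appeal to global existence (axiom~\irref{GEx}) is replaced by the bounded existence axiom~\irref{BEx} instantiated at $\boundedf(x) \mnodefequiv \rsfvar(x)$, an instantiation that is legal precisely because of the side condition that $\rsfvar$ characterizes a bounded set. I will first derive~\irref{SPb} and then obtain~\irref{SPc} from it. Throughout, as in~\rref{cor:SP}, I begin with a~\irref{dGt} step inserting a fresh clock $\timevar$ (with $\timevar = 0$) so that~\irref{BEx} is applicable; the first premise of~\irref{SPb}, a box property about $\D{x}=\genDE{x}$, transfers verbatim to the clocked ODE $\D{x}=\genDE{x},\D{\timevar}=1$ by the standing convention that the ODE for $x$ does not mention $\timevar$.

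For~\irref{SPb}, the second adjustment relative to~\rref{cor:SP} is that~\irref{SPb} no longer assumes $\ptermA \leq 0$ on $\rsfvar$, only the Lie derivative bound $\lied[]{\genDE{x}}{\ptermA} \geq \constt{\varepsilon}$. Since $\rsfvar$ characterizes a bounded set and $\ptermA$ is a polynomial, the first-order arithmetic formula $\lexists{D}{\lforall{x}{(\rsfvar \limply \ptermA \leq D)}}$ is valid, so I cut in a fresh constant $\constt{D}$ with $\rsfvar \vdash \ptermA \leq \constt{D}$ by~\irref{qear}, keeping it as a constant-context assumption so that it survives the context-discarding rules below. Working with the shifted variant $\ptermA - \constt{D}$ (on $\rsfvar$ it is $\leq 0$ and it has the same Lie derivative, so it satisfies exactly the hypotheses used in~\rref{cor:SP}), I record its initial value in a fresh constant $\constt{\ptermA_0}$ and apply axiom~\irref{Prog} with the intermediate region $\rgvar \mnodefequiv (\constt{\ptermA_0} + \constt{\varepsilon}\timevar > 0 \lor \lnot{\rsfvar})$. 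The left premise of~\irref{Prog} is the box property $\dbox{\pevolvein{\D{x}=\genDE{x},\D{\timevar}=1}{\lnot{\rfvar}}}{(\constt{\ptermA_0} + \constt{\varepsilon}\timevar \leq 0 \land \rsfvar)}$, discharged by a~\irref{dC} that re-derives $\rsfvar$ from the first premise of~\irref{SPb}, a further~\irref{dC}+\irref{dIcmp} cutting in $(\ptermA - \constt{D}) \geq \constt{\ptermA_0} + \constt{\varepsilon}\timevar$ (whose Lie-derivative premise is the second premise of~\irref{SPb}, after weakening in $\lnot{\rfvar}$), and a concluding~\irref{dW}+\irref{qear} that combines $\rsfvar \vdash \ptermA \leq \constt{D}$ with this lower bound. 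The right premise of~\irref{Prog} is the liveness obligation $\ddiamond{\pevolve{\D{x}=\genDE{x},\D{\timevar}=1}}{(\constt{\ptermA_0} + \constt{\varepsilon}\timevar > 0 \lor \lnot{\rsfvar})}$: here~\irref{BEx} with $\boundedf \mnodefequiv \rsfvar$, instantiated at a witness time $\constt{T}$ with $\constt{\varepsilon}\constt{T} \geq -\constt{\ptermA_0}$ (which exists since $\constt{\varepsilon} > 0$), yields $\ddiamond{\pevolve{\D{x}=\genDE{x},\D{\timevar}=1}}{(\timevar > \constt{T} \lor \lnot{\rsfvar})}$, and~\irref{MdW} strengthens this postcondition to the required disjunction since $\timevar > \constt{T}$ implies $\constt{\ptermA_0} + \constt{\varepsilon}\timevar > 0$.

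For~\irref{SPc}, the staging set $\rsfvar$ is compact and $\lied[]{\genDE{x}}{\ptermA} > 0$ holds throughout $\rsfvar$; since $\lied[]{\genDE{x}}{\ptermA}$ is a polynomial, hence continuous, it attains a positive minimum on the compact set $\rsfvar$. Concretely, under the compactness side condition the formula $\lexists{\varepsilon}{(\varepsilon > 0 \land \lforall{x}{(\rsfvar \limply \lied[]{\genDE{x}}{\ptermA} \geq \varepsilon)})}$ is a valid consequence of the second premise of~\irref{SPc}, so I cut in such a constant $\constt{\varepsilon}$ by~\irref{qear} and finish by an application of~\irref{SPb}, whose bounded-set side condition is met because compact sets are bounded, and whose $\constt{\varepsilon} > 0$ context assumption is now available.

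The main obstacle is discharging the existence obligation \emph{without} presupposing global existence of solutions: this is exactly what forces $\rsfvar$ to characterize a bounded set, so that~\irref{BEx} applies with $\boundedf \mnodefequiv \rsfvar$ and the ``escape'' disjunct $\lnot{\rsfvar}$ it introduces can be threaded through the intermediate region $\rgvar$ of~\irref{Prog}; and it is what additionally forces $\rsfvar$ to be closed in~\irref{SPc}, since the \emph{strict} inequality $\lied[]{\genDE{x}}{\ptermA} > 0$ on $\rsfvar$ yields a uniform positive lower bound only when $\rsfvar$ is compact (on a merely bounded, non-closed set the Lie derivative can decay to $0$). A secondary point is bookkeeping of the constant context: the bound $\constt{D}$, the lower bound $\constt{\varepsilon}$, and the witness time $\constt{T}$ must all be recorded as constant assumptions so that they remain available across the context-discarding rules~\irref{dIcmp}, \irref{dW}, and~\irref{MdW}.
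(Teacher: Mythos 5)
Your proposal is correct and follows essentially the same route as the paper: a \irref{Prog} refinement built on the staging property, an arithmetic cut bounding $\ptermA$ above on the bounded set $\rsfvar$ kept as a constant context, \irref{dC}+\irref{dIcmp} for the linear-in-time lower bound, \irref{BEx} with $\boundedf \mnodefequiv \rsfvar$ for the existence obligation, and \irref{SPc} obtained from \irref{SPb} by the compactness/positive-minimum argument for $\lied[]{\genDE{x}}{\ptermA}$. The only difference is cosmetic: you merge the paper's two successive \irref{Prog} steps (first with $\rgvar \mnodefequiv \lnot{\rsfvar}$, then with the disjunction $\ptermA_0 + \constt{\varepsilon}\timevar > \ptermA_1 \lor \lnot{\rsfvar}$) into a single application with the combined region and a shifted variant $\ptermA - \constt{D}$, which is an equivalent packaging of the same refinement chain.
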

\begin{proofsketcha}[app:livenodomproofs]{proof:proof16}
Rule~\irref{SPb} is derived using axiom~\irref{BEx} with differential variant $\ptermA$ to establish a time bound.
Rule~\irref{SPc} is an arithmetical corollary of~\irref{SPb}, using the fact that continuous functions on compact domains attain their extrema.
\end{proofsketcha}

\begin{example}[Nonlinear liveness]
\label{ex:nonlinproof}
The liveness property that \rref{fig:odeexamples} suggested for the nonlinear ODE $\exnonlinear$~\rref{eq:exnonlinear} is proved using rule~\irref{SPc} by choosing the staging set formula $\rsfvar \mnodefequiv 1 \leq u^2+v^2 \leq 2$ and the differential variant $\ptermA \mnodefeq u^2+v^2$.
The proof is shown on the left and visualized on the right below; the goal $u^2 + v^2 \geq 2$ is shown in \greenc{green} while $\rsfvar$ is shown as a \bluec{blue} annulus.
The Lie derivative $\lied[]{\genDE{x}}{\ptermA}$ with respect to $\exnonlinear$ is $2(u^2+v^2)(u^2+v^2-\frac{1}{4})$, which is bounded below by $\frac{3}{2}$ in $\rsfvar$.
Thus, the right premise of~\irref{SPc} closes trivially.
The left premise requires proving that $\rsfvar$ is an invariant within the domain constraint $\lnot{(u^2+v^2\geq 2)}$.
Intuitively, this is true because the \bluec{blue} annulus can only be left by entering the goal.
Its elided invariance proof is easy~\cite{DBLP:journals/jacm/PlatzerT20}.

\noindent
\begin{minipage}[b]{0.6\textwidth}
{\footnotesizeoff\renewcommand{\arraystretch}{1.3}%
\centering
\begin{sequentdeduction}[array]
  \linfer[SPc]{
 \linfer[cut+qear]{
  \linfer[]{
    \lclose
  }
  {\lsequent{\rsfvar}{\dbox{\pevolvein{\exnonlinear}{\lnot{(u^2+v^2\geq 2)}}}{\rsfvar}}}
  }
  {\lsequent{u^2+v^2=1}{\dbox{\pevolvein{\exnonlinear}{\lnot{(u^2+v^2\geq 2)}}}{\rsfvar}}} !
  \linfer[qear]{
    \lclose
  }
  {
  \lsequent{\rsfvar}{\lied[]{\genDE{x}}{\ptermA} > 0}}
  }
  {\lsequent{u^2+v^2=1}{\ddiamond{\exnonlinear}{u^2 + v^2 \geq 2}}}
\end{sequentdeduction}
}%
~\\~\\
\end{minipage}\qquad\qquad
\begin{minipage}[b]{0.22\textwidth}
\includegraphics[width=\textwidth]{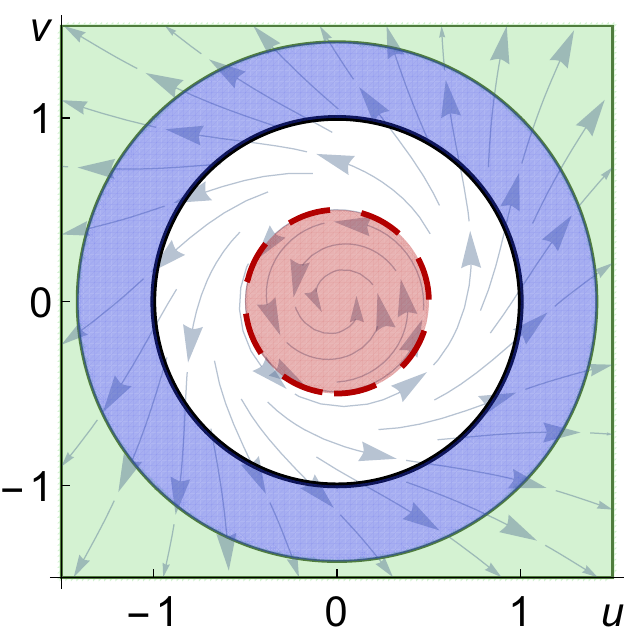}
\end{minipage}%

This proof exploits the flexibility provided by staging sets in two ways.
First, the formula $\rsfvar$ is chosen to characterize a compact set (as required by rule~\irref{SPc}).
As explained in~\rref{sec:globexist}, solutions of $\exnonlinear$ can blow up in finite time which necessitates the use of~\irref{BEx} for proving its liveness properties.
Second, $\rsfvar$ cleverly \emph{excludes} the \redc{red} disk (dashed boundary) characterized by $u^2+v^2 \leq \frac{1}{4}$.
Solutions of $\exnonlinear$ behave differently in this region, e.g., the Lie derivative $\lied[]{\genDE{x}}{\ptermA}$ is \emph{non-positive} in this disk.
The chain of refinements~\rref{eq:refinementchain} behind this proof can be seen from the derivation of rules~\irref{SPb+SPc} in~\rref{app:livenodomproofs}.
The chain starts from the initial liveness property~\irref{BEx} with concrete\footnote{The value of $u^2+v^2$ grows at rate $\frac{3}{2}$ per time unit along solutions and the initial states satisfy $u^2+v^2 = 1$.
Thus, a lower bound on time required to leave the staging set (when $u^2 + v^2 > 2$) is $(2 -1) \mathbin{/} \frac{3}{2} = \frac{2}{3}$ time units.} time bound $\frac{2}{3}$.
The first~\irref{Prog} step shows that the staging set is ultimately exited ($\ddiamond{\pevolve{\exnonlinear}}{\lnot{\rsfvar}}$), while the latter shows the desired liveness property:
\[
  \ddiamond{\pevolve{\exnonlinear,\D{\timevar}=1}}{(\timevar > \frac{2}{3} \lor \lnot{\rsfvar})}
  {\limprefinechain{\irref{Prog}}}
  \ddiamond{\pevolve{\exnonlinear}}{\lnot{\rsfvar}}
  {\limprefinechain{\irref{Prog}}}
  \ddiamond{\pevolve{\exnonlinear}}{u^2 + v^2 \geq 2}
\]
\end{example}

The need to use axiom \irref{BEx} (or otherwise, assume global existence) is subtle and is often overlooked in the surveyed liveness arguments.
An example of this is an incorrect claim~\cite[Remark 3.6]{DBLP:journals/siamco/PrajnaR07} that a liveness argument~\cite[Theorem 3.5]{DBLP:journals/siamco/PrajnaR07} works \highlight{without assuming that the relevant sets are bounded}.
This article's axiomatic approach can be used to find and fix errors involving these subtleties.
As another example, the following \emph{set Lyapunov function} proof rule adapts ideas from the literature~\cite[Theorem 2.4, Corollary 2.5]{DBLP:journals/siamco/RatschanS10} for proving liveness when the postcondition $\rfvar$ characterizes an open set.
The latter assumption on $\rfvar$ enables a convenient choice of staging set in rule~\irref{SPc} because $\lnot{\rfvar}$ characterizes a closed set.

\begin{corollary}[Set Lyapunov functions~\cite{DBLP:journals/siamco/RatschanS10}]
\label{cor:rs}
The following proof rule is derivable in \dL.
Formula $K$ characterizes a \highlight{compact set} over variables $x$, while formula $\rfvar$ characterizes an open set over those variables.\\
\begin{calculuscollection}
\begin{calculus}
\dinferenceRule[RS|SLyap]{}
{
\linferenceRule
  {
     \lsequent{\ptermA \geq 0}{K}
    &\lsequent{\lnot{\rfvar},K}{\lied[]{\genDE{x}}{\ptermA} > 0}
  }
  {\lsequent{\Gamma, \ptermA \cmp 0}{\ddiamond{\pevolve{\D{x}=\genDE{x}}}{\rfvar}} }
}{}
\end{calculus}
\end{calculuscollection}
\end{corollary}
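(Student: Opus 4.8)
The plan is to obtain rule~\irref{RS} as an instance of the compact staging set rule~\irref{SPc}, instantiating the staging set formula as $\rsfvar \mnodefequiv \lnot{\rfvar} \land \ptermA \geq 0$. First I would discharge the topological side condition of~\irref{SPc}, that $\rsfvar$ characterizes a compact set over the ODE variables $x$: since $\rfvar$ characterizes an open set, $\lnot{\rfvar}$ characterizes a closed set, and $\ptermA \geq 0$ characterizes a closed set, so $\rsfvar$ characterizes a closed set; moreover the first premise $\lsequent{\ptermA \geq 0}{K}$ gives the inclusion $\imodel{\I}{\ptermA \geq 0} \subseteq \imodel{\I}{K}$, and the latter is bounded because $K$ characterizes a compact set, hence $\imodel{\I}{\rsfvar} \subseteq \imodel{\I}{\ptermA \geq 0}$ is closed and bounded, i.e., compact (and likewise for each parameter valuation, using the parametric notion of these side conditions). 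Applying~\irref{SPc} with context $\Gamma, \ptermA \cmp 0$ then leaves two premises to be derived from the two premises of~\irref{RS}.

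The easy premise of~\irref{SPc} is $\lsequent{\lnot{\rfvar} \land \ptermA \geq 0}{\lied[]{\genDE{x}}{\ptermA} > 0}$. This follows from the two premises of~\irref{RS} by~\irref{cut} on $K$ together with routine real-arithmetic bookkeeping (\irref{weakenl} plus \irref{qear}): under $\lnot{\rfvar} \land \ptermA \geq 0$, the first premise of~\irref{RS} supplies $K$, and then the second premise of~\irref{RS} supplies $\lied[]{\genDE{x}}{\ptermA} > 0$ from $\lnot{\rfvar} \land K$.

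For the other premise of~\irref{SPc}, namely $\lsequent{\Gamma, \ptermA \cmp 0}{\dbox{\pevolvein{\D{x}=\genDE{x}}{\lnot{\rfvar}}}{(\lnot{\rfvar} \land \ptermA \geq 0)}}$, I would apply monotonicity~\irref{MbW} with intermediate postcondition $\ptermA \geq 0$; its arithmetic premise $\lsequent{\lnot{\rfvar}, \ptermA \geq 0}{\lnot{\rfvar} \land \ptermA \geq 0}$ is immediate because the domain constraint $\lnot{\rfvar}$ is available there, so the remaining obligation is the heart of the proof: $\lsequent{\Gamma, \ptermA \cmp 0}{\dbox{\pevolvein{\D{x}=\genDE{x}}{\lnot{\rfvar}}}{\ptermA \geq 0}}$. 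Since $\ptermA \cmp 0$ entails $\ptermA \geq 0$, this reduces by weakening to showing that $\ptermA \geq 0$ is an invariant of $\pevolvein{\D{x}=\genDE{x}}{\lnot{\rfvar}}$, i.e., that $\ptermA \geq 0 \limply \dbox{\pevolvein{\D{x}=\genDE{x}}{\lnot{\rfvar}}}{\ptermA \geq 0}$ is valid. Semantically this holds by an intermediate-value argument: a solution that stays in $\lnot{\rfvar}$ and starts with $\ptermA \geq 0$ can only make $\ptermA$ negative by first passing through $\ptermA = 0$; but at any such crossing point, $\ptermA = 0 \geq 0$ yields $K$ by the first premise of~\irref{RS}, so $\lnot{\rfvar} \land K$ holds and the second premise of~\irref{RS} gives $\lied[]{\genDE{x}}{\ptermA} > 0$, i.e., $\ptermA$ is strictly increasing there and cannot dip below $0$. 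Because $\ptermA \geq 0$ and the domain $\lnot{\rfvar}$ are formulas of first-order real arithmetic, this true ODE invariant is provable in \dL by completeness for ODE invariants; equivalently, it is an instance of a derived strict barrier certificate rule (obtained from \dL's real induction axiom) applied with barrier $\ptermA \geq 0$ and boundary condition $\lnot{\rfvar} \land \ptermA = 0 \limply \lied[]{\genDE{x}}{\ptermA} > 0$, which itself follows from the two premises of~\irref{RS}.

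The main obstacle is exactly this last invariance step. Rule~\irref{dIcmp} does not apply directly, since the Lie derivative bound $\lied[]{\genDE{x}}{\ptermA} > 0$ is only guaranteed on $\lnot{\rfvar} \land K$ rather than on the whole domain $\lnot{\rfvar}$, and a naive differential cut of $K$ into the domain is circular because invariance of $K$ within $\lnot{\rfvar}$ itself is only available via $\ptermA \geq 0$. Appealing to completeness for ODE invariants (more explicitly, to real induction or a derived strict barrier certificate rule) is what breaks this circularity; everything else is propositional and real-arithmetic reasoning around~\irref{SPc} and~\irref{MbW}. It is worth recording in the full proof that the first premise $\lsequent{\ptermA \geq 0}{K}$ plays a double role: besides making the boundary condition available, it is precisely what forces $\imodel{\I}{\ptermA \geq 0}$ to be bounded, so that the compactness side condition of~\irref{SPc} (and hence the~\irref{BEx}-style existence reasoning built into~\irref{SPc}) is legitimately invoked.
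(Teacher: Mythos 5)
Your derivation is correct and follows essentially the paper's own route: reduce to \irref{SPc}, discharge its box-modality premise by monotonicity~\irref{MbW}, and prove the resulting invariance obligation with a strict barrier certificate argument (the paper's derived rule~\irref{BC}), whose boundary premise \(\lsequent{\lnot{\rfvar}, \ptermA = 0}{\lied[]{\genDE{x}}{\ptermA} > 0}\) is obtained by cutting in $K$ via the first premise of~\irref{RS}. The one divergence is your staging set $\rsfvar \mnodefequiv \lnot{\rfvar} \land \ptermA \geq 0$ versus the paper's $\rsfvar \mnodefequiv \lnot{\rfvar} \land K$: the paper's choice is compact purely from the stated side conditions of~\irref{RS} (a closed set intersected with a compact one) and makes the right premise of~\irref{SPc} literally the right premise of~\irref{RS}, whereas yours establishes boundedness only via the validity of the premise \(\lsequent{\ptermA \geq 0}{K}\) --- still sound for deriving the rule, but it turns the compactness side condition of~\irref{SPc} into one that cannot be checked independently of the premises, which matters for a syntactic implementation.
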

\begin{proofsketcha}[app:livenodomproofs]{proof:proof17}
Rule~\irref{RS} is derived from rule~\irref{SPc} with $\rsfvar \mnodefequiv \lnot{\rfvar} \land K$, since $\lnot{\rfvar}$ characterizes a closed set, and the intersection of a closed set with a compact set is compact.
\end{proofsketcha}

Rule~\irref{RS} was claimed~\cite[Theorem 2.4, Corollary 2.5]{DBLP:journals/siamco/RatschanS10} to hold for any \highlight{closed} set $K$, when, in fact, $K$ crucially needs to be compact as seems to have been assumed implicitly in the proofs~\cite{DBLP:journals/siamco/RatschanS10}.

\section{Liveness With Domain Constraints}
\label{sec:withdomconstraint}

This section presents proof rules for liveness properties $\pevolvein{\D{x}=\genDE{x}}{\ivr}$ with domain constraint $\ivr$.
These properties are more subtle than liveness without domain constraints, because the limitation to a domain constraint $\ivr$ may make it impossible for an ODE solution to reach a desired goal region without leaving $\ivr$.

\begin{wrapfigure}[10]{r}{0.24\textwidth}
\centering
\includegraphics[width=0.22\textwidth]{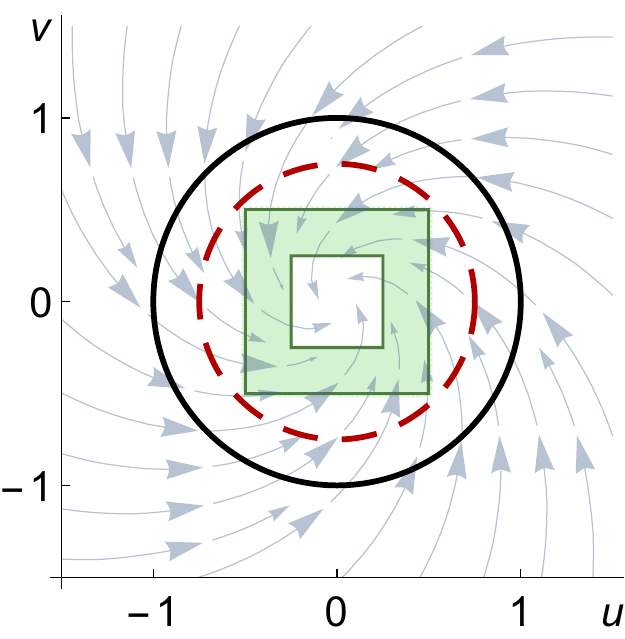}
\end{wrapfigure}

Consider the following liveness property for $\exlinear$~\rref{eq:exlinear} (visualized on the right), which adds domain constraint $\ivr \mnodefequiv u^2+v^2 \not= \frac{9}{16}$ restricting solutions from entering the \redc{red} dashed circle before reaching the \greenc{green} goal region.
\begin{equation}
\ddiamond{\pevolvein{\exlinear}{u^2+v^2 \not= \frac{9}{16}}}{\big(\frac{1}{4} \leq \lnorm{(u,v)} \leq \frac{1}{2}\big)}
\label{eq:baddomain}
\end{equation}

As proved in~\rref{ex:linproof}, solutions starting from the black circle $u^2+v^2=1$ reach the \greenc{green} goal region.
However, the continuous solutions must cross the \redc{red} dashed circle $u^2+v^2 = \frac{9}{16}$ to reach the goal, see discussion of implication~\rref{eq:refinementimpl2}.
This violates the domain constraint and falsifies~\rref{eq:baddomain} for initial states on the black circle.

Axiom~\irref{dDR} with $\rrfvar \mnodefequiv \ltrue$ provides one way of soundly and directly generalizing the proof rules from~\rref{sec:nodomconstraint}, as shown in the following example.

\begin{example}[Nonlinear liveness with domain]
\label{ex:nonlindomproof}
The liveness property $u^2+v^2=1 \limply \ddiamond{\exnonlinear}{u^2 + v^2 \geq 2}$ was proved in~\rref{ex:nonlinproof} for the nonlinear ODE $\exnonlinear$~\rref{eq:exnonlinear}.
The following derivation proves a stronger liveness property with the added domain constraint~$1 \leq u^2 + v^2$ by extending the proof from~\rref{ex:nonlinproof} with a~\irref{dDR} refinement step.
The resulting left premise is an invariance property of the ODE whose proof is elided~\cite{DBLP:journals/jacm/PlatzerT20}; intuitively, solutions starting from $u^2+v^2=1$ grow outwards, and so they remain in the domain $1 \leq u^2 + v^2$ (see~\rref{fig:odeexamples}).
The resulting right premise is proved in~\rref{ex:nonlinproof}.

{\footnotesizeoff\renewcommand{\arraystretch}{1.3}%
\centering
\begin{sequentdeduction}[array]
  \linfer[dDR]{
    \linfer[]{
      \lclose
    }
    {\lsequent{u^2+v^2=1}{\dbox{\exnonlinear}{1 \leq u^2 + v^2}}} !
    \linfer[]{
      \lclose
    }
    {\lsequent{u^2+v^2=1}{\ddiamond{\exnonlinear}{u^2 + v^2 \geq 2}}}
  }
  {\lsequent{u^2+v^2=1}{\ddiamond{\pevolvein{\exnonlinear}{1 \leq u^2 + v^2}}{u^2 + v^2 \geq 2}}}
\end{sequentdeduction}
}%

\end{example}

However, liveness arguments become much more intricate when attempting to generalize beyond domain constraint refinement with~\irref{dDR}, e.g., recall the unsound conjecture~\irref{badaxiom}.
Indeed, unlike the technical glitches of~\rref{sec:nodomconstraint}, this article uncovers several subtle soundness-critical errors in the literature.
With \dL's deductive approach, these intricacies are isolated to the topological axioms (\rref{lem:diatopaxioms}) which have been proved sound once-and-for-all.
Errors and omissions in the surveyed techniques are again \highlight{highlighted in blue}.

The following proof rule generalizes differential variants~\irref{dVcmp} to handle domain constraints.
Like rule~\irref{dVcmp}, the differential variant $\ptermA$ is guaranteed to eventually become non-negative along solutions with constant positive lower bound $\lied[]{\genDE{x}}{\ptermA}\geq \constt{\varepsilon}$ on its Lie derivative.
The additional twist is that the domain constraint $\ivr$ must be proved to hold as long as $\ptermA$ is still negative, i.e., while the goal has not been reached.
This is expressed in the contrapositive by the formula $\dbox{\pevolvein{\D{x}=\genDE{x}}{\lnot{(\ptermA \cmp 0)}}}{\ivr}$ in the left premise of the rule.

\begin{corollary}[Atomic differential variants with domains~\cite{DBLP:journals/logcom/Platzer10}]
\label{cor:atomicdvcmpQ}
The following proof rule (where $\cmp$ is either $\geq$ or $>$) is derivable in \dL.
Term $\constt{\varepsilon}$ is constant for the ODE $\D{x}=\genDE{x}$, and the ODE has provable global solutions.
\highlight{Formula $\ivr$ characterizes a closed (resp. open) set when $\cmp$ is $\geq$ (resp. $>$).}\\
\begin{calculuscollection}
\begin{calculus}
\dinferenceRule[dVcmpQ|dV$_\cmp\&$]{}
{\linferenceRule
  {
    \lsequent{\Gamma}{\dbox{\pevolvein{\D{x}=\genDE{x}}{\lnot{(\ptermA \cmp 0)}}}{\ivr}}
   &\lsequent{\lnot{(\ptermA \cmp 0)}, \ivr}{\lied[]{\genDE{x}}{\ptermA}\geq \constt{\varepsilon}}
  }
  {\lsequent{\Gamma,\constt{\varepsilon} > 0,\highlight{\lnot{(\ptermA \cmp 0)}}}{\ddiamond{\pevolvein{\D{x}=\genDE{x}}{\ivr}}{\ptermA \cmp 0}} }
}{}
\end{calculus}
\end{calculuscollection}
\end{corollary}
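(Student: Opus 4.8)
The plan is to derive \irref{dVcmpQ} as a short refinement chain: peel off the domain constraint $\ivr$ with the topological refinement axiom \irref{CORef} to expose an \emph{unconstrained} liveness goal, run the differential-variant argument of \irref{dVcmpA} on that goal, and recover $\ivr$ exactly where the Lie-derivative bound is needed by a differential cut against the first premise. First I would apply \irref{dGt} to add a clock $\D{\timevar}=1$ with $\timevar=0$ initially, and a ghost step recording the initial value of $\ptermA$ in a fresh constant $\constt{\ptermA_0}$, so that $\ptermA=\constt{\ptermA_0}$ holds initially and $\constt{\ptermA_0}$ is constant along the ODE.

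Next I would apply \irref{CORef} with postcondition $\rfvar \mnodefequiv \ptermA\cmp0$, intermediate domain $\rrfvar \mnodefequiv \ltrue$, and target domain $\ivr$. The highlighted antecedent $\lnot{(\ptermA\cmp0)}$ of \irref{dVcmpQ} is exactly the hypothesis \irref{CORef} requires, and since $\ptermA\cmp0$ characterizes a closed set when $\cmp$ is $\geq$ and an open set when $\cmp$ is $>$, the highlighted side condition on $\ivr$ is precisely what discharges the matching open/closed side condition of \irref{CORef} (these topological conditions are unaffected by the added clock since neither $\ptermA$ nor $\ivr$ mentions $\timevar$). This reduces the goal $\ddiamond{\pevolvein{\D{x}=\genDE{x},\D{\timevar}=1}{\ivr}}{\ptermA\cmp0}$ to two subgoals: the box formula $\dbox{\pevolvein{\D{x}=\genDE{x},\D{\timevar}=1}{\lnot{(\ptermA\cmp0)}}}{\ivr}$, which is the first premise of \irref{dVcmpQ}, and the unconstrained liveness goal $\ddiamond{\pevolve{\D{x}=\genDE{x},\D{\timevar}=1}}{\ptermA\cmp0}$.

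For the unconstrained goal I would mimic the derivation of \irref{dVcmpA}, applying \irref{Prog} with $\rgvar \mnodefequiv \constt{\ptermA_0}+\constt{\varepsilon}\timevar>0$. Its diamond side obligation $\ddiamond{\pevolve{\D{x}=\genDE{x},\D{\timevar}=1}}{\constt{\ptermA_0}+\constt{\varepsilon}\timevar>0}$ follows from provable global existence \irref{GEx} (advance $\timevar$ past a suitable bound) and \irref{MdW}, using $\constt{\varepsilon}>0$ to make a large enough $\timevar$ force $\constt{\ptermA_0}+\constt{\varepsilon}\timevar>0$. Its box side obligation $\dbox{\pevolvein{\D{x}=\genDE{x},\D{\timevar}=1}{\lnot{(\ptermA\cmp0)}}}{\constt{\ptermA_0}+\constt{\varepsilon}\timevar\leq0}$ is handled by first cutting $\ivr$ into the domain with \irref{dC} (its first subgoal being the first premise of \irref{dVcmpQ} once more), then strengthening the postcondition to $\ptermA\geq\constt{\ptermA_0}+\constt{\varepsilon}\timevar$ with \irref{MbW} (valid in real arithmetic since $\lnot{(\ptermA\cmp0)}$ entails $\ptermA\leq0$), and finishing with \irref{dIcmp}: its arithmetic premise requires $\lied[]{\genDE{x}}{\ptermA}$ to dominate the Lie derivative of $\constt{\ptermA_0}+\constt{\varepsilon}\timevar$, namely $\constt{\varepsilon}$, under the domain $\lnot{(\ptermA\cmp0)}\land\ivr$, which is exactly the second premise of \irref{dVcmpQ}, while the initialization $\ptermA\geq\constt{\ptermA_0}+\constt{\varepsilon}\timevar$ holds because $\ptermA=\constt{\ptermA_0}$ and $\timevar=0$ initially.

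I expect the main obstacle to be the crossover point where $\ptermA$ first reaches $\ptermA\cmp0$: a naive attempt with \irref{dDR}, or with \irref{Prog} applied directly to the domain-constrained diamond, is unsound because the solution might leave $\ivr$ at precisely that instant, whereas the first premise only guarantees $\ivr$ while $\lnot{(\ptermA\cmp0)}$ strictly holds. Axiom \irref{CORef} is exactly what bridges this gap, but only when $\ivr$ matches $\ptermA\cmp0$ topologically --- which is why the open/closed hypothesis on $\ivr$ is soundness-critical and was omitted in the surveyed treatment~\cite{DBLP:journals/logcom/Platzer10}. Everything else is the same differential-variant bookkeeping as in \irref{dVcmpA}, now carried out inside the domain constraint by the differential cut.
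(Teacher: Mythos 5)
Your proposal is correct and follows essentially the same route as the paper's derivation: \irref{CORef} with $\rrfvar \mnodefequiv \ltrue$ (justified by the highlighted $\lnot{(\ptermA\cmp0)}$ assumption and the matching open/closed conditions on $\ptermA\cmp0$ and $\ivr$), followed by the \irref{dVcmp}-style argument with \irref{GEx}, \irref{Prog}, an extra \irref{dC} of $\ivr$ discharged by the first premise, and \irref{MbW}+\irref{dIcmp} yielding the second premise. The only difference is that you add the clock via \irref{dGt} before \irref{CORef} rather than after, which is an inessential reordering (you just need the inverse \irref{dGt} direction to strip the clock from the box premises, as the paper does for its differential cut).
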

\begin{proofsketcha}[app:livewithdomproofs]{proof:proof18}
The derivation uses axiom~\irref{CORef} choosing $\rrfvar \mnodefequiv \ltrue$, noting that $\ptermA \geq 0$ (resp. $\ptermA > 0$) characterizes a topologically closed (resp. open) set so the appropriate topological requirements of~\irref{CORef} are satisfied.
The highlighted $\highlight{\lnot{(\ptermA \cmp 0)}}$ assumption is crucial for soundly using axiom~\irref{CORef}:
{\footnotesizeoff%
\begin{sequentdeduction}[array]
\linfer[CORef]{
  \lsequent{\Gamma}{\dbox{\pevolvein{\D{x}=\genDE{x}}{\lnot{(\ptermA \cmp 0)}}}{\ivr}} !
  \linfer[]{
  \linfer[]{
    \lsequent{\lnot{(\ptermA \cmp 0)}, \ivr}{\lied[]{\genDE{x}}{\ptermA}\geq \constt{\varepsilon}}
  }
    {\dots}
  }
  {\lsequent{\Gamma,\constt{\varepsilon} > 0}{\ddiamond{\pevolve{\D{x}=\genDE{x}}}{p\cmp 0}}}
}
{\lsequent{\Gamma,\constt{\varepsilon} > 0, \highlight{\lnot{(\ptermA \cmp 0)}}}{\ddiamond{\pevolvein{\D{x}=\genDE{x}}{\ivr}}{\ptermA \cmp 0}}}
\end{sequentdeduction}
}%

The derivation steps on the right premise are similar to the ones used in~\irref{dVcmp} although an intervening~\irref{dC} step is used to additionally assume $\ivr$ in the antecedents.
\end{proofsketcha}

Rule~\irref{dVcmpQ} uses the topological refinement axiom~\irref{CORef} to extend the refinement chain for~\irref{dVcmp} as follows:
\begin{equation}
  \cdots
  {\limprefinechain{\irref{dVcmp}}}
  \ddiamond{\pevolve{\D{x}=\genDE{x}}}{\ptermA \cmp 0}
  {\limprefinechain{\irref{CORef}}}
  \ddiamond{\pevolvein{\D{x}=\genDE{x}}{\ivr}}{\ptermA \cmp 0}
\label{eq:corefchain}
\end{equation}

A subtle advantage of placing the refinement~\irref{CORef} at the end of the refinement chain~\rref{eq:corefchain} is it decouples reasoning about domain constraint $\ivr$ from earlier steps refinement steps.
Notably, earlier refinement steps like~\irref{dVcmp} in the chain above can focus on handling other subtleties, such as sufficient duration existence of solutions (\rref{sec:globexist}), \emph{without} worrying about domain constraints.
The original presentation of rule~\irref{dVcmpQ}~\cite{DBLP:journals/logcom/Platzer10} omits the highlighted $\highlight{\lnot{(\ptermA \cmp 0)}}$ assumption, but the rule is unsound without it.
In addition, the original presentation uses a form of syntactic weak negation~\cite{DBLP:journals/logcom/Platzer10}, which is unsound for open postconditions, as pointed out earlier~\cite{DBLP:conf/fm/SogokonJ15}.
See~\rref{app:counterexamples} for counterexamples.

The proofs of the next two corollaries also make use of axiom~\irref{CORef} to derive the proof rule~\irref{TTQ}~\cite{DBLP:conf/emsoft/TalyT10} and the adapted rule~\irref{RSQ}~\cite{DBLP:journals/siamco/RatschanS10}.
These rules respectively generalize~\irref{TT} and~\irref{RS} from~\rref{sec:nodomconstraint} to handle domain constraints.
The technical glitches in their original presentations~\cite{DBLP:journals/siamco/RatschanS10,DBLP:conf/emsoft/TalyT10}, which were identified in \rref{sec:nodomconstraint}, remain highlighted here.

Like rule~\irref{dVcmpQ}, rules~\irref{dVeqQ+TTQ} below have an additional premise requiring that the domain constraint $\ivr$ provably holds while the goal has not yet been reached $\dbox{\pevolvein{\D{x}=\genDE{x}}{\ptermA < 0}}{\ivr}$.

\begin{corollary}[Equational differential variants with domains~\cite{DBLP:conf/emsoft/TalyT10}]
\label{cor:ttq}
The following proof rules are derivable in \dL.
Term $\constt{\varepsilon}$ is constant for the ODE $\D{x}=\genDE{x}$, and the ODE has \highlight{provable global solutions} for both rules. %
Formula $\ivr$ characterizes a closed set over variables $x$.\\
\begin{calculuscollection}
\begin{calculus}
\dinferenceRule[dVeqQ|dV$_=\&$]{}
{\linferenceRule
  { \lsequent{\Gamma}{\dbox{\pevolvein{\D{x}=\genDE{x}}{\ptermA < 0}}{\ivr}}
    &\lsequent{\ptermA < 0, \ivr}{\lied[]{\genDE{x}}{\ptermA}\geq \constt{\varepsilon}}
  }
  {\lsequent{\Gamma,\constt{\varepsilon} > 0, \ptermA \leq 0, \ivr}{\ddiamond{\pevolvein{\D{x}=\genDE{x}}{\ivr}}{\ptermA = 0}} }
}{}

\dinferenceRule[TTQ|dV${_{=}^M\&}$]{}
{\linferenceRule
  {
    \lsequent{\ivr, \ptermA = 0}{\rfvar}
   &\lsequent{\Gamma}{\dbox{\pevolvein{\D{x}=\genDE{x}}{\ptermA < 0}}{\ivr}} \quad
    \lsequent{\ptermA < 0, \ivr}{\lied[]{\genDE{x}}{\ptermA}\geq \constt{\varepsilon}}
  }
  {\lsequent{\Gamma,\constt{\varepsilon} > 0, \ptermA\leq 0, \ivr}{\ddiamond{\pevolvein{\D{x}=\genDE{x}}{\ivr}}{\rfvar}} }
}{}
\end{calculus}
\end{calculuscollection}
\end{corollary}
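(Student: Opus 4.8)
The plan is to obtain both rules by bolting a single topological refinement step (axiom~\irref{CORef}) onto the domain-free reasoning already developed in~\rref{cor:tt}, and then to get~\irref{TTQ} from~\irref{dVeqQ} by one monotonicity step. For~\irref{dVeqQ}, I first split propositionally on whether $\ptermA = 0$ holds initially. If it does, the postcondition $\ptermA = 0$ is already true and $\ivr$ holds initially (it is among the antecedents), so $\ddiamond{\pevolvein{\D{x}=\genDE{x}}{\ivr}}{\ptermA = 0}$ holds trivially by taking time $0$; otherwise $\ptermA \leq 0$ together with $\ptermA \neq 0$ gives $\ptermA < 0$, and in particular $\lnot{(\ptermA = 0)}$. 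In this branch I apply axiom~\irref{CORef} with $\rfvar \mnodefequiv (\ptermA = 0)$ and $\rrfvar \mnodefequiv \ltrue$: the formula $\ptermA = 0$ characterizes a closed set and, by the rule's side condition, so does $\ivr$, so~\irref{CORef}'s topological requirement is met. Its left premise asks for $\dbox{\pevolvein{\D{x}=\genDE{x}}{\lnot{(\ptermA = 0)}}}{\ivr}$; since a solution issuing from $\ptermA < 0$ and remaining in $\ptermA \neq 0$ cannot escape $\ptermA < 0$ (continuity of solutions, i.e.\ the intermediate value property), this reduces by a domain-restriction (\irref{dC}-style) argument to the first premise $\lsequent{\Gamma}{\dbox{\pevolvein{\D{x}=\genDE{x}}{\ptermA < 0}}{\ivr}}$. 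The right branch of~\irref{CORef} leaves the domain-free obligation $\ddiamond{\pevolve{\D{x}=\genDE{x}}}{\ptermA = 0}$, handled exactly as for~\irref{dVeq} in~\rref{cor:tt}: use~\irref{dVcmp} with $\cmp$ taken as $\geq$ (relying on the provable global solutions of the ODE) to reach $\ptermA \geq 0$, then one~\irref{Prog} step with $\rgvar \mnodefequiv (\ptermA \geq 0)$ refines this to $\ptermA = 0$, its box side condition being the provable fact that while $\ptermA \neq 0$ the solution stays at $\ptermA < 0$. The only change from the domain-free case is an extra~\irref{dC} step inside the~\irref{dVcmp} subderivation that brings $\ivr$ into the domain constraint before~\irref{dIcmp}; that~\irref{dC} premise is again discharged by the first premise of~\irref{dVeqQ}, after which the~\irref{dIcmp} obligation matches the second premise $\lsequent{\ptermA < 0, \ivr}{\lied[]{\genDE{x}}{\ptermA}\geq \constt{\varepsilon}}$.

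Rule~\irref{TTQ} then follows cheaply from~\irref{dVeqQ}. Its conclusion differs only in carrying postcondition $\rfvar$ instead of $\ptermA = 0$, and its second and third premises are precisely the two premises of~\irref{dVeqQ}. So a single~\irref{MdW} (diamond monotonicity) step strengthens $\ddiamond{\pevolvein{\D{x}=\genDE{x}}{\ivr}}{\ptermA = 0}$ to $\ddiamond{\pevolvein{\D{x}=\genDE{x}}{\ivr}}{\rfvar}$, with the~\irref{MdW} side premise $\lsequent{\ivr, \ptermA = 0}{\rfvar}$ being exactly the first premise of~\irref{TTQ}. This mirrors how~\irref{TT} is obtained from~\irref{dVeq} in~\rref{cor:tt}.

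The delicate point is the single~\irref{CORef} application. Its topological side condition is what forces $\ivr$ to characterize a closed set — this is exactly why that assumption appears in the corollary, matching the fact that $\ptermA = 0$ is closed — and the domain constraints have to be lined up carefully: the box in~\irref{CORef} carries the weaker domain $\lnot{(\ptermA = 0)}$ rather than the $\ptermA < 0$ of the rule's premise, so one must invoke continuity of ODE solutions to argue the solution cannot leave $\ptermA < 0$ without first hitting $\ptermA = 0$. Everything else — the trivial $\ptermA = 0$ case, the~\irref{dVcmp}/\irref{Prog}/\irref{dC} bookkeeping, and the~\irref{MdW} step for~\irref{TTQ} — parallels the domain-free derivations of~\rref{cor:tt} together with the derivation of~\irref{dVcmpQ}, and in particular reuses the fact that~\irref{CORef} has already been proved sound once and for all (\rref{lem:diatopaxioms}).
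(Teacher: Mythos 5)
Your derivation is correct, but it is assembled in a different order from the paper's. The paper proves \irref{dVeqQ} by first applying \irref{Prog} with $\rgvar \mnodefequiv (\ptermA \geq 0)$ to reduce the equational postcondition to $\ptermA \geq 0$ (the box side condition being the \irref{DX}+\irref{BC} invariance of $\ptermA < 0$ under domain $\ivr \land \ptermA \neq 0$), then a \irref{DX} step to obtain $\ptermA < 0$ in the antecedent, and finally reuses the already-derived rule \irref{dVcmpQ} with $\cmp$ taken as $\geq$ — that rule's two premises are exactly the two premises of \irref{dVeqQ}, and the \irref{CORef} step (applied to the closed formula $\ptermA \geq 0$) together with the \irref{dC}-for-$\ivr$ bookkeeping is encapsulated inside it. You instead apply \irref{CORef} first, directly to the closed formula $\ptermA = 0$, discharging its left premise from the rule's first premise via the same $\ptermA<0$ invariance argument, and then re-derive the domain-free equational case of \rref{cor:tt} with an inlined \irref{dC} to bring $\ivr$ into the \irref{dIcmp} obligation. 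Both routes rest on the same ingredients (\irref{CORef} on a closed set, \irref{Prog}, \irref{dVcmp}/\irref{dIcmp} with a \irref{dC} discharged by the first premise, \irref{BC}/\irref{DX} for the invariance of $\ptermA<0$, and \irref{MdW} for \irref{TTQ}, which you handle exactly as the paper does); the paper's factoring through \irref{dVcmpQ} buys modularity and a shorter proof, while yours shows the refinement steps can be permuted — at the cost of redoing the \irref{dC} and dropping/reintroducing the time variable by hand. One small caution: your reduction of $\dbox{\pevolvein{\D{x}=\genDE{x}}{\ptermA \neq 0}}{\ivr}$ to the premise $\dbox{\pevolvein{\D{x}=\genDE{x}}{\ptermA < 0}}{\ivr}$ needs the invariance fact $\dbox{\pevolvein{\D{x}=\genDE{x}}{\ptermA \neq 0}}{(\ptermA < 0)}$ combined with \irref{DMP} (or a \irref{dC} on the domain $\ptermA\neq 0$); "continuity of solutions" should be cashed out as that \irref{BC} derivation rather than left as a semantic appeal, but this is exactly the step the paper also carries out syntactically, so it is a presentational rather than a substantive gap.
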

\begin{proofsketcha}[app:livewithdomproofs]{proof:proof19}
Rules~\irref{dVeqQ+TTQ} are both derived from rule~\irref{dVcmpQ} with $\geq$ for $\cmp$, since $\ivr$ characterizes a closed set.
Their derivations are respectively similar to the derivation of~\irref{dVeq+TT} from~\irref{dVcmp} and require the \highlight{provable global solutions} assumption for soundly applying rule~\irref{dVcmpQ}.
\qedhere
\end{proofsketcha}

Rule~\irref{RSQ} below has identical premises to the corresponding~\irref{RS} rule (without domain constraints).
The additional insight is that, assuming $\ptermA > 0$ is true initially, those same premises can be used to conclude the stronger liveness property $\ddiamond{\pevolvein{\D{x}=\genDE{x}}{\ptermA > 0}}{\rfvar}$ because $\ptermA$ can be additionally proved to stay positive along the solutions using the premises.
This stronger conclusion can be used with a monotonicity step to prove more general liveness properties with an arbitrary domain constraint $\ivr$ as exemplified by rule~\irref{RSQM}.

\begin{corollary}[Set Lyapunov functions with domains~\cite{DBLP:journals/siamco/RatschanS10}]
\label{cor:rsq}
The following proof rules are derivable in \dL.
Formula $K$ characterizes a \highlight{compact set} over variables $x$, while formula $\rfvar$ characterizes an open set over those variables.\\
\begin{calculuscollection}
\begin{calculus}
\dinferenceRule[RSQ|SLyap$\&$]{}
{
\linferenceRule
  {
     \lsequent{\ptermA \geq 0}{K}
     \quad
     \lsequent{\lnot{\rfvar},K}{\lied[]{\genDE{x}}{\ptermA} > 0}
  }
  {\lsequent{\Gamma, \ptermA > 0}{\ddiamond{\pevolvein{\D{x}=\genDE{x}}{\ptermA > 0}}{\rfvar}} }
}{}

\dinferenceRule[RSQM|SLyap$^M\&$]{}
{
\linferenceRule
  {
     \lsequent{\ptermA \geq 0}{K}
     \quad
     \lsequent{\lnot{\rfvar},K}{\lied[]{\genDE{x}}{\ptermA} > 0}
     \quad
     \lsequent{\ptermA > 0}{\ivr}
  }
  {\lsequent{\Gamma, \ptermA > 0}{\ddiamond{\pevolvein{\D{x}=\genDE{x}}{\ivr}}{\rfvar}} }
}{}

\end{calculus}
\end{calculuscollection}
\end{corollary}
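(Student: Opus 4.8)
The plan is to derive rule~\irref{RSQ} first and then obtain~\irref{RSQM} from it by a single domain-refinement step. For~\irref{RSQ} the goal is the liveness property $\ddiamond{\pevolvein{\D{x}=\genDE{x}}{\ptermA > 0}}{\rfvar}$ from the antecedent $\ptermA > 0$. I would first case split on whether $\rfvar$ already holds in the initial state: if it does, the liveness property holds at time $0$ since the domain constraint $\ptermA > 0$ also holds initially by assumption, using the basic fact that $\ddiamond{\pevolvein{\D{x}=\genDE{x}}{\ivr}}{\rfvar}$ holds whenever $\ivr \land \rfvar$ holds in the current state. Otherwise $\lnot{\rfvar}$ holds initially, and I would apply the topological refinement axiom~\irref{CORef} from~\rref{lem:diatopaxioms} with $\rrfvar \mnodefequiv \ltrue$ and with $\ptermA > 0$ in the role of the target domain $\ivr$; its topological side condition is met since $\rfvar$ is open by hypothesis and $\ptermA > 0$ characterizes an open set over $x$. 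This leaves two obligations: the box side condition $\lnot{\rfvar} \land \dbox{\pevolvein{\D{x}=\genDE{x}}{\lnot{\rfvar}}}{(\ptermA > 0)}$, and the unconstrained liveness hypothesis $\ddiamond{\pevolve{\D{x}=\genDE{x}}}{\rfvar}$.

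The liveness hypothesis $\ddiamond{\pevolve{\D{x}=\genDE{x}}}{\rfvar}$ is discharged immediately by rule~\irref{RS} from~\rref{cor:rs}: its two premises are exactly the premises $\lsequent{\ptermA \geq 0}{K}$ and $\lsequent{\lnot{\rfvar},K}{\lied[]{\genDE{x}}{\ptermA} > 0}$ of~\irref{RSQ}, and its conclusion applies here with $\cmp$ instantiated to $>$ since $\ptermA > 0$ is available in the antecedent. The box obligation $\dbox{\pevolvein{\D{x}=\genDE{x}}{\lnot{\rfvar}}}{(\ptermA > 0)}$ is the substantive step and formalizes that ``$\ptermA$ stays positive until the goal is reached''. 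I would prove it by a strict-barrier argument driven by the same two premises: in any state reached while $\lnot{\rfvar}$ holds, if $\ptermA$ were $0$ then $\ptermA \geq 0$ yields $K$ by the first premise, hence $\lied[]{\genDE{x}}{\ptermA} > 0$ by the second premise, so $\ptermA$ is strictly increasing there and can never cross $0$ from above; starting from $\ptermA > 0$, this keeps $\ptermA > 0$. Since invariance of the strict inequality $\ptermA > 0$ within the domain $\lnot{\rfvar}$ from an initial state satisfying $\ptermA > 0$ is a genuine ODE invariance property, it is provable in \dL using the complete ODE invariance calculus~\cite{DBLP:journals/jacm/PlatzerT20} (concretely, a~\irref{dC} recording $K$ along the portion of the solution where $\ptermA \geq 0$ reduces it to the strict-inequality invariant with the sign condition on $\lied[]{\genDE{x}}{\ptermA}$).

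Rule~\irref{RSQM} then follows from~\irref{RSQ} by a domain-constraint monotonicity step. Starting from the goal $\ddiamond{\pevolvein{\D{x}=\genDE{x}}{\ivr}}{\rfvar}$, I would apply the domain-refinement axiom~\irref{dDR} from~\rref{lem:diarefaxioms} with the refining domain $\ptermA > 0$. Its box side condition $\dbox{\pevolvein{\D{x}=\genDE{x}}{\ptermA > 0}}{\ivr}$ follows from the third premise $\lsequent{\ptermA > 0}{\ivr}$ by differential weakening~\irref{dW}, and its liveness hypothesis $\ddiamond{\pevolvein{\D{x}=\genDE{x}}{\ptermA > 0}}{\rfvar}$ is exactly the conclusion of~\irref{RSQ} just derived, whose two premises coincide with the first two premises of~\irref{RSQM} and whose antecedent $\ptermA > 0$ is available.

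The main obstacle is the box obligation $\dbox{\pevolvein{\D{x}=\genDE{x}}{\lnot{\rfvar}}}{(\ptermA > 0)}$ in the derivation of~\irref{RSQ}: it is the only step not reducible to a direct instance of an earlier axiom, and it requires the strict-inequality (strict-barrier) invariance reasoning sketched above. Everything else is assembly of~\irref{RS},~\irref{CORef},~\irref{dDR}, and~\irref{dW} together with the trivial initial-time case, plus checking that the open/compact side conditions needed by~\irref{CORef} and~\irref{RS} are precisely those assumed for $\rfvar$ and $K$ in~\irref{RSQ} and~\irref{RSQM}.
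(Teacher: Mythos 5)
Your proposal is correct and follows essentially the same route as the paper: a \irref{DX}-style reduction to the case $\lnot{\rfvar}$, a \irref{CORef} refinement with $\rrfvar \mnodefequiv \ltrue$ exploiting openness of $\rfvar$ and $\ptermA>0$, discharge of the unconstrained liveness obligation by \irref{RS}, and the box obligation $\dbox{\pevolvein{\D{x}=\genDE{x}}{\lnot{\rfvar}}}{(\ptermA>0)}$ via exactly the strict-barrier argument you sketch (the paper packages it as rule \irref{BC} from~\rref{lem:appodeinvariance}, cutting in $K$ from $\lsequent{\ptermA\geq 0}{K}$ at the barrier $\ptermA=0$), followed by \irref{dDR}+\irref{dW} for \irref{RSQM}.
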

\begin{proofsketcha}[app:livewithdomproofs]{proof:proof20}
Rule~\irref{RSQM} is derived from rule~\irref{RSQ} by monotonicity on the domain constraints with the additional premise $\lsequent{\ptermA > 0}{\ivr}$.
Rule~\irref{RSQ} is derived from~\irref{RS} after a refinement step with~\irref{CORef} since both formulas $\ptermA > 0$ and $\rfvar$ characterize open sets as sketched below.
{\footnotesizeoff%
\begin{sequentdeduction}[array]
  \linfer[CORef]{
    \linfer[]{
    \linfer[]{
      \lsequent{\ptermA \geq 0}{K} !
      \lsequent{\lnot{\rfvar},K}{\lied[]{\genDE{x}}{\ptermA} > 0}
    }
    {\dots}
    }
    {\lsequent{\Gamma, \ptermA > 0}{\dbox{\pevolvein{\D{x}=\genDE{x}}{\lnot{\rfvar}}}{\ptermA > 0}}}
    !
    \linfer[RS]{
     \lsequent{\ptermA \geq 0}{K} !
     \lsequent{\lnot{\rfvar},K}{\lied[]{\genDE{x}}{\ptermA} > 0}
    }
    {\lsequent{\Gamma, \ptermA > 0}{\ddiamond{\pevolve{\D{x}=\genDE{x}}}{\rfvar}}}
  }
  {\lsequent{\Gamma, \ptermA > 0}{\ddiamond{\pevolvein{\D{x}=\genDE{x}}{\ptermA > 0}}{\rfvar}} }
\end{sequentdeduction}
}%

The left premise proves the invariance of $\ptermA > 0$ for ODE $\D{x}=\genDE{x}$ with domain constraint $\rfvar$.
The elided derivation (see \hyperlink{proof:proof20}{proof}) reduces to two premises which are identical to those of rule~\irref{RS}.
The right premise uses rule~\irref{RS}, which necessitates the \highlight{compactness} assumption for formula $K$ for soundness.
\qedhere
\end{proofsketcha}

The following staging sets with domain constraints proof rule~\irref{SPQ}~\cite{DBLP:conf/fm/SogokonJ15} generalizes rule~\irref{SP} using axiom \irref{SARef}.
Notably, unlike the preceding rules, rule~\irref{SPQ} requires no topological assumptions\footnote{Aside from this article's standing assumption that $\rfvar, \ivr$ are formulas of first-order real arithmetic which is crucial for the soundness of axiom~\irref{SARef}.} about the domain constraint $\ivr$ nor of the goal region $\rfvar$ so it can be used in proofs of more general liveness properties.

\begin{corollary}[Staging sets with domains~\cite{DBLP:conf/fm/SogokonJ15}]
\label{cor:SPQ}
The following proof rule is derivable in \dL.
Term $\constt{\varepsilon}$ is constant for ODE $\D{x}=\genDE{x}$, and the ODE has provable global solutions.\\
\begin{calculuscollection}
\begin{calculus}
\dinferenceRule[SPQ|SP$\&$]{}
{
\linferenceRule
  { \lsequent{\Gamma}{\dbox{\pevolvein{\D{x}=\genDE{x}}{\lnot{(\rfvar \land \ivr)}}}{\rsfvar}}
   &\lsequent{\rsfvar}{\ivr \land \ptermA \leq 0 \land \lied[]{\genDE{x}}{p}\geq \constt{\varepsilon}}
  }
  {\lsequent{\Gamma,\constt{\varepsilon}>0}{\ddiamond{\pevolvein{\D{x}=\genDE{x}}{\ivr}}{\rfvar}} }
}{}
\end{calculus}
\end{calculuscollection}
\end{corollary}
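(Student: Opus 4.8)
The plan is to derive \irref{SPQ} by peeling off the domain constraint with the topological refinement axiom \irref{SARef} and then finishing exactly as in the proof of \irref{SP}. Read top-to-bottom as a refinement chain, the first step applies \irref{SARef} with $\rrfvar \mnodefequiv \ltrue$ to the conclusion $\ddiamond{\pevolvein{\D{x}=\genDE{x}}{\ivr}}{\rfvar}$; since $\ltrue \land \lnot{(\rfvar \land \ivr)}$ is equivalent to $\lnot{(\rfvar \land \ivr)}$, this leaves two obligations, the box formula $\dbox{\pevolvein{\D{x}=\genDE{x}}{\lnot{(\rfvar \land \ivr)}}}{\ivr}$ and the domain-free liveness formula $\ddiamond{\D{x}=\genDE{x}}{\rfvar}$. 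The first obligation is immediate from the left premise of \irref{SPQ}, namely $\dbox{\pevolvein{\D{x}=\genDE{x}}{\lnot{(\rfvar \land \ivr)}}}{\rsfvar}$, by one monotonicity step \irref{MbW} using the conjunct $\rsfvar \limply \ivr$ that the right premise provides.

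For the second obligation $\ddiamond{\D{x}=\genDE{x}}{\rfvar}$, I would first apply \irref{MdW} so that it suffices to prove the stronger $\ddiamond{\D{x}=\genDE{x}}{\rfvar \land \ivr}$, and then apply \irref{Prog} with $\rgvar \mnodefequiv \lnot{\rsfvar}$. This reduces to the box formula $\dbox{\pevolvein{\D{x}=\genDE{x}}{\lnot{(\rfvar \land \ivr)}}}{\rsfvar}$ --- which is again exactly the left premise of \irref{SPQ}, so no domain weakening is needed --- together with the liveness claim $\ddiamond{\D{x}=\genDE{x}}{\lnot{\rsfvar}}$ that the solution eventually leaves the staging set. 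This last claim is proved by the same differential-variant argument as \irref{dVcmpA}: use \irref{dGt} to adjoin a clock ($\D{\timevar}=1$ with $\timevar=0$ initially), \irref{cut} in $\ptermA = \constt{\ptermA_0}$ for a fresh constant $\constt{\ptermA_0}$, and apply \irref{Prog} once more with $\rgvar \mnodefequiv (\constt{\ptermA_0} + \constt{\varepsilon}\timevar > 0)$. The resulting box obligation is $\dbox{\pevolvein{\D{x}=\genDE{x},\D{\timevar}=1}{\rsfvar}}{\constt{\ptermA_0} + \constt{\varepsilon}\timevar \leq 0}$; since its domain constraint is $\rsfvar$, the right premise of \irref{SPQ} supplies both $\ptermA \leq 0$ and $\lied[]{\genDE{x}}{\ptermA} \geq \constt{\varepsilon}$, so an \irref{MbW} step that strengthens the postcondition to $\ptermA \geq \constt{\ptermA_0} + \constt{\varepsilon}\timevar$ followed by \irref{dIcmp} closes it just as in \irref{dVcmpA}. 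The remaining obligation $\ddiamond{\pevolve{\D{x}=\genDE{x},\D{\timevar}=1}}{\constt{\ptermA_0} + \constt{\varepsilon}\timevar > 0}$ follows from provable global existence for $\D{x}=\genDE{x}$ (see \rref{sec:globexist}, e.g., axiom \irref{GEx}) by instantiating it at a suitable time bound and applying \irref{MdW}; this is the step that uses $\constt{\varepsilon} > 0$ to guarantee $\constt{\ptermA_0} + \constt{\varepsilon}\timevar$ becomes positive.

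I expect the main obstacle to be the bookkeeping around the domain constraint, not the arithmetic. Two points are delicate. First, the refinement must go through \irref{SARef} rather than the unsound candidate \irref{badaxiom}: this is precisely why the left premise of \irref{SPQ} carries the domain $\lnot{(\rfvar \land \ivr)}$ instead of $\lnot{\rfvar}$, and it is what lets that single box formula serve simultaneously as the side condition of \irref{SARef} and as the \irref{Prog} premise for $\ddiamond{\D{x}=\genDE{x}}{\rfvar \land \ivr}$; the soundness of \irref{SARef} additionally relies on the standing assumption that $\rfvar$ and $\ivr$ are first-order arithmetic formulas, so no topological side conditions on $\rfvar$ or $\ivr$ are required here (unlike the \irref{CORef}-based rules). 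Second, the differential-variant estimate must be carried out inside the domain $\rsfvar$: the Lie-derivative bound $\lied[]{\genDE{x}}{\ptermA} \geq \constt{\varepsilon}$ of the right premise is assumed only on $\rsfvar$ and generally fails on the larger region $\ptermA \leq 0$, so the inner \irref{Prog} step must be arranged so that its box obligation has $\rsfvar$ (and not $\ptermA \leq 0$) as its domain. Everything else --- the clock ghost, the linear lower bound $\constt{\ptermA_0} + \constt{\varepsilon}\timevar$, and the appeal to global existence --- is routine and identical to the derivation of the domain-free rule \irref{SP}.
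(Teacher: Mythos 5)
Your proposal is correct and follows essentially the same route as the paper: a \irref{SARef} refinement whose box side condition is discharged from the left premise by \irref{MbW} with $\rsfvar \limply \ivr$, followed by the staging-set/differential-variant argument of \irref{SP} (clock ghost, \irref{Prog} with $\constt{\ptermA_0}+\constt{\varepsilon}\timevar>0$, \irref{MbW}+\irref{dIcmp}, and \irref{GEx}). The only difference is cosmetic: you strengthen the remaining diamond goal to $\rfvar\land\ivr$ via \irref{MdW} so that the inner \irref{Prog} obligation matches the left premise's domain $\lnot{(\rfvar\land\ivr)}$ directly, whereas the paper invokes rule \irref{SP} on $\ddiamond{\pevolve{\D{x}=\genDE{x}}}{\rfvar}$ and weakens the box premise's domain with \irref{dW}+\irref{DMP} using $\lnot{\rfvar}\limply\lnot{(\rfvar\land\ivr)}$; both variants are sound.
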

\begin{proofsketcha}[app:livewithdomproofs]{proof:proof21}
The derivation starts with a~\irref{SARef} refinement step.
On the resulting left premise, an~\irref{MbW} monotonicity step yields the left premise and first (leftmost) conjunct of the right premise of rule~\irref{SPQ}.
On the resulting right premise, rule~\irref{SP} is used with a similar (see full~\hyperlink{proof:proof21}{proof}) monotonicity step, which yields the remaining conjuncts of the right premise of rule~\irref{SPQ}.
{\footnotesizeoff%
\begin{sequentdeduction}[array]
\linfer[SARef]{
  \linfer[MbW]{
    \lsequent{\Gamma}{\dbox{\pevolvein{\D{x}=\genDE{x}}{\lnot{(\rfvar \land \ivr)}}}{\rsfvar}} !
    \lsequent{\rsfvar}{\ivr}
  }
  {\lsequent{\Gamma}{\dbox{\pevolvein{\D{x}=\genDE{x}}{\lnot{(\rfvar \land \ivr)}}}{\ivr}}} !
  \linfer[SP]{
    \linfer[]{
      \lsequent{\rsfvar}{\ptermA \leq 0 \land \lied[]{\genDE{x}}{p}\geq \constt{\varepsilon}}
    }
    {\dots}
  }
  {\lsequent{\Gamma}{\ddiamond{\pevolve{\D{x}=\genDE{x}}}{\rfvar}}}
}
{\lsequent{\Gamma}{\ddiamond{\pevolvein{\D{x}=\genDE{x}}{\ivr}}{\rfvar}}}
\\[-\normalbaselineskip]\tag*{\qedhere}
\end{sequentdeduction}
}%
\end{proofsketcha}

The rules derived in Corollaries~\ref{cor:atomicdvcmpQ}--\ref{cor:SPQ} demonstrate the flexibility of \dL's refinement approach for deriving the surveyed liveness arguments as proof rules.
Indeed, their derivations are mostly straightforward adaptations of the corresponding rules presented in~\rref{sec:withdomconstraint}, with the appropriate addition of either a~\irref{CORef} or~\irref{SARef} axiomatic refinement step.
Moreover, the derived rules are sound, in contrast to the liveness arguments which were missing subtle assumptions in the literature (summarized in~\rref{tab:survey}).
The flexibility (and soundness) of this article's approach is not limited to the surveyed liveness arguments because refinement steps can also be freely mixed-and-matched for specific liveness questions.

\begin{example}[Strengthening]
\label{ex:strengthen}
The liveness property $u^2+v^2=1 \limply \ddiamond{\exnonlinear}{u^2 + v^2 \geq 2}$ for $\exnonlinear$~\rref{eq:exnonlinear} was proved in~\rref{ex:nonlinproof} using the staging set formula $\rsfvar \mnodefequiv 1 \leq u^2+v^2 \leq 2$, and provably strengthened in~\rref{ex:nonlindomproof} by adding the domain constraint $u^2+v^2 \geq 1$ with a~\irref{dDR} refinement.
Since $\rsfvar$ and $u^2 + v^2 \geq 2$ characterize closed sets, the refinement axiom~\irref{CORef} proves an even stronger liveness property with the strengthened domain $\rsfvar$, as shown in the derivation below.
The derivation starts with axiom~\irref{CORef} which yields three premises.
The leftmost premise is proved by~\irref{qear} since it is a real arithmetic fact, the middle premise proves because $\rsfvar$ is an invariant of the ODE $\exnonlinear$ (proof elided~\cite{DBLP:journals/jacm/PlatzerT20}), and the rightmost premise is proved in~\rref{ex:nonlinproof}.

{\footnotesizeoff\renewcommand{\arraystretch}{1.3}%
\renewcommand{\linferPremissSeparation}{\hspace{5pt}}%
\centering
\begin{sequentdeduction}[array]
  \linfer[CORef]{
    \linfer[qear]{
      \lclose
    }
    {\lsequent{u^2+v^2=1}{\lnot{(u^2 + v^2 \geq 2)}}}
    !
    \linfer[]{
      \lclose
    }
    {\lsequent{u^2+v^2=1}{\dbox{\pevolvein{\exnonlinear}{\lnot{(u^2 + v^2 \geq 2)}}}{\rsfvar}}} !
    \linfer[]{
      \lclose
    }
    {\lsequent{u^2+v^2=1}{\ddiamond{\exnonlinear}{u^2 + v^2 \geq 2}}}
  }
  {\lsequent{u^2+v^2=1}{\ddiamond{\pevolvein{\exnonlinear}{\rsfvar}}{u^2 + v^2 \geq 2}}}
\end{sequentdeduction}
}%

Axiom~\irref{CORef} extends the chain of refinements~\rref{eq:refinementchain} from~\rref{ex:nonlinproof} as follows:
\[
  \ddiamond{\pevolve{\exnonlinear,\D{\timevar}=1}}{(\timevar > \frac{2}{3} \lor \lnot{\rsfvar})}
  ~{\limprefinechain{\irref{Prog}}}~
  \ddiamond{\pevolve{\exnonlinear}}{\lnot{\rsfvar}}
  ~{\limprefinechain{\irref{Prog}}}~
  \ddiamond{\pevolve{\exnonlinear}}{u^2 + v^2 \geq 2}
  ~{\limprefinechain{\irref{CORef}}}~
  \ddiamond{\pevolvein{\exnonlinear}{\rsfvar}}{u^2 + v^2 \geq 2}
\]

The alternative staging set formula $\rsfvarhat \mnodefequiv 1 \leq u^2+v^2 < 2$ can also be used to prove \rref{ex:nonlinproof} with a similar refinement chain (using~\irref{SPb} instead of~\irref{SPc}), but $\rsfvarhat$ does \emph{not} characterize a closed set.
The topological restriction of axiom~\irref{CORef} crucially prevents its unsound use (indicated by $\usebox{\Lightningval}$):
\[
  \underbrace{\ddiamond{\pevolve{\exnonlinear,\D{\timevar}=1}}{(\timevar > \frac{2}{3} \lor \lnot{\rsfvarhat})}
  ~{\limprefinechain{\irref{Prog}}}~
  \ddiamond{\pevolve{\exnonlinear}}{\lnot{\rsfvarhat}}
  ~{\limprefinechain{\irref{Prog}}}~
  \ddiamond{\pevolve{\exnonlinear}}{u^2 + v^2 \geq 2}}_{\text{Similar to~\rref{ex:nonlinproof}}}
  ~\underbrace{\vphantom{\frac{2}{3}}\limprefinechain{\irref{CORef}\usebox{\Lightningval}}}_{\hidewidth\text{Unsound step!}\hidewidth}~
  \ddiamond{\pevolvein{\exnonlinear}{\rsfvar}}{u^2 + v^2 \geq 2}
\]

The liveness property $\ddiamond{\pevolvein{\exnonlinear}{\rsfvarhat}}{u^2 + v^2 \geq 2}$ is unsatisfiable because $\rsfvarhat$ does not overlap with $u^2 + v^2 \geq 2$.
The weakening of an inequality between domain constraints $\rsfvar$ and $\rsfvarhat$ leads to a wholly different conclusion!
\end{example}

The refinement approach also enables the discovery of new, general liveness proof rules by combining the underlying refinement steps in alternative ways.
As an example, the following chimeric proof rule combines ideas from Corollaries~\ref{cor:higherdv},~\ref{cor:boundedandcompact}, and~\ref{cor:SPQ}:

\begin{corollary}[Combination proof rule]
\label{cor:combination}
The following proof rule is derivable in \dL.
Formula $\rsfvar$ characterizes a compact set over variables $x$.\\
\begin{calculuscollection}
\begin{calculus}
\dinferenceRule[SPcQ|SP$_c^k\&$]{}
{
\linferenceRule
  { \lsequent{\Gamma}{\dbox{\pevolvein{\D{x}=\genDE{x}}{\lnot{(\rfvar \land \ivr)}}}{\rsfvar}}
   &\lsequent{\rsfvar}{\ivr \land \lied[k]{\genDE{x}}{p} > 0}
  }
  {\lsequent{\Gamma}{\ddiamond{\pevolvein{\D{x}=\genDE{x}}{\ivr}}{\rfvar}} }
}{}
\end{calculus}
\end{calculuscollection}
\end{corollary}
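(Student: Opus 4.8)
The plan is to derive rule~\irref{SPcQ} as a chimera of its three named parents: the~\irref{SARef}-based handling of the domain constraint $\ivr$ from the derivation of~\irref{SPQ} (\rref{cor:SPQ}), the compact-staging-set argument of~\irref{SPc} (\rref{cor:boundedandcompact}) which replaces a global-existence hypothesis by an appeal to~\irref{BEx}, and the higher-Lie-derivative cascade of~\irref{dVcmpK} (\rref{cor:higherdv}) which turns a sign condition on $\lied[k]{\genDE{x}}{p}$ into eventual growth of $p$ itself.

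First I would add a clock $\timevar$ with $\D{\timevar}=1$ and initial assumption $\timevar=0$ via~\irref{dGt} (ghosting the clock into the box side conditions below via~\irref{DG}), and then open with an~\irref{SARef} refinement step taking $\rrfvar \mnodefequiv \ltrue$, exactly as in the derivation of~\irref{SPQ}. This leaves two obligations: the box side condition $\dbox{\pevolvein{\D{x}=\genDE{x},\D{\timevar}=1}{\lnot{(\rfvar \land \ivr)}}}{\ivr}$, which follows from the left premise of~\irref{SPcQ} by~\irref{MbW} using the conjunct $\rsfvar \limply \ivr$ read off the right premise; and the domain-free liveness property $\ddiamond{\pevolve{\D{x}=\genDE{x},\D{\timevar}=1}}{\rfvar}$. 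To the latter I would apply refinement axiom~\irref{Prog} with $\rgvar \mnodefequiv \lnot{\rsfvar}$, splitting it into (a)~the staging box formula $\dbox{\pevolvein{\D{x}=\genDE{x},\D{\timevar}=1}{\lnot{\rfvar}}}{\rsfvar}$ --- obtained from the left premise by weakening the domain constraint from $\lnot{(\rfvar \land \ivr)}$ to its subset $\lnot{\rfvar}$, a derivable box-modality domain-monotonicity step already used inside the full derivation of~\irref{SPQ} --- and (b)~the exit property $\ddiamond{\pevolve{\D{x}=\genDE{x},\D{\timevar}=1}}{\lnot{\rsfvar}}$ saying that the solution must eventually leave the staging set.

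Obligation~(b) is the heart of the proof and is where compactness of $\rsfvar$ does the work a global-existence assumption would otherwise do. Since $\rsfvar$ characterizes a compact set and all of $\lied[k]{\genDE{x}}{p}, \lied[k-1]{\genDE{x}}{p}, \ldots, \lied[]{\genDE{x}}{p}, p$ are polynomials hence continuous, by~\irref{qear} the right premise $\rsfvar \limply \lied[k]{\genDE{x}}{p} > 0$ upgrades to a uniform positive bound $\rsfvar \limply \lied[k]{\genDE{x}}{p} \geq \constt{\varepsilon}$ with $\constt{\varepsilon} > 0$, and each lower iterated Lie derivative (and $p$) is bounded above and below on $\rsfvar$ by explicit constants. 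Instantiating~\irref{BEx} with $\boundedf \mnodefequiv \rsfvar$ at $\tau \mnodefeq T$ (for the $T$ described next) yields $\ddiamond{\pevolve{\D{x}=\genDE{x},\D{\timevar}=1}}{(\timevar > T \lor \lnot{\rsfvar})}$, and one more~\irref{Prog} step on this disjunctive postcondition reduces~(b) to the box formula $\dbox{\pevolvein{\D{x}=\genDE{x},\D{\timevar}=1}{\rsfvar}}{(\timevar \leq T)}$. This last box formula is discharged by the~\irref{dVcmpK}-style cascade performed \emph{inside the domain constraint} $\rsfvar$: a chain of~\irref{dC} steps, each justified by~\irref{dIcmp} and ultimately by the bound $\lied[k]{\genDE{x}}{p} \geq \constt{\varepsilon}$ on $\rsfvar$, adds to the domain the facts that $\lied[k-1]{\genDE{x}}{p}$ dominates a linear function of $\timevar$, then $\lied[k-2]{\genDE{x}}{p}$ a quadratic, and so on down to $p$ dominating a degree-$k$ polynomial in $\timevar$ with positive leading coefficient $\constt{\varepsilon}/k!$; since~\irref{dW} simultaneously bounds $p$ above on $\rsfvar$, this forces $\timevar \leq T$ for the constant $T$ depending only on $k$, $\constt{\varepsilon}$, and the computed extrema of $p, \lied[]{\genDE{x}}{p}, \ldots, \lied[k-1]{\genDE{x}}{p}$ over $\rsfvar$. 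A routine case split on whether $\rsfvar$ holds in the initial state (if not, both~(a) and~(b) are vacuous) supplies the initial inequalities needed to launch each~\irref{dIcmp} step.

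The main obstacle I anticipate is precisely the dovetailing in obligation~(b): getting the~\irref{dC}+\irref{dIcmp} cascade --- which only certifies a polynomial-in-$\timevar$ lower bound on $p$ for as long as the solution remains in $\rsfvar$ --- to hand back \emph{exactly} the constant $T$ that was fed to~\irref{BEx}, so that the two halves of the argument close against each other. None of the individual ingredients is new: the~\irref{SARef} step is copied from~\rref{cor:SPQ}, the compactness extractions and the~\irref{BEx} time-bound pattern from~\rref{cor:boundedandcompact}, and the iterated-derivative cascade from~\rref{cor:higherdv}; the work lies in the refinement-chain bookkeeping $\ddiamond{\pevolve{\D{x}=\genDE{x},\D{\timevar}=1}}{(\timevar > T \lor \lnot{\rsfvar})} \lrefine \ddiamond{\pevolve{\D{x}=\genDE{x},\D{\timevar}=1}}{\lnot{\rsfvar}} \lrefine \ddiamond{\pevolve{\D{x}=\genDE{x},\D{\timevar}=1}}{\rfvar} \lrefine \ddiamond{\pevolvein{\D{x}=\genDE{x}}{\ivr}}{\rfvar}$ and in checking that, beyond compactness of $\rsfvar$, no topological side condition on $\rfvar$ or $\ivr$ is incurred --- indeed~\irref{SARef} imposes none once $\rfvar, \ivr$ are first-order real-arithmetic formulas.
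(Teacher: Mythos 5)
Your proposal is correct and follows essentially the same route as the paper's derivation: \irref{SARef} with $\rrfvar \mnodefequiv \ltrue$, an \irref{MbW} step discharging the domain obligation via $\rsfvar \limply \ivr$, \irref{Prog} with $\rgvar \mnodefequiv \lnot{\rsfvar}$ plus \irref{dW+DMP} for the staging box, compactness cuts extracting a uniform $\constt{\varepsilon}$ and an upper bound on $\ptermA$, \irref{BEx} for the bounded-duration hypothesis, a second \irref{Prog}, and the \irref{dC}+\irref{dIcmp} cascade of~\rref{cor:higherdv}. The only (harmless) deviations are cosmetic: the paper feeds \irref{BEx} the postcondition $\ptermB(t)-\ptermA_1>0 \lor \lnot{\rsfvar}$ with symbolic initial values of the Lie derivatives as coefficients and picks $\timevar_1$ existentially afterwards, rather than precomputing an explicit $T$ from extrema over $\rsfvar$ (which is what makes your initial-state case split necessary and the paper's unnecessary).
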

\begin{proofsketcha}[app:livewithdomproofs]{proof:proof22}
The derivation combines ideas from the derivations of~\irref{dVcmpK} (generalizing~\irref{dVcmp} to higher derivatives), \irref{SPc} (compact staging sets), and \irref{SPQ} (refining domain constraints).
\end{proofsketcha}

The logical approach of \dL derives complicated proof rules like~\irref{SPcQ} from a small set of sound logical axioms, which ensures their correctness.
The proof rule~\irref{PRQ} below is derived from rule~\irref{SPcQ} (for $k\mnodefeq1$) and is adapted from the literature~\cite[Theorem 3.5]{DBLP:journals/siamco/PrajnaR07}, where additional restrictions were imposed on the sets characterized by $\Gamma,\rfvar,\ivr$, and different conditions were given compared to the left premise of~\irref{PRQ} (\highlight{highlighted below}).
These original conditions were overly permissive as they are checked on sets that are smaller than necessary for soundness. See~\rref{app:counterexamples} for counterexamples to those original conditions.

\begin{corollary}[Compact eventuality~\cite{DBLP:journals/siamco/PrajnaR07}]
\label{cor:prq}
The following proof rule is derivable in \dL.
Formula $\ivr \land \lnot{\rfvar}$ characterizes a compact set over variables $x$.\\
\begin{calculuscollection}
\begin{calculus}
\dinferenceRule[PRQ|E$_c\&$]{}
{
\linferenceRule
  { \highlight{\lsequent{\Gamma}{\dbox{\pevolvein{\D{x}=\genDE{x}}{\lnot{(\rfvar \land \ivr)}}}{\ivr}}}
   &\lsequent{\ivr,\lnot{\rfvar}}{\lied[]{\genDE{x}}{p} > 0}
  }
  {\lsequent{\Gamma}{\ddiamond{\pevolvein{\D{x}=\genDE{x}}{\ivr}}{\rfvar}} }
}{}
\end{calculus}
\end{calculuscollection}
\end{corollary}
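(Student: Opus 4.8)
The plan is to obtain rule~\irref{PRQ} as the special case $k \mnodefeq 1$ of the combination rule~\irref{SPcQ} from~\rref{cor:combination}, instantiating the staging set formula $\rsfvar \mnodefequiv \ivr \land \lnot{\rfvar}$. This is the natural (and essentially forced) choice: $\ivr \land \lnot{\rfvar}$ is exactly the region the solution may occupy before it has reached the goal inside the evolution domain, and it characterizes a compact set over $x$ precisely because of~\irref{PRQ}'s side condition, so the compactness requirement of~\irref{SPcQ} (which in turn feeds its internal use of the bounded-existence axiom~\irref{BEx}) is met verbatim. With this instantiation, the derivation is a single~\irref{SPcQ} step:

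{\footnotesizeoff\renewcommand{\arraystretch}{1.3}
\begin{sequentdeduction}[array]
\linfer[SPcQ]{
  \lsequent{\Gamma}{\dbox{\pevolvein{\D{x}=\genDE{x}}{\lnot{(\rfvar \land \ivr)}}}{(\ivr \land \lnot{\rfvar})}}
  !
  \lsequent{(\ivr \land \lnot{\rfvar})}{\ivr \land \lied[]{\genDE{x}}{\ptermA} > 0}
}
{\lsequent{\Gamma}{\ddiamond{\pevolvein{\D{x}=\genDE{x}}{\ivr}}{\rfvar}}}
\end{sequentdeduction}
}

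The right premise is immediate: its first conjunct $\ivr$ holds trivially under the antecedent $\ivr \land \lnot{\rfvar}$, and its second conjunct is exactly the (non-highlighted) second premise $\lsequent{\ivr,\lnot{\rfvar}}{\lied[]{\genDE{x}}{\ptermA} > 0}$ of~\irref{PRQ}, so the premise closes by propositional reasoning together with~\irref{qear}. The left premise follows from the highlighted first premise $\lsequent{\Gamma}{\dbox{\pevolvein{\D{x}=\genDE{x}}{\lnot{(\rfvar \land \ivr)}}}{\ivr}}$ of~\irref{PRQ} by one monotonicity step~\irref{MbW}: inside the domain constraint $\lnot{(\rfvar \land \ivr)}$ the postcondition $\ivr$ already entails $\ivr \land \lnot{\rfvar}$, since $\ivr$ and $\lnot{(\rfvar \land \ivr)}$ together yield $\lnot{\rfvar}$ by~\irref{qear}. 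This finishes the derivation.

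The step I expect to carry the conceptual weight — and the reason~\irref{PRQ} must be stated with the highlighted first premise rather than the conditions of the original formulation — is getting the domain constraint of that box-modality premise exactly right. It has to be $\lnot{(\rfvar \land \ivr)}$, the full region the solution is permitted to traverse before reaching $\rfvar$ inside $\ivr$; checking invariance of $\ivr$ only on a strictly smaller set (as in the original conditions) is unsound because it omits precisely the boundary behavior that~\irref{SARef} and hence~\irref{SPcQ} are designed to control. Likewise, the compactness of $\ivr \land \lnot{\rfvar}$ is not cosmetic: it is what lets~\irref{SPcQ} invoke~\irref{BEx} to exclude finite-time blow-up and ensure the staging-set argument actually terminates in finite time. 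The counterexamples in~\rref{app:counterexamples} show that weakening either requirement breaks soundness, which is exactly why routing the derivation through~\irref{SPcQ} rather than re-deriving~\irref{PRQ} ad hoc is what makes the corrected rule come out right.
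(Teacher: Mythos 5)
Your derivation matches the paper's own proof: the paper also obtains \irref{PRQ} from \irref{SPcQ} with $k\mnodefeq 1$ and $\rsfvar \mnodefequiv \ivr \land \lnot{\rfvar}$, discharges the left premise of \irref{SPcQ} from the highlighted box premise by an \irref{MbW} step using the tautology $\lnot{(\rfvar \land \ivr)} \land \ivr \limply \ivr \land \lnot{\rfvar}$, and closes the right premise trivially from \irref{PRQ}'s arithmetic premise. The proposal is correct and essentially identical to the paper's argument.
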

\begin{proofsketcha}[app:livewithdomproofs]{proof:proof23}
Rule~\irref{PRQ} is derived from~\irref{SPcQ} (for $k\mnodefeq1$), with $\rsfvar \mnodefequiv \ivr \land \lnot{\rfvar}$.
\end{proofsketcha}

\section{ODE Liveness Proofs in Practice}
\label{sec:impl}

The preceding sections show how axiomatic refinement can be used to fruitfully navigate and understand the zoo of ODE existence and liveness arguments from various applications (\rref{tab:survey}).
The generality of the approach enables the sound and foundational derivation of those arguments from a parsimonious basis of refinement steps.
This section provides a complementary study of how the refinement approach and its derived ODE existence and liveness proof rules are best implemented in practice.
There are two canonical approaches for such an implementation:
\begin{enumerate}
\item \label{itm:approach1} Implement the foundational refinement steps and let users build their own arguments using those steps, e.g., by following the derivations and proofs from Sections~\ref{sec:globexist}--\ref{sec:withdomconstraint}.
\item \label{itm:approach2} Implement the zoo of proof rules from Sections~\ref{sec:globexist}--\ref{sec:withdomconstraint} directly and let users pick from from those rules for their particular ODE liveness applications.
\end{enumerate}

The low-level flexibility of Approach~\rref{itm:approach1} is also its drawback in practice because users need to tediously reconstruct high-level ODE liveness arguments from basic refinements for each proof.
Approach~\rref{itm:approach2} provides users with those high-level arguments but limits users to proof rules that have been implemented, which squanders the generality of the refinement approach.
Moreover, users would still need to navigate the redundancies and tradeoffs among the zoo of proof rules to select one that is best-suited for their proof.
To account for these drawbacks, this section advocates for a middle ground between those two extremes: implementations should provide users with the basic refinement steps, bundled with a set of carefully curated, high-level proof rules (\rref{subsec:complex}) and associated proof support (\rref{subsec:support}) that help users navigate the common cases in their liveness proofs.

These ideas are put into practice through an implementation of ODE existence and liveness proof rules in \KeYmaeraX~\cite{DBLP:conf/cade/FultonMQVP15}.
Proof rules and proof support are implemented as \emph{tactics} in \KeYmaeraX~\cite{DBLP:conf/itp/FultonMBP17}, which are not soundness-critical.
Such an arrangement allows for the implementation of useful ODE liveness proof rules and their associated proof support with \KeYmaeraX's sound kernel as a safeguard against implementation errors or mistakes in their derivations and side conditions.
This core design decision underlying \KeYmaeraX is discussed elsewhere~\cite{DBLP:conf/cade/FultonMQVP15,DBLP:conf/itp/FultonMBP17,DBLP:journals/jar/Platzer17}.
All of the ODE liveness examples in this article have been formally proved in \KeYmaeraX (\rref{subsec:autoexamples}).
By leveraging existing infrastructure in \KeYmaeraX, the implementation can also be used as part of liveness proofs for hybrid systems.
It is used for the liveness proofs of a case study involving a robot model driving along circular arcs in the plane~\cite{DBLP:journals/ral/BohrerTMSP19}.

The basic refinements steps from~\rref{sec:livenessaxioms} and the proof rules in Sections~\ref{sec:globexist}--\ref{sec:withdomconstraint} are mostly straightforward to implement by following their respective proofs.
Thus, Sections~\ref{subsec:complex} and~\ref{subsec:support} focus on a select number of new proof rules and proof support that are beneficial in the implementation.
For the sake of completeness, syntactic derivations of all liveness proof rules presented in these sections are given in~\rref{app:implementationproofs}.

\subsection{Liveness Proof Rules}
\label{subsec:complex}

Atomic differential variants~\irref{dVcmp} is a useful primitive proof rule to implement in \KeYmaeraX because many ODE liveness proof rules, e.g.,~\irref{TT+SP}, derive from it.
From a practical perspective though, rule~\irref{dVcmp} as presented in~\rref{cor:atomicdvcmp} still requires users to provide a choice of the constant $\constt{\varepsilon}$, e.g., the proof in~\rref{ex:linproof} uses $\constt{\varepsilon} \mnodefeq \frac{1}{2}$.
The following slight rephrasing of~\irref{dVcmp} enables a more automated implementation.

\begin{corollary}[Existential atomic differential variants~\cite{DBLP:journals/logcom/Platzer10}]
\label{cor:atomicdvcmpexist}
The following proof rule (where $\cmp$ is either $\geq$ or $>$) is derivable in \dL, where $\varepsilon$ is a fresh variable and ODE $\D{x}=\genDE{x}$ has provable global solutions.

\begin{calculuscollection}
\begin{calculus}
\dinferenceRule[dVcmpE|dV$_\cmp^\exists$]{}
{\linferenceRule
  { \lsequent{\Gamma}{\lexists{\varepsilon>0}{\lforall{x}{\big(\lnot{(\ptermA \cmp 0)} \limply \lied[]{\genDE{x}}{\ptermA}\geq \varepsilon} \big)}}
  }
  {\lsequent{\Gamma}{\ddiamond{\pevolve{\D{x}=\genDE{x}}}{\ptermA \cmp 0}} }
}{}
\end{calculus}
\end{calculuscollection}
\end{corollary}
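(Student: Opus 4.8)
The plan is to derive rule~\irref{dVcmpE} directly from the already-derived atomic differential variant rule~\irref{dVcmp} (\rref{cor:atomicdvcmp}), exploiting the fact that the premise of~\irref{dVcmpE} is precisely the premise of~\irref{dVcmp} with the constant $\constt{\varepsilon}$ turned into an existentially quantified fresh variable. The guiding observation is that once a witness $\varepsilon$ has been extracted, the formula $\lforall{x}{(\lnot{(\ptermA \cmp 0)} \limply \lied[]{\genDE{x}}{\ptermA}\geq \varepsilon)}$ mentions no free variable among $x_1,\dots,x_n$ (the $\forall$ binds $x$, and $\varepsilon$ is fresh), hence is a \emph{constant} assumption for the ODE $\D{x}=\genDE{x}$ and may be soundly retained across applications of ODE proof rules, as noted after~\rref{lem:dlaxioms}. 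The $\forall x$ inside the premise is exactly what expresses that $\varepsilon$ is a \emph{uniform} (state-independent) lower bound, while still allowing the witness to depend on the parameter constraints in $\Gamma$.

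Concretely, I would first apply~\irref{cut} with the formula $\lexists{\varepsilon>0}{\lforall{x}{(\lnot{(\ptermA \cmp 0)} \limply \lied[]{\genDE{x}}{\ptermA}\geq \varepsilon)}}$; the left branch is literally the premise of~\irref{dVcmpE} and closes immediately. On the right branch, since $\varepsilon$ is declared fresh, the side condition of~\irref{existsl} is satisfied, so~\irref{existsl} followed by~\irref{andl} yields a fresh variable $\varepsilon$ together with antecedents $\varepsilon > 0$ and the constant formula $\constt{A} \mnodefequiv \lforall{x}{(\lnot{(\ptermA \cmp 0)} \limply \lied[]{\genDE{x}}{\ptermA}\geq \varepsilon)}$. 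Now apply rule~\irref{dVcmp} with this $\varepsilon$ playing the role of $\constt{\varepsilon}$ (legitimate since $\varepsilon\notin x$) while keeping $\constt{A}$ in the constant context, and using the standing hypothesis that $\D{x}=\genDE{x}$ has provable global solutions, as required by~\irref{dVcmp} through~\irref{GEx}. This leaves the single premise $\lsequent{\constt{A}, \lnot{(\ptermA \cmp 0)}}{\lied[]{\genDE{x}}{\ptermA}\geq \varepsilon}$, which is discharged by~\irref{alll} (instantiating the $\forall x$ in $\constt{A}$ at the current $x$) followed by~\irref{qear}.

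The one point requiring care — and essentially the only obstacle — is the bookkeeping around constant contexts: rule~\irref{dVcmp} as stated in~\rref{cor:atomicdvcmp} discards the antecedent $\Gamma$ when passing to its premise, so the derivation must rely on the fact that assumptions whose truth value is provably constant along ODE solutions may be carried into that premise. If one prefers not to invoke~\irref{dVcmp} as a black box, the same effect is obtained by re-running its derivation (through~\irref{dGt}, \irref{GEx}, \irref{Prog}, \irref{MbW}, and~\irref{dIcmp}, each of which admits such a constant context), with the differential-induction premise again closed by~\irref{alll} and~\irref{qear}. Everything else is routine propositional and first-order reasoning.
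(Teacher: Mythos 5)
Your proposal matches the paper's derivation essentially step for step: a \irref{cut} of the premise, Skolemization of the existential witness via \irref{existsl}, an application of \irref{dVcmp} with that witness while retaining the universally quantified bound as a constant context (which is exactly the soundness argument the paper invokes), and a final \irref{alll}-plus-arithmetic step to close the premise. The observation you flag as the one delicate point---that the $\forall x$-quantified assumption is constant for the ODE and may be kept across the rule application---is precisely the justification the paper gives, so the proof is correct and takes the same route.
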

\begin{proofsketcha}[app:implementationproofs]{proof:proof24}
Rule~\irref{dVcmpE} is derived from~\irref{dVcmp} as a corollary.
\end{proofsketcha}

Just like rule~\irref{dVcmp}, rule~\irref{dVcmpE} requires a positive lower bound $\varepsilon > 0$ on the derivative of $\ptermA$ along solutions.
The difference is that the premise of rule~\irref{dVcmpE} is rephrased to ask a purely arithmetical question about the existence of a suitable choice for $\varepsilon$.
This can be decided automatically to save user effort in identifying $\varepsilon$, but such automation comes at added computational cost because the decision procedure must \emph{find} a suitable instance of $\varepsilon$ for the $\lexists{}$ quantifier (or decide that none exist) rather than simply \emph{check} a user-provided instance.
Thus, the implementation gives users control over the desired degree of automation in their proof by giving them the option of either invoking an arithmetic decision procedure~\irref{qear} on the premise of~\irref{dVcmpE} or manually instantiating the existential quantifier with a specific term for $\varepsilon$.

Another useful variation of rule~\irref{dVcmp} is its \emph{semialgebraic} generalization, i.e., where the goal region is described by a formula $\rfvar$ formed from conjunctions and disjunctions of (in)equalities.
Rules~\irref{TT+SP} provide examples of such a generalization, but they are indirect generalizations because users must still identify an underlying (atomic) differential variant $\ptermA$ as input when applying either rule.
In contrast, the new semialgebraic generalization of~\irref{dVcmp} below directly examines the syntactic structure of the goal region described by formula $\rfvar$.
Its implementation is enabled by \KeYmaeraX's ODE invariance proving capabilities which are, in turn, based on \dL's complete axiomatization for ODE invariants~\cite{DBLP:journals/jacm/PlatzerT20}.

\begin{corollary}[Semialgebraic differential variants]
\label{cor:semialgdv}
Let $b$ be a fresh variable, and term $\constt{\varepsilon}$ be constant for ODE $\D{x}=\genDE{x},\D{\timevar}=1$.
Let $\rfvar$ be a semialgebraic formula in the following normal form~\cite[Eq 5]{DBLP:journals/jacm/PlatzerT20}, and $\rgvar_\rfvar$ be its corresponding $\varepsilon$-progress formula (also in normal form):
\[\rfvar \mequiv \lorfold_{i=0}^{M} \Big(\landfold_{j=0}^{m(i)} \ptermA_{ij} \geq 0 \land \landfold_{j=0}^{n(i)} \ptermB_{ij} > 0\Big) \qquad \rgvar_\rfvar \mequiv \lorfold_{i=0}^{M} \Big(\landfold_{j=0}^{m(i)} \ptermA_{ij} - (b + \constt{\varepsilon}t) \geq 0 \land \landfold_{j=0}^{n(i)} \ptermB_{ij} - (b + \constt{\varepsilon}t) \geq 0 \Big) \]
The following proof rule is derivable in \dL, where the ODE $\D{x}=\genDE{x}$ has provable global solutions, and $\sigliedsai{\genDE{x}}{(\lnot{\rfvar})}, \sigliedsai{\genDE{x}}{(\rgvar_\rfvar)}$ are semialgebraic progress formulas~\cite[Def. 6.4]{DBLP:journals/jacm/PlatzerT20}\footnote{%
The arithmetic formula $\sigliedsai{\genDE{x}}{\rfvar}$ exactly characterizes that the ODE $\D{x}=\genDE{x}$ makes local progress in $\rfvar$ for some nonzero duration, see prior work~\cite[Def 6.4, Thm. 6.6]{DBLP:journals/jacm/PlatzerT20}.
The semialgebraic progress formula operator commutes with logical negation for semialgebraic formulas $\rfvar$, i.e., the equivalence $\sigliedsai{\genDE{x}}{(\lnot{\rfvar})} \lbisubjunct \lnot{(\sigliedsai{\genDE{x}}{\rfvar})}$ is provable~\cite[Cor 6.7]{DBLP:journals/jacm/PlatzerT20}.
Hence, local progress into $\lnot{\rfvar}$ is equivalent to the solution locally staying away from $\rfvar$ for nonzero duration.} with respect to $\D{x}=\genDE{x},\D{\timevar}=1$.\\
\begin{calculuscollection}
\begin{calculus}
\dinferenceRule[dV|dV]{}
{\linferenceRule
  { \lsequent{\lnot{\rfvar}, \sigliedsai{\genDE{x}}{(\lnot{\rfvar})}, \rgvar_\rfvar}{\sigliedsai{\genDE{x}}{(\rgvar_\rfvar)}}
  }
  {\lsequent{\Gamma, \constt{\varepsilon} > 0}{\ddiamond{\pevolve{\D{x}=\genDE{x}}}{\rfvar}} }
}{}

\dinferenceRule[dVE|dV$^\exists$]{}
{\linferenceRule
  { \lsequent{\Gamma}{
    \lexists{\varepsilon>0}{\lforall{b}{\lforall{t}{\lforall{x}{\big(\lnot{\rfvar} \land \sigliedsai{\genDE{x}}{(\lnot{\rfvar})} \land \rgvar_\rfvar \limply \sigliedsai{\genDE{x}}{(\rgvar_\rfvar)}\big)}}}}}
  }
  {\lsequent{\Gamma, \constt{\varepsilon} > 0}{\ddiamond{\pevolve{\D{x}=\genDE{x}}}{\rfvar}} }
}{}
\end{calculus}
\end{calculuscollection}
\end{corollary}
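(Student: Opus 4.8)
The plan is to derive \irref{dV} by generalizing the derivation of atomic differential variants~\irref{dVcmp} from \rref{cor:atomicdvcmp}: where that proof used~\irref{Prog} to reach an intermediate region that is linear in time and then certified a single-inequality invariant with~\irref{dIcmp}, this proof uses~\irref{Prog} in exactly the same way but certifies the semialgebraic $\varepsilon$-progress formula $\rgvar_\rfvar$ as a (domain-restricted) invariant using \dL's complete axiomatization of ODE invariants~\cite{DBLP:journals/jacm/PlatzerT20}. Rule~\irref{dVE} then follows from~\irref{dV} by a routine step: \irref{cut} in the $\exists$-quantified arithmetic premise, \irref{existsl} to obtain a fresh variable $\constt{\varepsilon}$ (fresh, hence constant for the ODE) with $\constt{\varepsilon}>0$ and the $\lforall{b}{\lforall{\timevar}{\lforall{x}{(\cdot)}}}$ implication in the antecedent, and then apply~\irref{dV} with that $\constt{\varepsilon}$, whose premise is recovered by instantiating those universal quantifiers.

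For~\irref{dV}, first use~\irref{dGt} to adjoin a clock $\timevar$ with $\D{\timevar}=1$ and $\timevar=0$ initially, and then~\irref{DG} (dualized to the diamond modality) to adjoin a fresh ghost $b$ with $\D{b}=0$, which is sound since $b$ does not occur in $\genDE{x}$ and $\D{b}=0$ is linear. Pin down $b$'s initial value, via an arithmetic \irref{cut}/\irref{existsl}, so that $\rgvar_\rfvar$ holds in the initial state: this is always possible because at $\timevar=0$ the formula $\rgvar_\rfvar$ is the first-order condition on $b$ that some disjunct has all of its $\ptermA_{ij},\ptermB_{ij}$ at least $b$, which is satisfied once $b$ is below the finitely many values of those terms in the initial state. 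The remaining goal is $\ddiamond{\pevolve{\D{x}=\genDE{x},\D{\timevar}=1,\D{b}=0}}{\rfvar}$ under the added assumptions $\rgvar_\rfvar$, $\timevar=0$, and $\constt{\varepsilon}>0$.

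Next apply refinement axiom~\irref{Prog} with intermediate region $\rgvar \mnodefequiv (\constt{\varepsilon}\timevar+b>0)$. Its right premise $\ddiamond{\pevolve{\D{x}=\genDE{x},\D{\timevar}=1,\D{b}=0}}{(\constt{\varepsilon}\timevar+b>0)}$ is discharged by applying~\irref{dVcmp} to the clock: the Lie derivative of $\constt{\varepsilon}\timevar+b$ along the augmented ODE is the positive constant $\constt{\varepsilon}$, and the augmented ODE still has provable global solutions since the trivial equation $\D{b}=0$ does not affect existence duration (add it as an affine dependency group in~\irref{DEx}). Its left premise $\dbox{\pevolvein{\D{x}=\genDE{x},\D{\timevar}=1,\D{b}=0}{\lnot{\rfvar}}}{(\constt{\varepsilon}\timevar+b\leq 0)}$ is proved by~\irref{dC} with cut formula $\rgvar_\rfvar$: the second~\irref{dC} premise follows from~\irref{dW} using the arithmetic fact $\lnot{\rfvar}\land\rgvar_\rfvar \limply \constt{\varepsilon}\timevar+b\leq 0$ (if $\constt{\varepsilon}\timevar+b>0$ then in the witnessing disjunct of $\rgvar_\rfvar$ every $\ptermA_{ij}\geq\constt{\varepsilon}\timevar+b>0$ and every $\ptermB_{ij}\geq\constt{\varepsilon}\timevar+b>0$, so $\rfvar$ would hold, contradicting $\lnot{\rfvar}$), while the first~\irref{dC} premise $\dbox{\pevolvein{\D{x}=\genDE{x},\D{\timevar}=1,\D{b}=0}{\lnot{\rfvar}}}{\rgvar_\rfvar}$ is the crux. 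After~\irref{Prog}, unwinding the~\irref{DG} and~\irref{dGt} steps (neither $b$ nor $\timevar$ occurs in $\rfvar$) recovers $\ddiamond{\pevolve{\D{x}=\genDE{x}}}{\rfvar}$.

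The crux is the domain-restricted invariance of $\rgvar_\rfvar$ within the domain $\lnot{\rfvar}$ starting from the (already arranged) assumption $\rgvar_\rfvar$, and this is where the premise of~\irref{dV} is used. By the complete \dL axiomatization of domain-restricted ODE invariants~\cite{DBLP:journals/jacm/PlatzerT20}, this reduces to a local-progress side condition phrased with the semialgebraic progress operator $\sigliedsai{\genDE{x}}{(\cdot)}$; using the equivalence $\sigliedsai{\genDE{x}}{(\lnot{\rfvar})} \lbisubjunct \lnot{\sigliedsai{\genDE{x}}{\rfvar}}$ one splits into the case where the solution locally stays out of $\rfvar$, where local progress into $\rgvar_\rfvar$ must be shown, and the case where it enters $\rfvar$, where it leaves the domain $\lnot{\rfvar}$ and nothing is required, which is exactly $\lsequent{\lnot{\rfvar}, \sigliedsai{\genDE{x}}{(\lnot{\rfvar})}, \rgvar_\rfvar}{\sigliedsai{\genDE{x}}{(\rgvar_\rfvar)}}$. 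A minor bookkeeping point is that the progress operators must be read with respect to $\D{x}=\genDE{x},\D{\timevar}=1,\D{b}=0$ rather than $\D{x}=\genDE{x},\D{\timevar}=1$; since $b$ is fresh and constant ($\D{b}=0$) it acts as a parameter and the progress formulas coincide, as handled by the uniform-substitution treatment of parameters. I expect this crux---pinning down the exact correspondence between the $\sigliedsai{\genDE{x}}{(\cdot)}$-phrased premise and the domain-restricted invariance lemma of~\cite{DBLP:journals/jacm/PlatzerT20}---to be the main obstacle, since the remaining steps are direct analogues of the~\irref{dVcmp} derivation.
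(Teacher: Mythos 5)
Your proposal is correct and follows essentially the same route as the paper's derivation: \irref{Prog} with the linear-in-time region $b+\constt{\varepsilon}\timevar>0$, the key arithmetic observation that $\lnot{\rfvar}\land\rgvar_\rfvar$ forces $b+\constt{\varepsilon}\timevar\leq 0$, and the complete ODE invariance rule (\irref{sAIQ}) applied to the closed formula $\rgvar_\rfvar$ in domain $\lnot{\rfvar}$, which yields exactly the stated premise. The only cosmetic differences are that the paper introduces $b$ by an arithmetic cut rather than as a ghost $\D{b}=0$ (sidestepping your bookkeeping concern, since a fresh variable is automatically constant along the ODE) and uses \irref{MbW} where you use \irref{dC}+\irref{dW}.
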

\begin{proofsketcha}[app:implementationproofs]{proof:proof25}
Rule~\irref{dVE} is derived from~\irref{dV} similar to the derivation of rule~\irref{dVcmpE} from~\irref{dVcmp}.
The derivation of~\irref{dV} is similar to rules~\irref{dVcmpA+dVcmp}, but replaces the use of rule~\irref{dIcmp} with the complete ODE invariance proof rule~\cite[Theorem 6.8]{DBLP:journals/jacm/PlatzerT20}.
The fresh variable $b$ is used as a lower bound of the value of all polynomials $\ptermA_{ij},\ptermB_{ij}$ appearing in the description of $\rfvar$ along solutions of the ODE.
\end{proofsketcha}

The intuition behind rule~\irref{dV} is similar to rule~\irref{dVcmp}: as long as the solution has not yet reached the goal $\rfvar$, it grows towards $\rfvar$ at ``rate'' $\constt{\varepsilon}$.
The technical challenge is how to formally phrase the ``rate'' of growth for a semialgebraic formula $\rfvar$, which does not have a well-defined notion of derivative.
Rule~\irref{dV} uses the $\varepsilon$-progress formula $\rgvar_\rfvar$, together with the semialgebraic progress formulas $\sigliedsai{\genDE{x}}{(\lnot{\rfvar})}, \sigliedsai{\genDE{x}}{(\rgvar_\rfvar)}$ and \dL's completeness result for ODE invariants~\cite[Theorem 6.8]{DBLP:journals/jacm/PlatzerT20} for this purpose.
These formulas give sufficient, although implicit, arithmetical conditions for proving liveness for $\rfvar$.
Rule~\irref{dVE} rephrases~\irref{dV} with an arithmetical premise, similar to how~\irref{dVcmpE} rephrases~\irref{dVcmp}, to give users the added flexibility of choosing between invoking an automated decision procedure or manually instantiating the existential quantifier for $\varepsilon$ and reasoning about the resulting progress formulas.
More explicit arithmetical premises for~\irref{dV+dVE} can be obtained by unfolding the definitions~\cite[Def. 6.4]{DBLP:journals/jacm/PlatzerT20} of $\sigliedsai{\genDE{x}}{(\lnot{\rfvar})}, \sigliedsai{\genDE{x}}{(\rgvar_\rfvar)}$ as exemplified below.

\begin{example}[Non-differentiable progress functions~\cite{DBLP:conf/fm/SogokonJ15}]
\label{ex:nondifferentiable}
Consider the following liveness formula with two inequalities in its postcondition:
\begin{equation}
\ddiamond{\D{u}=-u}{(-1 \leq u \leq 1)}
\label{eq:twoconjuncts}
\end{equation}

Using the $\min$ function, formula~\rref{eq:twoconjuncts} can be written equivalently with a single atomic inequality:
\begin{equation}
\ddiamond{\D{u}=-u}{\min(1-u, u+1) \geq 0}
\label{eq:oneatom}
\end{equation}

However, the postcondition of~\rref{eq:oneatom} is not a formula of real arithmetic (\rref{subsec:syntax}) and it does not have well-defined \dL semantics.
Indeed, rule~\irref{dVcmp} does not prove~\rref{eq:oneatom} because the Lie derivative of its postcondition is not well-defined.
One possible solution is to generalize~\irref{dVcmp} by considering directional derivatives of continuous (but non-differentiable) functions such as $\min,\max$~\cite[Section 5.2]{DBLP:conf/fm/SogokonJ15}.
However, justifying the correctness of this option would require delicate changes to \dL semantics~\cite{DBLP:conf/cade/BohrerFP19,DBLP:journals/jar/Platzer17}.
Rule~\irref{dV} instead proves~\rref{eq:twoconjuncts} directly without requiring rephrasing, nor complications associated with directional derivatives.
The proof is as follows, with $\constt{\varepsilon} \mnodefeq 1$ and $\rfvar \mnodefequiv u+1 \geq 0 \land 1-u \geq 0, \rgvar_\rfvar \mnodefequiv u+1-(b + t) \geq 0 \land 1-u-(b + t) \geq 0$:
{\footnotesizeoff\renewcommand{\arraystretch}{1.3}%
\begin{sequentdeduction}[array]
\linfer[dV]{
\linfer[]{
\linfer[]{
\linfer[qear]{
  \lclose
}
   {\lsequent{u + 1 < 0 \lor 1 - u < 0, u+1-(b + t) \geq 0 \land 1-u-(b + t) \geq 0}{(u+1-(b + t) = 0 \limply -u-1 > 0) \land \dots }}
}
  { \lsequent{\lnot{\rfvar}, \rgvar_\rfvar}{\sigliedsai{\genDE{x}}{(\rgvar_\rfvar)}}}
}
  { \lsequent{\lnot{\rfvar}, \sigliedsai{\genDE{x}}{(\lnot{\rfvar})}, \rgvar_\rfvar}{\sigliedsai{\genDE{x}}{(\rgvar_\rfvar)}}}
}
  {\lsequent{} {\ddiamond{\D{u}=-u}{(-1 \leq u \leq 1)}}}
\end{sequentdeduction}
}%

The proof starts by using rule~\irref{dV}, where the assumption $ \sigliedsai{\genDE{x}}{(\lnot{\rfvar})}$ in its premise is weakened as it is unnecessary for the proof.
Unfolding the definition of $\sigliedsai{\genDE{x}}{(\rgvar_\rfvar)}$ and simplifying leaves an arithmetical question in the succedent with two conjuncts; the right conjunct is omitted for brevity since the subsequent argument is symmetric.
The left conjunct in the succedent is proved by~\irref{qear} because the assumptions $u+1-(b + t) = 0$ and $u+1-(b + t) \geq 0 \land 1-u-(b + t) \geq 0$ imply $1-u \geq u+1$.
This, in turn, implies $-u -1 > 0$ using the assumption $u + 1 < 0 \lor 1 - u < 0$.

More generally, for a liveness postcondition comprising a conjunction of atomic inequalities $\ptermA \cmp 0 \land \ptermB \cmp 0$ (where $\cmp$ is either $\geq$ or $>$ in either conjunct), the premise resulting from applying~\irref{dV} can be simplified in real arithmetic to the following arithmetical premise:
\begin{equation}
\lsequent{\lnot{(\ptermA \cmp 0 \land \ptermB \cmp 0)}}{(\ptermA < \ptermB \limply \lied[]{\genDE{x}}{\ptermA} > \constt{\varepsilon}) \land (\ptermA > \ptermB \limply \lied[]{\genDE{x}}{\ptermB} > \constt{\varepsilon}) \land (\ptermA = \ptermB \limply \lied[]{\genDE{x}}{\ptermA} > \constt{\varepsilon} \land \lied[]{\genDE{x}}{\ptermB} > \constt{\varepsilon})}
\label{eq:premise}
\end{equation}

The arithmetical premise~\rref{eq:premise} is equivalent to the arithmetical progress conditions for $\min(p,q) \geq 0$~\cite[Example 14]{DBLP:conf/fm/SogokonJ15}, and both are decidable in real arithmetic.
The intuition behind~\rref{eq:premise} is that whenever $\ptermA$ is further from the goal than $\ptermB$, then $\ptermA$ is required to make $\varepsilon$ progress towards the goal (symmetrically when $\ptermB$ is further than $\ptermA$ from the goal).
A similar simplification of~\irref{dV} for a disjunctive postcondition $\ptermA \cmp 0 \lor \ptermB \cmp 0$ is shown in~\rref{eq:disjpremise}, which asks for the term closer to the goal to make $\varepsilon$ progress towards the goal instead.
Further simplifications for semialgebraic formulas $\rfvar$ are obtained as nested combinations of~\rref{eq:premise} and~\rref{eq:disjpremise}.
\begin{equation}
\lsequent{\lnot{(\ptermA \cmp 0 \lor \ptermB \cmp 0)}}{(\ptermA < \ptermB \limply \lied[]{\genDE{x}}{\ptermB} > \constt{\varepsilon}) \land (\ptermA > \ptermB \limply \lied[]{\genDE{x}}{\ptermA} > \constt{\varepsilon}) \land (\ptermA = \ptermB \limply \lied[]{\genDE{x}}{\ptermA} > \constt{\varepsilon} \lor \lied[]{\genDE{x}}{\ptermB} > \constt{\varepsilon})}
\label{eq:disjpremise}
\end{equation}

This example shows the intricate definition of semialgebraic progress formulas, even for the simple-looking conjunctive postcondition $ -1 \leq u \leq 1$, which highlights the need for a careful and trustworthy implementation of rules~\irref{dV+dVE}, as provided by \KeYmaeraX.
\end{example}

The variations of~\irref{dVcmp} shown in Corollaries~\ref{cor:atomicdvcmpexist} and~\ref{cor:semialgdv} (and their implementation) allow users to focus on high-level liveness arguments in \KeYmaeraX rather than low-level derivation steps.
Another key usability improvement afforded by an implementation is the sound and automatic enforcement of the appropriate side conditions for every proof rule.
The common side conditions for ODE liveness proof rules presented in this article can be broadly classified as follows:
\begin{enumerate}
\item Freshness side conditions on variables, e.g., in rules~\irref{dVcmp+dVcmpE+dV+dVE}. These are automatically enforced in the implementation because \KeYmaeraX's kernel insists on fresh names when required for soundness. Renaming with fresh variables is also automatically supported.
\item Global existence of ODE solutions. These are semi-automatically proved (\rref{subsec:support}).
\item Topological side conditions, e.g., in axiom~\irref{CORef} and rules~\irref{dVcmpQ+TTQ}.
These conditions are important to correctly enforce because they may otherwise lead to the subtle soundness errors (\rref{sec:withdomconstraint}).
The implementation uses syntactic criteria for checking these side conditions (\rref{app:topsidecalculus}).
\end{enumerate}

An example topological refinement axiom (\rref{lem:diatopaxioms}) and its corresponding proof rule implemented in \KeYmaeraX with syntactic topological side conditions is given next.

\begin{lemma}[Closed domain refinement axiom]
\label{lem:closeddomref}
The following topological $\didia{\cdot}$ ODE refinement axiom is sound, where formula $\ivr$ characterizes a topologically closed set over variables $x$, and formula $\interior{\ivr}$ characterizes the topological interior of the set characterized by $\ivr$.

\begin{calculuscollection}
\begin{calculus}
\cinferenceRule[CRef|CR]{}
{
\linferenceRule[impl]
  {\lnot{\rfvar} \land \dbox{\pevolvein{\D{x}=\genDE{x}}{\rrfvar \land \lnot{\rfvar}}}{\interior{\ivr}} }
  {\big(\ddiamond{\pevolvein{\D{x}=\genDE{x}}{\rrfvar}}{\rfvar} \limply \axkey{\ddiamond{\pevolvein{\D{x}=\genDE{x}}{\ivr}}{\rfvar}}\big)}
}{}
\end{calculus}
\end{calculuscollection}
\end{lemma}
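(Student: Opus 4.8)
The plan is to prove soundness of \irref{CRef} by a direct semantic argument that parallels the soundness proof of \irref{CORef} from \rref{lem:diatopaxioms}, with the open set $\interior{\ivr}$ taking over the role that $\ivr$ itself plays there, and with closedness of $\ivr$ invoked only to pass to a limit. One could instead mirror the syntactic route used for \irref{CORef} and derive \irref{CRef} from \dL's real induction axiom once the closedness side condition on $\ivr$ is enforced syntactically, but the semantic proof is more transparent and fits the treatment of the other topological axioms.

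First I fix an arbitrary state $\iget[state]{\I}$ satisfying the antecedent $\lnot{\rfvar}\land\dbox{\pevolvein{\D{x}=\genDE{x}}{\rrfvar\land\lnot{\rfvar}}}{\interior{\ivr}}$ together with the hypothesis $\ddiamond{\pevolvein{\D{x}=\genDE{x}}{\rrfvar}}{\rfvar}$, and let $\solvar:[0,T)\to\States$ be the right-maximal solution from $\iget[state]{\I}$; the goal is to show $\imodels{\I}{\ddiamond{\pevolvein{\D{x}=\genDE{x}}{\ivr}}{\rfvar}}$. The diamond hypothesis yields a time $0\le\tau<T$ with $\solvar(\zeta)\in\imodel{\I}{\rrfvar}$ for all $0\le\zeta\le\tau$ and $\solvar(\tau)\in\imodel{\I}{\rfvar}$. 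Since $\solvar(0)=\iget[state]{\I}$ satisfies $\lnot{\rfvar}$, the set $A\mdefeq\{\zeta\in[0,\tau] : \solvar(\zeta)\in\imodel{\I}{\rfvar}\}$ is nonempty and does not contain $0$; set $\tau^*\mdefeq\inf A\in[0,\tau]$. For every $\zeta<\tau^*$, and also at $\zeta=0$, the prefix $\solvar([0,\zeta])$ lies entirely in $\imodel{\I}{\rrfvar\land\lnot{\rfvar}}$, so the box antecedent gives $\solvar(\zeta)\in\imodel{\I}{\interior{\ivr}}$; hence $\solvar([0,\tau^*))\subseteq\imodel{\I}{\interior{\ivr}}\subseteq\imodel{\I}{\ivr}$, and since $\ivr$ characterizes a closed set and $\solvar$ is continuous, the limit point satisfies $\solvar(\tau^*)\in\imodel{\I}{\ivr}$ as well, so that $\solvar([0,\tau^*])\subseteq\imodel{\I}{\ivr}$.

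It remains to exhibit a witness time for $\ddiamond{\pevolvein{\D{x}=\genDE{x}}{\ivr}}{\rfvar}$, and I split on whether $\solvar(\tau^*)\in\imodel{\I}{\rfvar}$. If it is, then $\tau^*$ is already such a witness. If it is not, then the entire prefix $\solvar([0,\tau^*])$ lies in $\imodel{\I}{\rrfvar\land\lnot{\rfvar}}$, so applying the box antecedent once more places $\solvar(\tau^*)$ in the \emph{open} set $\interior{\ivr}$; by continuity of $\solvar$ there is some $\delta>0$ with $\solvar([\tau^*,\tau^*+\delta])\subseteq\imodel{\I}{\interior{\ivr}}$, and since $\tau^*=\inf A$ while $\tau^*\notin A$ (whence $\tau^*<\tau$, because $\tau\in A$), there is $\zeta^*\in(\tau^*,\tau^*+\delta)\cap A$. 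This $\zeta^*$ is the desired witness: $\solvar([0,\tau^*])\subseteq\imodel{\I}{\ivr}$ and $\solvar([\tau^*,\zeta^*])\subseteq\imodel{\I}{\interior{\ivr}}\subseteq\imodel{\I}{\ivr}$, while $\solvar(\zeta^*)\in\imodel{\I}{\rfvar}$.

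The step I expect to be the main obstacle is precisely this crossover analysis at $\tau^*$: as the unsound candidate \irref{badaxiom} illustrates, the solution must not be allowed to slip out of $\ivr$ at the very instant it first meets $\rfvar$. The topological interior in the box premise is exactly what rules this out --- if $\rfvar$ has not yet been reached at $\tau^*$, then $\solvar(\tau^*)$ lies not merely in $\ivr$ but in its interior, leaving the slack needed to extend the run a little further inside $\ivr$ until $\rfvar$ is genuinely reached --- whereas closedness of $\ivr$ is used dually to capture the limit $\solvar(\tau^*)$ of the prefix. Care is also needed for the degenerate possibilities: $\tau^*=0$ is handled by the $\lnot{\rfvar}$ antecedent, which forces $\solvar(0)\in\imodel{\I}{\rrfvar\land\lnot{\rfvar}}$ and hence $\solvar(0)\in\imodel{\I}{\interior{\ivr}}$, so $\solvar(\tau^*)\in\imodel{\I}{\ivr}$ directly; and $\tau=0$ makes the antecedents contradictory, so the implication holds vacuously.
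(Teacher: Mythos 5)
Your proof is correct and follows essentially the same route as the paper's: you identify the first crossing time (your $\tau^*=\inf A$ coincides with the paper's $\sup\timeset$), use the box premise to place the strict prefix in $\interior{\ivr}$, invoke closedness of $\ivr$ to capture the limit $\solvar(\tau^*)$, and then case-split on whether $\rfvar$ already holds at $\tau^*$, using openness of the interior to push slightly past the crossover otherwise. The degenerate cases ($\tau^*=0$, $\tau=0$) are handled the same way the paper does, so nothing is missing.
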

\begin{proofsketchb}[app:refinementcalculus]{proof:proof26}
\end{proofsketchb}

\begin{corollary}[Closed domain refinement rule]
\label{cor:closeddomref}
The following proof rule is derivable in \dL, where formula $\ivr$ is formed from finite conjunctions and disjunctions of non-strict inequalities $\geq,\leq$, and formula $\strictineq{\ivr}$ is identical to $\ivr$ but with strict inequalities $>,<$ in place of $\geq,\leq$ respectively.

\begin{calculuscollection}
\begin{calculus}
\dinferenceRule[cRef|cR]{}
{\linferenceRule
  {\lsequent{\Gamma}{ \ivr } &
   \lsequent{\Gamma}{ \dbox{\pevolvein{\D{x}=\genDE{x}}{\rrfvar \land \lnot{\rfvar} \land \ivr}}{\strictineq{\ivr}}} &
   \lsequent{\Gamma}{\ddiamond{\pevolvein{\D{x}=\genDE{x}}{\rrfvar}}{\rfvar}}
   }
  {\lsequent{\Gamma}{\ddiamond{\pevolvein{\D{x}=\genDE{x}}{\ivr}}{\rfvar}} }
}{}
\end{calculus}
\end{calculuscollection}
\end{corollary}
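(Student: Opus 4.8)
I would derive rule~\irref{cRef} from the refinement axiom~\irref{CRef} of~\rref{lem:closeddomref}, together with a case split on the postcondition $\rfvar$ and a closed-invariance argument that bridges the premises of~\irref{cRef} with the box hypothesis of~\irref{CRef}. Two elementary topological observations set up the side conditions. Since $\ivr$ is built from non-strict inequalities $\geq,\leq$ using only $\land,\lor$, the set it characterizes is a finite union of finite intersections of closed half-spaces and hence closed, so the topological requirement of~\irref{CRef} is met. Since $\strictineq{\ivr}$ replaces each atom by an open subset of it inside a negation-free formula, $\strictineq{\ivr}$ characterizes an open set contained in the set of $\ivr$; therefore the implication $\strictineq{\ivr} \limply \interior{\ivr}$ is valid, because the interior is the largest open subset. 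Both are instances of the syntactic criteria in~\rref{app:topsidecalculus}.

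\textbf{The derivation.}
Starting from the conclusion $\ddiamond{\pevolvein{\D{x}=\genDE{x}}{\ivr}}{\rfvar}$, I cut (rule~\irref{cut}) on the tautology $\rfvar \lor \lnot{\rfvar}$ and split with~\irref{orl}. In the branch where $\rfvar$ holds the conclusion is immediate: at time $0$ the solution sits at the initial state, which satisfies $\rfvar$ and, by the first premise $\lsequent{\Gamma}{\ivr}$ of~\irref{cRef}, also the domain $\ivr$; this is the routine \dL fact $\ivr \land \rfvar \limply \ddiamond{\pevolvein{\D{x}=\genDE{x}}{\ivr}}{\rfvar}$, a one-line consequence of~\irref{diamond}. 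In the branch where $\lnot{\rfvar}$ holds I apply axiom~\irref{CRef}: its $\lnot{\rfvar}$ conjunct is discharged by the branch assumption, its liveness hypothesis $\ddiamond{\pevolvein{\D{x}=\genDE{x}}{\rrfvar}}{\rfvar}$ is exactly the third premise of~\irref{cRef}, and what remains is its box hypothesis $\dbox{\pevolvein{\D{x}=\genDE{x}}{\rrfvar \land \lnot{\rfvar}}}{\interior{\ivr}}$.

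\textbf{Discharging the box hypothesis.}
To prove $\dbox{\pevolvein{\D{x}=\genDE{x}}{\rrfvar \land \lnot{\rfvar}}}{\interior{\ivr}}$ I would use a differential cut~\irref{dC} with cut formula $\ivr$, which yields two subgoals: (a) $\dbox{\pevolvein{\D{x}=\genDE{x}}{\rrfvar \land \lnot{\rfvar}}}{\ivr}$, i.e.\ $\ivr$ is maintained while the solution stays in $\rrfvar \land \lnot{\rfvar}$; and (b) $\dbox{\pevolvein{\D{x}=\genDE{x}}{\rrfvar \land \lnot{\rfvar} \land \ivr}}{\interior{\ivr}}$. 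Subgoal (b) follows from the second premise $\lsequent{\Gamma}{\dbox{\pevolvein{\D{x}=\genDE{x}}{\rrfvar \land \lnot{\rfvar} \land \ivr}}{\strictineq{\ivr}}}$ of~\irref{cRef} by monotonicity~\irref{MbW}, using the valid implication $\strictineq{\ivr} \limply \interior{\ivr}$ established above. Subgoal (a) is an instance of closed-set invariance: $\ivr$ is closed, holds initially (first premise), and by the second premise the solution provably progresses into $\interior{\ivr}$ whenever it is in $\ivr$ within the domain $\rrfvar \land \lnot{\rfvar}$; such interior-progress into a closed set is precisely the local condition discharged by \dL's real-induction axiom and the complete ODE-invariance axiomatization of~\cite{DBLP:journals/jacm/PlatzerT20}, so (a) is provable.

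\textbf{Main obstacle.}
The one nontrivial step is subgoal (a): passing from the box property in the second premise, which carries the stronger postcondition $\strictineq{\ivr}$ but only over the \emph{smaller} domain $\rrfvar \land \lnot{\rfvar} \land \ivr$, to plain invariance of $\ivr$ over the \emph{larger} domain $\rrfvar \land \lnot{\rfvar}$ --- effectively dropping $\ivr$ from the domain. This cannot be done by a plain differential cut (that would be circular); it genuinely relies on $\ivr$ being closed and on continuity of solutions: to leave $\ivr$ the solution would first have to touch $\bdr{\ivr}$, but the second premise forbids the solution from being on $\bdr{\ivr}$ at any time it has stayed in $\rrfvar \land \lnot{\rfvar} \land \ivr$ (it lands in $\interior{\ivr}$), which by a continuous-induction argument rules out ever reaching $\bdr{\ivr}$. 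The remaining work --- weakening $\Gamma$ across the box-rule premises, carrying the initial-$\ivr$ assumption into subgoal (a), and verifying the syntactic topological side conditions --- is routine bookkeeping.
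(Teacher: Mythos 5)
Your proposal is correct and follows essentially the same route as the paper: apply \irref{CRef} with the exact interior $\interior{\ivr}$ (after using the first premise and reducing to the case $\lnot{\rfvar}$ via \irref{DX}, which is what your $\rfvar\lor\lnot\rfvar$ cut accomplishes), bridge with the provable implication $\strictineq{\ivr}\limply\interior{\ivr}$, and reduce the box hypothesis to the second premise. The one step you flag as the ``main obstacle''---passing from invariance over the domain $\rrfvar\land\lnot{\rfvar}\land\ivr$ to the domain $\rrfvar\land\lnot{\rfvar}$ for the closed formula $\ivr$---is exactly the paper's derived enclosure rule \irref{Enc} from \rref{lem:appodeinvariance} (applied with open postcondition $\strictineq{\ivr}$, noting $\relaxineq{(\strictineq{\ivr})}$ is syntactically $\ivr$), which the paper invokes directly after an \irref{MbW} step rather than via your \irref{dC} split; your semantic justification of that step is the right one, and \irref{Enc} supplies the syntactic derivation you appeal to.
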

\begin{proofsketchb}[app:implementationproofs]{proof:proof27}
\end{proofsketchb}

Axiom~\irref{CRef} is a variant of axiom~\irref{CORef} with different topological conditions.
It says that if the ODE solution can reach goal $\rfvar$ while staying in domain $\rrfvar$ then it can also reach that goal while staying in the new (closed) domain $\ivr$, provided that it stays within the \emph{interior} $\interior{\ivr}$ of the new domain while it has not yet reached $\rfvar$.
Solutions cannot sneak out of the topologically open interior $\interior{\ivr}$ as it enters the goal because, by definition of an open set, the solution must locally remain in  $\interior{\ivr}$ for a short time as it enters the goal (see the~\hyperlink{proof:proof26}{proof} for a detailed explanation).
In contrast to the semantical conditions of~\irref{CRef}, its corresponding derived rule~\irref{cRef} gives syntactic side conditions for the formulas $\ivr,\strictineq{\ivr}$ which are easily checked in an implementation.
In particular, formula $\strictineq{\ivr}$, which syntactically underapproximates the interior $\interior{\ivr}$, can be automatically generated from $\ivr$ through its syntactic structure.
Another advantage of the derived rule~\irref{cRef} is that the closed domain constraint $\ivr$ can be additionally assumed when proving that solutions stay within $\strictineq{\ivr}$ in its middle premise.
This addition is soundly justified using \dL's ODE invariance proof rules~\cite{DBLP:journals/jacm/PlatzerT20} (see~\hyperlink{proof:proof27}{proof}) and it makes rule~\irref{cRef} a powerful primitive for refining domain constraints amongst other options such as axiom~\irref{dDR}.

\subsection{Proof Support}
\label{subsec:support}
Beyond enabling the sound implementation of complex ODE liveness proof rules such as those in~\rref{subsec:complex}, tactics can also provide substantial proof support for users.

\subsubsection{Automatic Dependency Ordering}
Recall derived axiom~\irref{GBEx} from~\rref{cor:boundedexistgen}, which proves (global) existence of solutions for an ODE $\D{x}=\genDE{x}$.
Users of the axiom must still identify precisely which dependency order~\rref{eq:deporder} to use, and provide the sequence of bounded sets $B_i$ for each group of variables $y_i$ involving nonlinear ODEs.
The canonical choice of such a dependency order can be automatically produced by a tactic using a topological sort of the \emph{strongly connected components} (SCCs)\footnote{A strongly connected component of a directed graph is a maximal subset of vertices that are pairwise connected by paths.} of the dependency graph of the ODE.

More precisely, to prove global existence for an ODE $\D{x}=\genDE{x}$, consider the dependency graph $G$ where each variable $x_i$ is a vertex and with a directed edge $x_i \longrightarrow x_j$ if the RHS $f_j(x)$ for $\D{x_j}$ depends on free variable $x_i$.
First, compute the SCCs of $G$, and then topologically sort the SCCs.
The groups of variables $y_i$ in dependency order can be chosen according to the vertices in each SCC in topological order.
An illustrative dependency graph with four SCCs for the following $8$-dimensional ODE is shown in~\rref{fig:scc}.
\begin{align}
\D{x_1} = x_5,
\D{x_2} = x_3 + x_6^2,
\D{x_3} = x_3^2,
\D{x_4} = x_1 + x_3^2 + x_6^2,
\D{x_5} = x_4,
\D{x_6} = x_2^2,
\D{x_7} = x_8,
\D{x_8} = -x_7
\label{eq:bigode}
\end{align}

\begin{figure}
\centering
\begin{tikzpicture}
  [scale=.8,auto=left,every node/.style={circle,fill=blue!20}]
  \tikzset{vertex/.style = {shape=circle,draw,minimum size=1.5em}}
  \tikzset{edge/.style = {->,> = triangle 45}}
  \node (n1) at (2,6)  {$x_1$};
  \node (n4) at (4,6)  {$x_4$};
  \node (n5) at (3,4.5)  {$x_5$};
  \node[fill=white] (l4) at (1.5,4.0)  {$y_4$};
  \node (n2) at (8,4.5)  {$x_2$};
  \node (n6) at (8,6)  {$x_6$};
  \node[fill=white] (l3) at (6.5,4.0)  {$y_3$};
  \node (n3) at (13,5.25) {$x_3$};
  \node[fill=white] (l2) at (11.5,4.0)  {$y_2$};
  \node (n7) at (18,6) {$x_7$};
  \node (n8) at (18,4.5) {$x_8$};
  \node[fill=white] (l1) at (16.5,4.0)  {$y_1$};

  \foreach \from/\to in {n1/n4,n4/n5,n5/n1,n6/n2,n2/n6,n7/n8,n8/n7}
    \draw[edge] (\from) to[bend left] (\to);
  \draw[edge] (n3) to[bend left] (n2);
  \draw[edge] (n6) to[bend right] (n4);
  \draw[edge] (n3) to[loop right] (n3);
  \draw[edge] (n3) to[bend right=33] (n4);

  \draw[dashed] (1,3.7) rectangle   (5,6.9);
  \draw[dashed] (6,3.7) rectangle  (10,6.9);
  \draw[dashed] (11,3.7) rectangle (15,6.9);
  \draw[dashed] (16,3.7) rectangle (20,6.9);
\end{tikzpicture}
\caption{A dependency graph for the ODE~\rref{eq:bigode} over the variables $x_1,\dots,x_8$.
There is a directed edge $x_i \longrightarrow x_j$ if the RHS for $\D{x_j}$ depends on free variable $x_i$.
Each dashed rectangle is a strongly connected component.
Topologically sorting these components (according to the order induced by the edges) yields one possible grouping of the variables $y_1,\dots,y_4$ in dependency order.
The vertices in $y_1$ are not connected to those in $y_2,y_3,y_4$, so the order between these groups can be chosen arbitrarily.}
\label{fig:scc}
\end{figure}
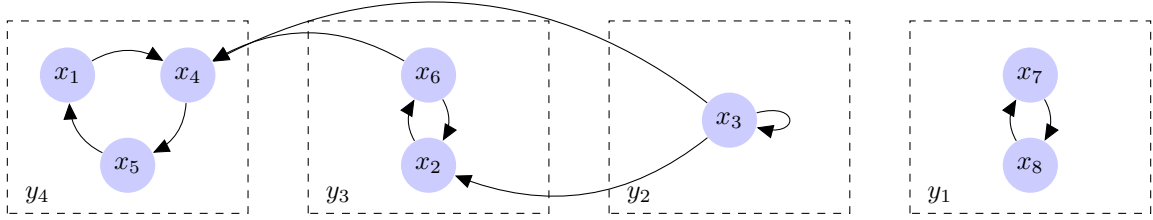

After finding the appropriate SCC-induced dependency order (as in~\rref{fig:scc}), the global existence tactic can prove global existence for the variable clusters $y_i$ that have affine dependencies within the cluster automatically.
For example, the SCC $y_4 \mnodefequiv \{x_1,x_4,x_5\}$ has affine dependencies because the RHS of the ODEs $\D{x_1}, \D{x_4}, \D{x_5}$ are affine in $x_1, x_4, x_5$, so the solution of ODE~\rref{eq:bigode} is automatically proved to be global in those variables following the proof of~\rref{cor:globalexistaffine}.
The generated dependency order enables such a proof even though the RHS of $\D{x_4}$ depends nonlinearly on variables $x_3, x_6$ from earlier clusters.
On the other hand, the SCC $y_3 \mnodefequiv \{x_2,x_6\}$ has nonlinear dependencies on $x_2, x_6$, so users are prompted to input a bounded set (or a bound on derivatives) over variables $x_2,x_6$ in order to prove global existence for those variables.
This continues similarly for the SCCs $y_2$ (nonlinear dependency) and $y_1$ (affine dependency) until global existence is proved for the full ODE.
This semi-automated proof support minimizes the manual effort required of the user in proving global existence by focusing their attention on the automatically identified nonlinear parts of the ODE that may cause finite-time blowup of solutions.

To drive global existence proof automation further, key special cases can be added to the method described above.
One such special case for univariate ODEs is shown below.

\begin{remark}[Global existence for univariate ODEs]
\label{rem:univariate}
Consider the case where a variable group has just one variable and no incoming dependencies, e.g., $y_2 \mnodefequiv \{x_3\}$ in~\rref{fig:scc} or $\exblowup$~\rref{eq:exblowup}.
Global existence for such univariate polynomials ODEs is decidable~\cite{DBLP:journals/entcs/GracaBC08}, even if the RHS is highly nonlinear, because all of its solutions either asymptotically approach a root of the polynomial RHS or diverge to infinity.

This result is best illustrated through the dynamical systems view of ODEs shown in~\rref{fig:onedim} for the ODE $\D{x}=x^4-5x^2+4$.
This example ODE has global solutions from all initial states satisfying $x \leq r_4$ because the solution from all such states are globally attracted to one of the fixed points.
Conversely, for all other initial conditions ($x > r_4$), the ODE blows up in finite time because the RHS is quartic in $x$.
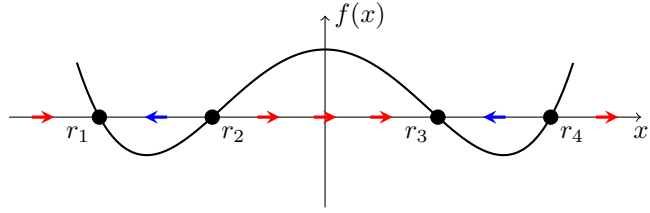
\begin{figure}
    \centering
    \begin{tikzpicture}[domain=-2.2:2.2, scale=1.5]
        \draw[->] (-2.8,0) -- (2.8,0) node[below] {$x$};
        \draw[->] (0,-0.8) -- (0,0.9) node[right] {$\genDE{x}$};
        \draw[>=stealth, very thick, color=red, ->] (-2.6,0) -- (-2.4,0);
        \draw[>=stealth, very thick, color=blue, <-] (-1.6,0) -- (-1.4,0);
        \draw[>=stealth, very thick, color=red, ->] (-0.6,0) -- (-0.4,0);
        \draw[>=stealth, very thick, color=red, ->] (-0.1,0) -- (0.1,0);
        \draw[>=stealth, very thick, color=red, ->] (0.4,0) -- (0.6,0);
        \draw[>=stealth, very thick, color=blue, <-] (1.4,0) -- (1.6,0);
        \draw[>=stealth, very thick, color=red, ->] (2.4,0) -- (2.6,0);

        \draw[color=black, thick] plot[id=sin,mark=none,samples=100] (\x, {0.15*((\x)^4 - 5*(\x)^2 + 4)})  node[right] {};
        \draw[black, fill=black] (-2,0) circle (.4ex) node[below left] {$r_1$};
        \draw[black, fill=black] (-1,0) circle (.4ex) node[below right] {$r_2$};
        \draw[black, fill=black] (1,0) circle (.4ex) node[below left] {$r_3$};
        \draw[black, fill=black] (2,0) circle (.4ex) node[below right] {$r_4$};
    \end{tikzpicture}
    \caption{The univariate ODE $\D{x}=x^4-5x^2+4$ is can be illustrated by plotting its RHS $\genDE{x} \mnodefeq x^4-5x^2+4$ (vertical axis) against $x$ (horizontal axis).
    Points on the horizontal axis evolve towards the right (\redc{red} arrow) when $\genDE{x} \geq 0$ and towards the left (\bluec{blue} arrow) when $\genDE{x} \leq 0$.
    The fixed points $r_1,r_2,r_3, r_4$ are roots of the polynomial RHS where $\genDE{x}=0$.
    These fixed points either attract trajectories (like $r_1, r_3$), or repel them (like $r_2,r_4$).
    All points on the horizontal axis evolve asymptotically towards exactly one fixed point or approach $\infty$.}
    \label{fig:onedim}
\end{figure}

More generally, for a nonlinear univariate polynomial ODE $\D{x}=\genDE{x}$ and initial assumptions $\Gamma$, it suffices to check validity of the following arithmetical sequent to decide global existence:
\[ \lsequent{\Gamma}{\lexists{r}{\big( f(r)=0 \land (\underbrace{f(x) \geq 0 \land r \geq x}_{\circled{a}}  \lor \underbrace{f(x) \leq 0 \land r\leq x}_{\circled{b}})\big)}} \]

The existentially quantified variable $r$ corresponds to a fixed point (a root with $f(r)=0$).
Disjunct~\circled{a} checks whether the solution approaches $r$ from the left, e.g., the points between $r_2$ and $r_3$ in~\rref{fig:onedim} approach $r_3$ from the left.
Alternatively, disjunct \circled{b} checks whether the solution approaches $r$ from the right.
The implementation checks validity of this sequent for univariate nonlinear ODEs and then proves global existence using~\irref{dBDG} because the solution is provably trapped between the initial value of $x$ and the fixed point $r$.
\end{remark}

\subsubsection{Differential Cuts for Liveness Proofs}
Differential cuts~\irref{dC} provide a convenient way to structure and stage safety proofs for ODEs in \dL.
An in-depth discussion is available elsewhere~\cite{Platzer18}, but the idea is illustrated by the following derivation:

\begin{sequentdeduction}[array]
  \linfer[dC]{
    \lsequent{\Gamma}{\dbox{\pevolvein{\D{x}=\genDE{x}}{\ivr}}{\rcfvar_1}}!
    \linfer[dC]{
      \dots !
      \linfer[dC]{
        \linfer[dW]{
        \lsequent{\ivr \land \rcfvar_1 \land \rcfvar_2 \land \dots \land \rcfvar_n}{\rfvar}
        }
        {\vdots}
      }
      {\lsequent{\Gamma}{\dbox{\pevolvein{\D{x}=\genDE{x}}{\ivr \land \rcfvar_1 \land \rcfvar_2}}{\rfvar}}}
    }
    {\lsequent{\Gamma}{\dbox{\pevolvein{\D{x}=\genDE{x}}{\ivr \land \rcfvar_1}}{\rfvar}}}
  }
  {\lsequent{\Gamma}{\dbox{\pevolvein{\D{x}=\genDE{x}}{\ivr}}{\rfvar}} }
\end{sequentdeduction}

The derivation uses a sequence of differential cut steps to progressively add the cuts $\rcfvar_1, \rcfvar_2, \dots, \rcfvar_n$ to the domain constraint.
A final~\irref{dW} step completes the proof when the postcondition $\rfvar$ is already implied by the (now strengthened) domain constraint.
Intuitively, the differential cuts are akin to dynamical lemmas in this derivation.
For example, by proving the premise $\lsequent{\Gamma}{\dbox{\pevolvein{\D{x}=\genDE{x}}{\ivr}}{\rcfvar_1}}$, the cut $\rcfvar_1$ can now be assumed in the domain constraints of subsequent steps.
Just like the~\irref{cut} rule from sequent calculus, differential cuts~\irref{dC} allow safety proofs for ODEs to be staged through a sequence of lemmas about those ODEs.

For proof modularity and maintainability, it is desirable to enable a similar staging for ODE liveness proofs.
Concretely, suppose that the formula $\dbox{\pevolvein{\D{x}=\genDE{x}}{\ivr}}{\rcfvar}$ has been proved as a cut:
\begin{sequentdeduction}[array]
  \linfer[cut]{
    \lsequent{\Gamma}{\dbox{\pevolvein{\D{x}=\genDE{x}}{\ivr}}{\rcfvar}}!
    \cdots
  }
  {\lsequent{\Gamma}{\ddiamond{\pevolvein{\D{x}=\genDE{x}}{\ivr}}{\rfvar}} }
\end{sequentdeduction}

The challenge is how to (soundly) use this lemma in subsequent derivation steps (shown as $\cdots$).
Na\"ively replacing $\ivr$ with $\ivr \land \rcfvar$ in the domain constraint of the succedent does not work.
This may even do more harm than good because the resulting ODE liveness question becomes more difficult (\rref{sec:withdomconstraint}).

The refinement-based approach to ODE liveness provides a natural answer: recall that each refinement step in the chain~\rref{eq:refinementchain} requires the user to prove an additional box modality formula.
The insight is that, for these box modality formulas, any relevant lemmas that have been proved can be soundly added to the domain constraint.
For example, suppose that rule~\irref{Prog} is used to continue the proof after the cut.
The left premise of~\irref{Prog} can now be strengthened to include $\rcfvar$ in its domain constraint:
\begin{sequentdeduction}[array]
    \linfer[Prog]{
    \linfer[dC]{
      \lsequent{\Gamma,\dbox{\pevolvein{\D{x}=\genDE{x}}{\ivr}}{\rcfvar}}{\dbox{\pevolvein{\D{x}=\genDE{x}}{\ivr \land \lnot{\rfvar} \land \rcfvar}}{\lnot{\rgvar}}}
    }
    {\lsequent{\Gamma,\dbox{\pevolvein{\D{x}=\genDE{x}}{\ivr}}{\rcfvar}}{\dbox{\pevolvein{\D{x}=\genDE{x}}{\ivr \land \lnot{\rfvar}}}{\lnot{\rgvar}}}}
      !
      \lsequent{\Gamma,\dbox{\pevolvein{\D{x}=\genDE{x}}{\ivr}}{\rcfvar}}{\ddiamond{\pevolvein{\D{x}=\genDE{x}}{\ivr}}{\rgvar}}
    }
    {\lsequent{\Gamma,\dbox{\pevolvein{\D{x}=\genDE{x}}{\ivr}}{\rcfvar}}{\ddiamond{\pevolvein{\D{x}=\genDE{x}}{\ivr}}{\rfvar}}}
\end{sequentdeduction}

Users could manually track and apply lemmas using~\irref{dC} as shown above, but this becomes tedious in larger liveness proofs.
The implementation instead provides users with tactics that automatically search the antecedents $\Gamma$ for compatible assumptions that can be used to strengthen the domain constraints.
These tactics also use a form of \emph{ODE unification} when determining compatibility.
More precisely, consider the sequent $\lsequent{\Gamma}{\dbox{\pevolvein{\D{x}=\genDE{x}}{\ivr}}{\rfvar}}$, which may arise as a box refinement during a liveness proof.
An antecedent formula $\dbox{\pevolvein{\D{y}=g(y)}{\rrfvar}}{\rcfvar}$ in $\Gamma$ is called a \emph{compatible assumption} for the succedent $\dbox{\pevolvein{\D{x}=\genDE{x}}{\ivr}}{\rfvar}$ if:
\begin{enumerate}
\item The set of ODEs $\D{y}=g(y)$ is a subset of the set of ODEs $\D{x}=\genDE{x}$. This is order-agnostic, e.g., the ODE $\D{u}=v,\D{v}=u$ is a subset of the ODE $\D{v}=u,\D{u}=v,\D{w}=u+v+w$.
\item The domain constraint $\ivr$ implies domain constraint $\rrfvar$, i.e., $\ivr \limply \rrfvar$ is valid.
\end{enumerate}

Under these conditions, the ODE $\pevolvein{\D{y}=g(y)}{\rrfvar}$ permits more trajectories than the ODE $\pevolvein{\D{x}=\genDE{x}}{\ivr}$.
Thus, if formula $\rcfvar$ is always true along solutions of the former ODE, then it also stays true along solutions of the latter.
Combining compatible assumptions with implementations of liveness proof rules yields turbo-charged versions of those rules.
For example, in rule~\irref{dVcmpE}, instead of simply assuming the negation of the postcondition ($\lnot{(\ptermA \cmp 0)} \limply \cdots$) when determining the existence of suitable $\varepsilon$, all postconditions of compatible assumptions can be assumed, e.g., with $\lnot{(\ptermA \cmp 0)} \land \rcfvar \limply \cdots$ for postcondition $\rcfvar$ of a compatible assumption.

\subsubsection{Microbenchmarks}
\label{subsec:autoexamples}

The \KeYmaeraX implementation is used to formally prove all of the ODE liveness examples from this article and elsewhere~\cite{DBLP:conf/fm/SogokonJ15,DBLP:journals/ral/BohrerTMSP19}.
\rref{tab:micro} provides a summary of statistics from these proofs:
``Tactic Steps'' counts the number of (manual) user proof steps;
``Kernel Steps'' counts the number of internal steps taken by the soundness-critical \KeYmaeraX kernel; and
``Proof Time'' measures the time taken (in seconds, averaged over 5 runs, rounded to 3 decimal places) for the proof to execute in \KeYmaeraX.
All experiments were run on an Ubuntu 18.04 laptop with a 2.70 GHz Intel Core i7-6820HQ CPU and 16GB memory.
None of the proofs have been optimized to favor any specific metric.
The specific timings and proof steps are naturally subject to change on different hardware and as various aspects of the \KeYmaeraX theorem prover are improved.
Nevertheless, the key takeaways from these microbenchmarks remain broadly applicable.

\begin{table}
\centering
\caption{Proof statistics for ODE existence and liveness properties proved using the implementation.
For this article's Examples~\ref{ex:velocity}--\ref{ex:nondifferentiable}, two proofs are presented: (M)anual proofs closely follow the pen-and-paper derivations shown in this article, while (A)utomatic proofs make extensive use of the implemented proof support.
The cells in \textbf{bold} font indicate lower (more desirable) values.
The stronger ODE liveness property proved in Example~\ref{ex:strengthen} implies those from Examples~\ref{ex:nonlinproof} and~\ref{ex:nonlindomproof}.
Examples 11, 12 and 15 refer to the correspondingly numbered examples from Sogokon and Jackson~\cite{DBLP:conf/fm/SogokonJ15}.
}
\label{tab:micro}

\begin{tabular}{@{}lrrrrrr@{}}
\toprule
\multirow{2}{*}{Liveness Property} & \multicolumn{2}{c}{Tactic Steps}            & \multicolumn{2}{c}{Kernel Steps}                  & \multicolumn{2}{c}{Proof Time (s)}                \\
                                   & \multicolumn{1}{c}{(M)} & \multicolumn{1}{c}{(A)} & \multicolumn{1}{c}{(M)} & \multicolumn{1}{c}{(A)} & \multicolumn{1}{c}{(M)} & \multicolumn{1}{c}{(A)} \\ \midrule
\rref{ex:velocity}                                     & 7                       & \textbf{3}              & \textbf{2040}           & 12156                   & \textbf{1.778}          & 3.716                   \\
\rref{ex:multiaffine}                                  & 8                       & \textbf{2}              & 962                     & \textbf{898}            & 0.220                   & \textbf{0.203}          \\
\rref{ex:trappedsol}                                   & 7                       & \textbf{3}              & \textbf{1562}           & 1580                    & \textbf{0.551}          & 0.759                  \\
\rref{ex:linproof}                                     & 29                      & \textbf{5}              & 3958                    & \textbf{3501}           & 3.286                   & \textbf{3.034}          \\
\rref{ex:strengthen} (Examples~\ref{ex:nonlinproof} and~\ref{ex:nonlindomproof}) & 50                      & \textbf{20}             & \textbf{5549}           & 6141                    & \textbf{1.714}          & 1.952                   \\
\rref{ex:nondifferentiable}                            & 28                      & \textbf{1}              & \textbf{1747}           & 2571                    & \textbf{0.575}          & 0.990                  \\
\cite{DBLP:conf/fm/SogokonJ15} Example 11              & -                       & 50                      & -                       & 11272                   & -                       & 9.090                   \\
\cite{DBLP:conf/fm/SogokonJ15} Example 12              & -                       & 19                      & -                       & 4818                    & -                       & 1.388                   \\
\cite{DBLP:conf/fm/SogokonJ15} Example 15              & -                       & 1                       & -                       & 4781                    & -                       & 1.730                   \\
\cite{DBLP:journals/ral/BohrerTMSP19}  Goal Position Reachability     & -                       & 34                      & -                       & 8159                    & -                       & 2.182                   \\
\cite{DBLP:journals/ral/BohrerTMSP19} Velocity Bounds Reachability    & -                       & 37                      & -                       & 10521                   & -                       & 3.042                   \\ \bottomrule
\end{tabular}

\end{table}

\paragraph{(M)anual and (A)utomatic Proofs.} The implementation provides users with powerful proof support but also exposes low-level primitives for users who prefer more fine-grained control over (parts of) their proofs.
Both types of proofs are shown for this article's examples in~\rref{tab:micro}.
Proofs that heavily exploit the proof support and automation are more convenient for users and require fewer manual tactic invocations.
This gap is most pronounced for~\rref{ex:nondifferentiable}, where the 28 step manual proof requires just one automated~\irref{dV} step.

On the other hand, the automated proofs are slower than their manual counterparts on four out of six examples.
Most of this overhead arises when there is significant proof search in the automation.
In particular, the automated proof of~\rref{ex:velocity} is significantly slower and requires almost six times more kernel steps compared to its manual counterpart.
This gap arises because the automated proof uses the decision procedure for univariate global existence outlined in~\rref{rem:univariate} while the manual proof uses the direct argument in~\rref{ex:velocity}.
However, the latter proof required user insight about the physical system.
This illustrates the need for a flexible implementation that lets users navigate the convenience and efficiency tradeoff according to their needs and proof insights.

Finally, the automated proofs are in fact \emph{faster} for Examples~\ref{ex:multiaffine} and~\ref{ex:linproof}, which both involve linear ODEs.
The speedups here can be attributed to the well-tuned implementation of global existence proofs for affine systems and to the use of rule~\irref{dVcmpE} for the latter example.
Thus, the aforementioned tradeoff can be further skewed towards favoring user convenience by tuning the implemented automation.

\paragraph{Trusted Kernel with Untrusted Tactics.} All of the proofs in~\rref{tab:micro} make extensive use of \KeYmaeraX's existing tactics framework~\cite{DBLP:conf/itp/FultonMBP17} to handle low-level interactions with \KeYmaeraX soundness-critical kernel, as shown by the large number of kernel steps that each proof requires.
The soundness guarantee provided by the \KeYmaeraX kernel makes this implementation effort a worthy tradeoff because it ensures that the proved results in~\rref{tab:micro} are trustworthy \emph{without} needing to trust the implementation of the tactics.

\paragraph{Applicability.} The insights of this section are not limited to this article's examples and they scale to larger proofs from elsewhere (\rref{tab:micro}).
Notably, Sogokon and Jackson~\cite{DBLP:conf/fm/SogokonJ15} Example 11 proves two separate liveness properties for the same ODE, which makes it the largest (and slowest) microbenchmark.
The examples from Bohrer et al.~\cite{DBLP:journals/ral/BohrerTMSP19} are liveness properties drawn from a larger case study with a hybrid system model of a robot driving along circular arcs in the plane~\cite{DBLP:journals/ral/BohrerTMSP19}.
The use of proof automation (in both senses of the preceding paragraphs) is indispensable for handling the scale of these proofs.

\section{Related Work}
\label{sec:relatedwork}

\paragraph{Existence and Liveness Proof Rules.}
The ODE liveness arguments surveyed in this article were originally presented in various notations, ranging from proof rules~\cite{DBLP:journals/logcom/Platzer10,DBLP:conf/fm/SogokonJ15,DBLP:conf/emsoft/TalyT10} to other mathematical notation~\cite{DBLP:conf/hybrid/PrajnaR05,DBLP:journals/siamco/PrajnaR07,DBLP:journals/siamco/RatschanS10,DBLP:conf/fm/SogokonJ15}.
All of them were justified directly through semantical or mathematical means.
This article unifies and corrects all of these arguments, and presents them as \dL proof rules which are syntactically derived by refinement from \dL axioms.

To the best of the authors' knowledge, this article is also the first to present a deductive approach for syntactic proofs of existence properties for ODEs.
In the surveyed liveness arguments~\cite{DBLP:journals/logcom/Platzer10,DBLP:conf/hybrid/PrajnaR05,DBLP:journals/siamco/PrajnaR07,DBLP:journals/siamco/RatschanS10,DBLP:conf/fm/SogokonJ15,DBLP:conf/emsoft/TalyT10}, sufficient existence duration is either assumed explicitly or is implicitly used in the correctness proofs.
Such a hypothesis is unsatisfactory, since the global existence of solutions for (nonlinear) ODEs is a non-trivial question; in fact, it is undecidable even for polynomial ODEs~\cite{DBLP:journals/entcs/GracaBC08}.
Formal proofs of any underlying existence assumptions thus yield stronger (unconditional) ODE liveness proofs.
Of course, such existence properties are an additional proof burden, but~\rref{sec:impl} also shows that proof support can help by automating easy existence questions, e.g., for affine systems where global existence is well-known.
A related problem arising in the study of hybrid systems is \emph{Zeno phenomena}~\cite{DBLP:conf/lics/Henzinger96,doi:10.1002/rnc.592}, where a trajectory of a hybrid model makes infinitely many (discrete) transitions in finite (continuous) time.
Like finite-time blow up, Zeno phenomena typically occur as abstraction artifacts of hybrid systems models, and they do not occur in real systems.
Thus, analogous to the question of global existence, absence of Zeno phenomena must either be assumed (or Zeno trajectories explicitly excluded)~\cite{DBLP:conf/lics/Henzinger96,DBLP:journals/logcom/Platzer10}, or proved when specifying and verifying properties of such systems~\cite{doi:10.1002/rnc.592}.

The refinement-based approach to ODE existence and liveness proofs underlies this article's implementation described in~\rref{sec:impl}.
Compared to an earlier implementation~\cite{DBLP:conf/cade/PlatzerQ08}, where rules like~\irref{dVcmpQ} are implemented monolithically, this article's approach and implementation build those rules from smaller building blocks which yields a flexible implementation together with powerful (untrusted) proof support.
The high-level lessons discussed in~\rref{sec:impl} are also broadly applicable to other deductive tools for ODEs and hybrid systems~\cite{DBLP:conf/icfem/WangZZ15,DBLP:conf/RelMiCS/FosterMS20} that currently lack support for ODE liveness proofs.

\paragraph{Other Liveness Properties.}
The liveness property studied in this article is the continuous analog of \emph{eventually}~\cite{DBLP:books/daglib/0077033} or \emph{eventuality}~\cite{DBLP:journals/siamco/PrajnaR07,DBLP:conf/fm/SogokonJ15} from temporal logics.
In discrete settings, temporal logic specifications give rise to a zoo of other liveness properties~\cite{DBLP:books/daglib/0077033}.
In continuous settings, \emph{weak eventuality} (requiring \emph{almost all} initial states to reach the goal region) and \emph{eventuality-safety} have been studied~\cite{DBLP:conf/hybrid/PrajnaR05,DBLP:journals/siamco/PrajnaR07}. %
In adversarial settings, \emph{differential game variants}~\cite{DBLP:journals/tocl/Platzer17} enable proofs of winning strategies for differential games.
In dynamical systems and controls, the study of \emph{asymptotic stability} requires both stability (an invariance property) with asymptotic attraction towards a fixed point or periodic orbit (an eventuality-like property)~\cite{Chicone2006,DBLP:journals/siamco/RatschanS10}.
For hybrid systems, various authors have proposed generalizations of classical asymptotic stability, such as \emph{persistence}~\cite{DBLP:journals/jar/SogokonJJ19}, \emph{stability}~\cite{DBLP:conf/hybrid/PodelskiW06}, and \emph{inevitability}~\cite{DBLP:conf/hybrid/DuggiralaM12}.
\emph{Controlled} versions of these properties are also of interest, e.g., \emph{(controlled) reachability and attractivity}~\cite{ABATE2009150,DBLP:conf/emsoft/TalyT10}.
Eventuality(-like) properties are fundamental to all of these advanced liveness properties.
The formal understanding of eventuality in this article is therefore a key step towards enabling formal analysis of more advanced liveness properties.

\paragraph{Automated Liveness Proofs.}
Automated reachability analysis tools~\cite{DBLP:conf/cav/ChenAS13,DBLP:conf/cav/FrehseGDCRLRGDM11} can also be used to answer certain liveness verification questions.
For an ODE and initial set $\bigchi_0$, computing an over-approximation $\mathcal{O}$ of the reachable set $\bigchi_{t} \subseteq \mathcal{O}$ at time $t$ shows that \emph{all} states in $\bigchi_0$ reach $\mathcal{O}$ at time $t$~\cite{DBLP:journals/jar/SogokonJJ19} (if solutions do not blow up).
Similarly, an under-approximation $\mathcal{U} \subseteq \bigchi_{t}$ shows that \emph{some} state in $\bigchi_0$ eventually reaches $\mathcal{U}$~\cite{DBLP:conf/hybrid/GoubaultP17} (if $\mathcal{U}$ is non-empty).
Neither approach handles domain constraints directly~\cite{DBLP:conf/hybrid/GoubaultP17,DBLP:journals/jar/SogokonJJ19} and, unlike deductive approaches, the use of reachability tools limits them to concrete time bounds $t$ and bounded initial sets $\bigchi_0$.
Deductive liveness approaches can also be (partially) automated, as shown in~\rref{sec:impl}.
Lyapunov functions guaranteeing (asymptotic) stability can be found by sum-of-squares (SOS) optimization~\cite{1184414}.
Liveness arguments can be similarly combined with SOS optimization to find suitable differential variants~\cite{DBLP:conf/hybrid/PrajnaR05,DBLP:journals/siamco/PrajnaR07}.
Other approaches are possible, e.g., a constraint solving-based approach can be used for finding the so-called \emph{set Lyapunov functions}~\cite{DBLP:journals/siamco/RatschanS10} (e.g., the term $\ptermA$ used in~\irref{RS+RSQ}).
Crucially, automated approaches must ultimately be based on sound underlying liveness arguments.
The correct justification of these arguments is precisely what this article enables.

\paragraph{Refinement Calculi.}
This article's view of ODE liveness arguments as step-by-step refinements is closely related to \emph{refinement proof calculi}~\cite{DBLP:books/daglib/0096285,DBLP:journals/toplas/Kozen97}.
The shared idea is that the proof of a complex property, like ODE liveness or program correctness, should be broken down into (simpler) step-by-step refinements.
The key difference is that, for refinement calculi, refinement typically takes place between programs (or implementations) and their specification.
For example, a concrete implementation $\beta$ is said to \emph{refine} its abstract specification $\alpha$ if the set of transitions of $\beta$ is a subset of those of $\alpha$~\cite{DBLP:books/daglib/0096285}.
Proving such a refinement for hybrid programs $\alpha,\beta$ would, for example, prove the implication:
\begin{equation}
\ddiamond{\beta}{\rfvar} \limply \ddiamond{\alpha}{\rfvar}
\label{eq:refinementsys}
\end{equation}

Program refinement is not directly applicable to this article's focus on proving liveness for specific ODEs.
Instead, as hinted by~\rref{eq:refinementsys}, program refinement plays an important role for generalizing this article's results beyond ODEs to hybrid systems, where, e.g., one may use implications like~\rref{eq:refinementsys} as part of a refinement chain~\rref{eq:refinementchain}.
There are a number of refinement calculi for hybrid systems~\cite{DBLP:conf/lics/LoosP16,DBLP:conf/RelMiCS/FosterMS20,DBLP:books/crc/p/ButlerAB16,DBLP:journals/tcs/RonkkoRS03}.
Notably, \emph{differential refinement logic}~\cite{DBLP:conf/lics/LoosP16} formally extends \dL with a refinement operator $\beta \leq \alpha$, and can be used together with this article's results.
Another direction for generalizing this article's results is to consider larger classes of continuous dynamics, such as differential inclusions, differential-algebraic constraints~\cite{DBLP:journals/logcom/Platzer10}, and differential games~\cite{DBLP:journals/tocl/Platzer17}.
These open up the possibility of proving refinements between concrete (ODE) descriptions and their more abstract continuous counterparts~\cite{DBLP:journals/logcom/Platzer10,DBLP:conf/RelMiCS/FosterMS20,DBLP:conf/tase/DupontAPS19,DBLP:journals/tocl/Platzer17}.

\section{Conclusion}
This article presents a refinement-based approach for proving liveness and, as a special case, global existence properties for ODEs in \dL.
The associated \KeYmaeraX implementation demonstrates the utility of this approach for formally proving concrete ODE liveness questions.
Beyond the particular proof rules derived in the article, the exploration of new and more general ODE liveness proof rules is enabled by simply piecing together more refinement steps in \dL, or in the \KeYmaeraX implementation of those steps.
Given its wide applicability and correctness guarantees, this approach is a suitable framework for justifying ODE liveness arguments, even for readers less interested in the logical aspects.

\appendix

\section{Proof Calculus}
\label{app:coreproofs}

This appendix presents the \dL proof calculus that underlies the refinement approach of this article.
For ease of reference, all of the core axioms and proof rules presented in the main article are summarized here, along with their proofs (where necessary).

\subsection{Base Calculus}
\label{app:basecalculus}

The following lemma summarizes the base \dL axioms and proof rules used in this article.
\rref{lem:appdlaxioms} subsumes~\rref{lem:dlaxioms} with the addition of the differential ghost axiom (\irref{DG} from~\rref{subsec:proofcalculus}) and three axioms (\irref{band+DMP+DX}).
The latter three additions are used in derivations in~\rref{app:derivationproofs}.

\begin{lemma}[Axioms and proof rules of \dL~\cite{DBLP:journals/jar/Platzer17,Platzer18,DBLP:journals/jacm/PlatzerT20}]
\label{lem:appdlaxioms}
The following are sound axioms and proof rules of \dL.
In axiom~\irref{DG}, the $\exists$ quantifier can be replaced with a $\forall$ quantifier. \\
\begin{calculuscollection}
\begin{calculus}
\cinferenceRuleQuote{diamond}
\end{calculus}
\qquad\qquad
\begin{calculus}
\cinferenceRuleQuote{K}
\end{calculus}\\
\begin{calculus}
\dinferenceRuleQuote{dIcmp}
\end{calculus}\\
\begin{calculus}
\dinferenceRuleQuote{dC}
\dinferenceRuleQuote{MbW}
\end{calculus}
\qquad
\begin{calculus}
\dinferenceRuleQuote{dW}
\dinferenceRuleQuote{MdW}
\end{calculus}\\
\begin{calculus}
\cinferenceRuleQuote{DG}

\dinferenceRule[band|${\dibox{\cdot}\land}$]{}
{\linferenceRule[equiv]
  {\dbox{\alpha}{\rfvar} \land \dbox{\alpha}{\rrfvar}}
  {\axkey{\dbox{\alpha}{(\rfvar \land \rrfvar)}}}
}{}

\cinferenceRule[DMP|DMP]{differential modus ponens}
{\linferenceRule[impl]
  {\dbox{\pevolvein{\D{x}=\genDE{x}}{\ivr}}{(\ivr \limply \rrfvar)}}
  {(\dbox{\pevolvein{\D{x}=\genDE{x}}{\rrfvar}}{\rfvar} \limply \axkey{\dbox{\pevolvein{\D{x}=\genDE{x}}{\ivr}}{\rfvar}})}
}{}
\cinferenceRule[DX|DX]{}
{\linferenceRule[equiv]
  {(\ivr \limply \rfvar \land \dbox{\pevolvein{\D{x}=\genDE{x}}{\ivr}}{\rfvar})}
  {\axkey{\dbox{\pevolvein{\D{x}=\genDE{x}}{\ivr}}{\rfvar}}}
}{\text{$\D{x} \not\in \rfvar,\ivr$}}
\end{calculus}
\end{calculuscollection}
\end{lemma}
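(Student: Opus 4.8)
The plan is to observe that Lemma~\ref{lem:appdlaxioms} bundles together axioms and rules of two kinds --- those that are primitive to \dL's uniform substitution calculus, and those that are syntactically derived from the primitives --- and to treat each group separately. So there is no single deep argument: the work is to cite the right soundness results and to spell out two short derivations.

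For the primitive part --- the axioms \irref{diamond}, \irref{K}, \irref{DG}, \irref{band}, \irref{DMP}, \irref{DX} and the proof rules \irref{dC}, \irref{dW}, and the comparison instance \irref{dIcmp} --- I would invoke their soundness proofs from the cited uniform substitution calculus~\cite{DBLP:journals/jar/Platzer17,Platzer18,DBLP:journals/jacm/PlatzerT20}. The only one needing a brief remark is \irref{dIcmp}: for $\cmp$ being $\geq$ it is the standard differential invariant argument via the mean value theorem applied to the difference $\ptermA-\ptermB$ along the solution, and for $\cmp$ being $>$ one additionally uses that a function whose value is positive initially and whose Lie derivative is nonnegative stays positive; both variants are already established. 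Axiom \irref{DX} uses the ODE semantics evaluated at time zero, which is exactly where its side condition $\D{x}\notin\rfvar,\ivr$ is needed so that $\rfvar,\ivr$ do not mention the differential symbols.

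The derived rules \irref{MbW} and \irref{MdW} are obtained from \irref{dW}, \irref{K}, and \irref{diamond}. For \irref{MbW}: rewrite the monotonicity premise $\lsequent{\ivr,\rrfvar}{\rfvar}$ as $\lsequent{\ivr}{\rrfvar\limply\rfvar}$, apply \irref{dW} to get $\lsequent{\Gamma}{\dbox{\pevolvein{\D{x}=\genDE{x}}{\ivr}}{(\rrfvar\limply\rfvar)}}$, and combine with the second premise $\lsequent{\Gamma}{\dbox{\pevolvein{\D{x}=\genDE{x}}{\ivr}}{\rrfvar}}$ using \irref{K} to conclude $\lsequent{\Gamma}{\dbox{\pevolvein{\D{x}=\genDE{x}}{\ivr}}{\rfvar}}$. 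For \irref{MdW}: take the contrapositive of the monotonicity premise, $\lsequent{\ivr,\lnot\rfvar}{\lnot\rrfvar}$, apply \irref{dW} and \irref{K} to derive $\dbox{\pevolvein{\D{x}=\genDE{x}}{\ivr}}{\lnot\rfvar}\limply\dbox{\pevolvein{\D{x}=\genDE{x}}{\ivr}}{\lnot\rrfvar}$, contrapose once more, and rewrite both sides with \irref{diamond} to obtain $\ddiamond{\pevolvein{\D{x}=\genDE{x}}{\ivr}}{\rrfvar}\limply\ddiamond{\pevolvein{\D{x}=\genDE{x}}{\ivr}}{\rfvar}$; then chain with the second premise $\lsequent{\Gamma}{\ddiamond{\pevolvein{\D{x}=\genDE{x}}{\ivr}}{\rrfvar}}$.

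I expect the only friction to be careful bookkeeping rather than any genuine obstacle: the strict/non-strict case split in \irref{dIcmp}, the side condition of \irref{DX}, the fact that rules like \irref{dIcmp} and \irref{dW} must discard the context $\Gamma$ when passing to their premises (while constant contexts may be retained, as the main text stresses), and, for \irref{MdW}, keeping the domain constraint $\ivr$ fixed through the two contrapositions so that the duality with \irref{diamond} lines up. None of these is deep, but each is soundness-critical, which is exactly why the lemma is proved once here and reused throughout.
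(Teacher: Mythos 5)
Your proposal is correct and matches the paper's approach: the paper's proof of this lemma simply defers soundness of all listed axioms and rules to the cited literature, and your explicit derivations of \irref{MbW} and \irref{MdW} from \irref{dW}, \irref{K}, and \irref{diamond} are exactly the route the surrounding text indicates for those two derived rules.
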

\begin{proof}[\textbf{Proof of \rref{lem:appdlaxioms} (implies \rref{lem:dlaxioms})\hypertarget{proof:proof1}}]
The soundness of all axioms and proof rules in~\rref{lem:appdlaxioms} are proved elsewhere~\cite{DBLP:journals/jar/Platzer17,Platzer18,DBLP:journals/jacm/PlatzerT20}.
\end{proof}

Axiom~\irref{band} is derived from axiom~\irref{K}~\cite{DBLP:journals/jar/Platzer17,Platzer18}.
It commutes box modalities and their conjunctive postconditions because the conjunction $\rfvar \land \rrfvar$ is true after all runs of hybrid program $\alpha$ iff the individual conjuncts $\rfvar,\rrfvar$ are themselves true after all runs of $\alpha$.
Axiom~\irref{DMP} is the modus ponens principle for domain constraints.
The \emph{differential skip} axiom~\irref{DX} is a reflexivity property of differential equation solutions.
The ``$\lylpmi$'' direction says if domain constraint $\ivr$ is initially false, then the formula $\dbox{\pevolvein{\D{x}=\genDE{x}}{\ivr}}{\rfvar}$ is trivially true in that initial state because no solution of the ODE stays in the domain constraint.
Thus, this direction of~\irref{DX} allows domain constraint $\ivr$ to be assumed true initially when proving $\dbox{\pevolvein{\D{x}=\genDE{x}}{\ivr}}{\rfvar}$ (shown below, on the left).
The ``$\limply$'' direction has the following equivalent contrapositive reading using~\irref{diamond} and propositional simplification: $\ivr \land \rfvar \limply \ddiamond{\pevolvein{\D{x}=\genDE{x}}{\ivr}}{\rfvar}$, i.e., if the domain constraint $\ivr$ and postcondition $\rfvar$ were both true initially, then $\ddiamond{\pevolvein{\D{x}=\genDE{x}}{\ivr}}{\rfvar}$ is true because of the trivial solution of duration zero.
When proving the liveness property $\ddiamond{\pevolvein{\D{x}=\genDE{x}}{\ivr}}{\rfvar}$, one can therefore always additionally assume $\lnot{(\ivr \land \rfvar)}$ because, by~\irref{DX}, there is nothing to prove otherwise (shown below, on the right).
\[
\begin{minipage}[b]{0.5\textwidth}
{\footnotesizeoff%
\begin{sequentdeduction}[array]
  \linfer[DX]{
    \lsequent{\Gamma,\ivr}{\dbox{\pevolvein{\D{x}=\genDE{x}}{\ivr}}{\rfvar}}
  }
  {\lsequent{\Gamma}{\dbox{\pevolvein{\D{x}=\genDE{x}}{\ivr}}{\rfvar}} }
\end{sequentdeduction}
}%
\end{minipage}
\begin{minipage}[b]{0.5\textwidth}
{\footnotesizeoff%
\begin{sequentdeduction}[array]
  \linfer[DX]{
    \lsequent{\Gamma,\lnot{(\ivr \land \rfvar)}}{\ddiamond{\pevolvein{\D{x}=\genDE{x}}{\ivr}}{\rfvar}}
  }
  {\lsequent{\Gamma}{\ddiamond{\pevolvein{\D{x}=\genDE{x}}{\ivr}}{\rfvar}} }
\end{sequentdeduction}
}%
\end{minipage}
\]

Rule~\irref{dGt} from~\rref{sec:globexist} is useful for adding a fresh time variable $\timevar$ in ODE existence and liveness proofs.
Its derivation is shown below, using axiom~\irref{diamond} to switch between the box and diamond modalities and axiom~\irref{DG} to introduce a universally quantified time variable $\timevar$  which is then instantiated by~\irref{alll} to $\timevar=0$.
\\~\\
\begin{minipage}[b]{0.35\textwidth}
\[
\dinferenceRuleQuote{dGt}
\]
~\\
\end{minipage}\quad\;%
\begin{minipage}[b]{0.6\textwidth}
{\footnotesizeoff%
\begin{sequentdeduction}[array]
  \linfer[diamond+notr]{
  \linfer[DG]{
  \linfer[alll]{
  \linfer[diamond+notl]{
    \lsequent{\Gamma,\timevar{=}0}{\ddiamond{\pevolvein{\D{x}=\genDE{x},\D{\timevar}=1}{\ivr}}{\rfvar}}
  }
    {\lsequent{\Gamma,\timevar{=}0,\dbox{\pevolvein{\D{x}=\genDE{x},\D{\timevar}=1}{\ivr}}{\lnot{\rfvar}}}{\lfalse}}
  }
    {\lsequent{\Gamma,\lforall{\timevar}{\dbox{\pevolvein{\D{x}=\genDE{x},\D{\timevar}=1}{\ivr}}{\lnot{\rfvar}}}}{\lfalse}}
  }
    {\lsequent{\Gamma,\dbox{\pevolvein{\D{x}=\genDE{x}}{\ivr}}{\lnot{\rfvar}}}{\lfalse}}
  }
  {\lsequent{\Gamma}{\ddiamond{\pevolvein{\D{x}=\genDE{x}}{\ivr}}{\rfvar}}}
\end{sequentdeduction}
}%
\end{minipage}\\

The bounded differential ghost axiom~\irref{BDG} from~\rref{lem:boundeddg} (quoted and proved below) is a new vectorial generalization of~\irref{DG} which allows differential ghosts with provably bounded ODEs to be added.

\cinferenceRuleQuote{BDG}

\begin{proof}[\textbf{Proof of \rref{lem:boundeddg}\hypertarget{proof:proof2}}]
The proof of~\irref{BDG} is similar to that for the differential ghosts axiom~\cite{DBLP:journals/jar/Platzer17}, but generalizes it to support vectorial, nonlinear ODEs by adding a precondition on boundedness of solutions.
Let $y$ be a vector of $m$ fresh variables and $\D{y}=g(x,y)$ be its corresponding vector of ghost ODEs.
Both directions of the (inner) equivalence of axiom~\irref{BDG} are proved separately.
\begin{enumerate}
\item[``$\limply$'']
The (easier) ``$\limply$'' direction does not require the outer bounding assumption of~\irref{BDG}, i.e., the implication $\dbox{\pevolvein{\D{x}=\genDE{x}}{\ivr(x)}}{\rfvar(x)} \limply \axkey{\dbox{\pevolvein{\D{x}=\genDE{x}, \D{y}=g(x,y)}{\ivr(x)}}{\rfvar(x)}}$ is valid for any ODE $\D{y}=g(x,y)$ meeting the freshness condition on $y$.
The proof for this direction is identical to the proof of soundness for differential ghosts~\cite[Theorem 38]{DBLP:journals/jar/Platzer17}.
\item[``$\lylpmi$'']
In the ``$\lylpmi$'' direction, consider an initial state $\iget[state]{\I} \in \States$ and let $\solvar : [0, T) \to \States, 0<T\leq\infty$ be the unique, right-maximal solution~\cite{Chicone2006,Walter1998} to the ODE $\D{x}=\genDE{x}$ with initial value $\solvar(0)=\iget[state]{\I}$.
Similarly, let $\solvar_y : [0, T_y) \to \States, 0<T_y\leq\infty$ be the unique, right-maximal solution to the ODE $\D{x}=\genDE{x},\D{y}=g(x,y)$ with initial value $\solvar_y(0)=\iget[state]{\I}$.
Assume that $\iget[state]{\I}$ satisfies both of the following assumptions in~\irref{BDG}:
\begin{align}
&\iget[state]{\I} \in \imodel{\I}{\dbox{\pevolvein{\D{x}=\genDE{x},\D{y}=g(x,y)}{\ivr(x)}}{\,\norm{y}^2 \leq \ptermA(x)}}{}
\label{eq:bdgassm1}\\
&\iget[state]{\I} \in \imodel{\I}{\dbox{\pevolvein{\D{x}=\genDE{x}, \D{y}=g(x,y)}{\ivr(x)}}{\rfvar(x)}}
\label{eq:bdgassm2}
\end{align}

To show $\iget[state]{\I} \in \imodel{\I}{\dbox{\pevolvein{\D{x}=\genDE{x}}{\ivr(x)}}{\rfvar(x)}}$, unfold the semantics of the box modality and consider any finite time $\tau$ with $0 \leq \tau < T$ where $\solvar(\zeta) \in \imodel{\I}{\ivr(x)}$ for all $0 \leq \zeta \leq \tau$.
It is proved further below that $\tau$ is also in the existence interval for solution $\solvar_y$, i.e., \circled{$\star$}: $\tau < T_y$.
By uniqueness, $\solvar,\solvar_y$ agree on the values of $x$ on their common existence interval, which includes the time interval $[0,\tau]$ by \circled{$\star$}.
Therefore, by coincidence for terms and formulas~\cite{DBLP:journals/jar/Platzer17}, $\solvar_y(\zeta) \in \imodel{\I}{\ivr(x)}$ for all $0 \leq \zeta \leq \tau$.
Thus, by~\rref{eq:bdgassm2}, $\solvar_y(\tau) \in \imodel{\I}{\rfvar(x)}$ and by coincidence for formulas~\cite{DBLP:journals/jar/Platzer17}, $\solvar(\tau) \in \imodel{\I}{\rfvar(x)}$.

In order to prove \circled{$\star$}, suppose for contradiction that $T_y \leq \tau$.
Let $x(\cdot) : [0,T) \to \reals^n$ denote the projection of solution $\solvar$ onto its $x$ coordinates, and let $\ptermA(x(\cdot)) : [0,T) \to \reals$ denote the evaluation of term $\ptermA$ along $x(\cdot)$.
Since the projection $x(\cdot)$ and its composition with a polynomial evaluation function $\ptermA(x(\cdot))$ are continuous in $t$~\cite{DBLP:journals/jar/Platzer17}, $\ptermA(x(\cdot))$ is bounded above by (and attains) its maximum value $\ptermA_{\max} \in \reals$ on the compact interval $[0,\tau]$.

Let $y(\cdot) : [0,T_y) \to \reals^m$ similarly denote the projection of $\solvar_y$ onto its $y$ coordinates and $\norm{y(\cdot)}^2$ denote the squared norm evaluated along $y(\cdot)$.
Since $T_y \leq \tau < T$, note that $y(\cdot)$ must be the unique right-maximal solution of the time-dependent differential equation $\D{y}= g(x(t),y)$.
Otherwise, if there is a longer solution $\psi : [0,\zeta) \to \reals^m$ for $\D{y}= g(x(t),y)$ which exists for time $\zeta$ with $T_y < \zeta \leq T$, then the combined solution given by $(x(t),\psi(t)) : [0,\zeta) \to \reals^n \times \reals^m$ extends $\solvar_y$ beyond $T_y$ (by keeping all variables other than $x,y$ constant at their initial values in state $\iget[state]{\I}$).
This contradicts right-maximality of $\solvar_y$.
Moreover, for all times $0 \leq \zeta < T_y$, by assumption $\zeta \leq \tau$ and $\solvar(\zeta) \in \imodel{\I}{\ivr(x)}$, so the solution $\solvar_y$ satisfies $\solvar_y(\zeta) \in \imodel{\I}{\ivr(x)}$ by coincidence for formulas~\cite{DBLP:journals/jar/Platzer17}.
Thus, from~\rref{eq:bdgassm1}, for all times $0 \leq \zeta < T_y$, the squared norm is bounded by $\ptermA_{\max}$, with:
\[ \norm{y(\zeta)}^2 \leq \ptermA(x(\zeta)) \leq \ptermA_{\max} \]

Hence, $y(\cdot)$ remains trapped within the compact $\reals^m$ ball of radius $\sqrt{\ptermA_{\max}}$ on its domain of definition $[0,T_y)$.
By~\cite[Theorem 1.4]{Chicone2006}, and right-maximality of $y(\cdot)$ for the time-dependent ODE $\D{y}= g(x(t),y)$, the domain of definition of solution $y(\cdot)$ is equal to the domain of definition of $\D{y}= g(x(t),y)$, i.e., $T_y = T$, which contradicts $T_y \leq \tau < T$.
\qedhere
\end{enumerate}
\end{proof}

The following lemma presents additional \dL ODE invariance proof rules that are used in the derivations in~\rref{app:derivationproofs}.
These invariance proof rules are not the main focus of this article but they are nevertheless useful for simplifying or deriving the premises of this article's ODE existence and liveness proof rules.

\begin{lemma}[ODE invariance proof rules of \dL~\cite{DBLP:journals/jacm/PlatzerT20}]
\label{lem:appodeinvariance}
The following are derived ODE invariance proof rules of \dL.
In rule~\irref{dbxineq}, $\cofterm$ is any polynomial cofactor term.
In rule~\irref{sAIQ}, $\sigliedsai{\genDE{x}}{\ivr}$, $\sigliedsai{\genDE{x}}{\rfvar}$, $\sigliedsai[-]{-\genDE{x}}{\ivr}$, $\sigliedsai[-]{-\genDE{x}}{(\lnot{\rfvar})}$ are semialgebraic progress formulas~\cite[Def. 6.4]{DBLP:journals/jacm/PlatzerT20} with respect to $\D{x}=\genDE{x}$.
In rule~\irref{Enc}, formula $\rfvar$ is formed from finite conjunctions and disjunctions of strict inequalities $>,<$, and formula $\relaxineq{\rfvar}$ is identical to $\rfvar$ but with non-strict inequalities $\geq,\leq$ in place of $>,<$ respectively.
\\
\begin{calculuscollection}
\begin{calculus}
\dinferenceRule[dbxineq|dbx${_\cmp}$]{Darboux inequality}
{\linferenceRule
  {\lsequent{\ivr} {\lied[]{\genDE{x}}{\ptermA}\geq \cofterm\ptermA}}
  {\lsequent{\ptermA\cmp0} {\dbox{\pevolvein{\D{x}=\genDE{x}}{\ivr}}{\ptermA\cmp0}}}
}{\text{where $\cmp$ is either $\geq$ or $>$}}

\dinferenceRule[sAIQ|sAI{$\&$}]{Semianalytic invariant with domains}
{\linferenceRule
  {
  \lsequent{\rfvar,\ivr,\sigliedsai{\genDE{x}}{\ivr}}{\sigliedsai{\genDE{x}}{\rfvar}} &
  \lsequent{\lnot{\rfvar},\ivr,\sigliedsai[-]{-\genDE{x}}{\ivr}}{\sigliedsai[-]{-\genDE{x}}{(\lnot{\rfvar})}}
  }
  {\lsequent{\rfvar}{\dbox{\pevolvein{\D{x}=\genDE{x}}{\ivr}}{\rfvar}}}
}{}

\dinferenceRule[BC|Barr]{}
{\linferenceRule
  { \lsequent{\ivr, \ptermA=0}{\lied[]{\genDE{x}}{p} > 0}
  }
  {\lsequent{\Gamma, \ptermA \cmp 0}{\dbox{\pevolvein{\D{x}=\genDE{x}}{\ivr}}{\ptermA \cmp 0}} }  \quad
}{where $\cmp$ is either $\geq$ or $>$}

\dinferenceRule[Enc|Enc]{Enclosure}
{\linferenceRule
  {
  \lsequent{\Gamma}{\relaxineq{\rfvar}} &
  \lsequent{\Gamma}{\dbox{\pevolvein{\D{x}=\genDE{x}}{\ivr \land \relaxineq{\rfvar}}}{\rfvar}}
  }
  {\lsequent{\Gamma}{\dbox{\pevolvein{\D{x}=\genDE{x}}{\ivr}}{\rfvar}}}
}{}

\end{calculus}
\end{calculuscollection}
\end{lemma}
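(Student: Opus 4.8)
The plan is to organize \rref{lem:appodeinvariance} as one ``imported'' rule plus three short derivations. Rule~\irref{sAIQ} is precisely the sound and derived semianalytic invariance rule with domain constraints of prior work~\cite{DBLP:journals/jacm/PlatzerT20}, so beyond recording it in this article's notation there is nothing new to prove for it: its soundness and derivability are exactly the completeness result for ODE invariants established there. The remaining rules~\irref{dbxineq}, \irref{BC}, and~\irref{Enc} are then derived from~\irref{sAIQ} together with the base calculus of \rref{lem:appdlaxioms}, primarily~\irref{DG}, \irref{dC}, \irref{dIcmp}, \irref{dW}, and~\irref{qear}.

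For the Darboux inequality rule~\irref{dbxineq} the plan is the classical multiplier-ghost construction. Using~\irref{DG}, introduce a fresh ghost variable $y$ with initial value $1$ obeying the scalar linear ODE $\D{y}=-\cofterm\,y$; since scalar linear ODEs never change sign, $y>0$ is an invariant of the augmented system (a standard \dL-provable fact), and it is added to the domain constraint via~\irref{dC}. Under $y>0$ the postcondition $\ptermA \cmp 0$ is equivalent to $\ptermA y \cmp 0$, which a single application of~\irref{dIcmp} discharges: the Lie derivative of $\ptermA y$ along the augmented ODE equals $(\lied[]{\genDE{x}}{\ptermA} - \cofterm\,\ptermA)\,y$, which is $\geq 0$ in the domain $\ivr \land y>0$ by the rule's premise (pure arithmetic once $y>0$ is assumed), so $\ptermA y$ stays on the correct side of $0$; multiplying back by $1/y>0$ recovers $\ptermA \cmp 0$. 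The strict and non-strict cases go through uniformly because~\irref{dIcmp} already covers both $\geq$ and $>$.

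For the strict barrier rule~\irref{BC} the plan is to reduce it to~\irref{sAIQ} with $\rfvar \mnodefequiv \ptermA \cmp 0$; the only work is to show that the single premise $\lsequent{\ivr, \ptermA = 0}{\lied[]{\genDE{x}}{\ptermA} > 0}$ entails the two semialgebraic progress obligations of~\irref{sAIQ}. On the boundary $\ptermA = 0$ a strictly positive Lie derivative witnesses local forward progress into $\ptermA \cmp 0$; on the same boundary the reverse-flow derivative $-\lied[]{\genDE{x}}{\ptermA}$ is strictly negative, witnessing local progress of the backward flow into the complement $\lnot(\ptermA\cmp 0)$; and at interior points of $\ptermA\cmp 0$ (resp.\ of its complement) the corresponding obligation is immediate since one is already strictly inside an open region. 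For the enclosure rule~\irref{Enc} the plan has two steps. First prove $\dbox{\pevolvein{\D{x}=\genDE{x}}{\ivr}}{\relaxineq{\rfvar}}$: the closure $\relaxineq{\rfvar}$ holds initially by the first premise, and a real-induction argument (the one underlying~\irref{sAIQ}) closes it, namely the set of times at which the solution satisfies the closed formula $\relaxineq{\rfvar}$ is closed, contains $0$, and is extendable---whenever $\relaxineq{\rfvar}$ holds throughout $[0,t]$ the solution stays in $\ivr \land \relaxineq{\rfvar}$ there, so by the second premise it lies in the open set $\rfvar \subseteq \relaxineq{\rfvar}$ at $t$ and hence remains there slightly longer. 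Then~\irref{dC} adds $\relaxineq{\rfvar}$ to the domain constraint, after which the second premise of~\irref{Enc} discharges the goal directly.

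The main obstacle is not in these derivations, which are all short, but is inherited: everything rests on the soundness and derivability of~\irref{sAIQ}, that is, on the completeness theorem for ODE invariants of~\cite{DBLP:journals/jacm/PlatzerT20}, whose own proof is substantial and is simply taken as given here. Beyond that, the only points needing care are the routine bookkeeping of which comparisons are strict and of the forward/backward direction in the progress-formula obligations for~\irref{BC} and~\irref{Enc}, and, for~\irref{dbxineq}, checking that the ghost ODE $\D{y}=-\cofterm\,y$ meets the linearity restriction of~\irref{DG} and that $y>0$ is genuinely invariant so that the final division step is legitimate.
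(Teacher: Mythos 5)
Your proposal is correct and matches the paper's treatment: the paper's proof of this lemma is a one-line citation to the complete \dL axiomatization for ODE invariants, and the surrounding prose confirms exactly your decomposition --- \irref{sAIQ} imported as the completeness result, \irref{dbxineq} via the \irref{DG}/\irref{dC}/\irref{dIcmp} multiplier-ghost construction, and \irref{BC} as a special case of \irref{sAIQ}. Your sketches of the individual derivations (including the \irref{Enc} real-induction argument and the caveat that the invariance of $y>0$ for the ghost is itself a standard \dL-provable fact) are consistent with the cited prior work, so there is nothing to correct.
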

\begin{proof}[\textbf{Proof}]
These ODE invariance proof rules are all derived from the complete \dL axiomatization for ODE invariants~\cite{DBLP:journals/jacm/PlatzerT20}.
\end{proof}

Rule~\irref{dbxineq} is the Darboux inequality proof rule for the invariance of $\ptermA \cmp 0$ which is derived using rules~\irref{dIcmp+dC+DG}.
An extensive explanation of the rule is available elsewhere~\cite[Section 3.2]{DBLP:journals/jacm/PlatzerT20}.
Rule~\irref{sAIQ} is \dL's complete proof rule for ODE invariants, i.e., the formula $\rfvar$ is invariant for ODE $\pevolvein{\D{x}=\genDE{x}}{\ivr}$ iff it can be proved invariant by rule~\irref{sAIQ}.
For closed (resp. open) semialgebraic formulas $\rfvar$, the right (resp. left) premise of rule~\irref{sAIQ} closes trivially~\cite{DBLP:journals/jacm/PlatzerT20}.
This simplification is useful for obtaining more succinct proof rules, e.g., rule~\irref{dV} makes use of~\irref{sAIQ} with a closed semialgebraic formula.
Rule~\irref{BC} is a \dL rendition of the strict barrier certificates proof rule~\cite{DoyenFPP18,DBLP:journals/tac/PrajnaJP07} for invariance of $\ptermA \cmp 0$, which is derived as a special case of rule~\irref{sAIQ}.
Intuitively, the premise says that $\ptermA = 0$ is a \emph{barrier} along which the value of $\ptermA$ is increasing along solutions (succedent $\lied[]{\genDE{x}}{p} > 0$), so it is impossible for solutions starting from $\ptermA \cmp 0$ to cross this barrier into $\ptermA \pmc 0$.
Finally, rule~\irref{Enc} says that, in order to prove that solutions stay in postcondition $\rfvar$ which characterizes an open set, it suffices to prove it assuming $\relaxineq{\rfvar}$ in the domain constraint, where $\relaxineq{\rfvar}$ relaxes all strict inequalities in $\rfvar$ and thus provides an over-approximation of the topological closure of the set characterized by $\rfvar$.
The rule can also be understood in the contrapositive: if a continuous solution leaves $\rfvar$, then it either already started outside the closure (ruled out by left premise), or it starts in the closure and leaves $\rfvar$ on its topological boundary (included in the closure).
The latter case is ruled out by the right premise of~\irref{Enc} because solutions that remain in the closure must stay in $\rfvar$.

\subsection{Refinement Calculus}
\label{app:refinementcalculus}

The following ODE liveness refinement axioms are quoted from~\rref{lem:diarefaxioms}, and their syntactic derivations in the \dL proof calculus are given below.

\begin{calculuscollection}
\begin{calculus}
\dinferenceRuleQuote{Prog}

\dinferenceRuleQuote{dDR}

\dinferenceRuleQuote{dBDG}

\dinferenceRuleQuote{dDDG}
\end{calculus}
\end{calculuscollection}

\begin{proof}[\textbf{Proof of \rref{lem:diarefaxioms}\hypertarget{proof:proof3}}]
The four axioms are derived in order.

\begin{itemize}
\item[\irref{Prog}] Axiom~\irref{Prog} is derived as follows, starting with \irref{diamond+notl+notr} to dualize the diamond modalities in the antecedent and succedent to box modalities. A \irref{dC} step using the right antecedent completes the proof.
{\footnotesizeoff%
\begin{sequentdeduction}[array]
\linfer[diamond+notl+notr]{
\linfer[dC]{
  \lclose
}
  {\lsequent{\dbox{\pevolvein{\D{x}=\genDE{x}}{\ivr \land \lnot{\rfvar}}}{\lnot{\rgvar}}, \dbox{\pevolvein{\D{x}=\genDE{x}}{\ivr}}{\lnot{\rfvar}}} {\dbox{\pevolvein{\D{x}=\genDE{x}}{\ivr}}{\lnot{\rgvar}}}}
}
  {\lsequent{\dbox{\pevolvein{\D{x}=\genDE{x}}{\ivr \land \lnot{\rfvar}}}{\lnot{\rgvar}}, \ddiamond{\pevolvein{\D{x}=\genDE{x}}{\ivr}}{\rgvar}} {\ddiamond{\pevolvein{\D{x}=\genDE{x}}{\ivr}}{\rfvar}}}
\end{sequentdeduction}
}%

\item[\irref{dDR}] Axiom~\irref{dDR} is similarly derived from axiom~\irref{DMP} with~\irref{diamond}~\cite{DBLP:journals/jacm/PlatzerT20}.

\item[\irref{dBDG}] Axiom~\irref{dBDG} is derived from axiom~\irref{BDG} after using axiom~\irref{diamond} to dualize diamond modalities to box modalities.
The leftmost antecedent is abbreviated $\rrfvar \mnodefequiv \dbox{\pevolvein{\D{x}=\genDE{x},\D{y}=g(x,y)}{\ivr(x)}}{\,\norm{y}^2 \leq \ptermA(x)}$.
{\footnotesizeoff%
\begin{sequentdeduction}[array]
\linfer[diamond+notl+notr]{
\linfer[BDG]{
  \lclose
}
  {\lsequent{\rrfvar,\dbox{\pevolvein{\D{x}=\genDE{x}, \D{y}=g(x,y)}{\ivr(x)}}{\lnot{\rfvar(x)}} } {\dbox{\pevolvein{\D{x}=\genDE{x}}{\ivr(x)}}{\lnot{\rfvar(x)}}}}
}
  {\lsequent{\rrfvar, \ddiamond{\pevolvein{\D{x}=\genDE{x}}{\ivr(x)}}{\rfvar(x)}} {\ddiamond{\pevolvein{\D{x}=\genDE{x}, \D{y}=g(x,y)}{\ivr(x)}}{\rfvar(x)}}}
\end{sequentdeduction}
}%

\item[\irref{dDDG}] Axiom~\irref{dDDG} is derived as a differential version of axiom~\irref{dBDG} with the aid of~\irref{DG}.
The derivation starts with~\irref{diamond+notl+notr} to turn diamond modalities in the sequent to box modalities.
Axiom~\irref{DG} then introduces a fresh ghost ODE $\D{z}= L(x) z + M(x)$, where the antecedents are universally quantified over ghost variable $z$ by~\irref{DG}, while the succedent is existentially quantified.
All quantifiers are then instantiated using~\irref{alll+existsr}, with $z = \norm{y}^2$ so that $z$ stores the initial value of the squared norm of $y$.
Axiom~\irref{BDG} is used with $\D{y}=g(x,y)$ as the ghost ODEs and with $\ptermA(x,z) \mnodefeq z$.
The antecedents are abbreviated:
\begin{align*}
\rrfvar &\mnodefequiv \dbox{\pevolvein{\D{x}=\genDE{x},\D{y}=g(x,y)}{\ivr(x)}}{\,2\dotp{y}{g(x,y)} \leq L(x) \norm{y}^2 + M(x)} \\
\rrfvar_z &\mnodefequiv \dbox{\pevolvein{\D{x}=\genDE{x},\D{y}=g(x,y),\D{z}=L(x) z + M(x)}{\ivr(x)}}{\,2\dotp{y}{g(x,y)} \leq L(x) \norm{y}^2 + M(x)} \\
\rsfvar &\mnodefequiv \dbox{\pevolvein{\D{x}=\genDE{x},\D{y}=g(x,y)}{\ivr(x)}}{\lnot{\rfvar(x)}} \\
\rsfvar_z &\mnodefequiv \dbox{\pevolvein{\D{x}=\genDE{x},\D{y}=g(x,y),\D{z}=L(x) z + M(x)}{\ivr(x)}}{\lnot{\rfvar(x)}}
\end{align*}

{\footnotesizeoff%
\begin{sequentdeduction}[array]
\linfer[diamond+notl+notr]{
\linfer[DG]{
\linfer[alll+existsr]{
\linfer[BDG]{
  \lsequent{z=\norm{y}^2, \rrfvar_z} {\dbox{\pevolvein{\D{x}=\genDE{x},\D{y}=g(x,y),\D{z}=L(x) z + M(x)}{\ivr(x)}}{\norm{y}^2 \leq z}}
}
  {\lsequent{z=\norm{y}^2, \rrfvar_z, \rsfvar_z} {\dbox{\pevolvein{\D{x}=\genDE{x}, \D{z}=L(x) z + M(x)}{\ivr(x)}}{\lnot{\rfvar(x)}}}}
}
  {\lsequent{\lforall{z}{\rrfvar_z}, \lforall{z}{\rsfvar_z}} {\lexists{z}{\dbox{\pevolvein{\D{x}=\genDE{x}, \D{z}=L(x) z + M(x)}{\ivr(x)}}{\lnot{\rfvar(x)}}}}}
}
  {\lsequent{\rrfvar,\rsfvar} {\dbox{\pevolvein{\D{x}=\genDE{x}}{\ivr(x)}}{\lnot{\rfvar(x)}}}}
}
  {\lsequent{\rrfvar, \ddiamond{\pevolvein{\D{x}=\genDE{x}}{\ivr(x)}}{\rfvar(x)}} {\ddiamond{\pevolvein{\D{x}=\genDE{x}, \D{y}=g(x,y)}{\ivr(x)}}{\rfvar(x)}}}
\end{sequentdeduction}
}%

From the resulting open premise, a~\irref{dC} step adds the postcondition of $\rrfvar_z$ to the domain constraint of the succedent, while~\irref{MbW} rearranges the postcondition into the form expected by rule~\irref{dbxineq}.
The proof is completed using~\irref{dbxineq} with cofactor $\cofterm \mnodefeq L(x)$.
Its resulting arithmetical premise is proved by~\irref{qear} because the Lie derivative of $z - \norm{y}^2$ is bounded above by the following calculation, where the inequality from the domain constraint is used in the second step.
\begin{align*}
\lie[]{\D{x}=\genDE{x},\D{y}=g(x,y),\D{z}=L(x) z + M(x)}{z - \norm{y}^2} & = L(x) z + M(x) - 2\dotp{y}{g(x,y)} \\
& \geq L(x) z + M(x) - (L(x) \norm{y}^2 + M(x) )\\
& = L(x) ( z  - \norm{y}^2)
\end{align*}

The ODEs $\D{x}=\genDE{x},\D{y}=g(x,y),\D{z}=L(x) z + M(x)$ are abbreviated $\ldots$ in the derivation below.
{\footnotesizeoff%
\begin{sequentdeduction}[array]
\linfer[dC]{
\linfer[MbW]{
\linfer[dbxineq]{
\linfer[qear]{
  \lclose
}
  {\lsequent{2\dotp{y}{g(x,y)} \leq L(x) \norm{y}^2 + M(x)} {L(x) z + M(x) - 2\dotp{y}{g(x,y)} \geq L(x) (z - \norm{y}^2)}}
}
  {\lsequent{z=\norm{y}^2} {\dbox{\pevolvein{\ldots}{\ivr(x) \land 2\dotp{y}{g(x,y)} \leq L(x) \norm{y}^2 + M(x)}}{\,z - \norm{y}^2 \geq 0}}}
}
  {\lsequent{z=\norm{y}^2} {\dbox{\pevolvein{\ldots}{\ivr(x) \land 2\dotp{y}{g(x,y)} \leq L(x) \norm{y}^2 + M(x)}}{\,\norm{y}^2 \leq z}}}
}
  {\lsequent{z=\norm{y}^2, \rrfvar_z} {\dbox{\pevolvein{\ldots}{\ivr(x)}}{\,\norm{y}^2 \leq z}}}
  \\[-\normalbaselineskip]\tag*{\qedhere}
\end{sequentdeduction}
}%
\end{itemize}
\end{proof}

The following topological $\didia{\cdot}$ ODE refinement axioms are quoted from Lemmas~\ref{lem:diatopaxioms} and~\ref{lem:closeddomref}.
The topological side conditions for these axioms are listed in Lemmas~\ref{lem:diatopaxioms} and~\ref{lem:closeddomref} respectively.
For semialgebraic postcondition $\rfvar$ and domain constraints $\ivr,\rrfvar$, these refinement axioms are derived syntactically from \dL's real induction axiom~\cite[Lemma A.2]{DBLP:journals/jacm/PlatzerT20}.
For the sake of generality, the proofs below directly use the topological conditions.

\begin{calculuscollection}
\begin{calculus}
\cinferenceRuleQuote{CORef}

\cinferenceRuleQuote{CRef}

\cinferenceRuleQuote{SARef}
\end{calculus}
\end{calculuscollection}

\begin{proof}[\textbf{Proof of Lemmas~\ref{lem:diatopaxioms} and~\ref{lem:closeddomref}\hypertarget{proof:proof4}}]\hypertarget{proof:proof26}
Let $\iget[state]{\I} \in \States$ and $\solvar : [0, T) \to \States, 0<T\leq\infty$ be the unique, right-maximal solution~\cite{Chicone2006,Walter1998} to the ODE $\D{x}=\genDE{x}$ with initial value $\solvar(0)=\iget[state]{\I}$.
By definition, $\solvar$ is differentiable, and therefore continuous.
This proof uses the fact that preimages under continuous functions of open sets are open~\cite[Theorem 4.8]{MR0385023}.
In particular, for an open set $\openset$, if $\solvar(t) \in \openset$ at some time $0 < t < T$ then the preimage of a sufficiently small open ball $\openset_\varepsilon \subseteq \openset$ centered at $\solvar(t)$ is open.
Thus, if $t>0$ and $\solvar(t) \in \openset$, then $\solvar$ stays in the open set $\openset$ for some open time interval\footnote{In case $t=0$, the time interval in~\rref{eq:openball} is truncated to the left with $\solvar(\zeta) \in \openset~\text{for all}~0 \leq \zeta < t+\varepsilon$.}around $t$, i.e., for some $\varepsilon > 0$:
\begin{align}
\solvar(\zeta) \in \openset~\text{for all}~t-\varepsilon \leq \zeta \leq t+\varepsilon
\label{eq:openball}
\end{align}

For the soundness proof of axioms~\irref{CORef},~\irref{CRef}, and~\irref{SARef}, assume that $\iget[state]{\I} \in \imodel{\I}{\ddiamond{\pevolvein{\D{x}=\genDE{x}}{\rrfvar}}{\rfvar}}$, i.e., there is a time $\tau \in [0,T)$ such that $\solvar(\tau) \in \imodel{\I}{\rfvar}$ and $\solvar(\zeta) \in \imodel{\I}{\rrfvar}$ for all $0 \leq \zeta \leq \tau$.
The proofs make use of the following set $\timeset$ containing all times $t$ such that the solution $\solvar$ never enters $\rfvar$ on the time interval $[0,t]$.
\begin{equation}
\timeset \mnodefequiv \{t~|~\solvar(\zeta) \notin \imodel{\I}{\rfvar}~\text{for all}~0 \leq \zeta \leq t \}
\label{eq:timeset}
\end{equation}

\begin{itemize}
\item[\irref{CORef}] For axiom~\irref{CORef}, assume that~$\iget[state]{\I} \in \imodel{\I}{\lnot{\rfvar} \land \dbox{\pevolvein{\D{x}=\genDE{x}}{\rrfvar \land \lnot{\rfvar}}}{\ivr}}$.
The set of times $\timeset$~\rref{eq:timeset} is non-empty since $\iget[state]{\I} = \solvar(0) \notin \imodel{\I}{\rfvar}$ so it has a supremum $t$ with $0 \leq t \leq \tau$ and $\solvar(\zeta) \notin \imodel{\I}{\rfvar}$ for all $0 \leq \zeta < t$.

\begin{itemize}
\item Suppose $\rfvar,\ivr$ both characterize topologically closed sets.
Since $\rfvar$ characterizes a topologically closed set, its complement formula $\lnot{\rfvar}$ characterizes a topologically open set. If $\solvar(t) \notin \imodel{\I}{\rfvar}$, i.e., $\solvar(t) \in \imodel{\I}{\lnot{\rfvar}}$, then $t < \tau$ and by~\rref{eq:openball}, the solution stays in $\lnot{\rfvar}$ until time $t + \varepsilon$ for some $\varepsilon > 0$, so $t$ is not the supremum of $\timeset$, which is a contradiction.
Thus, $\solvar(t) \in \imodel{\I}{\rfvar}$ and $0 < t$ because $\solvar(0)\notin \imodel{\I}{\rfvar}$.
Hence, $\solvar(\zeta) \in \imodel{\I}{\rrfvar \land \lnot{\rfvar}}$ for all $0 \leq \zeta < t$, which, together with the assumption $\iget[state]{\I} \in \imodel{\I}{\dbox{\pevolvein{\D{x}=\genDE{x}}{\rrfvar \land \lnot{\rfvar}}}{\ivr}}$ implies $\solvar(\zeta) \in \imodel{\I}{\ivr}$ for all $0 \leq \zeta < t$.
Since $\ivr$ characterizes a topologically closed set, this implies $\solvar(t) \in \imodel{\I}{\ivr}$; otherwise, $\solvar(t) \in \imodel{\I}{\lnot{\ivr}}$ and $\lnot{\ivr}$ characterizes an open set, so~\rref{eq:openball} implies $\solvar(\zeta) \in \imodel{\I}{\lnot{\ivr}}$ for some $0 \leq \zeta < t$, which contradicts the earlier observation that $\solvar(\zeta) \in \imodel{\I}{\ivr}$ for all $0 \leq \zeta < t$.
Thus, $\iget[state]{\I} \in \imodel{\I}{\ddiamond{\pevolvein{\D{x}=\genDE{x}}{\ivr}}{\rfvar}}$ because $\solvar(t) \in \imodel{\I}{\rfvar}$ and $\solvar(\zeta) \in \imodel{\I}{\ivr}$ for all $0 \leq \zeta \leq t$.

\item Suppose $\rfvar,\ivr$ both characterize topologically open sets.
Then, $\solvar(t) \notin \imodel{\I}{\rfvar}$; otherwise, $\solvar(t) \in \imodel{\I}{\rfvar}$ and since $\rfvar$ characterizes an open set, by~\rref{eq:openball}, there is a time $0 \leq \zeta < t$ where $\solvar(\zeta) \in \imodel{\I}{\rfvar}$, which contradicts $t$ being the supremum of $\timeset$.
Note that $t < \tau$ and $\solvar(\zeta) \in \imodel{\I}{\rrfvar \land \lnot{\rfvar}}$ for all $0 \leq \zeta \leq t$, which, together with the assumption $\iget[state]{\I} \in \imodel{\I}{\dbox{\pevolvein{\D{x}=\genDE{x}}{\rrfvar \land \lnot{\rfvar}}}{\ivr}}$ implies $\solvar(\zeta) \in \imodel{\I}{\ivr}$ for all $0 \leq \zeta \leq t$.
Since $\ivr$ characterizes a topologically open set, by~\rref{eq:openball}, there exists $\varepsilon > 0$ where $t+\varepsilon < \tau$ such that $\solvar(t+\zeta) \in \imodel{\I}{\ivr}$ for all $0 \leq \zeta \leq \varepsilon$.
By definition of the supremum, for every such $\varepsilon > 0$, there exists $\zeta$ where $0 < \zeta \leq \varepsilon$ and $\solvar(t+\zeta) \in \imodel{\I}{\rfvar}$, which yields the desired conclusion.
\end{itemize}

\item[\irref{CRef}] For axiom~\irref{CRef}, assume that $\iget[state]{\I} \in \imodel{\I}{\lnot{\rfvar}}$ and
\begin{align}
\iget[state]{\I} \in \imodel{\I}{\dbox{\pevolvein{\D{x}=\genDE{x}}{\rrfvar \land \lnot{\rfvar}}}{\interior{\ivr}}}
\label{eq:crefassum}
\end{align}

The set of times $\timeset$~\rref{eq:timeset} is non-empty since $\iget[state]{\I} = \solvar(0) \notin \imodel{\I}{\rfvar}$ so it has a supremum $t$ with $0 \leq t \leq \tau$ and $\solvar(\zeta) \notin \imodel{\I}{\rfvar}$ for all $0 \leq \zeta < t$.
Furthermore, $\solvar(\zeta) \in \imodel{\I}{\rrfvar \land \lnot{\rfvar}}$ for all $0 \leq \zeta < t$, so by~\rref{eq:crefassum}, $\solvar(\zeta) \in \imodel{\I}{\interior{\ivr}}$ for all $0 \leq \zeta < t$.
By assumption, formula $\interior{\ivr}$ characterizes the open topological interior of the closed formula $\ivr$ so by continuity of $\solvar$, $\solvar(t) \in \imodel{\I}{\ivr}$.
Furthermore, the interior of a set is contained in the set itself, i.e., $\imodel{\I}{\interior{\ivr}} \subseteq \imodel{\I}{\ivr}$, so $\solvar(\zeta) \in \imodel{\I}{\ivr}$ for all $0 \leq \zeta \leq t$ .
Classically, either $\solvar(t) \in \imodel{\I}{\rfvar}$ or $\solvar(t) \notin \imodel{\I}{\rfvar}$.

\begin{itemize}
\item If $\solvar(t) \in \imodel{\I}{\rfvar}$, then since $\solvar(\zeta) \in \imodel{\I}{\ivr}$ for all $0 \leq \zeta \leq t$, by definition, $\iget[state]{\I} \in \imodel{\I}{\ddiamond{\pevolvein{\D{x}=\genDE{x}}{\ivr}}{\rfvar}}$.

\item If $\solvar(t) \notin \imodel{\I}{\rfvar}$, then $t < \tau$ and furthermore, by~\rref{eq:crefassum}, $\solvar(t) \in \imodel{\I}{\interior{\ivr}}$.
Since the interior is topologically open, by~\rref{eq:openball}, there exists $\varepsilon > 0$ where $t+\varepsilon < \tau$ such that $\solvar(t+\zeta) \in \imodel{\I}{\interior{\ivr}} \subseteq \imodel{\I}{\ivr}$ for all $0 \leq \zeta \leq \varepsilon$.
By definition of the supremum, for every such $\varepsilon > 0$, there exists $\zeta$ where $0 < \zeta \leq \varepsilon$ and $\solvar(t+\zeta) \in \imodel{\I}{\rfvar}$, which yields the desired conclusion.

\end{itemize}

\item[\irref{SARef}] For axiom~\irref{SARef}, assume that
\begin{equation}
\iget[state]{\I} \in \imodel{\I}{\dbox{\pevolvein{\D{x}=\genDE{x}}{\rrfvar \land \lnot{(\rfvar \land \ivr)}}}{\ivr}}
\label{eq:sarefassum}
\end{equation}
If $\iget[state]{\I} \in \imodel{\I}{\rfvar \land \ivr}$, then $\iget[state]{\I} \in \ddiamond{\pevolvein{\D{x}=\genDE{x}}{\ivr}}{\rfvar}$ trivially by following the solution $\solvar$ for duration $0$.
Thus, assume $\iget[state]{\I} \notin \imodel{\I}{\rfvar \land \ivr}$.
From~\rref{eq:sarefassum}, $\iget[state]{\I} \in \imodel{\I}{\ivr}$ which further implies $\iget[state]{\I} \notin \imodel{\I}{\rfvar}$.
The set of times $\timeset$~\rref{eq:timeset} is non-empty since $\iget[state]{\I} = \solvar(0) \notin \imodel{\I}{\rfvar}$ and has a supremum $t$ with $0 \leq t \leq \tau$ and $\solvar(\zeta) \notin \imodel{\I}{\rfvar}$ for all $0 \leq \zeta < t$.
Thus, $\solvar(\zeta) \in \imodel{\I}{\rrfvar \land \lnot{(\rfvar \land \ivr)}}$ for all $0 \leq \zeta < t$.
By~\rref{eq:sarefassum}, $\solvar(\zeta) \in \imodel{\I}{\ivr}$ for all $0 \leq \zeta < t$.
Classically, either $\solvar(t) \in \imodel{\I}{\rfvar}$ or $\solvar(t) \notin \imodel{\I}{\rfvar}$.

\begin{itemize}
\item Suppose $\solvar(t) \in \imodel{\I}{\rfvar}$, if $\solvar(t) \in \imodel{\I}{\ivr}$, then $\solvar(\zeta) \in \imodel{\I}{\ivr}$ for all $0 \leq \zeta \leq t$ and so, by definition, $\iget[state]{\I} \in \imodel{\I}{\ddiamond{\pevolvein{\D{x}=\genDE{x}}{\ivr}}{\rfvar}}$.
On the other hand, if $\solvar(t) \notin \imodel{\I}{\ivr}$, then $\solvar(\zeta) \in \imodel{\I}{\rrfvar \land \lnot{(\rfvar \land \ivr)}}$ for all $0 \leq \zeta \leq t$, so from~\rref{eq:sarefassum}, $\solvar(t) \in \imodel{\I}{\ivr}$, which yields a contradiction.

If the formula $\rfvar$ is further assumed to characterize a closed set, this sub-case (with $\solvar(t) \in \imodel{\I}{\rfvar}$) is the only possibility.
Otherwise, $\solvar(t) \in \imodel{\I}{\lnot{\rfvar}}$ and $\lnot{\rfvar}$ characterizes an open set, so by~\rref{eq:openball}, for some $\varepsilon > 0 $, $\solvar(t+\zeta) \in \imodel{\I}{\lnot{\rfvar}}$ for all $0 \leq \zeta < \varepsilon$ which contradicts $t$ being the supremum of $\timeset$.

\item Suppose $\solvar(t) \notin \imodel{\I}{\rfvar}$, then $t < \tau$ and $\solvar(\zeta) \in \imodel{\I}{\rrfvar \land \lnot{(\rfvar \land \ivr)}}$ for all $0 \leq \zeta \leq t$, so from~\rref{eq:sarefassum}, $\solvar(t) \in \imodel{\I}{\ivr}$.
Since $\ivr$ is a formula of first-order real arithmetic, solutions of polynomial ODEs either locally progress into the set characterized by $\ivr$ or $\lnot{\ivr}$~\cite{DBLP:journals/jacm/PlatzerT20,DBLP:conf/fm/SogokonJ15}\footnote{This property is specific to sets characterized by first-order formulas of real arithmetic and polynomial ODEs (and certain topologically well-behaved extensions~\cite{DBLP:journals/jacm/PlatzerT20}) and is not true for arbitrary sets and ODEs.}, i.e., there exists $\varepsilon > 0$, where $t + \varepsilon < \tau$, such that either \circled{1} $\solvar(t+\zeta) \in \imodel{\I}{\ivr}$ for all $0 < \zeta \leq \varepsilon$ or \circled{2} $\solvar(t+\zeta) \notin \imodel{\I}{\ivr}$ for all $0 < \zeta \leq \varepsilon$.
Since $t$ is the supremum of $\timeset$, by definition, for every such $\varepsilon$ there exists $\zeta$ where $0 < \zeta \leq \varepsilon$ and $\solvar(t+\zeta) \in \imodel{\I}{\rfvar}$.
In case \circled{1}, since $\solvar(t+\zeta) \in \imodel{\I}{\rfvar}$ and $\solvar(\nu) \in \imodel{\I}{\ivr}$ for all $0 \leq \nu \leq t+\zeta$, then $\iget[state]{\I} \in \imodel{\I}{\ddiamond{\pevolvein{\D{x}=\genDE{x}}{\ivr}}{\rfvar}}$.
If the formula $\ivr$ is further assumed to characterize an open set, this sub-case (\circled{1}) is the only possibility, even if $\ivr$ is not a formula of first-order real arithmetic, because $\solvar(t) \in \imodel{\I}{\ivr}$ implies $\solvar$ continues to satisfy $\ivr$ for some time interval to the right of $t$ by~\rref{eq:openball}.
In case \circled{2}, observe that $\solvar(\nu) \in \imodel{\I}{\rrfvar \land \lnot{(\rfvar \land \ivr)}}$ for all $0 \leq \nu\leq t+\zeta$, from~\rref{eq:sarefassum}, $\solvar(t+\zeta) \in \imodel{\I}{\ivr}$, which yields a contradiction.
\qedhere
\end{itemize}

\end{itemize}
\end{proof}

The refinement axioms are pieced together in refinement chains~\rref{eq:refinementchain} to build ODE existence and liveness proof rules in a step-by-step manner.
However, all such refinement chains~\rref{eq:refinementchain} start from an initial hypothesis $\ddiamond{\pevolvein{\D{x}=\genDE{x}}{\ivr_0}}{\rfvar_0}$ from which the subsequent implications are proved.
The time existence axiom~\irref{TEx} from~\rref{sec:proveglobexist} provides the sole initial hypothesis that is needed for the refinement approach of this article.

\dinferenceRuleQuote{TEx}

\begin{proof}[\textbf{Proof of \rref{lem:timeexist}\hypertarget{proof:proof5}}]
Axiom~\irref{TEx} is derived directly from \dL's solution axiom~\cite{DBLP:journals/jar/Platzer17}.
It also has an easy semantic soundness proof which is given here.
Consider an initial state $\iget[state]{\I}$ and the corresponding modified state $\iget[state]{\I}_{\tau}^d$ where the value of variable $\tau$ is replaced by an arbitrary $d \in \reals$.
The (right-maximal) solution of ODE $\D{\timevar}=1$ from state $\iget[state]{\I}_{\tau}^d$ is given by the function $\solvar : [0,\infty) \to \States$, where $\solvar(\zeta)(t) = \iget[state]{\I}_{\tau}^d(t) + \zeta = \iget[state]{\I}(t) + \zeta$, and $\solvar(\zeta)(y) = \iget[state]{\I}_{\tau}^d(y)$ for all other variables $y$.
In particular, $\solvar(\zeta)(\tau) = d$.
Thus, at any time $\zeta > d - \iget[state]{\I}(t)$, $\solvar(\zeta)(t) = \iget[state]{\I}(t) + \zeta > d = \solvar(\zeta)(\tau)$.
This time $\zeta$ witnesses $\ddiamond{\pevolve{\D{\timevar}=1}}{\timevar > \tau}$.
\end{proof}

\subsection{Topological Side Conditions}
\label{app:topsidecalculus}

In~\rref{subsec:semantics}, topological conditions are defined for formulas $\fvarA$ that only mention free variables $x$ occurring in an ODE $\D{x}=\genDE{x}$.
For example, $\fvarA$ is said to characterize an open set with respect to $x$ iff the set $\imodel{\I}{\fvarA}$ is open when considered as a subset of $\reals^n$ (over variables $x=(x_1,\dots,x_n)$).
This section defines a more general notion, where $\fvarA$ is allowed to mention additional free parameters $y$ that do not occur in the ODE.
Adopting these (parametric) side conditions makes the topological refinement axioms that use them, like~\irref{CORef+CRef}, more general.
Let $(y_1,\dots,y_r) = \allvars \setminus \{x\}$ be parameters, and $\iget[state]{\I} \in \States$ be a state.
For brevity, write $y=(y_1,\dots,y_r)$ for the parameters and $\iget[state]{\I}(y) = (\iget[state]{\I}(y_1), \dots, \iget[state]{\I}(y_r)) \in \reals^r$ for the component-wise projection, and similarly for $\iget[state]{\I}(x) \in \reals^n$.
Given the set $\imodel{\I}{\fvarA} \subseteq \States$ and $\gamma \in \reals^r$, define:
\[
  \imodel{\I}{\fvarA}_\gamma \mdefeq \{ \iget[state]{\I}(x) \in \reals^n~|~\iget[state]{\I} \in \imodel{\I}{\fvarA}, \iget[state]{\I}(y) = \gamma \}
\]

The set $\imodel{\I}{\fvarA}_\gamma \subseteq \reals^n$ is the projection onto variables $x$ of all states $\iget[state]{\I}$ that satisfy $\fvarA$ and having values $\gamma$ for the parameters $y$.
Formula $\fvarA$ \emph{characterizes} a (topologically) open (resp. closed, bounded, compact) set with respect to variables $x$ iff for all $\gamma \in \reals^r$, the set $\imodel{\I}{\fvarA}_\gamma \subseteq \reals^n$ is topologically open (resp. closed, bounded, compact) with respect to the Euclidean topology.

These topological side conditions are decidable~\cite{Bochnak1998} for first-order formulas of real arithmetic $\rfvar,\ivr$ because in Euclidean spaces they can be phrased as conditions using first-order real arithmetic.
The following conditions are standard~\cite{Bochnak1998}, although special care is taken to universally quantify over the parameters $y$.
Let $\rfvar(x,y)$ be a formula mentioning variables $x$ and parameters $y$, then it is (with respect to variables $x$):
\begin{itemize}
\item \emph{open} if the formula $\lforall{y}{ \lforall{x}{\Big(\rfvar(x,y) \limply \lexists{\varepsilon {>} 0}{ \lforall{z}{\big( \norm{x-z}^2 < \varepsilon^2 \limply \rfvar(z,y) \big) } }\Big)} }$ is valid, where the variables $z = (z_1,\dots,z_n)$ are fresh for $\rfvar(x,y)$,
\item \emph{closed} if its complement formula $\lnot{\rfvar(x,y)}$ is open,
\item \emph{bounded} if the formula $\lforall{y}{ \lexists{r {>} 0}{ \lforall{x}{\big( \rfvar(x,y) \limply \norm{x}^2 {<} r^2 \big)} }}$ is valid, where variable $r$ is fresh for $\rfvar(x,y)$,
\item \emph{compact} if it is closed and bounded, by the Heine-Borel theorem~\cite[Theorem 2.4.1]{MR0385023}.
\end{itemize}

There are syntactic criteria that are sufficient (but not necessary\footnote{If there are no parameters $y$, these syntactic checks are ``necessary'' conditions in the sense that every open (resp. closed) formula $\rfvar$ is provably equivalent in real arithmetic to a (computable) formula formed from finite conjunctions and disjunctions of strict (resp. non-strict) inequalities~\cite[Theorem 2.7.2]{Bochnak1998}.}) for checking whether a formula satisfies the semantic conditions.
For example, the formula $\rfvar(x,y)$ is (with respect to variables $x$):
\begin{itemize}
\item \emph{open} if it is formed from finite conjunctions and disjunctions of strict inequalities $(\neq,>,<)$,
\item \emph{closed} if it is formed from finite conjunctions and disjunctions of non-strict inequalities $(=,\geq,\leq)$,
\item \emph{bounded} if it is of the form $\norm{x}^2 \pmc \ptermA(y) \land \rrfvar(x,y)$, where $\ptermA(y)$ is a term depending only on parameters $y$ and $\rrfvar(x,y)$ is a formula. This syntactic criterion uses the fact that the intersection of a bounded set (characterized by $\norm{x}^2 \pmc \ptermA(y)$) with any set (characterized by $\rrfvar(x,y)$) is bounded. The formula $\rfvar(x,y)$ is also \emph{compact} if $\pmc$ is $\leq$ and $\rrfvar(x,y)$ is closed.
\end{itemize}

These syntactic criteria are easily checkable by an implementation that inspects the syntactic shape of input formulas $\rfvar$.
In contrast, checking the semantic topological conditions for $\rfvar$ requires invoking expensive real arithmetic decision procedures.
For example, such a syntactic side condition enables the effective implementation of rule~\irref{cRef} from~\rref{cor:closeddomref} compared to its underlying axiom~\irref{CRef} from~\rref{lem:closeddomref} which is more general but uses requires checking semantic side conditions.

\section{Derived Existence and Liveness Proof Rules}
\label{app:derivationproofs}

This appendix syntactically derives all of the existence and liveness proof rules of the main article.
These derivations only use the sound \dL axioms and proof rules presented in~\rref{app:coreproofs}.
For ease of reference, this appendix is organized into four sections, corresponding to Sections~\ref{sec:globexist}--\ref{sec:impl} of the main article.
The high-level intuition behind these proofs is available as proof sketches in the main article while motivation for important proof steps is given directly in the subsequent proofs.
Further motivation for the surveyed liveness arguments can also be found in their original presentations~\cite{DBLP:journals/logcom/Platzer10,DBLP:conf/hybrid/PrajnaR05,DBLP:journals/siamco/PrajnaR07,DBLP:journals/siamco/RatschanS10,DBLP:conf/fm/SogokonJ15,DBLP:conf/emsoft/TalyT10}.

\subsection{Proofs for Finite-Time Blow Up and Global Existence}
\label{app:globexistproofs}

\begin{proof}[\textbf{Proof of \rref{cor:deporderexist}\hypertarget{proof:proof6}}]
Assume that the ODE $\D{x}=\genDE{x}$ is in dependency order~\rref{eq:deporder}.
The derivation successively removes the ODEs $y_k, y_{k-1}, \dots, y_1$ in reverse dependency order using either axiom~\irref{dBDG} or \irref{dDDG}, as shown below.
This continues until all of the ODEs are removed and the rightmost premise closes by axiom~\irref{TEx}.
The left premises arising from refinement with axioms~\irref{dBDG+dDDG} are the premises of rule~\irref{DEx}.
They are collectively labeled \circled{$\star$} and explained below.
{\footnotesizeoff%
\begin{sequentdeduction}[array]
\linfer[allr]{
\linfer[dBDG+dDDG]{
  \circled{$\star$}
  !
\linfer[dBDG+dDDG]{
  \circled{$\star$}
  !
\linfer[dBDG+dDDG]{
  \circled{$\star$}
  !
  \linfer[TEx]{
  \lclose
}
  {\lsequent{\Gamma}{\ddiamond{\pevolve{\D{\timevar}=1}}{\timevar > \tau}}}
}
    {\vdots}
}
{\lsequent{\Gamma}{\ddiamond{\pevolve{\D{y_1}=g_1(y_1),\dots,\D{y_{k-1}}=g_{k-1}(y_1,\dots,y_{k-1}),\D{\timevar}=1}}{\,\timevar > \tau}}}
}
  {\lsequent{\Gamma}{\ddiamond{\pevolve{\D{y_1}=g_1(y_1),\dots,\D{y_{k-1}}=g_{k-1}(y_1,\dots,y_{k-1}),\D{y_k}=g_k(y_1,\dots,y_k),\D{\timevar}=1}}{\,\timevar > \tau}}}
}
  {\lsequent{\Gamma}{\lforall{\tau}{\ddiamond{\pevolve{\underbrace{\D{y_1}=g_1(y_1),\dots,\D{y_{k-1}}=g_{k-1}(y_1,\dots,y_{k-1}),\D{y_k}=g_k(y_1,\dots,y_k)}_{\D{x}=\genDE{x}~\text{written in dependency order}},\D{\timevar}=1}}{\,\timevar > \tau}}}}
\end{sequentdeduction}
}%

At each step $i$, for $i=k,\dots,1$, the ODE $y_i$ in the succeedent is removed using either axiom~\irref{dBDG} or~\irref{dDDG}, depending on the user-chosen form (\rref{cor:deporderexist}) of postcondition $\rfvar_i$.

\begin{itemize}
\item[\bform]
In case formula $\rfvar_i \mnodefequiv \norm{y_i}^2 \leq \ptermA_i(t,y_1,\dots,y_{i-1})$ is of form \bform (as defined in~\rref{cor:deporderexist}), axiom~\irref{dBDG} is used.
This yields the two stacked premises shown below, where the top premise corresponds to premise \circled{$\star$} above.
The dependency order~\rref{eq:deporder} enables the sound use of axiom~\irref{dBDG} for this refinement step because the ODEs for $y_1,\dots,y_{i-1}$ are not allowed to depend on variables $y_i$.
The term $\ptermA(t,y_1,\dots,y_{i-1})$ also meets the dependency requirements of~\irref{dBDG} because it does not depend on $y_i$.
{\footnotesizeoff%
\begin{sequentdeduction}
\linfer[dBDG]{
  \begin{array}{l}
  \lsequent{\Gamma}{\dbox{\pevolve{\D{y_1}=g_1(y_1),\dots,\D{y_{i-1}}=g_{i-1}(y_1,\dots,y_{i-1}),\D{y_i}=g_i(y_1,\dots,y_i),\D{\timevar}=1}}{\rfvar_i}} \\
  \lsequent{\Gamma}{\ddiamond{\pevolve{\D{y_1}=g_1(y_1),\dots,\D{y_{i-1}}=g_{i-1}(y_1,\dots,y_{i-1}),\D{\timevar}=1}}{\,\timevar > \tau}}
  \end{array}
}
  {\lsequent{\Gamma}{\ddiamond{\pevolve{\D{y_1}=g_1(y_1),\dots,\D{y_{i-1}}=g_{i-1}(y_1,\dots,y_{i-1}),\D{y_i}=g_i(y_1,\dots,y_i),\D{\timevar}=1}}{\,\timevar > \tau}}}
\end{sequentdeduction}
}%

\item[\dform]
In case formula $\rfvar_i \mnodefequiv 2\dotp{y_i}{g_i(y_1,\dots,y_i)} \leq L_i(t,y_1,\dots,y_{i-1}) \norm{y_i}^2+M_i(t,y_1,\dots,y_{i-1})$ is of form \dform (as defined in~\rref{cor:deporderexist}), axiom~\irref{dDDG} is used instead.
Again, terms $L_i(t,y_1,\dots,y_{i-1}), M_i(t,y_1,\dots,y_{i-1})$ meet the dependency requirements of~\irref{dDDG} because they do not depend on $y_i$.
The top premise corresponds to premise \circled{$\star$} above, while the ODE for $y_i$ is removed in the bottom premise.
{\footnotesizeoff%
\begin{sequentdeduction}
\linfer[dDDG]{
  \begin{array}{l}
  \lsequent{\Gamma}{\dbox{\pevolve{\D{y_1}=g_1(y_1),\dots,\D{y_{i-1}}=g_{i-1}(y_1,\dots,y_{i-1}),\D{y_i}=g_i(y_1,\dots,y_i),\D{\timevar}=1}}{\rfvar_i}} \\
  \lsequent{\Gamma}{\ddiamond{\pevolve{\D{y_1}=g_1(y_1),\dots,\D{y_{i-1}}=g_{i-1}(y_1,\dots,y_{i-1}),\D{\timevar}=1}}{\,\timevar > \tau}}
  \end{array}
}
  {\lsequent{\Gamma}{\ddiamond{\pevolve{\D{y_1}=g_1(y_1),\dots,\D{y_{i-1}}=g_{i-1}(y_1,\dots,y_{i-1}),\D{y_i}=g_i(y_1,\dots,y_i),\D{\timevar}=1}}{\,\timevar > \tau}}}
  \\[-\normalbaselineskip]\tag*{\qedhere}
\end{sequentdeduction}
}%
\end{itemize}
\end{proof}

\begin{proof}[\textbf{Proof of \rref{cor:globalexistbase}\hypertarget{proof:proof7}}]
The proof closely follows the proof sketch for~\rref{cor:globalexistbase} but with an extra step to ensure that the chosen terms $L,M$ are within the term language of \dL.
Let the ODE $\D{x}=\genDE{x}$ be globally Lipschitz and $C$ be the (positive) Lipschitz constant for $f$, i.e., $\norm{\genDE{x}-\genDE{y}} \leq C \norm{x-y}$ .
Then $f$ satisfies the following inequality, where the first step~\rref{eq:globlipschitzboundtwo} is proved in the sketch but its RHS contains norms $\norm{\cdot}$ which are not in the term syntax (\rref{subsec:syntax}).
The inequality~\rref{eq:globlipschitzboundtwo} is prolonged by using inequality~\rref{eq:norminequality} to remove these non-squared norm terms, which yields corresponding choices of bounding \dL terms $L,M$.
\begin{align}
2\dotp{x}{f(x)} \overset{\rref{eq:globlipschitzboundtwo}}{\leq} \big(2C + \norm{f(0)}\big) \norm{x}^2  + \norm{f(0)} \overset{\rref{eq:norminequality}}{\leq} \underbrace{\big(2C + \frac{1}{2} (1+\norm{f(0)}^2)\big)}_{L} \norm{x}^2  + \underbrace{\frac{1}{2}(1+\norm{f(0)}^2) }_{M}
\label{eq:globalexistbaseproof}
\end{align}

The inequality~\rref{eq:globalexistbaseproof} is a valid real arithmetic formula and is thus provable by rule~\irref{qear}.
This enables the derivation below using axiom~\irref{dDDG} because $L,M$ satisfy the respective variable constraints of the axiom.
The resulting left premise is proved, after a~\irref{dW} step, by~\irref{qear}.
The resulting right premise, after the ODEs $\D{x}=\genDE{x}$ have been removed, is proved by axiom~\irref{TEx}.
{\footnotesizeoff%
\begin{sequentdeduction}[array]
\linfer[allr]{
\linfer[dDDG]{
  \linfer[dW]{
    \linfer[qear]{
        \lclose
    }
    {\lsequent{}{2\dotp{x}{f(x)} \leq L \norm{x}^2 + M}}
  }
  {\lsequent{}{\dbox{\pevolve{\D{x}=\genDE{x},\D{\timevar}=1}}{\,2\dotp{x}{f(x)} \leq L \norm{x}^2 + M}}} !
  \linfer[TEx]{
      \lclose
  }
  {\lsequent{}{\ddiamond{\pevolve{\D{\timevar}=1}}{\,\timevar > \tau}}}
}
  {\lsequent{}{\ddiamond{\pevolve{\D{x}=\genDE{x},\D{\timevar}=1}}{\,\timevar > \tau}}}
}
  {\lsequent{}{\lforall{\tau}{\ddiamond{\pevolve{\D{x}=\genDE{x},\D{\timevar}=1}}{\,\timevar > \tau}}}}
  \\[-\normalbaselineskip]\tag*{\qedhere}
\end{sequentdeduction}
}%
\end{proof}

\begin{proof}[\textbf{Proof of \rref{cor:globalexistaffine}\hypertarget{proof:proof8}}]
Assume that the ODE $\D{x}=\genDE{x}$ has affine dependency order~\rref{eq:deporder}, i.e., where each ODE $\D{y_i} = g_i(y_1,\dots,y_i)$ is of the affine form $\D{y_i} = A_i(y_1,\dots,y_{i-1}) y_i + b_i(y_1,\dots,y_{i-1})$ for some matrix and vector terms $A_i,b_i$ respectively with the indicated variable dependencies.
From the proof sketch for \rref{cor:globalexistaffine}, $A_i,b_i$ satisfy inequality~\rref{eq:globexistaffinebound} for each $i = 1,\dots,k$.
Like the proof of inequality~\rref{eq:globalexistbaseproof}, inequality~\rref{eq:globexistaffinebound} is prolonged by inequality~\rref{eq:norminequality} to remove non-squared norm terms in its RHS, which yields corresponding choices of bounding \dL terms $L_i, M_i$.
\begin{align}
2\dotp{y_i}{(A_iy_i + b_i)} \overset{\rref{eq:globexistaffinebound}}{\leq} (2\norm{A_i}+\norm{b_i})\norm{y_i}^2 + \norm{b_i}
\overset{\rref{eq:norminequality}}{\leq} \underbrace{\big(1{+}\norm{A_i}^2+\frac{1}{2}(1{+}\norm{b_i}^2)\big)}_{L_i}\norm{y_i}^2 + \underbrace{\frac{1}{2}(1{+}\norm{b_i}^2)}_{M_i}
\label{eq:globalexistaffineproof}
\end{align}

The inequality from~\rref{eq:globalexistaffineproof} is a valid real arithmetic formula, and thus provable by~\irref{qear} for each $i=1,\dots,k$.
The derivation uses rule~\irref{DEx}, where the postcondition of each premise is chosen to be of form \dform.
The resulting premises are all proved, after a~\irref{dW} step, by~\irref{qear} with the above choice of $L_i,M_i$ for each $i = 1,\dots,k$.
{\footnotesizeoff%
\begin{sequentdeduction}[array]
\linfer[DEx]{
  \linfer[dW]{
    \linfer[qear]{\lclose}{\lsequent{}{2\dotp{y_1}{(A_1y_1 + b_1)} \leq L_1 \norm{y_1}^2 + M_1}}
  }
  {\lsequent{}{\dbox{\pevolve{\D{y_1}=g_1(y_1),\D{\timevar}=1}}{\rfvar_1}}} !
  \cdots !
  \linfer[dW]{
    \linfer[qear]{\lclose}{\lsequent{}{2\dotp{y_k}{(A_ky_k + b_k)} \leq L_k \norm{y_k}^2 + M_k}}
  }
  {\lsequent{}{\dbox{\pevolve{\D{y_1}=g_1(y_1),\dots,\D{y_k}=g_k(y_1,\dots,y_k),\D{\timevar}=1}}{\rfvar_k}}}
}
  {\lsequent{}{\lforall{\tau}{\ddiamond{\pevolve{\D{x}=\genDE{x},\D{\timevar}=1}}{\,\timevar > \tau \qedhere}}}}
\end{sequentdeduction}
}%
\end{proof}

\begin{proof}[\textbf{Proof of \rref{cor:boundedexistbase}\hypertarget{proof:proof9}}]
The derivation starts by Skolemizing with~\irref{allr}, then switching the diamond modality in the succedent to a box modality in the antecedent using~\irref{diamond+notr}.
The postcondition of the box modality is simplified using the propositional tautologies $\lnot{(\fvarA \lor \fvarB)} \lbisubjunct \lnot{\fvarA} \land \lnot{\fvarB}$ and $\lnot{\lnot{\fvarA}} \lbisubjunct \fvarA$.
Axiom~\irref{band+andl} splits the conjunction in the antecedent, before~\irref{diamond} is used again to flip the left antecedent to a diamond modality in the succedent.
These (mostly) propositional steps recover the more verbose phrasing of~\irref{BEx} from~\rref{eq:existence3}.
{\footnotesizeoff%
\begin{sequentdeduction}[array]
\linfer[allr]{
\linfer[diamond+notr]{
\linfer[band+andl]{
\linfer[diamond+notl]{
   \lsequent{\dbox{\pevolve{\D{x}=\genDE{x},\D{\timevar}=1}}{\boundedf(x)}}{\ddiamond{\pevolve{\D{x}=\genDE{x},\D{\timevar}=1}}{\,\timevar > \tau}}
}
  {\lsequent{\dbox{\pevolve{\D{x}=\genDE{x},\D{\timevar}=1}}{\lnot{(\timevar > \tau)}}, \dbox{\pevolve{\D{x}=\genDE{x},\D{\timevar}=1}}{\boundedf(x)}}{\lfalse}}
}
  {\lsequent{\dbox{\pevolve{\D{x}=\genDE{x},\D{\timevar}=1}}{(\lnot{(\timevar > \tau)} \land \boundedf(x))}}{\lfalse}}
}
  {\lsequent{}{\ddiamond{\pevolve{\D{x}=\genDE{x},\D{\timevar}=1}}{(\timevar > \tau \lor \lnot{\boundedf(x)})}}}
}
  {\lsequent{}{\lforall{\tau}{\ddiamond{\pevolve{\D{x}=\genDE{x},\D{\timevar}=1}}{(\timevar > \tau \lor \lnot{\boundedf(x)})}}}}
\end{sequentdeduction}
}%

The formula $\boundedf(x)$ is assumed to characterize a bounded set with respect to the variables $x$.
The closure of this set (with respect to $x$) is compact so the continuous norm function $\norm{\cdot}^2$ attains its maximum value on that set.
Hence, the formula $\lexists{D}{\lforall{x}{(\boundedf(x) \limply \norm{x}^2 \leq D)}}$ is valid in first-order real arithmetic, and is thus provable by~\irref{qear}.
The derivation continues with a~\irref{cut} of this formula and Skolemizing with~\irref{existsl}.
Axiom~\irref{dBDG} is then used to remove the ODE $\D{x}=\genDE{x}$ with $\ptermA(x) \mnodefeq D$.
The resulting right premise is proved by axiom~\irref{TEx}, while the resulting left premise is labeled \circled{1} and continued below.
{\footnotesizeoff%
\begin{sequentdeduction}[array]
\linfer[cut+qear+existsl]{
\linfer[dBDG]{
    \circled{1} !
    \linfer[TEx]{\lclose}
    {\lsequent{}{\ddiamond{\pevolve{\D{\timevar}=1}}{\,\timevar > \tau}}}
}
  {\lsequent{\dbox{\pevolve{\D{x}=\genDE{x},\D{\timevar}=1}}{\boundedf(x)},\lforall{x}{(\boundedf(x) \limply \norm{x}^2 \leq D)}}{\ddiamond{\pevolve{\D{x}=\genDE{x},\D{\timevar}=1}}{\,\timevar > \tau}}}
}
  {\lsequent{\dbox{\pevolve{\D{x}=\genDE{x},\D{\timevar}=1}}{\boundedf(x)}}{\ddiamond{\pevolve{\D{x}=\genDE{x},\D{\timevar}=1}}{\,\timevar > \tau}}}
\end{sequentdeduction}
}%

From premise \circled{1}, a~\irref{dC} step adds the postcondition of the leftmost antecedent, $\boundedf(x)$, to the domain constraint.
Since the remaining antecedent is universally quantified over variables $x$, it is soundly kept across an application of a subsequent~\irref{dW} step, and the proof is completed with~\irref{alll+implyl}.
{\footnotesizeoff%
\begin{sequentdeduction}[array]
 \linfer[dC]{
 \linfer[dW]{
 \linfer[alll+implyl]{
    \lclose
  }
    {\lsequent{\lforall{x}{(\boundedf(x) \limply \norm{x}^2 \leq D)}, \boundedf(x)}{\norm{x}^2 \leq D}}
  }
    {\lsequent{\lforall{x}{(\boundedf(x) \limply \norm{x}^2 \leq D)}}{\dbox{\pevolvein{\D{x}=\genDE{x},\D{\timevar}=1}{\boundedf(x)}}{\,\norm{x}^2 \leq D}}}
    }
    {\lsequent{\dbox{\pevolve{\D{x}=\genDE{x},\D{\timevar}=1}}{\boundedf(x)},\lforall{x}{(\boundedf(x) \limply \norm{x}^2 \leq D)}}{\dbox{\pevolve{\D{x}=\genDE{x},\D{\timevar}=1}}{\,\norm{x}^2 \leq D}}}
  \\[-\normalbaselineskip]\tag*{\qedhere}
\end{sequentdeduction}
}%
\end{proof}

\begin{proof}[\textbf{Proof of \rref{cor:boundedexistgen}\hypertarget{proof:proof10}}]
Assume the ODE $\D{x}=\genDE{x}$ is in dependency order~\rref{eq:deporder}, and the indices $i=1,\dots,k$ are partitioned into disjoint sets $L,N$ as in~\rref{cor:boundedexistgen}.
The first step Skolemizes with~\irref{allr}.
{\footnotesizeoff%
\begin{sequentdeduction}[array]
\linfer[allr]{
  \lsequent{}{\ddiamond{\pevolve{\D{x}=\genDE{x},\D{\timevar}=1}}{\big(\timevar > \tau \lor \lorfold_{j \in N}\lnot{\boundedf_j(y_j)}\big)}}
}
  {\lsequent{}{\lforall{\tau}{\ddiamond{\pevolve{\D{x}=\genDE{x},\D{\timevar}=1}}{\big(\timevar > \tau \lor \lorfold_{j \in N}\lnot{\boundedf_j(y_j)}\big)}}}}
\end{sequentdeduction}
}%

The derivation uses ideas from Corollaries~\ref{cor:deporderexist},\ref{cor:globalexistaffine}, and~\ref{cor:boundedexistbase} to remove the ODE $\D{y_i} = g_i(y_1,\dots,y_i)$ at each step.
The corresponding disjunct $\lnot{\boundedf_i(y_i)}$ (if present) is also removed from the succedent when $i \in N$.
More precisely, at each step $i$, the derivation turns a succedent of the form~\rref{eq:preremove} to the form~\rref{eq:postremove} below which removes the variables $y_i$ from the formula.

\begin{align}
\ddiamond{\pevolve{\D{y_1}=g_1(y_1),\dots,\D{y_{i-1}}=g_{i-1}(y_1,\dots,y_{i-1}),\D{y_i}=g_i(y_1,\dots,y_i),\D{\timevar}=1}}{\big(\timevar > \tau \lor \lorfold_{j \in N \cap \{ 1, \dots, i\}}\lnot{\boundedf_j(y_j)}\big)}
\label{eq:preremove}
\end{align}

\begin{align}
\ddiamond{\pevolve{\D{y_1}=g_1(y_1),\dots,\D{y_{i-1}}=g_{i-1}(y_1,\dots,y_{i-1}),\D{\timevar}=1}}{\big(\timevar > \tau \lor \lorfold_{j \in N \cap \{ 1, \dots, i-1\}}\lnot{\boundedf_j(y_j)}\big)}
\label{eq:postremove}
\end{align}
The derivation proceeds with two cases depending on whether $i \in L$ or $i \in N$.
\begin{itemize}
\item For each $i \in L$ (similarly to~\rref{cor:globalexistaffine}), the ODE $\D{y_i} = A_i(y_1,\dots,y_{i-1}) y_i + b_i(y_1,\dots,y_{i-1})$ is affine for some matrix and vector terms $A_i,b_i$ respectively with the indicated variable dependencies.
The RHS of this affine ODE satisfies the inequality~\rref{eq:globalexistaffineproof} with terms $L_i,M_i$ as given in~\rref{eq:globalexistaffineproof}.
Axiom~\irref{dDDG} is used with those choices of $L_i,M_i$, which removes the ODEs for $y_i$ in the resulting right premise.
The resulting left premise is labeled \circled{1} and explained below.
Note that the freshness conditions of axiom~\irref{dDDG} are met because the postcondition of the succedent does not mention variables $y_i$ for $i \in L$.
Similarly, the indices from $j\in N  \cap \{ 1, \dots, i\}$ are equal to those from $j\in N  \cap \{ 1, \dots, i-1\}$ because $i \notin N$.
{\footnotesizeoff%
\begin{sequentdeduction}
\linfer[dDDG]{
  \circled{1} &
  \lsequent{}{\ddiamond{\pevolve{\D{y_1}=g_1(y_1),\dots,\D{y_{i-1}}=g_{i-1}(y_1,\dots,y_{i-1}),\D{\timevar}=1}}{\big(\timevar > \tau \lor \lorfold_{i\in N  \cap \{ 1, \dots, i-1\}}\lnot{\boundedf_i(y_i)}\big)}}
}
  {\lsequent{}{\ddiamond{\pevolve{\D{y_1}=g_1(y_1),\dots,\D{y_{i-1}}=g_{i-1}(y_1,\dots,y_{i-1}),\D{y_i}=g_i(y_1,\dots,y_i),\D{\timevar}=1}}{\big(\timevar > \tau \lor \lorfold_{i\in N  \cap \{ 1, \dots, i\}}\lnot{\boundedf_i(y_i)}\big)}}}
\end{sequentdeduction}
}%

From premise \circled{1}, the proof is completed with a~\irref{dW} and~\irref{qear} step using inequality~\rref{eq:globalexistaffineproof}.
{\footnotesizeoff%
\begin{sequentdeduction}[array]
\linfer[dW]{
    \linfer[qear]{\lclose}{\lsequent{}{2\dotp{y_i}{(A_iy_i + b_i)} \leq L_i \norm{y_i}^2 + M_i}}
  }
  {\lsequent{}{\dbox{\pevolve{\D{y_1}=g_1(y_1),\dots,\D{y_i}=g_i(y_1,\dots,y_i),\D{\timevar}=1}}{\,2\dotp{y_i}{(A_iy_i + b_i)} \leq L_i \norm{y_i}^2 + M_i}}}
\end{sequentdeduction}
}%

\item For each $i \in N$ (similarly to~\rref{cor:boundedexistbase}), the boundedness assumption on $y_i$ is first extracted from the succedent, with the abbreviation $\rrfvar \mnodefequiv (\timevar > \tau \lor \lorfold_{j\in N \cap \{1,\dots,i-1\}}\lnot{\boundedf_j(y_j)})$.
The bottommost succedent is similarly abbreviated using the propositional tautology $\big(\timevar > \tau \lor \lorfold_{j\in N \cap \{1,\dots,i\}}\lnot{\boundedf_j(y_j)}\big) \lbisubjunct \rrfvar \lor \lnot{\boundedf_i(y_i)}$.
{\footnotesizeoff%
\begin{sequentdeduction}
\linfer[diamond+notr]{
  \linfer[band+andl]{
\linfer[diamond+notl]{
   \lsequent{\dbox{\pevolve{\D{y_1}=g_1(y_1),\dots,\D{y_i}=g_i(y_1,\dots,y_i),\D{\timevar}=1}}{\boundedf_i(y_i)}}{\ddiamond{\pevolve{\D{y_1}=g_1(y_1),\dots,\D{y_i}=g_i(y_1,\dots,y_i),\D{\timevar}=1}}{\rrfvar}}
}
  {\lsequent{\dbox{\pevolve{\D{y_1}=g_1(y_1),\dots,\D{y_i}=g_i(y_1,\dots,y_i),\D{\timevar}=1}}{\lnot{\rrfvar}}, \dbox{\pevolve{\D{y_1}=g_1(y_1),\dots,\D{y_i}=g_i(y_1,\dots,y_i),\D{\timevar}=1}}{\boundedf_i(y_i)}}{\lfalse}}
}
  {\lsequent{\dbox{\pevolve{\D{y_1}=g_1(y_1),\dots,\D{y_i}=g_i(y_1,\dots,y_i),\D{\timevar}=1}}{\big( \lnot{\rrfvar} \land \boundedf_i(y_i)\big)}}{\lfalse}}
}
  {\lsequent{}{\ddiamond{\pevolve{\D{y_1}=g_1(y_1),\dots,\D{y_i}=g_i(y_1,\dots,y_i),\D{\timevar}=1}}{(\rrfvar \lor \lnot{\boundedf_i(y_i)})}}}
\end{sequentdeduction}
}%

The formula $\boundedf_i(y_i)$ is assumed to characterize a bounded set with respect to the variables $y_i$.
Thus, like~\rref{cor:boundedexistbase}, the~\irref{cut} of the formula $\lexists{D_i}{\lforall{y_i}{(\boundedf_i(y_i) \limply \norm{y_i}^2 \leq D_i)}}$ is proved by~\irref{qear}.
The derivation continues by Skolemizing, abbreviating $\rsfvar \mnodefequiv \dbox{\pevolve{\D{y_1}=g_1(y_1),\dots,\D{y_i}=g_i(y_1,\dots,y_i),\D{\timevar}=1}}{\boundedf_i(y_i)}$.
Axiom~\irref{dBDG} is then used with $\ptermA(y_i) \mnodefeq D_i$, which removes the ODEs for $y_i$ in the resulting right premise.
The resulting left premise is labeled \circled{2} and explained below.
{\footnotesizeoff%
\begin{sequentdeduction}[array]
\linfer[cut+qear+existsl]{
\linfer[dBDG]{
  \circled{2} ! \lsequent{}{\ddiamond{\pevolve{\D{y_1}=g_1(y_1),\dots,\D{y_{i-1}}=g_{i-1}(y_1,\dots,y_{i-1}),\D{\timevar}=1}}{\rrfvar}}
}
  {\lsequent{\rsfvar,\lforall{y_i}{(\boundedf_i(y_i) \limply \norm{y_i}^2 \leq D_i)}}{\ddiamond{\pevolve{\D{y_1}=g_1(y_1),\dots,\D{y_i}=g_i(y_1,\dots,y_i),\D{\timevar}=1}}{\rrfvar}}}
}
  {\lsequent{\rsfvar}{\ddiamond{\pevolve{\D{y_1}=g_1(y_1),\dots,\D{y_i}=g_i(y_1,\dots,y_i),\D{\timevar}=1}}{\rrfvar}}
}
\end{sequentdeduction}
}%

The derivation continues from premise \circled{2} identically to~\rref{cor:boundedexistbase}, with a~\irref{dC} step to add the postcondition of the antecedent $\rsfvar$ to the domain constraint.
The proof is completed with~\irref{dW} and~\irref{alll+implyl}.
The universally quantified antecedent $\lforall{y_i}{\dots}$ is soundly kept across the use of~\irref{dW} since it does not mention any of the bound variables $y_1,\dots,y_i,t$ of the ODE free.
{\footnotesizeoff%
\begin{sequentdeduction}[array]
 \linfer[dC]{
 \linfer[dW]{
 \linfer[alll+implyl]{
    \lclose
  }
    {\lsequent{\lforall{y_i}{(\boundedf(y_i) \limply \norm{y_i}^2 \leq D_i)}, \boundedf(y_i)}{\norm{y_i}^2 \leq D_i}}
  }
    {\lsequent{\lforall{y_i}{(\boundedf(y_i) \limply \norm{y_i}^2 \leq D_i)}}{\dbox{\pevolvein{\D{y_1}=g_1(y_1),\dots,\D{y_i}=g_i(y_1,\dots,y_i),\D{\timevar}=1}{\boundedf(y_i)}}{\,\norm{y_i}^2 \leq D_i}}}
    }
    {\lsequent{\rsfvar,\lforall{y_i}{(\boundedf(y_i) \limply \norm{y_i}^2 \leq D_i)}}{\dbox{\pevolve{\D{y_1}=g_1(y_1),\dots,\D{y_i}=g_i(y_1,\dots,y_i),\D{\timevar}=1}}{\,\norm{y_i}^2 \leq D_i}}}
\end{sequentdeduction}
}%
\end{itemize}

Using the steps for $i=k,\dots,1$ (where either $i\in L$ or $i \in N$) successively removes the ODEs for $y_k,\dots,y_i$ from the succedent.
This is shown in the derivation below and the proof is completed using~\irref{TEx}.
{\footnotesizeoff%
\begin{sequentdeduction}[array]
\linfer[]{
\linfer[]{
\linfer[]{
\linfer[]{
\linfer[TEx]{
    \lclose
}
   {\lsequent{}{\ddiamond{\pevolve{\D{\timevar}=1}}{\timevar > \tau }}}
}
  {\lsequent{}{\ddiamond{\pevolve{\D{\timevar}=1}}{\big(\timevar > \tau \lor \lorfold_{j \in N \cap \emptyset}\lnot{\boundedf_j(y_j)}\big)}}}
}
  {\vdots}
}
  {\lsequent{}{\ddiamond{\pevolve{\D{y_1}=g_1(y_1),\dots,\D{y_{k-1}}=g_{k-1}(y_1,\dots,y_{k-1}),\D{\timevar}=1}}{\big(\timevar > \tau \lor \lorfold_{j \in N \cap \{1, \dots, k-1\}}\lnot{\boundedf_j(y_j)}\big)}}}
}
  {\lsequent{}{\ddiamond{\pevolve{\D{y_1}=g_1(y_1),\dots,\D{y_{k-1}}=g_{k-1}(y_1,\dots,y_{k-1}),\D{y_k}=g_k(y_1,\dots,y_k),\D{\timevar}=1}}{\big(\timevar > \tau \lor \lorfold_{j \in N}\lnot{\boundedf_j(y_j)}\big)}}}
  \\[-\normalbaselineskip]\tag*{\qedhere}
\end{sequentdeduction}
}%
\end{proof}

\begin{proof}[\textbf{Proof of \rref{prop:globalexistcomplete}\hypertarget{proof:proof11}}]
The ODE $\D{x}=\genDE{x}$ is assumed to have a global solution that is syntactically representable by polynomial term $X(t)$ in the term language (\rref{subsec:syntax}).
Formally, this representability condition means that for any initial state $\omega$, the mathematical solution $\solvar : [0, \infty) \to \States$ exists globally and in addition, for each time $\tau \in [0,\infty)$, the solution satisfies $\solvar(\tau) = \ivaluation{\dLint[state=\omega_t^\tau]}{X(t)}$, where $\ivaluation{\dLint[state=\omega_t^\tau]}{X(t)}$ is the value of term $X(t)$ in state $\omega$ with the value of time variable $t$ set to $\tau$.
This implies that the following formula is valid because terms $x,t-t_0$ have value $\solvar(\tau)$ and $\tau$ respectively at time $\tau \in [0,\infty)$ along the ODE $\D{x}=\genDE{x},\D{\timevar}=1$.
The variables $x_0,t_0$ store the initial values of $x,t$ respectively, which may be needed for the syntactic representation $X(t)$ of the solution.
Additionally, the syntactic representation $X(t)$ may mention parameters $y \notin x$ that remain constant for the ODE $\D{x}=\genDE{x}$.
\begin{align}
t=t_0 \land x=x_0 \limply \dbox{\D{x}=\genDE{x},\D{t}=1}{\,x = X(t-t_0)}
\label{eq:globalexistcompleterep}
\end{align}

Validity of formula~\rref{eq:globalexistcompleterep} further implies that~\rref{eq:globalexistcompleterep} is provable because of the \dL completeness theorem for equational invariants~\cite[Theorem 4.5]{DBLP:journals/jar/Platzer17,DBLP:journals/jacm/PlatzerT20}.
The derivation of global existence for $\D{x}=\genDE{x}$ first Skolemizes with~\irref{allr}, then introduces fresh variables $x_0,t_0$ storing the initial values of $x,t$ with~\irref{cut+qear+existsl}.
Axiom~\irref{dBDG} is used with $\ptermA(t) \mnodefeq \norm{X(t-t_0)}^2$ to remove the ODEs $\D{x}=\genDE{x}$.
The resulting right premise is proved by axiom~\irref{TEx}, while the resulting left premise is abbreviated \circled{1} and proved below.
{\footnotesizeoff%
\begin{sequentdeduction}[array]
\linfer[allr]{
\linfer[cut+qear+existsl]{
\linfer[dBDG]{
\circled{1}
!
\linfer[TEx]{
    \lclose
}
   {\lsequent{}{\ddiamond{\pevolve{\D{\timevar}=1}}{\,\timevar > \tau }}}
}
  {\lsequent{t=t_0 \land x=x_0 }{\ddiamond{\pevolve{\D{x}=\genDE{x},\D{\timevar}=1}}{\,\timevar > \tau}}}
}
  {\lsequent{}{\ddiamond{\pevolve{\D{x}=\genDE{x},\D{\timevar}=1}}\,{\timevar > \tau}}}
}
  {\lsequent{}{\lforall{\tau}{\ddiamond{\pevolve{\D{x}=\genDE{x},\D{\timevar}=1}}{\,\timevar > \tau}}}}
\end{sequentdeduction}
}%

From \circled{1}, the derivation continues with a~\irref{dC} using the provable formula~\rref{eq:globalexistcompleterep}.
The premise after~\irref{dW} is proved by~\irref{qear} after rewriting the succedent with the equality $x=X(t-t_0)$ and by reflexivity of $\leq$.
{\footnotesizeoff%
\begin{sequentdeduction}[array]
\linfer[dC]{
\linfer[dW]{
\linfer[qear]{
    \lclose
}
  {\lsequent{x = X(t-t_0)}{\norm{x}^2 \leq \norm{X(t-t_0)}^2}}
}
  {\lsequent{}{\dbox{\pevolvein{\D{x}=\genDE{x},\D{\timevar}=1}{x = X(t-t_0)}}{\,\norm{x}^2 \leq \norm{X(t-t_0)}^2}}}
}
{\lsequent{t=t_0 \land x=x_0}{\dbox{\pevolve{\D{x}=\genDE{x},\D{\timevar}=1}}{\,\norm{x}^2 \leq \norm{X(t-t_0)}^2}}}
\end{sequentdeduction}
}%

Note that, instead of assuming that $X(t)$ is a syntactically representable (global) solution for the ODE $\D{x}=\genDE{x}$, it also suffices for this derivation to assume that premise \circled{1} is provable, i.e., that the term $\norm{X(t-t_0)}^2$ (with free variables $t,x_0,t_0$ and parameters $y$) is a provable upper bound on the squared norm of $x$ along solutions of the ODE.
\end{proof}

\subsection{Proofs for Liveness Without Domain Constraints}
\label{app:livenodomproofs}

\begin{proof}[\textbf{Proof of \rref{cor:atomicdvcmp}\hypertarget{proof:proof12}}]
The complete derivation of rule~\irref{dVcmpA} using refinement axiom~\irref{Prog} and rule~\irref{dIcmp} is already given in the proof sketch for~\rref{cor:atomicdvcmp} so it is not repeated here.

The derivation of~\irref{dVcmp} (as a corollary of~\irref{dVcmpA}) starts by introducing fresh variables $\ptermA_0, i$ representing the initial values of $\ptermA$ and the multiplicative inverse of $\varepsilon()$ respectively using arithmetic cuts (\irref{cut+qear}) and Skolemizing (\irref{existsl}).
It then uses \irref{dGt} to introduce a fresh time variable to the system of differential equations:
{\footnotesizeoff%
\begin{sequentdeduction}[array]
  \linfer[cut+qear]{
  \linfer[existsl]{
  \linfer[dGt]{
    \lsequent{\Gamma, \constt{\varepsilon} > 0, \ptermA=\ptermA_0, i\varepsilon() = 1, \timevar=0}{\ddiamond{\pevolve{\D{x}=\genDE{x},\D{\timevar}=1}}{\ptermA \cmp 0}}
  }
    {\lsequent{\Gamma, \constt{\varepsilon} > 0, \ptermA=\ptermA_0, i\varepsilon() = 1}{\ddiamond{\pevolve{\D{x}=\genDE{x}}}{\ptermA \cmp 0}}}
  }
  {\lsequent{\Gamma, \constt{\varepsilon} > 0, \lexists{\ptermA_0}{(\ptermA=\ptermA_0)}, \lexists{i}{(i\varepsilon() = 1)}}{\ddiamond{\pevolve{\D{x}=\genDE{x}}}{\ptermA \cmp 0}}}
  }
  {\lsequent{\Gamma, \constt{\varepsilon} > 0}{\ddiamond{\pevolve{\D{x}=\genDE{x}}}{\ptermA \cmp 0}} }
\end{sequentdeduction}
}%

Next, an initial liveness assumption $\ddiamond{\pevolve{\D{x}=\genDE{x},\D{\timevar}=1}}{\ptermA_0 + \constt{\varepsilon}\timevar > 0}$ is cut into the antecedents after which rule~\irref{dVcmpA} is used to obtain the premise of~\irref{dVcmp}.
Intuitively, this initial liveness assumption says that the solution exists for sufficiently long, so that the term $\ptermA_0 + \constt{\varepsilon}\timevar$ (which is proved to lower bound $\ptermA$) becomes positive for sufficiently large $\timevar$.
This cut is abbreviated \circled{1} and proved below.
{\footnotesizeoff%
\begin{sequentdeduction}[array]
  \linfer[cut]{
  \linfer[dVcmpA]{
    \lsequent{\lnot{(\ptermA \cmp 0)}}{\lied[]{\genDE{x}}{\ptermA}\geq \constt{\varepsilon}}
  }
    {\lsequent{\Gamma, \ptermA=\ptermA_0, \timevar=0, \ddiamond{\pevolve{\D{x}=\genDE{x},\D{\timevar}=1}}{\,\ptermA_0 + \constt{\varepsilon}\timevar > 0} }{\ddiamond{\pevolve{\D{x}=\genDE{x},\D{\timevar}=1}}{\ptermA \cmp 0}}} \quad
    \circled{1}
  }
  {\lsequent{\Gamma, \constt{\varepsilon} > 0, \ptermA=\ptermA_0, i\varepsilon() = 1, \timevar=0}{\ddiamond{\pevolve{\D{x}=\genDE{x},\D{\timevar}=1}}{\ptermA \cmp 0}}}
\end{sequentdeduction}
}%

From premise~\circled{1}, a monotonicity step~\irref{MdW} equivalently rephrases the postcondition of the cut in real arithmetic.
The arithmetic rephrasing works using the constant assumption $\constt{\varepsilon} > 0$ and the choice of $i$ as the multiplicative inverse of $\constt{\varepsilon}$.
Since the ODE $\D{x}=\genDE{x}$ is assumed to have provable global solutions, axiom~\irref{GEx} finishes the derivation by instantiating $\tau \mnodefeq - i \ptermA_0$, which is constant for the ODE.
{\footnotesizeoff%
\begin{sequentdeduction}[array]
  \linfer[qear+MdW]{
  \linfer[GEx]{
    \lclose
  }
    {\lsequent{\Gamma}{\ddiamond{\pevolve{\D{x}=\genDE{x},\D{\timevar}=1}}{\,\timevar > -i \ptermA_0}}}
  }
  {\lsequent{\Gamma, \constt{\varepsilon} > 0, i\varepsilon() = 1}{\ddiamond{\pevolve{\D{x}=\genDE{x},\D{\timevar}=1}}{\,\ptermA_0 + \constt{\varepsilon}\timevar > 0}}}
\\[-\normalbaselineskip]\tag*{\qedhere}
\end{sequentdeduction}
}%
\end{proof}

\begin{proof}[\textbf{Proof of \rref{cor:tt}\hypertarget{proof:proof13}}]
Rule \irref{TT} is derived directly from \irref{dVeq} with a~\irref{MdW} monotonicity step:
{\footnotesizeoff%
\begin{sequentdeduction}[array]
  \linfer[MdW]{
  \lsequent{\ptermA=0}{\rfvar}
  !
  \linfer[dVeq]{
    \lsequent{\ptermA < 0}{\lied[]{\genDE{x}}{\ptermA}\geq \constt{\varepsilon}}
  }
  {\lsequent{\Gamma,\constt{\varepsilon} > 0, \ptermA \leq 0}{\ddiamond{\pevolve{\D{x}=\genDE{x}}}{\ptermA = 0}}}
  }
  {\lsequent{\Gamma,\constt{\varepsilon} > 0, \ptermA \leq 0}{\ddiamond{\pevolve{\D{x}=\genDE{x}}}{\rfvar}}}
\end{sequentdeduction}
}%

The derivation of rule~\irref{dVeq} starts by using axiom~\irref{Prog} with $\rgvar \mnodefequiv \ptermA \geq 0$ and rule \irref{dVcmp} (with $\cmp$ being $\geq$) on the resulting right premise, which yields the sole premise of~\irref{dVeq} (on the right, after~\irref{dVcmp}):
{\footnotesizeoff%
\begin{sequentdeduction}[array]
  \linfer[Prog]{
  \lsequent{\ptermA \leq 0}{\dbox{\pevolvein{\D{x}=\genDE{x}}{\ptermA \neq 0}}{\ptermA< 0}}
   !
  \linfer[dVcmp]{
     \lsequent{\ptermA < 0}{\lied[]{\genDE{x}}{\ptermA}\geq \constt{\varepsilon}}
  }
    {\lsequent{\Gamma,\constt{\varepsilon} > 0}{\ddiamond{\pevolve{\D{x}=\genDE{x}}}{\ptermA \geq 0}}}
  }
  {\lsequent{\Gamma,\constt{\varepsilon} > 0, \ptermA \leq 0}{\ddiamond{\pevolve{\D{x}=\genDE{x}}}{\ptermA = 0}}}
\end{sequentdeduction}
}%

From the left premise after using~\irref{Prog}, axiom~\irref{DX} allows the domain constraint $p \neq 0$ to be assumed true initially, which strengthens the antecedent $\ptermA \leq 0$ to $\ptermA < 0$.
Rule~\irref{BC} proves the invariance of formula $\ptermA < 0$ for the ODE $\pevolvein{\D{x}=\genDE{x}}{\ptermA \neq 0}$ because the antecedents $\ptermA {\neq} 0, \ptermA {=} 0$ in its resulting premise are contradictory.
{\footnotesizeoff%
\begin{sequentdeduction}[array]
  \linfer[DX]{
  \linfer[BC]{
  \linfer[qear]{
    \lclose
  }
    {\lsequent{\ptermA \neq 0, \ptermA = 0}{\lied[]{\genDE{x}}{\ptermA} < 0}}
  }
    {\lsequent{\ptermA < 0}{\dbox{\pevolvein{\D{x}=\genDE{x}}{\ptermA \neq 0}}{\ptermA< 0}}}
  }
  {\lsequent{\ptermA \leq 0}{\dbox{\pevolvein{\D{x}=\genDE{x}}{\ptermA \neq 0}}{\ptermA< 0}}}
\\[-\normalbaselineskip]\tag*{\qedhere}
\end{sequentdeduction}
}%
\end{proof}

\begin{proof}[\textbf{Proof of \rref{cor:higherdv}\hypertarget{proof:proof14}}]
Rule~\irref{dVcmpK} can be derived in several ways.
For example, because $\lied[k]{\genDE{x}}{\ptermA}$ is strictly positive, one can prove that the solution successively reaches states where $\lied[k-1]{\genDE{x}}{\ptermA}$ is strictly positive and remains positive thereafter, followed by reaching states where $\lied[k-2]{\genDE{x}}{\ptermA}$ is strictly positive (and remains positive thereafter), and so on.
The following derivation shows how \irref{dC} can be elegantly used for this argument.
The idea is to extend the derivation of rule~\irref{dVcmp} to higher Lie derivatives by (symbolically) integrating with respect to the time variable $\timevar$ using the following sequence of inequalities, where $\lied[i]{\genDE{x}}{\ptermA}_0$ is a symbolic constant that represents the initial value of the $i$-th Lie derivative of $\ptermA$ along $\D{x}=\genDE{x}$ for $i = 0,1,\dots,k-1$:
\begin{align}
\label{eq:integration}
\lied[k]{\genDE{x}}{\ptermA} &\geq \constt{\varepsilon} \nonumber \\
\lied[k-1]{\genDE{x}}{\ptermA} &\geq \lied[k-1]{\genDE{x}}{\ptermA}_0 + \constt{\varepsilon}\timevar \nonumber \\
\lied[k-2]{\genDE{x}}{\ptermA} &\geq \lied[k-2]{\genDE{x}}{\ptermA}_0 + \lied[k-1]{\genDE{x}}{\ptermA}_0\timevar + \constt{\varepsilon}\frac{\timevar^2}{2} \nonumber \\
&~\vdots \\
\lied[1]{\genDE{x}}{\ptermA} &\geq \lied[1]{\genDE{x}}{\ptermA}_0 + \dots + \lied[k-1]{\genDE{x}}{\ptermA}_0 \frac{\timevar^{k-2}}{(k-2)!} + \constt{\varepsilon} \frac{\timevar^{k-1}}{(k-1)!}  \nonumber \\
\ptermA &\geq \underbrace{\ptermA_0 + \lied[1]{\genDE{x}}{\ptermA}_0\timevar + \dots + \lied[k-1]{\genDE{x}}{\ptermA}_0 \frac{\timevar^{k-1}}{(k-1)!} + \constt{\varepsilon} \frac{\timevar^k}{k!}}_{\ptermB(\timevar)} \nonumber
\end{align}

The RHS of the final inequality in~\rref{eq:integration} is a polynomial in the time variable $\timevar$, denoted $\ptermB(\timevar)$, which is positive for sufficiently large values of $\timevar$ because its leading coefficient $\constt{\varepsilon}$ is strictly positive.
That is, with antecedent $\constt{\varepsilon} > 0$, the formula $\lexists{\timevar_1}{\lforall{\timevar > \timevar_1} {\ptermB(\timevar) > 0}}$ is provable in real arithmetic.

The derivation of~\irref{dVcmpK} starts by introducing fresh ghost variables that remember the initial values of $\ptermA$ and the (higher) Lie derivatives $\lied[1]{\genDE{x}}{\ptermA}, \dots, \lied[k-1]{\genDE{x}}{\ptermA}$ using~\irref{cut+qear+existsl}.
The resulting antecedents are abbreviated with $\Gamma_0 \mnodefequiv \big(\Gamma, \ptermA = \ptermA_0, \dots, \lied[k-1]{\genDE{x}}{\ptermA} = \lied[k-1]{\genDE{x}}{\ptermA}_0\big)$.
It also uses \irref{dGt} to introduce a fresh time variable $\timevar$ into the system.
The arithmetic fact that $\ptermB(\timevar)$ is eventually positive for all times $\timevar > \timevar_1$ is introduced with~\irref{cut+qear+existsl}.
{\footnotesizeoff%
\begin{sequentdeduction}[array]
  \linfer[cut+qear+existsl]{
  \linfer[dGt]{
  \linfer[cut+qear+existsl]{
    \lsequent{\Gamma_0, \timevar=0, \lforall{\timevar > \timevar_1}{\ptermB(\timevar) > 0}}{\ddiamond{\pevolve{\D{x}=\genDE{x},\D{\timevar}=1}}{\ptermA \cmp 0}}
  }
    {\lsequent{\Gamma_0, \constt{\varepsilon} > 0, \timevar=0}{\ddiamond{\pevolve{\D{x}=\genDE{x},\D{\timevar}=1}}{\ptermA \cmp 0}}}
  }
    {\lsequent{\Gamma, \constt{\varepsilon} > 0, \ptermA=\ptermA_0, \dots, \lied[k-1]{\genDE{x}}{\ptermA} = \lied[k-1]{\genDE{x}}{\ptermA}_0}{\ddiamond{\pevolve{\D{x}=\genDE{x}}}{\ptermA \cmp 0}}}
  }
  {\lsequent{\Gamma, \constt{\varepsilon} > 0}{\ddiamond{\pevolve{\D{x}=\genDE{x}}}{\ptermA \cmp 0}} }
\end{sequentdeduction}
}%

Next, an initial liveness assumption, $\ddiamond{\pevolve{\D{x}=\genDE{x},\D{\timevar}=1}}{\ptermB(t) > 0}$, is cut into the assumptions.
Like the derivation of rule~\irref{dVcmp}, this initial liveness assumption says that the solution exists for sufficiently long so that the term $\ptermB(\timevar)$ from~\rref{eq:integration}, which is proved to lower bound $\ptermA$, becomes positive for sufficiently large $\timevar$.
The cut premise is abbreviated \circled{1} and further proved below.
The derivation continues from the remaining (unabbreviated) premise by refinement axiom~\irref{Prog}, with $\rgvar \mnodefequiv \ptermB(\timevar) > 0$:
{\footnotesizeoff%
\begin{sequentdeduction}[array]
  \linfer[cut]{
  \linfer[Prog]{
    \lsequent{\Gamma_0, \timevar=0}{\dbox{\pevolvein{\D{x}=\genDE{x},\D{\timevar}=1}{\lnot{(\ptermA \cmp 0)}}}{\ptermB(t) \leq 0}}
  }
    {\lsequent{\Gamma_0, \timevar=0,\ddiamond{\pevolve{\D{x}=\genDE{x},\D{\timevar}=1}}{\ptermB(t) > 0}}{\ddiamond{\pevolve{\D{x}=\genDE{x},\D{\timevar}=1}}{\ptermA \cmp 0}}} \quad
    \circled{1}
  }
  {\lsequent{\Gamma_0, \timevar=0, \lforall{\timevar > \timevar_1}{\ptermB(\timevar) > 0}}{\ddiamond{\pevolve{\D{x}=\genDE{x},\D{\timevar}=1}}{\ptermA \cmp 0}}}
\end{sequentdeduction}
}%

From the resulting open premise after~\irref{Prog}, monotonicity \irref{MbW} strengthens the postcondition to $\ptermA \geq \ptermB(t)$ using the domain constraint $\lnot{(\ptermA \cmp 0)}$ and the provable real arithmetic fact $\lnot{(\ptermA \cmp 0)} \land \ptermA \geq \ptermB(t) \limply \ptermB(\timevar) \leq 0$.
Notice that the resulting postcondition $\ptermA \geq \ptermB(\timevar)$ is the final inequality from the sequence of inequalities~\rref{eq:integration}:
{\footnotesizeoff%
\begin{sequentdeduction}[array]
  \linfer[MbW]{
    \lsequent{\Gamma_0, \timevar=0}{\dbox{\pevolvein{\D{x}=\genDE{x},\D{\timevar}=1}{\lnot{(\ptermA \cmp 0)}}}{\ptermA \geq \ptermB(\timevar)}}
  }
  {\lsequent{\Gamma_0, \timevar=0}{\dbox{\pevolvein{\D{x}=\genDE{x},\D{\timevar}=1}{\lnot{(\ptermA \cmp 0)}}}{\ptermB(\timevar) \leq 0}}}
\end{sequentdeduction}
}%

The derivation continues by using~\irref{dC} to sequentially cut in the inequality bounds outlined in~\rref{eq:integration}.
The first differential cut~\irref{dC} step adds $\lied[k-1]{\genDE{x}}{\ptermA} \geq \lied[k-1]{\genDE{x}}{\ptermA}_0 + \constt{\varepsilon}\timevar $ to the domain constraint.
The proof of this cut yields the premise of~\irref{dVcmpK} after a~\irref{dIcmp} step, see the derivation labeled \circled{$\star$} immediately below.
{\footnotesizeoff%
\begin{sequentdeduction}[array]
  \linfer[dC]{
    \lsequent{\Gamma_0, \timevar=0}{\dbox{\pevolvein{\D{x}=\genDE{x},\D{\timevar}=1}{\lnot{(\ptermA \cmp 0)} \land \lied[k-1]{\genDE{x}}{\ptermA} \geq \lied[k-1]{\genDE{x}}{\ptermA}_0 + \constt{\varepsilon}\timevar }}{\ptermA \geq \ptermB(\timevar)}}
    \quad \circled{$\star$}
  }
  {\lsequent{\Gamma_0, \timevar=0}{\dbox{\pevolvein{\D{x}=\genDE{x},\D{\timevar}=1}{\lnot{(\ptermA \cmp 0)}}}{\ptermA \geq \ptermB(\timevar)}}}
\end{sequentdeduction}
}%
From~\circled{$\star$}:
{\footnotesizeoff\renewcommand{\arraystretch}{1.4}%
\begin{sequentdeduction}[array]
  \linfer[dIcmp]{
    \lsequent{\lnot{(\ptermA \cmp 0)}}{\lied[k]{\genDE{x}}{\ptermA} \geq \constt{\varepsilon} }
  }
  {\lsequent{\Gamma_0, \timevar=0}{\dbox{\pevolvein{\D{x}=\genDE{x},\D{\timevar}=1}{\lnot{(\ptermA \cmp 0)}}}{\lied[k-1]{\genDE{x}}{\ptermA} \geq \lied[k-1]{\genDE{x}}{\ptermA}_0 + \constt{\varepsilon}\timevar}}}
\end{sequentdeduction}
}%

Subsequent~\irref{dC+dIcmp} steps progressively add the inequality bounds from~\rref{eq:integration} to the domain constraint until the last step where the postcondition is proved invariant with~\irref{dIcmp}:
{\footnotesizeoff\renewcommand{\arraystretch}{1.5}%
\begin{sequentdeduction}[array]
    \linfer[dC+dIcmp]{
    \linfer[dC+dIcmp]{
    \linfer[dC+dIcmp]{
    \linfer[dIcmp]{
      \lclose
    }
      {\lsequent{\Gamma_0, \timevar=0}{\dbox{\pevolvein{\D{x}=\genDE{x},\D{\timevar}=1}{\dots \land \lied[1]{\genDE{x}}{\ptermA} \geq \lied[1]{\genDE{x}}{\ptermA}_0 + \dots + \constt{\varepsilon} \frac{\timevar^{k-1}}{(k-1)!} }}{\ptermA \geq \ptermB(\timevar)}}}
    }
      {\vdots}
    }
    {\lsequent{\Gamma_0, \timevar=0}{\dbox{\pevolvein{\D{x}=\genDE{x},\D{\timevar}=1}{\dots \land \lied[k-2]{\genDE{x}}{\ptermA} \geq \lied[k-2]{\genDE{x}}{\ptermA}_0 + \lied[k-1]{\genDE{x}}{\ptermA}_0\timevar + \constt{\varepsilon}\frac{\timevar^2}{2}}}{\ptermA \geq \ptermB(\timevar)}}}
    }
    {\lsequent{\Gamma_0, \timevar=0}{\dbox{\pevolvein{\D{x}=\genDE{x},\D{\timevar}=1}{\lnot{(\ptermA \cmp 0)} \land \lied[k-1]{\genDE{x}}{\ptermA} \geq \lied[k-1]{\genDE{x}}{\ptermA}_0 + \constt{\varepsilon}\timevar }}{\ptermA \geq \ptermB(\timevar)}}}
\end{sequentdeduction}
}%

From premise~\circled{1}, a monotonicity step~\irref{MdW} rephrases the postcondition of the cut using the (constant) assumption $\lforall{\timevar > \timevar_1}{\ptermB(\timevar) > 0}$.
Axiom~\irref{GEx}, with instance $\tau \mnodefeq \timevar_1$, finishes the derivation because the ODE $\D{x}=\genDE{x}$ is assumed to have provable global solutions.
{\footnotesizeoff%
\begin{sequentdeduction}[array]
  \linfer[MdW]{
  \linfer[GEx]{
    \lclose
  }
    {\lsequent{\Gamma}{\ddiamond{\pevolve{\D{x}=\genDE{x},\D{\timevar}=1}}{ \timevar > \timevar_1}}}
  }
  {\lsequent{\Gamma, \lforall{\timevar > \timevar_1}{\ptermB(\timevar) > 0}}{\ddiamond{\pevolve{\D{x}=\genDE{x},\D{\timevar}=1}}{\ptermB(\timevar) > 0}}}
\\[-\normalbaselineskip]\tag*{\qedhere}
\end{sequentdeduction}
}%
\end{proof}

\begin{proof}[\textbf{Proof of \rref{cor:SP}\hypertarget{proof:proof15}}]
The derivation of rule~\irref{SP} begins by using axiom~\irref{Prog} with $\rgvar \mnodefequiv \lnot{\rsfvar}$.
The resulting left premise is the left premise of~\irref{SP}, which is the staging property of the formula $\rsfvar$ expressing that solutions of the ODE $\D{x}=\genDE{x}$ can only leave $\rsfvar$ by entering $\rfvar$:
{\footnotesizeoff%
\begin{sequentdeduction}[array]
\linfer[Prog]{
  \lsequent{\Gamma}{\dbox{\pevolvein{\D{x}=\genDE{x}}{\lnot{\rfvar}}}{\rsfvar}} !
  \lsequent{\Gamma,\constt{\varepsilon}>0}{\ddiamond{\pevolve{\D{x}=\genDE{x}}}{\lnot{\rsfvar}}}
}
{\lsequent{\Gamma,\constt{\varepsilon}>0}{\ddiamond{\pevolve{\D{x}=\genDE{x}}}{\rfvar}}}
\end{sequentdeduction}
}%

The derivation continues on the right premise, similarly to~\irref{dVcmp}, by introducing fresh variables $\ptermA_0, i$ representing the initial value of $\ptermA$ and the multiplicative inverse of $\varepsilon()$ respectively using arithmetic cuts (\irref{cut+qear}).
It then uses \irref{dGt} to introduce a fresh time variable:
{\footnotesizeoff%
\begin{sequentdeduction}[array]
  \linfer[cut+qear]{
  \linfer[existsl]{
  \linfer[dGt]{
    \lsequent{\Gamma, \constt{\varepsilon} > 0, \ptermA=\ptermA_0, i\varepsilon() = 1, \timevar=0}{\ddiamond{\pevolve{\D{x}=\genDE{x},\D{\timevar}=1}}{\lnot{\rsfvar}}}
  }
    {\lsequent{\Gamma, \constt{\varepsilon} > 0, \ptermA=\ptermA_0, i\varepsilon() = 1}{\ddiamond{\pevolve{\D{x}=\genDE{x}}}{\lnot{\rsfvar}}}}
  }
  {\lsequent{\Gamma, \constt{\varepsilon} > 0, \lexists{\ptermA_0}{(\ptermA=\ptermA_0)}, \lexists{i}{(i\varepsilon() = 1)}}{\ddiamond{\pevolve{\D{x}=\genDE{x}}}{\lnot{\rsfvar}}}}
  }
  {\lsequent{\Gamma, \constt{\varepsilon} > 0}{\ddiamond{\pevolve{\D{x}=\genDE{x}}}{\lnot{\rsfvar}}} }
\end{sequentdeduction}
}%

The next cut introduces an initial liveness assumption, where the cut premise is abbreviated \circled{1}.
The premise~\circled{1} is proved identically to the correspondingly abbreviated premise from the derivation of~\irref{dVcmp} using axiom~\irref{GEx} because the ODE $\D{x}=\genDE{x}$ is assumed have provable global solutions.
{\footnotesizeoff%
\begin{sequentdeduction}[array]
  \linfer[cut]{
    \lsequent{\Gamma, \ptermA = \ptermA_0, \timevar = 0, \ddiamond{\pevolve{\D{x}=\genDE{x},\D{\timevar}=1}}{\,\ptermA_0 + \constt{\varepsilon}\timevar > 0} }{\ddiamond{\pevolve{\D{x}=\genDE{x}, \D{\timevar}=1}}{\lnot{\rsfvar}}} \quad\quad \circled{1}
  }
  {\lsequent{\Gamma, \constt{\varepsilon} > 0, \ptermA = \ptermA_0, i > 0, i\varepsilon() = 1, \timevar = 0}{\ddiamond{\pevolve{\D{x}=\genDE{x},\D{\timevar}=1}}{\lnot{\rsfvar}}}}
\end{sequentdeduction}
}%
From the remaining open premise, axiom \irref{Prog} is used with $\rgvar \mnodefequiv \ptermA_0 + \constt{\varepsilon} \timevar > 0$:
{\footnotesizeoff%
\begin{sequentdeduction}[array]
  \linfer[Prog]{
    \lsequent{\Gamma,\ptermA = \ptermA_0, \timevar = 0}{\dbox{\pevolvein{\D{x}=\genDE{x},\D{\timevar}=1}{\rsfvar}}{\,\ptermA_0 + \constt{\varepsilon}\timevar \leq 0}}
  }
  {\lsequent{\Gamma, \ptermA = \ptermA_0, \timevar = 0, \ddiamond{\pevolve{\D{x}=\genDE{x},\D{\timevar}=1}}{\,\ptermA_0 + \constt{\varepsilon}\timevar > 0} }{\ddiamond{\pevolve{\D{x}=\genDE{x},\D{\timevar}=1}}{\lnot{\rsfvar}}}}
\end{sequentdeduction}
}%

A monotonicity step~\irref{MbW} simplifies the postcondition using domain constraint $\rsfvar$, yielding the left conjunct of the right premise of rule~\irref{SP}.
The right premise after monotonicity is abbreviated~\circled{2} and continued below.
{\footnotesizeoff%
\begin{sequentdeduction}[array]
  \linfer[MbW]{
  \linfer[qear]{
    \lsequent{\rsfvar}{\ptermA \leq 0}
  }
    {\lsequent{\rsfvar,\ptermA \geq \ptermA_0 + \constt{\varepsilon}\timevar}{\ptermA_0 + \constt{\varepsilon}\timevar \leq 0}} !
    \circled{2}
    }
    {\lsequent{\Gamma,\ptermA = \ptermA_0,\timevar = 0}{\dbox{\pevolvein{\D{x}=\genDE{x},\D{\timevar}=1}{\rsfvar}}{\,\ptermA_0 + \constt{\varepsilon}\timevar \leq 0}}}
\end{sequentdeduction}
}%

From~\circled{2}, rule~\irref{dIcmp} yields the right conjunct of the right premise of rule~\irref{SP}.

{\footnotesizeoff%
\begin{sequentdeduction}[array]
\linfer[dIcmp]{
    \lsequent{\rsfvar}{\lied[]{\genDE{x}}{\ptermA}\geq \constt{\varepsilon}}
    }
    {\lsequent{\Gamma,\ptermA = \ptermA_0,\timevar = 0}{\dbox{\pevolvein{\D{x}=\genDE{x},\D{\timevar}=1}{\rsfvar}}{\,\ptermA \geq \ptermA_0 + \constt{\varepsilon}\timevar}}}
\\[-\normalbaselineskip]\tag*{\qedhere}
\end{sequentdeduction}
}%
\end{proof}

\begin{proof}[\textbf{Proof of \rref{cor:boundedandcompact}\hypertarget{proof:proof16}}]
Rule~\irref{SPb} is derived first since rule~\irref{SPc} follows from~\irref{SPb} as a corollary.
Both proof rules make use of the fact that continuous functions on compact domains attain their extrema~\cite[Theorem 4.16]{MR0385023}.
Polynomial functions are continuous, so the fact that a polynomial has bounded values on a compact (semialgebraic) domain can be stated and proved as a formula of first-order real arithmetic by~\irref{qear}~\cite{Bochnak1998}.
The derivation of~\irref{SPb} is essentially similar to~\irref{SP} except replacing the use of the global existence axiom~\irref{GEx} with the bounded existence axiom~\irref{BEx}.
It starts by using axiom~\irref{Prog} with $\rgvar \mnodefequiv \lnot{\rsfvar}$, yielding the left premise of~\irref{SPb}:
{\footnotesizeoff%
\begin{sequentdeduction}[array]
\linfer[Prog]{
  \lsequent{\Gamma}{\dbox{\pevolvein{\D{x}=\genDE{x}}{\lnot{\rfvar}}}{\rsfvar}} !
  \lsequent{\Gamma,\constt{\varepsilon}>0}{\ddiamond{\pevolve{\D{x}=\genDE{x}}}{\lnot{\rsfvar}}}
}
{\lsequent{\Gamma,\constt{\varepsilon}>0}{\ddiamond{\pevolve{\D{x}=\genDE{x}}}{\rfvar}}}
\end{sequentdeduction}
}%

Continuing on the resulting right from~\irref{Prog} (similarly to~\irref{SP}), the derivation introduces fresh variables $\ptermA_0, i$ representing the initial value of $\ptermA$ and the multiplicative inverse of $\varepsilon()$ respectively using arithmetic cuts and Skolemizing (\irref{cut+qear+existsl}). Rule \irref{dGt} is also used to introduce a fresh time variable $\timevar$ with $\timevar = 0$ initially.
{\footnotesizeoff%
\begin{sequentdeduction}[array]
  \linfer[cut+qear+existsl+dGt]{
    \lsequent{\Gamma, \constt{\varepsilon} > 0, \ptermA=\ptermA_0, i\varepsilon() = 1, \timevar=0}{\ddiamond{\pevolve{\D{x}=\genDE{x},\D{\timevar}=1}}{\lnot{\rsfvar}}}
  }
  {\lsequent{\Gamma, \constt{\varepsilon} > 0}{\ddiamond{\pevolve{\D{x}=\genDE{x}}}{\lnot{\rsfvar}}} }
\end{sequentdeduction}
}%

The set characterized by formula $\rsfvar$ is bounded so its closure is compact (with respect to variables $x$).
On this compact closure, the continuous polynomial function $\ptermA$ attains its maximum value, which implies that the value of $\ptermA$ is bounded above in $\rsfvar$ and cannot increase unboundedly while staying in $\rsfvar$.
That is, the formula $\lexists{\ptermA_1}{\rrfvar(\ptermA_1)}$ where $\rrfvar(\ptermA_1) \mnodefequiv \lforall{x}{(\rsfvar(x) \limply \ptermA \leq \ptermA_1)}$ is valid in first-order real arithmetic and thus provable by \irref{qear}.
This formula is added to the assumptions with a~\irref{cut}, and the existential quantifier is Skolemized with \irref{existsl}.
The resulting symbolic constant $\ptermA_1$ represents the upper bound of $\ptermA$ on $\rsfvar$.
Note that $\rrfvar(\ptermA_1)$ is constant for the ODE $\D{x}=\genDE{x},\D{\timevar}=1$ because it does not mention any of the variables $x$ (nor $\timevar$) free:
{\footnotesizeoff%
\begin{sequentdeduction}[array]
  \linfer[cut+qear]{
  \linfer[existsl]{
    \lsequent{\Gamma, \constt{\varepsilon} > 0, \ptermA=\ptermA_0, i\varepsilon() = 1, \timevar=0, \rrfvar(\ptermA_1)}{\ddiamond{\pevolve{\D{x}=\genDE{x},\D{\timevar}=1}}{\lnot{\rsfvar}}}
  }
    {\lsequent{\Gamma, \constt{\varepsilon} > 0, \ptermA=\ptermA_0, i\varepsilon() = 1, \timevar=0, \lexists{\ptermA_1}{\rrfvar(\ptermA_1)} }{\ddiamond{\pevolve{\D{x}=\genDE{x},\D{\timevar}=1}}{\lnot{\rsfvar}}}}
  }
  {\lsequent{\Gamma, \constt{\varepsilon} > 0, \ptermA=\ptermA_0, i\varepsilon() = 1, \timevar=0}{\ddiamond{\pevolve{\D{x}=\genDE{x},\D{\timevar}=1}}{\lnot{\rsfvar}}}}
\end{sequentdeduction}
}%

Next, a~\irref{cut} introduces an initial liveness assumption saying that \emph{either} the solution exists for sufficient time for the bound $\ptermA_0 + \constt{\varepsilon}\timevar > \ptermA_1$ to be satisfied (at sufficiently large $\timevar$) \emph{or} the solution leaves $\rsfvar$.
This assumption is abbreviated $\rtfvar \mnodefequiv \ddiamond{\pevolve{\D{x}=\genDE{x},\D{\timevar}=1}}{(\ptermA_0 + \constt{\varepsilon}\timevar > \ptermA_1 \lor \lnot{\rsfvar})}$.
The main difference from~\irref{SP} is that the postcondition of assumption $\rtfvar$ adds a disjunction for the possibility of leaving $\rsfvar$ (which characterizes a bounded set).
This cut premise is abbreviated \circled{1} and proved further below.
{\footnotesizeoff%
\begin{sequentdeduction}[array]
  \linfer[cut]{
    \lsequent{\Gamma, \ptermA=\ptermA_0, \timevar=0, \rrfvar(\ptermA_1), \rtfvar}{\ddiamond{\pevolve{\D{x}=\genDE{x},\D{\timevar}=1}}{\lnot{\rsfvar}}} \quad
    \circled{1}
  }
  {\lsequent{\Gamma, \constt{\varepsilon} > 0, \ptermA=\ptermA_0, i\varepsilon() = 1, \timevar=0, \rrfvar(\ptermA_1)}{\ddiamond{\pevolve{\D{x}=\genDE{x},\D{\timevar}=1}}{\lnot{\rsfvar}}}}
\end{sequentdeduction}
}%
Continuing from the open premise on the left, axiom \irref{Prog} is used with $\rgvar \mnodefequiv  \ptermA_0 + \constt{\varepsilon}\timevar > \ptermA_1 \lor \lnot{\rsfvar}$:
{\footnotesizeoff%
\begin{sequentdeduction}[array]
  \linfer[Prog]{
    \lsequent{\Gamma,\ptermA = \ptermA_0, \timevar=0, \rrfvar(\ptermA_1)}{\dbox{\pevolvein{\D{x}=\genDE{x},\D{\timevar}=1}{\rsfvar}}{(\ptermA_0 + \constt{\varepsilon}\timevar \leq \ptermA_1 \land \rsfvar)}}
  }
  {\lsequent{\Gamma, \ptermA=\ptermA_0, \timevar=0, \rrfvar(\ptermA_1), \rtfvar}{\ddiamond{\pevolve{\D{x}=\genDE{x},\D{\timevar}=1}}{\lnot{\rsfvar}}}}
\end{sequentdeduction}
}%

The postcondition of the resulting box modality is simplified to $\ptermA \geq \ptermA_0 + \constt{\varepsilon}\timevar$ with a \irref{MbW} monotonicity step.
This crucially uses the assumption $\rrfvar(\ptermA_1)$ which is constant for the ODE.
A \irref{dIcmp} step yields the remaining premise of~\irref{SPb} on the right, see the derivation labeled \circled{$\star$} immediately below:
{\footnotesizeoff%
\begin{sequentdeduction}[array]
\linfer[MbW]{
    \linfer[qear]{
      \linfer[qear]{
        \lclose
      }
      {\lsequent{\rsfvar,\rrfvar(\ptermA_1)}{\ptermA \leq \ptermA_1}}
    }
      {\lsequent{\rsfvar,\rrfvar(\ptermA_1),\ptermA \geq \ptermA_0 + \constt{\varepsilon}\timevar}{\ptermA_0 + \constt{\varepsilon}\timevar \leq \ptermA_1 \land \rsfvar}}
    !
    \circled{$\star$}
  }
  {\lsequent{\Gamma,\ptermA = \ptermA_0, \timevar=0, \rrfvar(\ptermA_1)}{\dbox{\pevolvein{\D{x}=\genDE{x},\D{\timevar}=1}{\rsfvar}}{(\ptermA_0 + \constt{\varepsilon}\timevar \leq \ptermA_1 \land \rsfvar)}}
}
\end{sequentdeduction}
}%
From \circled{$\star$}:
{\footnotesizeoff%
\begin{sequentdeduction}[array]
\linfer[dIcmp]{
    \lsequent{\rsfvar}{\lied[]{\genDE{x}}{\ptermA}\geq \constt{\varepsilon}}
    }
    {\lsequent{\Gamma,\ptermA = \ptermA_0,\timevar=0}{\dbox{\pevolvein{\D{x}=\genDE{x},\D{\timevar}=1}{\rsfvar}}{\ptermA \geq \ptermA_0 + \constt{\varepsilon}\timevar}}}
\end{sequentdeduction}
}%

From premise~\circled{1}, a monotonicity step~\irref{MdW} equivalently rephrases the postcondition of the cut.
Axiom~\irref{BEx} finishes the proof because formula $\rsfvar(x)$ is assumed to be bounded over variables $x$.
{\footnotesizeoff%
\begin{sequentdeduction}[array]
  \linfer[qear+MdW]{
  \linfer[BEx]{
    \lclose
  }
    {\lsequent{}{\ddiamond{\pevolve{\D{x}=\genDE{x},\D{\timevar}=1}}{ (\timevar > i(\ptermA_1-\ptermA_0) \lor \lnot{\rsfvar})}}}
  }
  {\lsequent{\constt{\varepsilon} > 0, i\varepsilon() = 1}{\rtfvar}}
\end{sequentdeduction}
}%

To derive rule~\irref{SPc} from~\irref{SPb}, the compactness of the set characterized by $\rsfvar(x)$ implies that the formula $\lexists{\varepsilon {>} 0}{A(\varepsilon)}$ where $A(\varepsilon) \mnodefequiv \lforall{x}{(\rsfvar(x) {\limply} \lied[]{\genDE{x}}{p} \geq \varepsilon)}$ and the formula $B \mnodefequiv \lforall{x}{(\rsfvar(x) {\limply} \lied[]{\genDE{x}}{p} > 0)}$ are provably equivalent in first-order real arithmetic.
This provable real arithmetic equivalence follows from the fact that the continuous polynomial function $\lied[]{\genDE{x}}{p}$ is bounded below by its minima on the compact set characterized by $\rsfvar(x)$ and this minima is strictly positive.
The following derivation of~\irref{SPc} threads these two formulas through the use of rule~\irref{SPb}.
After Skolemizing $\lexists{\varepsilon {>} 0}{A(\varepsilon)}$ with \irref{existsl}, the resulting formula $A(\varepsilon)$ is constant for the ODE $\D{x}=\genDE{x}$ so it is kept as a constant assumption across the use of~\irref{SPb}, leaving only the two premises of rule~\irref{SPc}:
{\footnotesizeoff%
\renewcommand{\linferPremissSeparation}{\hspace{5pt}}%
\begin{sequentdeduction}[array]
  \linfer[cut]{
  \linfer[existsl]{
  \linfer[SPb]{
    \lsequent{\Gamma}{\dbox{\pevolvein{\D{x}=\genDE{x}}{\lnot{\rfvar}}}{\rsfvar}} !
    \linfer[qear]{\lclose}
    {\lsequent{\rsfvar,A(\varepsilon)}{ \lied[]{\genDE{x}}{p} \geq \varepsilon}}
  }
    {\lsequent{\Gamma,\varepsilon > 0, A(\varepsilon)}{\ddiamond{\pevolve{\D{x}=\genDE{x}}}{\rfvar}}}
  }
  {\lsequent{\Gamma,\lexists{\varepsilon {>} 0}{A(\varepsilon)}}{\ddiamond{\pevolve{\D{x}=\genDE{x}}}{\rfvar}}}
  !
    \linfer[qear]{
    \linfer[allr+implyr]{
      \lsequent{\rsfvar}{\lied[]{\genDE{x}}{p} > 0}
    }
      \lsequent{}{B}
    }
    {\lsequent{}{\lexists{\varepsilon {>} 0}{A(\varepsilon)}}}
  }
  {\lsequent{\Gamma}{\ddiamond{\pevolve{\D{x}=\genDE{x}}}{\rfvar}} }
\\[-\normalbaselineskip]\tag*{\qedhere}
\end{sequentdeduction}
}%
\end{proof}

\begin{proof}[\textbf{Proof of \rref{cor:rs}\hypertarget{proof:proof17}}]
Rule \irref{RS} is derived from rule~\irref{SPc} with $\rsfvar \mnodefequiv \lnot{\rfvar} \land K$, since the intersection of a closed set (characterized by $\lnot{\rfvar}$) with a compact set (characterized by $K$) is compact.
The resulting right premise from using~\irref{SPc} is the right premise of~\irref{RS}:
{\footnotesizeoff%
\begin{sequentdeduction}[array]
\linfer[SPc]{
  \lsequent{\Gamma, \ptermA \cmp 0}{\dbox{\pevolvein{\D{x}=\genDE{x}}{\lnot{\rfvar}}}{(\lnot{\rfvar} \land K)}} !
  \lsequent{\lnot{\rfvar},K}{\lied[]{\genDE{x}}{p} > 0}
}
{\lsequent{\Gamma, \ptermA \cmp 0}{\ddiamond{\pevolve{\D{x}=\genDE{x}}}{\rfvar}}}
\end{sequentdeduction}
}%

Continuing from the left premise, a monotonicity step with the premise $\lsequent{\ptermA \geq 0}{K}$ turns the postcondition to $\ptermA \cmp 0$.
Rule~\irref{BC} is used, which, along with the premise $\lsequent{\ptermA \geq 0}{K}$ results in the premises of rule~\irref{RS}:
{\footnotesizeoff%
\begin{sequentdeduction}[array]
\linfer[MbW]{
    \linfer[qear]{
    \lsequent{\ptermA \geq 0}{K}}
    {\lsequent{\lnot{\rfvar}, \ptermA \cmp 0}{\lnot{\rfvar} \land K}}
    !
    \linfer[BC]{
    \linfer[cut]{
      \lsequent{\lnot{\rfvar},K}{\lied[]{\genDE{x}}{\ptermA} > 0} !
      \linfer[qear]{
      \lsequent{\ptermA\geq 0}{K}
      }
      {\lsequent{\lnot{\rfvar},\ptermA=0}{K}}
    }
      {\lsequent{\lnot{\rfvar}, \ptermA = 0}{\lied[]{\genDE{x}}{\ptermA} > 0}}
    }
    {\lsequent{\ptermA \cmp 0}{\dbox{\pevolvein{\D{x}=\genDE{x}}{\lnot{\rfvar}}}{\ptermA \cmp 0}}}
  }
  {\lsequent{\Gamma, \ptermA \cmp 0}{\dbox{\pevolvein{\D{x}=\genDE{x}}{\lnot{\rfvar}}}{(\lnot{\rfvar} \land K)}}}
\\[-\normalbaselineskip]\tag*{\qedhere}
\end{sequentdeduction}
}%
\end{proof}

\subsection{Proofs for Liveness With Domain Constraints}
\label{app:livewithdomproofs}

\begin{proof}[\textbf{Proof of \rref{cor:atomicdvcmpQ}\hypertarget{proof:proof18}}]
The derivation uses axiom~\irref{CORef} choosing $\rrfvar \mnodefequiv \ltrue$ and noting that $\ptermA \geq 0$ (resp. $\ptermA > 0$) characterizes a topologically closed (resp. open) set so the appropriate topological requirements of~\irref{CORef} are satisfied. The resulting left premise is the left premise of~\irref{dVcmpQ}:
{\footnotesizeoff%
\begin{sequentdeduction}[array]
\linfer[CORef]{
  \lsequent{\Gamma}{\dbox{\pevolvein{\D{x}=\genDE{x}}{\lnot{(\ptermA \cmp 0)}}}{\ivr}} !
  \lsequent{\Gamma,\constt{\varepsilon} > 0}{\ddiamond{\pevolve{\D{x}=\genDE{x}}}{p\cmp 0}}
}
{\lsequent{\Gamma,\constt{\varepsilon} > 0,\lnot{(\ptermA \cmp 0)}}{\ddiamond{\pevolvein{\D{x}=\genDE{x}}{\ivr}}{\ptermA \cmp 0}}}
\end{sequentdeduction}
}%

The proof continues from the resulting right premise (after~\irref{CORef}) identically to the derivation of~\irref{dVcmp} until the step where \irref{dVcmpA} is used.
The steps are repeated briefly here.

{\footnotesizeoff%
\begin{sequentdeduction}[array]
  \linfer[cut+qear+existsl]{
  \linfer[dGt]{
  \linfer[cut+GEx]{
    \lsequent{\Gamma, \ptermA = \ptermA_0, \timevar = 0, \ddiamond{\pevolve{\D{x}=\genDE{x},\D{\timevar}=1}}{\,\ptermA_0 + \constt{\varepsilon}\timevar > 0} }{\ddiamond{\pevolve{\D{x}=\genDE{x},\D{\timevar}=1}}{\ptermA \cmp 0}}
  }
  {\lsequent{\Gamma, \constt{\varepsilon} > 0, \ptermA = \ptermA_0, i\constt{\varepsilon}=1, \timevar = 0}{\ddiamond{\pevolve{\D{x}=\genDE{x},\D{\timevar}=1}}{\ptermA \cmp 0}} }
  }
  {\lsequent{\Gamma, \constt{\varepsilon} > 0, \ptermA = \ptermA_0, i\constt{\varepsilon}=1}{\ddiamond{\pevolve{\D{x}=\genDE{x}}}{\ptermA \cmp 0}} }
  }
  {\lsequent{\Gamma, \constt{\varepsilon} > 0}{\ddiamond{\pevolve{\D{x}=\genDE{x}}}{\ptermA \cmp 0}} }
\end{sequentdeduction}
}%

Like the derivation of~\irref{dVcmpA}, axiom \irref{Prog} is used with $\rgvar \mnodefequiv \constt{\ptermA_0} + \constt{\varepsilon} \timevar > 0$.
The key difference is an additional~\irref{dC} step, which adds $\ivr$ to the domain constraint.\footnote{Notably, the differential cuts proof support from~\rref{subsec:support} can add such a cut automatically.}
The proof of this differential cut uses the left premise of~\irref{dVcmpQ}, it is labeled \circled{1} and shown below.
{\footnotesizeoff%
\begin{sequentdeduction}[array]
  \linfer[Prog]{
  \linfer[dC]{
    \lsequent{\qquad\Gamma,\ptermA = \constt{\ptermA_0},\timevar=0}{\dbox{\pevolvein{\D{x}=\genDE{x},\D{\timevar}=1}{\lnot{(\ptermA \cmp 0) \land \ivr}}}{\,\constt{\ptermA_0} + \constt{\varepsilon} \timevar \leq 0}}
    \quad \circled{1}
  }
    {\lsequent{\Gamma,\ptermA = \constt{\ptermA_0},\timevar=0}{\dbox{\pevolvein{\D{x}=\genDE{x},\D{\timevar}=1}{\lnot{(\ptermA \cmp 0)}}}{\,\constt{\ptermA_0} + \constt{\varepsilon} \timevar \leq 0}} \qquad\;\;\,\,}
  }
  {\lsequent{\Gamma, \ptermA=\ptermA_0, \timevar=0, \ddiamond{\pevolve{\D{x}=\genDE{x},\D{\timevar}=1}}{\ptermA_0 + \constt{\varepsilon}\timevar > 0} }{\ddiamond{\pevolve{\D{x}=\genDE{x},\D{\timevar}=1}}{\ptermA \cmp 0}}}
\end{sequentdeduction}
}%

The derivation from the resulting left premise (after the cut) continues similarly to~\irref{dVcmpA} using a monotonicity step~\irref{MbW} to rephrase the postcondition, followed by~\irref{dIcmp} which results in the right premise of~\irref{dVcmpQ}:
{\footnotesizeoff%
\begin{sequentdeduction}[array]
  \linfer[MbW]{
  \linfer[dIcmp]{
    \lsequent{\lnot{(\ptermA \cmp 0)}, \ivr}{\lied[]{\genDE{x}}{\ptermA}\geq \constt{\varepsilon}}
  }
    {\lsequent{\Gamma, \ptermA = \constt{\ptermA_0}, \timevar=0}{\dbox{\pevolvein{\D{x}=\genDE{x},\D{\timevar}=1}{\lnot{(\ptermA \cmp 0)}\land \ivr}}{\,\ptermA \geq \constt{\ptermA_0} + \constt{\varepsilon} \timevar}}}
  }
  {\lsequent{\Gamma,\ptermA = \constt{\ptermA_0}, \timevar=0}{\dbox{\pevolvein{\D{x}=\genDE{x},\D{\timevar}=1}{\lnot{(\ptermA \cmp 0)}\land \ivr}}{\,\constt{\ptermA_0} + \constt{\varepsilon} \timevar \leq 0}}}
\end{sequentdeduction}
}%

The derivation from~\circled{1} removes the time variable $t$ using the inverse direction of rule~\irref{dGt}~\cite{DBLP:journals/jar/Platzer17,Platzer18,DBLP:journals/jacm/PlatzerT20}.
Just as rule~\irref{dGt} allows introducing a \emph{fresh} time variable $t$ for the sake of proof, its inverse direction simply removes the variable $t$ since it is irrelevant for the proof of the differential cut.
{\footnotesizeoff%
\begin{sequentdeduction}[array]
  \linfer[dGt]{
  \lsequent{\Gamma}{\dbox{\pevolvein{\D{x}=\genDE{x}}{\lnot{(\ptermA \cmp 0)}}}{\ivr}}
  }
  {\lsequent{\Gamma,\ptermA = \constt{\ptermA_0},\timevar=0}{\dbox{\pevolvein{\D{x}=\genDE{x},\D{\timevar}=1}{\lnot{(\ptermA \cmp 0)}}}{\ivr}}}
\\[-\normalbaselineskip]\tag*{\qedhere}
\end{sequentdeduction}
}%
\end{proof}

\begin{proof}[\textbf{Proof of \rref{cor:ttq}\hypertarget{proof:proof19}}]
The derivations of rules~\irref{dVeqQ+TTQ} are similar to the derivations of rules~\irref{dVeq+TT} respectively.
Rule~\irref{TTQ} is derived from~\irref{dVeqQ} by monotonicity:
{\footnotesizeoff%
\renewcommand{\linferPremissSeparation}{\hspace{4pt}}%
\begin{sequentdeduction}[array]
  \linfer[MdW]{
  \lsequent{\ivr,\ptermA = 0}{\rfvar}
  !
  \linfer[dVeqQ]{
     \lsequent{\Gamma}{\dbox{\pevolvein{\D{x}=\genDE{x}}{\ptermA < 0}}{\ivr}} !
     \lsequent{\ptermA < 0,\ivr}{\lied[]{\genDE{x}}{\ptermA}\geq \constt{\varepsilon}} !
  }
    {\lsequent{\Gamma,\constt{\varepsilon} > 0, \ptermA \leq 0, \ivr}{\ddiamond{\pevolvein{\D{x}=\genDE{x}}{\ivr}}{\ptermA = 0}}}
  }
  {\lsequent{\Gamma,\constt{\varepsilon} > 0, \ptermA \leq 0, \ivr}{\ddiamond{\pevolvein{\D{x}=\genDE{x}}{\ivr}}{\rfvar}}}
\end{sequentdeduction}
}%

The derivation of rule~\irref{dVeqQ} starts by using axiom~\irref{Prog} with $\rgvar \mnodefequiv \ptermA \geq 0$.
The resulting box modality (right) premise is abbreviated~\circled{1} and proved below.
On the resulting left premise, a~\irref{DX} step adds the negated postcondition $\ptermA < 0$ as an assumption to the antecedents since the domain constraint $\ivr$ is true initially.
Following that, rule~\irref{dVcmpQ} is used (with $\cmp$ being $\geq$, since $\ivr$ characterizes a closed set). This yields the two premises of~\irref{dVeqQ}:
{\footnotesizeoff%
\begin{sequentdeduction}[array]
  \linfer[Prog]{
  \linfer[DX]{
  \linfer[dVcmpQ]{
     \lsequent{\Gamma}{\dbox{\pevolvein{\D{x}=\genDE{x}}{\ptermA < 0}}{\ivr}} !
     \lsequent{\ptermA < 0,\ivr}{\lied[]{\genDE{x}}{\ptermA}\geq \constt{\varepsilon}}
  }
    {\lsequent{\Gamma,\constt{\varepsilon} > 0, \ptermA < 0}{\ddiamond{\pevolvein{\D{x}=\genDE{x}}{\ivr}}{\ptermA \geq 0}}}}
  {\lsequent{\Gamma,\constt{\varepsilon} > 0, \ivr}{\ddiamond{\pevolvein{\D{x}=\genDE{x}}{\ivr}}{\ptermA \geq 0}}  \qquad \circled{1}}
  }
  {\lsequent{\Gamma,\constt{\varepsilon} > 0, \ptermA \leq 0, \ivr}{\ddiamond{\pevolvein{\D{x}=\genDE{x}}{\ivr}}{\ptermA = 0}}}
\end{sequentdeduction}
}%
From premise \circled{1}, the derivation is completed similarly to~\irref{dVeq} using~\irref{DX} and \irref{BC}:
{\footnotesizeoff%
\begin{sequentdeduction}[array]
  \linfer[DX]{
  \linfer[BC]{
  \linfer[qear]{
    \lclose
  }
    {\lsequent{\ptermA \neq 0, \ptermA = 0}{\lied[]{\genDE{x}}{\ptermA} < 0}}
  }
    {\lsequent{\ptermA < 0}{\dbox{\pevolvein{\D{x}=\genDE{x}}{\ivr \land \ptermA \neq 0}}{\ptermA< 0}}}
  }
  {\lsequent{\ptermA \leq 0}{\dbox{\pevolvein{\D{x}=\genDE{x}}{\ivr \land \ptermA \neq 0}}{\ptermA< 0}}}
\\[-\normalbaselineskip]\tag*{\qedhere}
\end{sequentdeduction}
}%
\end{proof}

\begin{proof}[\textbf{Proof of \rref{cor:rsq}\hypertarget{proof:proof20}}]
Rule \irref{RSQM} is derived from \irref{RSQ} by a~\irref{dDR} monotonicity step followed by~\irref{dW} on its resulting left premise and~\irref{RSQ} on its resulting right premise:
{\footnotesizeoff%
\begin{sequentdeduction}[array]
  \linfer[dDR]{
  \linfer[dW]{
    \lsequent{\ptermA > 0}{\ivr}
  }
  {\lsequent{\Gamma, \ptermA > 0}{\dbox{\pevolvein{\D{x}=\genDE{x}}{p > 0}}{\ivr}} }
  !
  \linfer[RSQ]{
     \lsequent{\ptermA \geq 0}{K} !
     \lsequent{\lnot{\rfvar},K}{\lied[]{\genDE{x}}{\ptermA} > 0}
  }
  {\lsequent{\Gamma, \ptermA > 0}{\ddiamond{\pevolvein{\D{x}=\genDE{x}}{\ptermA > 0}}{\rfvar}} }
  }
  {\lsequent{\Gamma, \ptermA > 0}{\ddiamond{\pevolvein{\D{x}=\genDE{x}}{\ivr}}{\rfvar}} }
\end{sequentdeduction}
}%

The derivation of rule~\irref{RSQ} starts by adding assumption $\lnot{\rfvar}$ to the antecedents, because if both $\ptermA > 0$ (which is already in the antecedents) and $\rfvar$ were true initially, then the liveness succedent is trivially true by~\irref{DX}.
Next, axiom~\irref{CORef} is used with  $\rrfvar \mnodefequiv \ltrue$, its topological restrictions are met since both formulas $\rfvar$ and $\ptermA > 0$ characterize open sets.
From the resulting right premise, rule~\irref{RS} yields the corresponding two premises of~\irref{RSQ} because formula $K$ (resp. $\rfvar$) characterizes a compact set (resp. open set):
{\footnotesizeoff%
\begin{sequentdeduction}[array]
  \linfer[DX]{
  \linfer[CORef]{
    \lsequent{\Gamma, \ptermA > 0}{\dbox{\pevolvein{\D{x}=\genDE{x}}{\lnot{\rfvar}}}{\ptermA > 0}}
    !
    \linfer[RS]{
     \lsequent{\ptermA \geq 0}{K} !
     \lsequent{\lnot{\rfvar},K}{\lied[]{\genDE{x}}{\ptermA} > 0}
    }
    {\lsequent{\Gamma, \ptermA > 0}{\ddiamond{\pevolve{\D{x}=\genDE{x}}}{\rfvar}}}
  }
    {\lsequent{\Gamma, \ptermA > 0, \lnot{\rfvar}}{\ddiamond{\pevolvein{\D{x}=\genDE{x}}{\ptermA > 0}}{\rfvar}}}
  }
  {\lsequent{\Gamma, \ptermA > 0}{\ddiamond{\pevolvein{\D{x}=\genDE{x}}{\ptermA > 0}}{\rfvar}} }
\end{sequentdeduction}
}%

From the leftmost open premise after~\irref{CORef}, rule~\irref{BC} is used and the resulting $\ptermA = 0$ assumption is turned into $K$ using the left premise of~\irref{RSQ}.
The resulting open premises are the premises of~\irref{RSQ}:
{\footnotesizeoff%
\begin{sequentdeduction}[array]
  \linfer[BC]{
  \linfer[cut]{
    \lsequent{\lnot{\rfvar}, K}{ \lied[]{\genDE{x}}{\ptermA} > 0}
    !
    \linfer[qear]{
      \lsequent{\ptermA \geq 0}{K}
    }
    {\lsequent{\ptermA = 0}{K}}
  }
    {\lsequent{\lnot{\rfvar}, \ptermA = 0}{ \lied[]{\genDE{x}}{\ptermA} > 0}}
  }
  {\lsequent{\Gamma, \ptermA > 0}{\dbox{\pevolvein{\D{x}=\genDE{x}}{\lnot{\rfvar}}}{\ptermA > 0}}}
\\[-\normalbaselineskip]\tag*{\qedhere}
\end{sequentdeduction}
}%
\end{proof}

\begin{proof}[\textbf{Proof of \rref{cor:SPQ}\hypertarget{proof:proof21}}]
The derivation starts by using axiom \irref{SARef} which results in two premises.
From the left premise after axiom~\irref{SARef}, a monotonicity step turns the postcondition into $\rsfvar$, yielding the left premise and first conjunct of the right premise of~\irref{SPQ}.
{\footnotesizeoff%
\begin{sequentdeduction}[array]
\linfer[SARef]{
  \linfer[MbW]{
  \lsequent{\rsfvar}{\ivr} !
  \lsequent{\Gamma}{\dbox{\pevolvein{\D{x}=\genDE{x}}{\lnot{(\rfvar \land \ivr)}}}{\rsfvar}}
  }
  {\lsequent{\Gamma}{\dbox{\pevolvein{\D{x}=\genDE{x}}{\lnot{(\rfvar \land \ivr)}}}{\ivr}}} !
  \lsequent{\Gamma}{\ddiamond{\pevolve{\D{x}=\genDE{x}}}{\rfvar}}
}
{\lsequent{\Gamma}{\ddiamond{\pevolvein{\D{x}=\genDE{x}}{\ivr}}{\rfvar}}}
\end{sequentdeduction}
}%

From the right premise after axiom~\irref{SARef}, rule~\irref{SP} yields the remaining two premises of~\irref{SPQ}:
{\footnotesizeoff%
\begin{sequentdeduction}[array]
\linfer[SP]{
  \linfer[dW+DMP]{
    \lsequent{\Gamma}{\dbox{\pevolvein{\D{x}=\genDE{x}}{\lnot{(\rfvar \land \ivr)}}}{\rsfvar}}
  }
  {\lsequent{\Gamma}{\dbox{\pevolvein{\D{x}=\genDE{x}}{\lnot{\rfvar}}}{\rsfvar}}} !
   \lsequent{\rsfvar}{\ptermA \leq 0 \land \lied[]{\genDE{x}}{p}\geq \constt{\varepsilon}}
}
  {\lsequent{\Gamma}{\ddiamond{\pevolve{\D{x}=\genDE{x}}}{\rfvar}}}
\end{sequentdeduction}
}%

The~\irref{dW+DMP} step uses the propositional tautology $\lnot{\rfvar} \limply \lnot{(\rfvar \land \ivr)}$ to weaken the domain constraint so that it matches the left premise of rule~\irref{SPQ}.
\end{proof}

\begin{proof}[\textbf{Proof of \rref{cor:combination}\hypertarget{proof:proof22}}]
The chimeric proof rule~\irref{SPcQ} amalgamates ideas behind the rules~\irref{SPQ+dVcmpK+SPc}.
It is therefore unsurprising that the derivation of~\irref{SPcQ} uses various steps from the derivations of those rules.
The derivation of~\irref{SPcQ} starts similarly to~\irref{SPQ} (following~\rref{cor:SPQ}) using axiom \irref{SARef}:
{\footnotesizeoff%
\begin{sequentdeduction}[array]
\linfer[SARef]{
  \lsequent{\Gamma}{\dbox{\pevolvein{\D{x}=\genDE{x}}{\lnot{(\rfvar \land \ivr)}}}{\ivr}} !
  \lsequent{\Gamma}{\ddiamond{\pevolve{\D{x}=\genDE{x}}}{\rfvar}}
}
{\lsequent{\Gamma}{\ddiamond{\pevolvein{\D{x}=\genDE{x}}{\ivr}}{\rfvar}}}
\end{sequentdeduction}
}%

From the left premise after~\irref{SARef}, a monotonicity step turns the postcondition into $\rsfvar$, yielding the left premise and first conjunct of the right premise of~\irref{SPcQ}.
{\footnotesizeoff%
\begin{sequentdeduction}[array]
  \linfer[MbW]{
  \lsequent{\Gamma}{\dbox{\pevolvein{\D{x}=\genDE{x}}{\lnot{(\rfvar \land \ivr)}}}{\rsfvar}} !
  \lsequent{\rsfvar}{\ivr}
  }
  {\lsequent{\Gamma}{\dbox{\pevolvein{\D{x}=\genDE{x}}{\lnot{(\rfvar \land \ivr)}}}{\ivr}}}
\end{sequentdeduction}
}%

From the right premise after~\irref{SARef}, the derivation continues using~\irref{Prog} with $\rgvar \mnodefequiv \lnot{\rsfvar}$, followed by~\irref{dW+DMP}.
The resulting left premise is (again) the left premise of~\irref{SPcQ}, while the resulting right premise is abbreviated \circled{1} and continued below:
{\footnotesizeoff%
\begin{sequentdeduction}[array]
\linfer[Prog]{
  \linfer[dW+DMP]{
    \lsequent{\Gamma}{\dbox{\pevolvein{\D{x}=\genDE{x}}{\lnot{(\rfvar \land \ivr)}}}{\rsfvar}}
  }
  {\lsequent{\Gamma}{\dbox{\pevolvein{\D{x}=\genDE{x}}{\lnot{\rfvar}}}{\rsfvar}}} !
  \circled{1}
}
  {\lsequent{\Gamma}{\ddiamond{\pevolve{\D{x}=\genDE{x}}}{\rfvar}}}
\end{sequentdeduction}
}%

The derivation continues from \circled{1} by intertwining proof ideas from~\rref{cor:higherdv} and~\rref{cor:boundedandcompact}.
First, compactness of the set characterized by $\rsfvar(x)$ implies that the formula
$\lexists{\varepsilon {>} 0}{A(\varepsilon)}$ where $A(\varepsilon) \mnodefequiv \lforall{x}{(\rsfvar(x) \limply \lied[k]{\genDE{x}}{p} \geq \varepsilon)}$ and the formula
$B \mnodefequiv \lforall{x}{(\rsfvar(x) \limply \lied[k]{\genDE{x}}{p} > 0)}$ are provably equivalent in first-order real arithmetic.
These facts are added to the assumptions similarly to the derivation of~\irref{SPc}.
The resulting right open premise is the right conjunct of the right premise of~\irref{SPcQ}:

{\footnotesizeoff%
\begin{sequentdeduction}[array]
  \linfer[cut]{
  \linfer[existsl]{
    \lsequent{\Gamma,\varepsilon > 0, A(\varepsilon)}{\ddiamond{\pevolve{\D{x}=\genDE{x}}}{\lnot{\rsfvar}}}
  }
  {\lsequent{\Gamma,\lexists{\varepsilon {>} 0}{A(\varepsilon)}}{\ddiamond{\pevolve{\D{x}=\genDE{x}}}{\lnot{\rsfvar}}}}
  !
    \linfer[qear]{
    \linfer[allr+implyr]{
      \lsequent{\rsfvar}{\lied[k]{\genDE{x}}{p} > 0}
    }
      \lsequent{}{B}
    }
    {\lsequent{}{\lexists{\varepsilon {>} 0}{A(\varepsilon)}}}
  }
  {\lsequent{\Gamma}{\ddiamond{\pevolve{\D{x}=\genDE{x}}}{\lnot{\rsfvar}}} }
\end{sequentdeduction}
}%

From the left premise, recall the derivation from~\rref{cor:higherdv} which introduces fresh variables for the initial values of the Lie derivatives with~\irref{cut+qear+existsl}.
The derivation continues similarly here, with the resulting antecedents abbreviated $\Gamma_0 \mnodefequiv \big(\Gamma,\ptermA=\ptermA_0, \dots, \lied[k-1]{\genDE{x}}{\ptermA} = \lied[k-1]{\genDE{x}}{\ptermA}_0\big)$.
Rule~\irref{dGt} is also used to add time variable $\timevar$ to the system of equations with initial value $\timevar=0$.

{\footnotesizeoff%
\begin{sequentdeduction}[array]
  \linfer[cut+qear+existsl]{
  \linfer[dGt]{
    \lsequent{\Gamma_0, \varepsilon > 0, A(\varepsilon),\timevar=0}{\ddiamond{\pevolve{\D{x}=\genDE{x},\D{\timevar}=1}}{\lnot{\rsfvar}}}
  }
    {\lsequent{\Gamma_0, \varepsilon > 0, A(\varepsilon)}{\ddiamond{\pevolve{\D{x}=\genDE{x}}}{\lnot{\rsfvar}}}}
  }
  {\lsequent{\Gamma,\varepsilon > 0, A(\varepsilon)}{\ddiamond{\pevolve{\D{x}=\genDE{x}}}{\lnot{\rsfvar}}}}
\end{sequentdeduction}
}%

Recall from~\rref{cor:boundedandcompact} that the formula $\rrfvar(\ptermA_1) \mnodefequiv \lforall{x}{(\rsfvar(x) \limply \ptermA \leq \ptermA_1)}$ can be added to the assumptions using~\irref{cut+qear+existsl}, for some fresh variable $\ptermA_1$ symbolically representing the maximum value of $\ptermA$ on the compact set characterized by $\rsfvar$:
{\footnotesizeoff%
\begin{sequentdeduction}[array]
  \linfer[cut+qear+existsl]{
    \lsequent{\Gamma_0, \varepsilon > 0, A(\varepsilon),\timevar=0,\rrfvar(\ptermA_1)}{\ddiamond{\pevolve{\D{x}=\genDE{x},\D{\timevar}=1}}{\lnot{\rsfvar}}}
  }
  {\lsequent{\Gamma_0, \varepsilon > 0, A(\varepsilon),\timevar=0}{\ddiamond{\pevolve{\D{x}=\genDE{x},\D{\timevar}=1}}{\lnot{\rsfvar}}}}
\end{sequentdeduction}
}%

One last arithmetic cut is needed to set up the sequence of differential cuts~\rref{eq:integration}.
Recall the polynomial $\ptermB(\timevar)$ from~\rref{eq:integration} is eventually positive for sufficiently large values of $\timevar$ because its leading coefficient is strictly positive.
The same applies to the polynomial $\ptermB(\timevar)-\ptermA_1$ so~\irref{cut+qear} (and Skolemizing with~\irref{existsl}) adds the formula $\lforall{\timevar > \timevar_1} {(\ptermB(\timevar) - \ptermA_1 > 0)}$ to the assumptions:
{\footnotesizeoff%
\begin{sequentdeduction}[array]
  \linfer[cut+qear+existsl]{
    \lsequent{\Gamma_0, \varepsilon > 0, A(\varepsilon),\timevar=0,\rrfvar(\ptermA_1),\lforall{\timevar > \timevar_1} {\ptermB(\timevar) - \ptermA_1 > 0}}{\ddiamond{\pevolve{\D{x}=\genDE{x},\D{\timevar}=1}}{\lnot{\rsfvar}}}
  }
  {\lsequent{\Gamma_0, \varepsilon > 0, A(\varepsilon),\timevar=0,\rrfvar(\ptermA_1)}{\ddiamond{\pevolve{\D{x}=\genDE{x},\D{\timevar}=1}}{\lnot{\rsfvar}}}
}
\end{sequentdeduction}
}%

Once all the arithmetic cuts are in place, an additional cut introduces a (bounded) sufficient duration assumption $\ddiamond{\pevolve{\D{x}=\genDE{x},\D{\timevar}=1}}{(\ptermB(t) - \ptermA_1 > 0 \lor \lnot{\rsfvar})}$ (antecedents temporarily abbreviated with $\dots$ for brevity).
The cut premise, abbreviated~\circled{1}, is proved further below:
{\footnotesizeoff%
\begin{sequentdeduction}[array]
  \linfer[cut]{
    \lsequent{\Gamma_0, \dots, \ddiamond{\pevolve{\D{x}=\genDE{x},\D{\timevar}=1}}{(\ptermB(t) - \ptermA_1 > 0 \lor \lnot{\rsfvar})}}{\ddiamond{\pevolve{\D{x}=\genDE{x},\D{\timevar}=1}}{\lnot{\rsfvar}}} \quad
    \circled{1}
  }
  {\lsequent{\Gamma_0, \varepsilon > 0, A(\varepsilon),\timevar=0,\rrfvar(\ptermA_1),\lforall{\timevar > \timevar_1} {(\ptermB(\timevar) - \ptermA_1 > 0)}}{\ddiamond{\pevolve{\D{x}=\genDE{x},\D{\timevar}=1}}{\lnot{\rsfvar}}}}
\end{sequentdeduction}
}%

From the open premise on the left, axiom~\irref{Prog} is used with $\rgvar \mnodefequiv \ptermB(t) - \ptermA_1 > 0 \lor \lnot{\rsfvar}$:
{\footnotesizeoff%
\begin{sequentdeduction}[array]
  \linfer[Prog]{
    \lsequent{\Gamma_0, \varepsilon > 0, A(\varepsilon),\timevar=0,\rrfvar(\ptermA_1)}{\dbox{\pevolvein{\D{x}=\genDE{x},\D{\timevar}=1}{\rsfvar}}{(\ptermB(t) - \ptermA_1 \leq 0 \land \rsfvar)}}
  }
  {\lsequent{\Gamma_0,\dots, \ddiamond{\pevolve{\D{x}=\genDE{x},\D{\timevar}=1}}{(\ptermB(t) - \ptermA_1 > 0 \lor \lnot{\rsfvar})}}{\ddiamond{\pevolve{\D{x}=\genDE{x},\D{\timevar}=1}}{\lnot{\rsfvar}}}}
\end{sequentdeduction}
}%

Next, a monotonicity step~\irref{MbW} simplifies the postcondition using the (constant) assumption $\rrfvar(\ptermA_1)$ and the domain constraint $\rsfvar$:

{\footnotesizeoff%
\begin{sequentdeduction}[array]
  \linfer[MbW]{
    \lsequent{\Gamma_0, \timevar=0, A(\varepsilon)}{\dbox{\pevolvein{\D{x}=\genDE{x},\D{\timevar}=1}{\rsfvar}}{\ptermA \geq \ptermB(t)}}
  }
    {\lsequent{\Gamma_0, \varepsilon > 0, A(\varepsilon),\timevar=0,\rrfvar(\ptermA_1)}{\dbox{\pevolvein{\D{x}=\genDE{x},\D{\timevar}=1}{\rsfvar}}{(\ptermB(t) - \ptermA_1 \leq 0 \land \rsfvar)}}}
\end{sequentdeduction}
}%

The derivation closes using the chain of differential cuts from~\rref{eq:integration}.
In the first~\irref{dC} step, the (constant) assumption $A(\varepsilon)$ is used, see the derivation labeled \circled{$\star$} immediately below:
{\footnotesizeoff%
\begin{sequentdeduction}[array]
  \linfer[dC]{
    \lsequent{\Gamma_0, \timevar=0}{\dbox{\pevolvein{\D{x}=\genDE{x},\D{\timevar}=1}{\rsfvar \land \lied[k-1]{\genDE{x}}{\ptermA} \geq \lied[k-1]{\genDE{x}}{\ptermA}_0 + \constt{\varepsilon}\timevar}}{\ptermA \geq \ptermB(t)}} \quad \circled{$\star$}
  }
  {\lsequent{\Gamma_0, \timevar=0, A(\varepsilon)}{\dbox{\pevolvein{\D{x}=\genDE{x},\D{\timevar}=1}{\rsfvar}}{\ptermA \geq \ptermB(t)}}}
\end{sequentdeduction}
}%
From~\circled{$\star$}:
{\footnotesizeoff\renewcommand{\arraystretch}{1.4}%
\begin{sequentdeduction}[array]
  \linfer[dIcmp]{
  \linfer[qear]{
    \lclose
  }
    {\lsequent{A(\varepsilon), \rsfvar}{\lied[k]{\genDE{x}}{\ptermA} \geq \constt{\varepsilon} }}
  }
  {\lsequent{\Gamma_0, \timevar=0,A(\varepsilon)}{\dbox{\pevolvein{\D{x}=\genDE{x},\D{\timevar}=1}{\rsfvar}}{\lied[k-1]{\genDE{x}}{\ptermA} \geq \lied[k-1]{\genDE{x}}{\ptermA}_0 + \constt{\varepsilon}\timevar}}}
\end{sequentdeduction}
}%

Subsequent~\irref{dC+dIcmp} steps are similar to the derivation in~\rref{cor:higherdv}:
{\footnotesizeoff\renewcommand{\arraystretch}{1.5}%
\begin{sequentdeduction}[array]
    \linfer[dC+dIcmp]{
    \linfer[dC+dIcmp]{
    \linfer[dIcmp]{
      \lclose
    }
      {\lsequent{\Gamma_0, \timevar=0}{\dbox{\pevolvein{\D{x}=\genDE{x},\D{\timevar}=1}{\dots \land \lied[1]{\genDE{x}}{\ptermA} \geq \lied[1]{\genDE{x}}{\ptermA}_0 + \dots + \constt{\varepsilon} \frac{\timevar^{k-1}}{(k-1)!} }}{\ptermA \geq \ptermB(\timevar)}}}
    }
      {\vdots}
    }
    {\lsequent{\Gamma_0, \timevar=0}{\dbox{\pevolvein{\D{x}=\genDE{x},\D{\timevar}=1}{\rsfvar \land \lied[k-1]{\genDE{x}}{\ptermA} \geq \lied[k-1]{\genDE{x}}{\ptermA}_0 + \constt{\varepsilon}\timevar }}{\ptermA \geq \ptermB(\timevar)}}}
\end{sequentdeduction}
}%

From premise \circled{1}, a monotonicity step~\irref{MdW} rephrases the postcondition of the cut using the assumption $\lforall{\timevar > \timevar_1} {(\ptermB(\timevar) - \ptermA_1 > 0)}$.
Axiom~\irref{BEx} finishes the derivation since formula $\rsfvar(x)$ characterizes a compact (and hence bounded) set:
{\footnotesizeoff%
\begin{sequentdeduction}[array]
  \linfer[MdW]{
  \linfer[BEx]{
    \lclose
  }
    {\lsequent{}{\ddiamond{\pevolve{\D{x}=\genDE{x},\D{\timevar}=1}}{(\timevar > \timevar_1 \lor \lnot{\rsfvar})}}}
  }
  {\lsequent{\lforall{\timevar > \timevar_1} {(\ptermB(\timevar) - \ptermA_1 > 0)}}{\ddiamond{\pevolve{\D{x}=\genDE{x},\D{\timevar}=1}}{(\ptermB(t) - \ptermA_1 > 0 \lor \lnot{\rsfvar})}}}
\\[-\normalbaselineskip]\tag*{\qedhere}
\end{sequentdeduction}
}%
\end{proof}

\begin{proof}[\textbf{Proof of \rref{cor:prq}\hypertarget{proof:proof23}}]
Rule~\irref{PRQ} is derived from~\irref{SPcQ} with $\rsfvar \mnodefequiv \ivr \land \lnot{\rfvar}$ and $k\mnodefeq1$ because formula $\ivr \land \lnot{\rfvar}$ is assumed to characterize a compact set, as required by rule~\irref{SPcQ}:
{\footnotesizeoff%
\begin{sequentdeduction}[array]
\linfer[SPc]{
  \linfer[MbW]{
    \lsequent{\Gamma}{\dbox{\pevolvein{\D{x}=\genDE{x}}{\lnot{(\rfvar \land \ivr)}}}{\ivr}}
  }
  {\lsequent{\Gamma}{\dbox{\pevolvein{\D{x}=\genDE{x}}{\lnot{(\rfvar \land \ivr)}}}{(\ivr \land \lnot{\rfvar})}}} !
  \linfer[]{
    \lsequent{\ivr,\lnot{\rfvar}}{\lied[]{\genDE{x}}{p} > 0}
  }
  {\lsequent{\ivr,\lnot{\rfvar}}{\ivr \land \lied[]{\genDE{x}}{p} > 0}}
}
  {\lsequent{\Gamma}{\ddiamond{\pevolvein{\D{x}=\genDE{x}}{\ivr}}{\rfvar}} }
\end{sequentdeduction}
}%

The~\irref{MbW} step uses the propositional tautology $\lnot{(\rfvar \land \ivr)} \land \ivr \limply \ivr \land \lnot{\rfvar}$.
\end{proof}

\subsection{Proofs for ODE Liveness Proofs in Practice}
\label{app:implementationproofs}

\begin{proof}[\textbf{Proof of \rref{cor:atomicdvcmpexist}\hypertarget{proof:proof24}}]
The derivation starts with a \irref{cut} of the sole premise of~\irref{dVcmpE} (the left premise below).
The existentially bound variable is renamed to $\delta$ throughout the derivation for clarity.
After Skolemizing (with~\irref{existsl}), rule~\irref{dVcmp} is used with $\constt{\varepsilon} \mnodefeq \delta$.
The universally quantified antecedent is constant for the ODE $\D{x}=\genDE{x}$ so it is soundly kept across the application of~\irref{dVcmp}.
This proof is completed propositionally~\irref{alll+implyl}.
{\footnotesizeoff%
\renewcommand{\linferPremissSeparation}{\hspace{5pt}}%
\begin{sequentdeduction}[array]
\linfer[cut]{
   \lsequent{\Gamma}{\lexists{\delta>0}{\lforall{x}{\big(\lnot{(\ptermA \cmp 0)} \limply \lied[]{\genDE{x}}{\ptermA}\geq \delta} \big)}} !
   \linfer[existsl+andl]{
   \linfer[dVcmp]{
   \linfer[alll+implyl]{
      \lclose
    }
   {\lsequent{\lforall{x}{\big(\lnot{(\ptermA \cmp 0)} \limply \lied[]{\genDE{x}}{\ptermA}\geq \delta\big)}, \lnot{(\ptermA \cmp 0)}}{ \lied[]{\genDE{x}}{\ptermA}\geq \delta}}
   }
    {\lsequent{\delta>0,\lforall{x}{\big(\lnot{(\ptermA \cmp 0)} \limply \lied[]{\genDE{x}}{\ptermA}\geq \delta\big)}}{\ddiamond{\pevolvein{\D{x}=\genDE{x}}{\ivr}}{\ptermA \cmp 0}}}
  }
  {\lsequent{\lexists{\delta>0}{\lforall{x}{\big(\lnot{(\ptermA \cmp 0)} \limply \lied[]{\genDE{x}}{\ptermA}\geq \delta} \big)}}{\ddiamond{\pevolvein{\D{x}=\genDE{x}}{\ivr}}{\ptermA \cmp 0}}}
}
  {\lsequent{\Gamma}{\ddiamond{\pevolvein{\D{x}=\genDE{x}}{\ivr}}{\ptermA \cmp 0}} }
\\[-\normalbaselineskip]\tag*{\qedhere}
\end{sequentdeduction}
}%
\end{proof}

\begin{proof}[\textbf{Proof of \rref{cor:semialgdv}\hypertarget{proof:proof25}}]
Assume that formulas $\rfvar,\rgvar_\rfvar$ are in normal form as in~\rref{cor:semialgdv}.
Rule~\irref{dV} is derived first since rule~\irref{dVE} follows from~\irref{dV} as a corollary.
The derivation of rule~\irref{dV} uses variable $b$ as a symbolic lower bound on the initial values of all terms $\ptermA_{ij},\ptermB_{ij}$ appearing in formula $\rfvar$.
The formula $\lexists{b}{\landfold_{i=0}^{M} \Big(\landfold_{j=0}^{m(i)} \ptermA_{ij} \geq b \land \landfold_{j=0}^{n(i)} \ptermB_{ij} \geq b\Big)}$ is a valid formula of real arithmetic and is proved as a~\irref{cut} by~\irref{qear} because $\rfvar$ is a finite formula so there exists a lower bound $b$ smaller than the value all of the terms $\ptermA_{ij}, \ptermB_{ij}$.

The derivation starts similarly to~\irref{dVcmp} by introducing fresh variables $b$ (for the bound above), and $i$ representing the multiplicative inverse of $\constt{\varepsilon}$ using arithmetic cuts~\irref{cut+qear}.
It then Skolemizes (\irref{existsl}) and uses \irref{dGt} to introduce a fresh time variable to the system of differential equations:
{\footnotesizeoff%
\begin{sequentdeduction}[array]
  \linfer[cut+qear]{
  \linfer[existsl]{
  \linfer[dGt]{
    \lsequent{\Gamma, \constt{\varepsilon} > 0, \landfold_{i=0}^{M} \Big(\landfold_{j=0}^{m(i)} \ptermA_{ij} \geq b \land \landfold_{j=0}^{n(i)} \ptermB_{ij} \geq b\Big), i\varepsilon() = 1, \timevar=0}{\ddiamond{\pevolve{\D{x}=\genDE{x},\D{\timevar}=1}}{\rfvar}}
  }
    {\lsequent{\Gamma, \constt{\varepsilon} > 0, \landfold_{i=0}^{M} \Big(\landfold_{j=0}^{m(i)} \ptermA_{ij} \geq b \land \landfold_{j=0}^{n(i)} \ptermB_{ij} \geq b\Big), i\varepsilon() = 1}{\ddiamond{\pevolve{\D{x}=\genDE{x}}}{\rfvar}}}
  }
  {\lsequent{\Gamma, \constt{\varepsilon} > 0, \lexists{b}{\landfold_{i=0}^{M} \Big(\landfold_{j=0}^{m(i)} \ptermA_{ij} \geq b \land \landfold_{j=0}^{n(i)} \ptermB_{ij} \geq b\Big)}, \lexists{i}{(i\varepsilon() = 1)}}{\ddiamond{\pevolve{\D{x}=\genDE{x}}}{\rfvar}}}
  }
  {\lsequent{\Gamma, \constt{\varepsilon} > 0}{\ddiamond{\pevolve{\D{x}=\genDE{x}}}{\rfvar}} }
\end{sequentdeduction}
}%

Next, the refinement axiom~\irref{Prog} is used with $\rgvar \mnodefequiv (b + \constt{\varepsilon} \timevar > 0 )$.
This yields two premises, the right of which is proved by~\irref{GEx} (after monotonic rephrasing with~\irref{qear+MdW}) because the ODE $\D{x}=\genDE{x}$ is assumed to have provable global solutions.
The left premise from~\irref{Prog} is abbreviated \circled{1} and continued below.
{\footnotesizeoff%
\begin{sequentdeduction}[array]
  \linfer[Prog]{
  \linfer[qear+MdW]{
  \linfer[GEx]{
    \lclose
  }
    {\lsequent{\Gamma}{\ddiamond{\pevolve{\D{x}=\genDE{x},\D{\timevar}=1}}{ \timevar > -i b}}}
  }
  {\circled{1} \qquad\qquad
    \lsequent{\Gamma,\constt{\varepsilon} > 0, i\varepsilon() = 1}{\ddiamond{\pevolve{\D{x}=\genDE{x},\D{\timevar}=1}}{ (b + \constt{\varepsilon} \timevar > 0 )}}}
  }
  {\lsequent{\Gamma, \constt{\varepsilon} > 0, \landfold_{i=0}^{M} \Big(\landfold_{j=0}^{m(i)} \ptermA_{ij} \geq b \land \landfold_{j=0}^{n(i)} \ptermB_{ij} \geq b\Big), i\varepsilon() = 1, \timevar=0}{\ddiamond{\pevolve{\D{x}=\genDE{x},\D{\timevar}=1}}{\rfvar}}}
\end{sequentdeduction}
}%

Continuing from premise \circled{1}, monotonicity strengthens the postcondition from $b + \constt{\varepsilon} \timevar \leq 0$ to $\rgvar_\rfvar$ under the domain constraint assumption $\lnot{\rfvar}$.
This strengthening works because, assuming that $\lnot{\rfvar}$ and $\rgvar_\rfvar$ are true in a given state, then propositionally, at least one of the following pairs (each pair listed horizontally) of sub-formulas of $\lnot{\rfvar}$ and $\rgvar_\rfvar$ for some indices $i,j$ is true in that state:
\[ p_{ij} < 0 \qquad\qquad p_{ij} - (b + \constt{\varepsilon}t) \geq 0 \]
\[ q_{ij} \leq 0 \qquad\qquad q_{ij} - (b + \constt{\varepsilon}t) \geq 0 \]

Either pair of formulas imply that formula $b + \constt{\varepsilon}t \leq 0$ is also true in that state, so the strengthening is proved by~\irref{MbW+qear}.
Next, a~\irref{cut+qear} step adds the formula $\rgvar_\rfvar$ to the antecedents using the assumptions $ \landfold_{i=0}^{M} \Big(\landfold_{j=0}^{m(i)} \ptermA_{ij} \geq b \land \landfold_{j=0}^{n(i)} \ptermB_{ij} \geq b\Big)$ and $t=0$.
Rule~\irref{sAIQ} yields the sole premise of rule~\irref{dV} because $\rgvar_\rfvar$ characterizes a closed set~\cite{DBLP:journals/jacm/PlatzerT20}.
{\footnotesizeoff%
\begin{sequentdeduction}[array]
  \linfer[MbW+qear]{
  \linfer[cut+qear]{
  \linfer[sAIQ]{
    \lsequent{\lnot{\rfvar}, \sigliedsai{\genDE{x}}{(\lnot{\rfvar})}, \rgvar_\rfvar}{\sigliedsai{\genDE{x}}{(\rgvar_\rfvar)}}
  }
    {\lsequent{\rgvar_\rfvar}{\dbox{\pevolvein{\D{x}=\genDE{x},\D{\timevar}=1}{\lnot{\rfvar}}}{\rgvar_\rfvar}}}
  }
    {\lsequent{\Gamma, \landfold_{i=0}^{M} \Big(\landfold_{j=0}^{m(i)} \ptermA_{ij} \geq b \land \landfold_{j=0}^{n(i)} \ptermB_{ij} \geq b\Big), \timevar=0}{\dbox{\pevolvein{\D{x}=\genDE{x},\D{\timevar}=1}{\lnot{\rfvar}}}{\rgvar_\rfvar}}}
  }
  {\lsequent{\Gamma, \landfold_{i=0}^{M} \Big(\landfold_{j=0}^{m(i)} \ptermA_{ij} \geq b \land \landfold_{j=0}^{n(i)} \ptermB_{ij} \geq b\Big), \timevar=0}{\dbox{\pevolvein{\D{x}=\genDE{x},\D{\timevar}=1}{\lnot{\rfvar}}}{\big( b + \constt{\varepsilon} \timevar \leq 0 \big)}}}
\end{sequentdeduction}
}%

Rule~\irref{dVE} is derived from rule~\irref{dV} similarly to the derivation of rule~\irref{dVcmpE} from rule~\irref{dVcmp}.
The derivation starts with a \irref{cut} of the sole premise of~\irref{dVE} (the left premise below).
The existentially bound variable is renamed to $\delta$ throughout the derivation for clarity.
The right premise is abbreviated \circled{2} and shown below.
{\footnotesizeoff%
\begin{sequentdeduction}[array]
\linfer[cut]{
   \lsequent{\Gamma}{\lexists{\delta>0}{\lforall{b}{\lforall{t}{\lforall{x}{\big(\lnot{\rfvar} \land \sigliedsai{\genDE{x}}{(\lnot{\rfvar})} \land \rgvar_\rfvar \limply \sigliedsai{\genDE{x}}{(\rgvar_\rfvar)}\big)}}}}} !
   \circled{2}
   }
  {\lsequent{\Gamma}{\ddiamond{\pevolvein{\D{x}=\genDE{x}}{\ivr}}{\rfvar}} }
\end{sequentdeduction}
}%

From \circled{2}, after Skolemizing (with~\irref{existsl}), rule~\irref{dV} is used with $\constt{\varepsilon} \mnodefeq \delta$.
The universally quantified antecedent is constant for the ODE $\D{x}=\genDE{x}$ and the universal quantification over variables $b, \timevar$ ensure that those variables are fresh in the rest of the sequent so the antecedent is soundly kept across the application of rule~\irref{dV}.
This proof is completed propositionally~\irref{alll+implyl+andl}.
{\footnotesizeoff%
\begin{sequentdeduction}[array]
\linfer[existsl+andl]{
   \linfer[dV]{
   \linfer[alll+implyl+andl]{
      \lclose
    }
   {\lsequent{\lforall{b}{\lforall{t}{\lforall{x}{\big(\lnot{\rfvar} \land \sigliedsai{\genDE{x}}{(\lnot{\rfvar})} \land \rgvar_\rfvar \limply \sigliedsai{\genDE{x}}{(\rgvar_\rfvar)}\big)}}}, \lnot{\rfvar}, \sigliedsai{\genDE{x}}{(\lnot{\rfvar})}, \rgvar_\rfvar}{\sigliedsai{\genDE{x}}{(\rgvar_\rfvar)}}}
   }
    {\lsequent{\delta>0,\lforall{b}{\lforall{t}{\lforall{x}{\big(\lnot{\rfvar} \land \sigliedsai{\genDE{x}}{(\lnot{\rfvar})} \land \rgvar_\rfvar \limply \sigliedsai{\genDE{x}}{(\rgvar_\rfvar)}\big)}}}}{\ddiamond{\pevolvein{\D{x}=\genDE{x}}{\ivr}}{\rfvar}}}
  }
  {\lsequent{\lexists{\delta>0}{\lforall{b}{\lforall{t}{\lforall{x}{\big(\lnot{\rfvar} \land \sigliedsai{\genDE{x}}{(\lnot{\rfvar})} \land \rgvar_\rfvar \limply \sigliedsai{\genDE{x}}{(\rgvar_\rfvar)}\big)}}}}}{\ddiamond{\pevolvein{\D{x}=\genDE{x}}{\ivr}}{\rfvar}}}
  \\[-\normalbaselineskip]\tag*{\qedhere}
  \end{sequentdeduction}
}%
\end{proof}

\begin{proof}[\textbf{Proof of \rref{cor:closeddomref}\hypertarget{proof:proof27}}]
The derivation of rule~\irref{cRef} is seemingly straightforward using axiom~\irref{CRef} followed by rule~\irref{Enc} on the resulting middle premise.
There is a minor subtlety to address because the formula $\strictineq{\ivr}$ (with strict inequalities replacing non-strict ones in $\ivr$) is only a syntactic \emph{under-approximation} of the interior of the set characterized by $\ivr$, and so the axiom~\irref{CRef} does \emph{not} immediately apply as stated.
For example, formula $x < x$ characterizes the empty set, while the formula $x \leq x$ characterizes the set of all states, whose interior is also the set of all states.
However, since $\ivr$ is a semialgebraic formula, there is a computable quantifier-free formula $\interior{\ivr}$ that exactly characterizes its topological interior~\cite{Bochnak1998} which can be used with~\irref{CRef} in the syntactic derivation below.

The derivation starts with a~\irref{cut} of the formula $\ivr$ which yields the leftmost premise of rule~\irref{cRef}.
This is followed with~\irref{DX}, which adds formula $\lnot{\rfvar}$ to the antecedents because there is nothing to prove if both formulas $\ivr$ and $\rfvar$ are already true initially.
The derivation then uses~\irref{CRef} with the computable formula $\interior{\ivr}$ characterizing the topological interior of formula $\ivr$.
This yields two premises, the right of which corresponds to the rightmost premise of rule~\irref{cRef}.
From the resulting left premise (with postcondition $\interior{\ivr}$), an~\irref{MbW+qear} monotonicity step strengthens the postcondition because $\strictineq{\ivr} \limply \interior{\ivr}$ is a provable formula of real arithmetic.
Rule~\irref{Enc} completes the derivation because formula $\strictineq{\ivr}$ is formed from finite conjunctions and disjunctions of strict inequalities, and $\relaxineq{(\strictineq{\ivr})}$ is syntactically equal to $\ivr$ by definition.
{\footnotesizeoff\renewcommand{\arraystretch}{1.4}%
\begin{sequentdeduction}[array]
  \linfer[cut]{
  \lsequent{\Gamma}{\ivr} !
  \linfer[DX]{
  \linfer[CRef]{
    \linfer[MbW+qear]{
    \linfer[Enc]{
      \lsequent{\Gamma}{\dbox{\pevolvein{\D{x}=\genDE{x}}{\rrfvar \land \lnot{\rfvar} \land \ivr}}{\strictineq{\ivr}}}
    }
      {\lsequent{\Gamma,\ivr}{\dbox{\pevolvein{\D{x}=\genDE{x}}{\rrfvar \land \lnot{\rfvar}}}{\strictineq{\ivr}}}}
    }
    {\lsequent{\Gamma,\ivr}{\dbox{\pevolvein{\D{x}=\genDE{x}}{\rrfvar \land \lnot{\rfvar}}}{\interior{\ivr}}}}
   !
   \lsequent{\Gamma}{\ddiamond{\pevolvein{\D{x}=\genDE{x}}{\rrfvar}}{\rfvar}}
  }
    {\lsequent{\Gamma,\ivr, \lnot{\rfvar}}{\ddiamond{\pevolvein{\D{x}=\genDE{x}}{\ivr}}{\rfvar}}}
  }
    {\lsequent{\Gamma,\ivr}{\ddiamond{\pevolvein{\D{x}=\genDE{x}}{\ivr}}{\rfvar}}}
  }
  {\lsequent{\Gamma}{\ddiamond{\pevolvein{\D{x}=\genDE{x}}{\ivr}}{\rfvar}} }
\\[-\normalbaselineskip]\tag*{\qedhere}
\end{sequentdeduction}
}%
\end{proof}

\section{Counterexamples}

\newcommand{\uvar}{u}
\newcommand{\vvar}{v}
\label{app:counterexamples}
This appendix gives explicit counterexamples to illustrate the soundness errors identified in Sections~\ref{sec:nodomconstraint} and~\ref{sec:withdomconstraint}.

\subsection{Finite-Time Blow Up}
The soundness errors identified in~\rref{sec:nodomconstraint} all arise because of incorrect handling of the fact that solutions may blow up in finite time.
This phenomenon is studied in detail in~\rref{sec:globexist}, and it is illustrated by $\exnonlinear$~\rref{eq:exnonlinear}, see~\rref{fig:odeexamples}, or $\exblowup$~\rref{eq:exblowup}, see~\rref{ex:velocity}.
The following is a counterexample for the original presentation of~\irref{dVeq} (and~\irref{TT+dVeqQ+TTQ})~\cite{DBLP:conf/emsoft/TalyT10}.
Similar counterexamples can be constructed for~\cite[Remark 3.6]{DBLP:journals/siamco/PrajnaR07} and for the original presentation of~\irref{RS+RSQ}~\cite{DBLP:journals/siamco/RatschanS10}.

\begin{counterexample}
\irlabel{dVeqbad|dV$_=$\usebox{\Lightningval}}
Consider rule~\irref{dVeq} \emph{without} the restriction that the ODE has provable global solutions.
This unrestricted rule, denoted~\irref{dVeqbad}, is unsound as shown by the following derivation using it with $\constt{\varepsilon}{\mnodefeq}1$:
{\footnotesizeoff
\begin{sequentdeduction}[array]
  \linfer[dVeqbad]{
  \linfer[qear]{
    \lclose
  }
    {\lsequent{\vvar - 2 < 0}{ 1 \geq 1}}
  }
  {\lsequent{\vvar - 2 \leq 0}{\ddiamond{\pevolve{\D{\uvar}=\uvar^2,\D{\vvar}=1}}{\vvar - 2 = 0}}}
\end{sequentdeduction}
}%

The conclusion of this derivation is not valid.
Consider the initial state $\iget[state]{\I}$ with values $\iget[state]{\I}(\uvar)=1$ and $\iget[state]{\I}(\vvar)=0$.
The explicit solution of the ODE from $\iget[state]{\I}$ is given by $\uvar(t) = \frac{1}{1-t}, \vvar(t) = t$ for $t \in [0,1)$.
This solution \emph{does not exist} beyond the time interval $[0,1)$ because the $\uvar$-coordinate asymptotically approaches $\infty$, i.e., blows up, as time approaches $t=1$.
It is impossible to reach a state satisfying $\vvar-2=0$ from $\iget[state]{\I}$ along this solution since at least $2$ time units are required.

This counterexample further illustrates the difficulty in handling nonlinear ODEs.
Neither the precondition ($\vvar-2 \leq 0$) nor postcondition ($\vvar-2=0$) mention the variable $\uvar$, and the ODEs $\D{\uvar}=\uvar^2, \D{\vvar}=1$ do not depend on variables $\vvar,\uvar$ respectively, so it is tempting to disregard the variable $\uvar$ entirely.
Indeed, the liveness property $\vvar - 2 \leq 0 \limply \ddiamond{\pevolve{\D{\vvar}=1}}{\vvar - 2 = 0}$ is valid.
Yet, for liveness questions about the (original) ODE, $\D{\uvar}=\uvar^2, \D{\vvar}=1$, the two variables are inextricably linked through the time axis of solutions to the ODE.
\end{counterexample}

\subsection{Topological Considerations}
The soundness errors identified in~\rref{sec:withdomconstraint} arise because of incorrect topological reasoning in subtle cases where the topological boundaries of the sets characterized by the domain constraint and desired liveness postcondition intersect.
The original presentation of~\irref{dVcmpQ}~\cite{DBLP:journals/logcom/Platzer10} gives the following proof rule for atomic inequalities $\ptermA \cmp 0$.
For simplicity, assume that the ODE $\D{x}=\genDE{x}$ is globally Lipschitz continuous so that solutions exist for all time.
\[
\dinferenceRule[dVcmpQbad|dV$_\cmp\&$\usebox{\Lightningval}]{}
{\linferenceRule
  {
    \lsequent{\Gamma}{\dbox{\pevolvein{\D{x}=\genDE{x}}{\ptermA \leq 0}}{\ivr}} \quad
    \lsequent{\lnot{(\ptermA \cmp 0)}, \ivr}{\lied[]{\genDE{x}}{\ptermA}\geq \constt{\varepsilon}}
  }
  {\lsequent{\Gamma,\constt{\varepsilon} > 0}{\ddiamond{\pevolvein{\D{x}=\genDE{x}}{\ivr}}{\ptermA \cmp 0}} }
}{}
\]

Compared to~\irref{dVcmpQ}, this omits the assumption $\lnot{(\ptermA \cmp 0)}$, makes no topological assumptions on the domain constraint $\ivr$, and uses syntactic weak negation~\cite{DBLP:journals/logcom/Platzer10} for the domain constraint of its left premise.
The following two counterexamples show that the two assumptions are necessary.

\begin{counterexample}
Consider the following derivation using the unsound rule~\irref{dVcmpQbad} with $\constt{\varepsilon} \mnodefeq 1$:
{\footnotesizeoff
\begin{sequentdeduction}[array]
  \linfer[dVcmpQbad]{
    \linfer[dW+qear]{
      \lclose
    }
    {\lsequent{\uvar > 1}{\dbox{\pevolvein{\D{\uvar}=1}{\uvar \leq 0}}{\uvar \leq 1}}} !
    \linfer[qear]{
      \lclose
    }
    {\lsequent{\uvar < 0, \uvar \leq 1}{1 \geq 1}}
  }
  {\lsequent{\uvar > 1}{\ddiamond{\pevolvein{\D{\uvar}=1}{\uvar \leq 1}}{\uvar \geq 0}}}
\end{sequentdeduction}
}%

The conclusion of this derivation is not valid. In states where $\uvar > 1$ is true initially, the domain constraint is violated immediately so the diamond modality in the succedent is trivially false in those states.
\end{counterexample}

\begin{counterexample}[\cite{Sogokon16}]
This counterexample is adapted from~\cite[Example 142]{Sogokon16}, which has a minor typographical error (the sign of an inequality is flipped).
Consider the following derivation using the unsound rule~\irref{dVcmpQbad} with $\constt{\varepsilon} \mnodefeq 1$:
{\footnotesizeoff
\begin{sequentdeduction}[array]
  \linfer[dVcmpQbad]{
    \linfer[dW+qear]{
      \lclose
    }
    {\lsequent{}{\dbox{\pevolvein{\D{\uvar}=1}{\uvar \leq 1}}{\uvar \leq 1}}} !
    \linfer[qear]{
      \lclose
    }
    {\lsequent{\uvar \leq 1, \uvar \leq 1}{1 \geq 1}}
  }
  {\lsequent{}{\ddiamond{\pevolvein{\D{\uvar}=1}{\uvar \leq 1}}{\uvar > 1}}}
\end{sequentdeduction}
}%

The conclusion of this derivation is not valid and, in fact, unsatisfiable. The domain constraint $\uvar \leq 1$ and postcondition $\uvar > 1$ are contradictory so no solution can reach a state satisfying both simultaneously.
\end{counterexample}

The next two counterexamples are for the liveness arguments from~\cite[Corollary 1]{DBLP:conf/hybrid/PrajnaR05} and~\cite[Theorem 3.5]{DBLP:journals/siamco/PrajnaR07}.
For clarity, the original notation from~\cite[Theorem 3.5]{DBLP:journals/siamco/PrajnaR07} is used.
The following conjecture is quoted from~\cite[Theorem 3.5]{DBLP:journals/siamco/PrajnaR07}:

\begin{conjecture}
Consider the system $\D{x}=\genDE{x}$, with $f \in C(\reals^n,\reals^n)$. Let $\bigchi \subset \reals^n$, $\bigchi_0 \subseteq \bigchi$, and $\bigchi_r \subseteq \bigchi$ be bounded sets. If there exists a function $B \in C^1(\reals^n)$ satisfying:
\begin{align}
&B(x) \leq 0                              && \forall x \in \bigchi_0  \label{eq:init} \\
&B(x) > 0                                 && \forall x \in \closure{\bdr{\bigchi} \setminus \bdr{\bigchi_r}} \label{eq:unsoundbdr} \\
&\Dp[x]{B}f(x) < 0    && \forall x \in \closure{\bigchi \setminus \bigchi_r}
\end{align}

Then the eventuality property holds, i.e., for all initial conditions $x_0 \in \bigchi_0$, the trajectory $x(t)$ of the system starting at $x(0)=x_0$ satisfies $x(T) \in \bigchi_r$ and $x(t) \in \bigchi$ for all $t \in [0,T]$ for some $T \geq 0$.
The notation $\closure{\bigchi}$ (resp. $\bdr{\bigchi}$) denotes the topological closure (resp. boundary) of the set $\bigchi$.
\end{conjecture}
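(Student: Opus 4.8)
The plan is to \emph{refute} this conjecture rather than prove it: as \rref{tab:survey} and the discussion preceding the corrected rule~\irref{PRQ} already indicate, condition~\eqref{eq:unsoundbdr} demands $B>0$ only on $\closure{\bdr{\bigchi}\setminus\bdr{\bigchi_r}}$, which can be strictly smaller than the set on which a barrier-style argument genuinely needs positivity. So I would build an explicit counterexample. The gap to exploit: a point $p\in\bdr{\bigchi}\cap\bdr{\bigchi_r}$ is deleted from $\bdr{\bigchi}$ \emph{before} taking the closure, yet $p$ need not lie in $\bigchi_r$ when $\bigchi_r$ fails to be closed. A trajectory that escapes $\bigchi$ exactly through such a $p$ then never enters $\bigchi_r$, while nothing in the three hypotheses forbids this escape.

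First I would engineer the geometry so that a relatively-open piece of $\bdr{\bigchi}$ is swallowed by $\bdr{\bigchi_r}$ yet disjoint from $\bigchi_r$ itself. The cleanest instance is one-dimensional: take $\D{u}=-1$, $\bigchi\mnodefequiv[0,2]$, $\bigchi_r\mnodefequiv(0,1)$, and $\bigchi_0\mnodefequiv\{0\}$, all bounded. Then $\bdr{\bigchi}=\{0,2\}$, $\bdr{\bigchi_r}=\{0,1\}$, so $\closure{\bdr{\bigchi}\setminus\bdr{\bigchi_r}}=\{2\}$; the escape endpoint $0$ is \emph{not} among the points where $B>0$ is required. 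Next I would check the certificate $B(u)\mnodefeq u-1\in C^1(\reals)$: condition~\eqref{eq:init} holds since $B(0)=-1\le0$; condition~\eqref{eq:unsoundbdr} holds since $B(2)=1>0$; and $\Dp[u]{B}\,f(u)=-1<0$ everywhere, a fortiori on $\closure{\bigchi\setminus\bigchi_r}=\{0\}\cup[1,2]$. Yet the trajectory from $x_0=0$ is $x(t)=-t$, which lies in $\bigchi$ only at $t=0$ and never meets $\bigchi_r=(0,1)$, so the asserted eventuality property fails. If a full-dimensional $\bigchi_0$ is preferred, a planar variant does the same job: $\D{u}=0,\D{v}=-1$, $\bigchi\mnodefequiv[-2,2]\times[0,2]$, $\bigchi_r\mnodefequiv(-2,2)\times(0,1)$, $\bigchi_0\mnodefequiv\{2\}\times[0.2,0.8]$, and $B(u,v)\mnodefeq v-0.9$, with solutions escaping through the lower right edge $\{2\}\times[0,1)$, which lies in $\bdr{\bigchi_r}\setminus\bigchi_r$ and hence outside $\closure{\bdr{\bigchi}\setminus\bdr{\bigchi_r}}$.

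The main obstacle, and the thing that drives the entire construction, is satisfying all three hypotheses \emph{simultaneously} while breaking the conclusion: \eqref{eq:init} forces $B\le0$ on $\bigchi_0$ and \eqref{eq:unsoundbdr} forces $B>0$ on $\closure{\bdr{\bigchi}\setminus\bdr{\bigchi_r}}$, so $\bigchi_0$ must sit in the ``unprotected'' part of $\bdr{\bigchi}$ and the flow must leave straight through it without drifting into $\bigchi_r$. Once the geometry makes the escape boundary genuinely absorbed by $\bdr{\bigchi_r}$, a linear $B$ transverse to the flow discharges the derivative condition at once, so the verification is routine and all the care goes into the set-up. Finally I would connect the example to the corrected rule~\irref{PRQ}: its left premise $\dbox{\pevolvein{\D{x}=\genDE{x}}{\lnot{(\rfvar \land \ivr)}}}{\ivr}$ fails here because the solution leaves $\ivr$ before ever reaching $\rfvar\land\ivr$, pinpointing~\eqref{eq:unsoundbdr} as precisely the hypothesis that must be strengthened.
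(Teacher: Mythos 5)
Your proposal is correct and takes essentially the same route as the paper: the paper also refutes this conjecture by explicit counterexample (Counterexamples~\ref{cex:prajnarantzer1} and~\ref{cex:prajnarantzer2} in \rref{app:counterexamples}), exploiting exactly the flaw you identify in condition~\eqref{eq:unsoundbdr}, namely that a trajectory may escape $\bigchi$ through a point of $\bdr{\bigchi}\cap\bdr{\bigchi_r}$ that lies outside $\bigchi_r$ and outside the set where $B>0$ is required. Your one-dimensional and planar instances check out against all three hypotheses and do falsify the eventuality conclusion, matching the paper's strategy with different concrete data.
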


In~\cite[Corollary 1]{DBLP:conf/hybrid/PrajnaR05}, stronger conditions are required.
In particular, the sets $\bigchi_0,\bigchi_r,\bigchi$ are additionally required to be topologically open, and the inequality in~\rref{eq:init} is strict, i.e., $B(x) < 0$ instead of $B(x) \leq 0$.

The soundness errors in both of these liveness arguments stem from the condition~\rref{eq:unsoundbdr} being too permissive.
For example, notice that if the sets $\bdr{\bigchi}, \bdr{\bigchi_r}$ are equal then~\rref{eq:unsoundbdr} is vacuously true.
The first counterexample below applies for the requirements of~\cite[Theorem 3.5]{DBLP:journals/siamco/PrajnaR07}, while the second applies even for the more restrictive requirements of~\cite[Corollary 1]{DBLP:conf/hybrid/PrajnaR05}.

\begin{counterexample}
\label{cex:prajnarantzer1}
Let the system $\D{x}=\genDE{x}$ be $\D{\uvar}=0,\D{\vvar}=1$.
Let $\bigchi_r$ be the open unit disk characterized by $\uvar^2 + \vvar^2 < 1$, $\bigchi$ be the closed unit disk characterized by $\uvar^2+\vvar^2 \leq 1$, and $\bigchi_0$ be the single point characterized by $\uvar=0 \land \vvar=1$.
All of these sets are bounded.
Note that $\bdr{\bigchi} \setminus \bdr{\bigchi_r} = \emptyset$ since both topological boundaries are the unit circle $\uvar^2+\vvar^2=1$.
Let $B(\uvar,\vvar) \mnodefeq -\vvar$, so that $\Dp[x]{B}f(x) = \Dp[\uvar]{B}0 + \Dp[\vvar]{B}1 = -1 < 0$ and $B \leq 0$ on $\bigchi_0$.

All conditions of~\cite[Theorem 3.5]{DBLP:journals/siamco/PrajnaR07} are met but the eventuality property is false.
The trajectory from $\bigchi_0$ leaves $\bigchi$ immediately and never enters $\bigchi_r$.
This is visualized in~\rref{fig:cex} (Left).
\end{counterexample}

\begin{counterexample}
\label{cex:prajnarantzer2}
Let the system $\D{x}=\genDE{x}$ be $\D{\uvar}=0,\D{\vvar}=1$.
Let $\bigchi_r$ be the set characterized by the formula $\uvar^2 + \vvar^2 < 5 \land \vvar > 0$, $\bigchi$ be the set characterized by the formula $\uvar^2+\vvar^2 < 5 \land \vvar \neq 0$, and $\bigchi_0$ be the set characterized by the formula $\uvar^2+(\vvar+1)^2 < \frac{1}{2}$.
All of these sets are bounded and topologically open.
Let $B(\uvar,\vvar) \mnodefeq -\vvar +\uvar^2 - 2$, so that $\Dp[x]{B}f(x) = \Dp[\uvar]{B}0 + \Dp[\vvar]{B}1 = -1 < 0$, and $B < 0$ on $\bigchi_0$.
The set $\closure{\bdr{\bigchi} \setminus \bdr{\bigchi_r}}$ is characterized by formula $\uvar^2+\vvar^2=5 \land \vvar \leq 0$ and $B$ is strictly positive on this set.
These claims can be checked arithmetically, see~\rref{fig:cex} (Right) for a plot of the curve $B=0$.

All conditions of~\cite[Corollary 1]{DBLP:conf/hybrid/PrajnaR05} are met but the eventuality property is false.
Solutions starting in $\bigchi_0$ eventually enter $\bigchi_r$ but can only do so by leaving the domain constraint $\bigchi$ at $\vvar=0$, see~\rref{fig:cex} (Right).

\end{counterexample}

\begin{figure}
\centering
\includegraphics[width=0.27\textwidth]{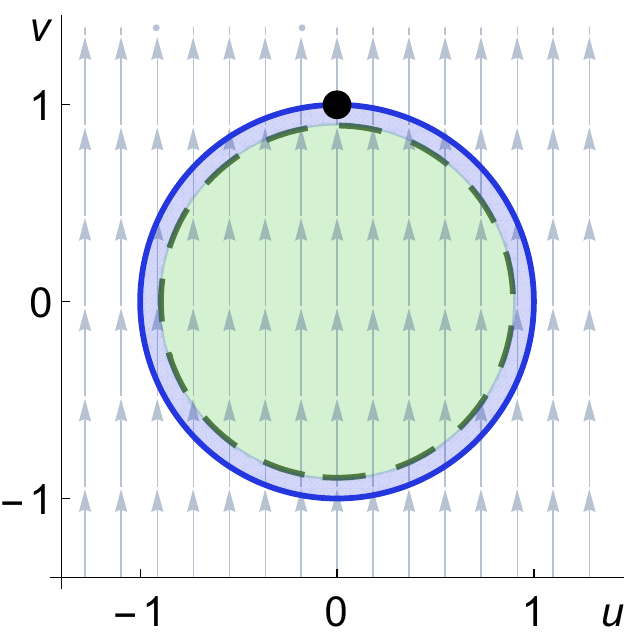}
\qquad\qquad
\includegraphics[width=0.27\textwidth]{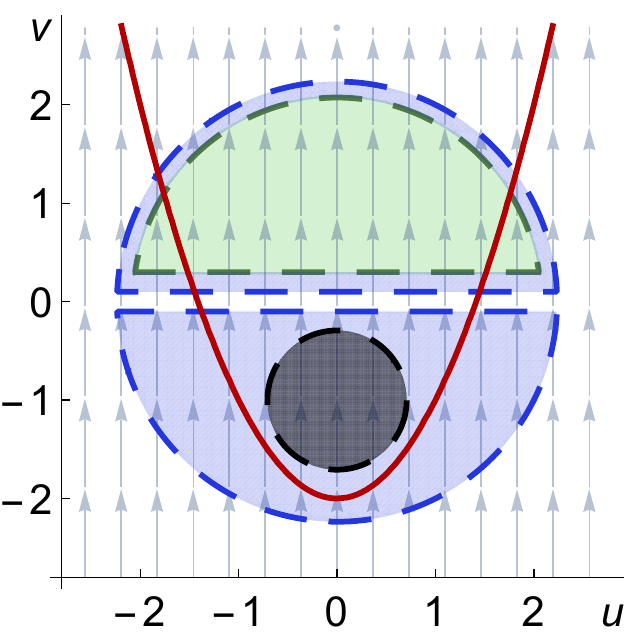}
\caption{\textbf{(Left)} Visualization of~\rref{cex:prajnarantzer1}. The solution from initial point $\uvar=0, \vvar=1$ ($\bigchi_0$, in black) leaves the domain unit disk ($\bigchi$, boundary in \bluec{blue}) immediately without ever reaching its interior ($\bigchi_r$, in \greenc{green} with dashed boundary). The interior is slightly shrunk for clarity in the visualization: the \bluec{blue} and \greenc{green} boundaries should actually overlap exactly.
\textbf{(Right)} Visualization of~\rref{cex:prajnarantzer2}.
Solutions from the initial set ($\bigchi_0$, in black with dashed boundary) eventually enter the goal region ($\bigchi_r$, in \greenc{green} with dashed boundary).
However, the domain ($\bigchi$, in \bluec{blue} with dashed boundary) shares an (open) boundary with $\bigchi_r$ at $\vvar=0$ which solutions are not allowed to cross.
As before, the sets are slightly shrunk for clarity in the visualization: the \bluec{blue} and \greenc{green} boundaries should actually overlap exactly.
The level curve $B = 0$ is plotted in \redc{red}.
All points above the curve satisfy $B < 0$, while all points below it satisfy $B > 0$.}
\label{fig:cex}
\end{figure}

\paragraph{Acknowledgments.} The authors thank the anonymous reviewers for their insightful comments and feedback.
The authors also thank members of the Logical Systems Lab at Carnegie Mellon University for their feedback on the \KeYmaeraX implementation, and Katherine Cordwell, Frank Pfenning, Andrew Sogokon, and the FM'19 anonymous reviewers for their feedback on the earlier conference version.
This material is based upon work supported by the Alexander von Humboldt Foundation and the AFOSR under grant number FA9550-16-1-0288.
The first author was also supported by A*STAR, Singapore.

The views and conclusions contained in this document are those of the authors and should not be interpreted as representing the official policies, either expressed or implied, of any sponsoring institution, the U.S. government or any other entity.

\bibliographystyle{halpha}
\bibliography{paper}

\newcommand{\etalchar}[1]{$^{#1}$}
\begin{thebibliography}{FMQ{\etalchar{+}}15}
\expandafter\ifx\csname url\endcsname\relax
  \def\url#1{\texttt{#1}}\fi
\expandafter\ifx\csname doi\endcsname\relax
  \def\doi#1{\burlalt{doi:#1}{http://dx.doi.org/#1}}\fi
\expandafter\ifx\csname urlprefix\endcsname\relax\def\urlprefix{URL }\fi
\expandafter\ifx\csname href\endcsname\relax
  \def\href#1#2{#2}\fi
\expandafter\ifx\csname burlalt\endcsname\relax
  \def\burlalt#1#2{\href{#2}{#1}}\fi

\bibitem[ADBS09]{ABATE2009150}
Alessandro Abate, Alessandro {{D{'}Innocenzo}}, Maria Domenica~Di Benedetto,
  and Shankar Sastry.
\newblock Understanding deadlock and livelock behaviors in hybrid control
  systems.
\newblock {\em Nonlinear Anal. Hybrid Syst.}, 3(2):150 -- 162, 2009.
\newblock \doi{10.1016/j.nahs.2008.12.005}.

\bibitem[Alu15]{10.2307/j.ctt17kkb0d}
Rajeev Alur.
\newblock {\em Principles of Cyber-Physical Systems}.
\newblock MIT Press, 2015.

\bibitem[BAB16]{DBLP:books/crc/p/ButlerAB16}
Michael~J. Butler, Jean{-}Raymond Abrial, and Richard Banach.
\newblock Modelling and refining hybrid systems in {Event-B} and {Rodin}.
\newblock In Luigia Petre and Emil Sekerinski, editors, {\em From Action
  Systems to Distributed Systems - The Refinement Approach}, pages 29--42.
  Chapman and Hall/CRC, 2016.
\newblock \doi{10.1201/b20053}.

\bibitem[BCR98]{Bochnak1998}
Jacek Bochnak, Michel Coste, and Marie-Fran{\c{c}}oise Roy.
\newblock {\em Real Algebraic Geometry}.
\newblock Springer, Heidelberg, 1998.
\newblock \doi{10.1007/978-3-662-03718-8}.

\bibitem[BFP19]{DBLP:conf/cade/BohrerFP19}
Brandon Bohrer, Manuel Fern{\'{a}}ndez, and Andr{\'{e}} Platzer.
\newblock dl\({}_{\mbox{{\(\iota\)}}}\): Definite descriptions in differential
  dynamic logic.
\newblock In Pascal Fontaine, editor, {\em CADE}, volume 11716 of {\em LNCS},
  pages 94--110. Springer, 2019.
\newblock \doi{10.1007/978-3-030-29436-6\_6}.

\bibitem[BTM{\etalchar{+}}19]{DBLP:journals/ral/BohrerTMSP19}
Brandon Bohrer, Yong~Kiam Tan, Stefan Mitsch, Andrew Sogokon, and Andr{\'{e}}
  Platzer.
\newblock A formal safety net for waypoint-following in ground robots.
\newblock {\em {IEEE} Robot. Autom. Lett.}, 4(3):2910--2917, 2019.
\newblock \doi{10.1109/LRA.2019.2923099}.

\bibitem[BvW98]{DBLP:books/daglib/0096285}
Ralph{-}Johan Back and Joakim von Wright.
\newblock {\em Refinement Calculus - {A} Systematic Introduction}.
\newblock Springer, 1998.
\newblock \doi{10.1007/978-1-4612-1674-2}.

\bibitem[C{\'{A}}S13]{DBLP:conf/cav/ChenAS13}
Xin Chen, Erika {\'{A}}brah{\'{a}}m, and Sriram Sankaranarayanan.
\newblock Flow*: An analyzer for non-linear hybrid systems.
\newblock In Natasha Sharygina and Helmut Veith, editors, {\em CAV}, volume
  8044 of {\em LNCS}, pages 258--263, Heidelberg, 2013. Springer.
\newblock \doi{10.1007/978-3-642-39799-8\_18}.

\bibitem[Chi06]{Chicone2006}
Carmen Chicone.
\newblock {\em Ordinary Differential Equations with Applications}.
\newblock Springer, New York, second edition, 2006.
\newblock \doi{10.1007/0-387-35794-7}.

\bibitem[DAPS19]{DBLP:conf/tase/DupontAPS19}
Guillaume Dupont, Yamine~A{\"{\i}}t Ameur, Marc Pantel, and Neeraj~Kumar Singh.
\newblock Handling refinement of continuous behaviors: {A} proof based approach
  with {Event-B}.
\newblock In Dominique M{\'{e}}ry and Shengchao Qin, editors, {\em TASE}, pages
  9--16. {IEEE}, 2019.
\newblock \doi{10.1109/TASE.2019.00-25}.

\bibitem[DFPP18]{DoyenFPP18}
Laurent Doyen, Goran Frehse, George~J. Pappas, and Andr{\'e} Platzer.
\newblock Verification of hybrid systems.
\newblock In Edmund~M. Clarke, Thomas~A. Henzinger, Helmut Veith, and Roderick
  Bloem, editors, {\em Handbook of Model Checking}, pages 1047--1110. Springer,
  Cham, 2018.
\newblock \doi{10.1007/978-3-319-10575-8\_30}.

\bibitem[DM12]{DBLP:conf/hybrid/DuggiralaM12}
Parasara~Sridhar Duggirala and Sayan Mitra.
\newblock {Lyapunov} abstractions for inevitability of hybrid systems.
\newblock In Thao Dang and Ian~M. Mitchell, editors, {\em HSCC}, pages
  115--124, New York, 2012. {ACM}.
\newblock \doi{10.1145/2185632.2185652}.

\bibitem[FGD{\etalchar{+}}11]{DBLP:conf/cav/FrehseGDCRLRGDM11}
Goran Frehse, Colas~Le Guernic, Alexandre Donz{\'{e}}, Scott Cotton, Rajarshi
  Ray, Olivier Lebeltel, Rodolfo Ripado, Antoine Girard, Thao Dang, and Oded
  Maler.
\newblock {SpaceEx}: Scalable verification of hybrid systems.
\newblock In Ganesh Gopalakrishnan and Shaz Qadeer, editors, {\em CAV}, volume
  6806 of {\em LNCS}, pages 379--395, Heidelberg, 2011. Springer.
\newblock \doi{10.1007/978-3-642-22110-1\_30}.

\bibitem[FMBP17]{DBLP:conf/itp/FultonMBP17}
Nathan Fulton, Stefan Mitsch, Brandon Bohrer, and Andr{\'{e}} Platzer.
\newblock Bellerophon: Tactical theorem proving for hybrid systems.
\newblock In Mauricio Ayala{-}Rinc{\'{o}}n and C{\'{e}}sar~A. Mu{\~{n}}oz,
  editors, {\em ITP}, volume 10499 of {\em LNCS}, pages 207--224, Cham, 2017.
  Springer.
\newblock \doi{10.1007/978-3-319-66107-0\_14}.

\bibitem[FMQ{\etalchar{+}}15]{DBLP:conf/cade/FultonMQVP15}
Nathan Fulton, Stefan Mitsch, Jan{-}David Quesel, Marcus V{\"{o}}lp, and
  Andr{\'{e}} Platzer.
\newblock {{KeYmaera}} {X:} an axiomatic tactical theorem prover for hybrid
  systems.
\newblock In Amy~P. Felty and Aart Middeldorp, editors, {\em CADE}, volume 9195
  of {\em LNCS}, pages 527--538, Cham, 2015. Springer.
\newblock \doi{10.1007/978-3-319-21401-6\_36}.

\bibitem[FyMS20]{DBLP:conf/RelMiCS/FosterMS20}
Simon Foster, Jonathan Juli{\'{a}}n~Huerta y~Munive, and Georg Struth.
\newblock Differential {Hoare} logics and refinement calculi for hybrid systems
  with {Isabelle/HOL}.
\newblock In Uli Fahrenberg, Peter Jipsen, and Michael Winter, editors, {\em
  RAMiCS}, volume 12062 of {\em LNCS}, pages 169--186. Springer, 2020.
\newblock \doi{10.1007/978-3-030-43520-2\_11}.

\bibitem[GBC08]{DBLP:journals/entcs/GracaBC08}
Daniel~S. Gra{\c{c}}a, Jorge Buescu, and Manuel~Lameiras Campagnolo.
\newblock Boundedness of the domain of definition is undecidable for polynomial
  {ODEs}.
\newblock {\em Electron. Notes Theor. Comput. Sci.}, 202:49--57, 2008.
\newblock \doi{10.1016/j.entcs.2008.03.007}.

\bibitem[GCB08]{GRACA2008330}
Daniel~S. Gra\c{c}a, Manuel~L. Campagnolo, and Jorge Buescu.
\newblock Computability with polynomial differential equations.
\newblock {\em Adv. Appl. Math.}, 40(3):330 -- 349, 2008.
\newblock \doi{10.1016/j.aam.2007.02.003}.

\bibitem[GP14]{DBLP:conf/tacas/GhorbalP14}
Khalil Ghorbal and Andr{\'{e}} Platzer.
\newblock Characterizing algebraic invariants by differential radical
  invariants.
\newblock In Erika {\'{A}}brah{\'{a}}m and Klaus Havelund, editors, {\em
  TACAS}, volume 8413 of {\em LNCS}, pages 279--294, Heidelberg, 2014.
  Springer.
\newblock \doi{10.1007/978-3-642-54862-8\_19}.

\bibitem[GP17]{DBLP:conf/hybrid/GoubaultP17}
Eric Goubault and Sylvie Putot.
\newblock Forward inner-approximated reachability of non-linear continuous
  systems.
\newblock In Goran Frehse and Sayan Mitra, editors, {\em HSCC}, pages 1--10,
  New York, 2017. {ACM}.
\newblock \doi{10.1145/3049797.3049811}.

\bibitem[Har79]{DBLP:books/sp/Harel79}
David Harel.
\newblock {\em First-Order Dynamic Logic}, volume~68 of {\em LNCS}.
\newblock Springer, 1979.
\newblock \doi{10.1007/3-540-09237-4}.

\bibitem[HC08]{MR2381711}
Wassim~M. Haddad and VijaySekhar Chellaboina.
\newblock {\em Nonlinear Dynamical Systems and Control: A Lyapunov-based
  Approach}.
\newblock Princeton University Press, Princeton, NJ, 2008.

\bibitem[Hen96]{DBLP:conf/lics/Henzinger96}
Thomas~A. Henzinger.
\newblock The theory of hybrid automata.
\newblock In {\em LICS}, pages 278--292. {IEEE} Computer Society, 1996.
\newblock \doi{10.1109/LICS.1996.561342}.

\bibitem[Kha92]{MR1201326}
Hassan~K. Khalil.
\newblock {\em Nonlinear Systems}.
\newblock Macmillan Publishing Company, New York, 1992.

\bibitem[Koz97]{DBLP:journals/toplas/Kozen97}
Dexter Kozen.
\newblock Kleene algebra with tests.
\newblock {\em {ACM} Trans. Program. Lang. Syst.}, 19(3):427--443, 1997.
\newblock \doi{10.1145/256167.256195}.

\bibitem[LIC12]{DBLP:conf/lics/2012}
{\em Logic in Computer Science (LICS), 2012 27th Annual IEEE Symposium on}, Los
  Alamitos, 2012. IEEE.

\bibitem[LP16]{DBLP:conf/lics/LoosP16}
Sarah~M. Loos and Andr{\'{e}} Platzer.
\newblock Differential refinement logic.
\newblock In Martin Grohe, Eric Koskinen, and Natarajan Shankar, editors, {\em
  LICS}, pages 505--514. {ACM}, 2016.
\newblock \doi{10.1145/2933575.2934555}.

\bibitem[LZZ11]{DBLP:conf/emsoft/LiuZZ11}
Jiang Liu, Naijun Zhan, and Hengjun Zhao.
\newblock Computing semi-algebraic invariants for polynomial dynamical systems.
\newblock In Samarjit Chakraborty, Ahmed Jerraya, Sanjoy~K. Baruah, and
  Sebastian Fischmeister, editors, {\em EMSOFT}, pages 97--106, New York, 2011.
  {ACM}.
\newblock \doi{10.1145/2038642.2038659}.

\bibitem[MP92]{DBLP:books/daglib/0077033}
Zohar Manna and Amir Pnueli.
\newblock {\em The Temporal Logic of Reactive and Concurrent Systems -
  Specification}.
\newblock Springer, New York, 1992.
\newblock \doi{10.1007/978-1-4612-0931-7}.

\bibitem[OL82]{DBLP:journals/toplas/OwickiL82}
Susan~S. Owicki and Leslie Lamport.
\newblock Proving liveness properties of concurrent programs.
\newblock {\em {ACM} Trans. Program. Lang. Syst.}, 4(3):455--495, 1982.
\newblock \doi{10.1145/357172.357178}.

\bibitem[PJP07]{DBLP:journals/tac/PrajnaJP07}
Stephen Prajna, Ali Jadbabaie, and George~J. Pappas.
\newblock A framework for worst-case and stochastic safety verification using
  barrier certificates.
\newblock {\em {IEEE} Trans. Automat. Contr.}, 52(8):1415--1428, 2007.
\newblock \doi{10.1109/TAC.2007.902736}.

\bibitem[Pla10]{DBLP:journals/logcom/Platzer10}
Andr{\'e} Platzer.
\newblock Differential-algebraic dynamic logic for differential-algebraic
  programs.
\newblock {\em J. Log. Comput.}, 20(1):309--352, 2010.
\newblock \doi{10.1093/logcom/exn070}.

\bibitem[Pla12a]{DBLP:conf/lics/Platzer12b}
Andr{\'{e}} Platzer.
\newblock The complete proof theory of hybrid systems.
\newblock In LICS \cite{DBLP:conf/lics/2012}, pages 541--550.
\newblock \doi{10.1109/LICS.2012.64}.

\bibitem[Pla12b]{DBLP:conf/lics/Platzer12a}
Andr{\'{e}} Platzer.
\newblock Logics of dynamical systems.
\newblock In LICS \cite{DBLP:conf/lics/2012}, pages 13--24.
\newblock \doi{10.1109/LICS.2012.13}.

\bibitem[Pla17a]{DBLP:journals/jar/Platzer17}
Andr{\'e} Platzer.
\newblock A complete uniform substitution calculus for differential dynamic
  logic.
\newblock {\em J. Autom. Reasoning}, 59(2):219--265, 2017.
\newblock \doi{10.1007/s10817-016-9385-1}.

\bibitem[Pla17b]{DBLP:journals/tocl/Platzer17}
Andr{\'{e}} Platzer.
\newblock Differential hybrid games.
\newblock {\em {ACM} Trans. Comput. Log.}, 18(3):19:1--19:44, 2017.
\newblock \doi{10.1145/3091123}.

\bibitem[Pla18]{Platzer18}
Andr{\'e} Platzer.
\newblock {\em Logical Foundations of Cyber-Physical Systems}.
\newblock Springer, Cham, 2018.
\newblock \doi{10.1007/978-3-319-63588-0}.

\bibitem[PP02]{1184414}
A.~{Papachristodoulou} and S.~{Prajna}.
\newblock On the construction of {Lyapunov} functions using the sum of squares
  decomposition.
\newblock In {\em CDC}, volume~3, pages 3482--3487. IEEE, 2002.
\newblock \doi{10.1109/CDC.2002.1184414}.

\bibitem[PQ08]{DBLP:conf/cade/PlatzerQ08}
Andr{\'{e}} Platzer and Jan{-}David Quesel.
\newblock {KeYmaera}: {A} hybrid theorem prover for hybrid systems (system
  description).
\newblock In Alessandro Armando, Peter Baumgartner, and Gilles Dowek, editors,
  {\em IJCAR}, volume 5195 of {\em LNCS}, pages 171--178. Springer, 2008.
\newblock \doi{10.1007/978-3-540-71070-7\_15}.

\bibitem[PR05]{DBLP:conf/hybrid/PrajnaR05}
Stephen Prajna and Anders Rantzer.
\newblock Primal-dual tests for safety and reachability.
\newblock In Manfred Morari and Lothar Thiele, editors, {\em HSCC}, volume 3414
  of {\em LNCS}, pages 542--556, Heidelberg, 2005. Springer.
\newblock \doi{10.1007/978-3-540-31954-2\_35}.

\bibitem[PR07]{DBLP:journals/siamco/PrajnaR07}
Stephen Prajna and Anders Rantzer.
\newblock Convex programs for temporal verification of nonlinear dynamical
  systems.
\newblock {\em SIAM J. Control Optim.}, 46(3):999--1021, 2007.
\newblock \doi{10.1137/050645178}.

\bibitem[PT20]{DBLP:journals/jacm/PlatzerT20}
Andr{\'{e}} Platzer and Yong~Kiam Tan.
\newblock Differential equation invariance axiomatization.
\newblock {\em J. ACM}, 67(1), 2020.
\newblock \doi{10.1145/3380825}.

\bibitem[PW06]{DBLP:conf/hybrid/PodelskiW06}
Andreas Podelski and Silke Wagner.
\newblock Model checking of hybrid systems: From reachability towards
  stability.
\newblock In Jo{\~{a}}o~P. Hespanha and Ashish Tiwari, editors, {\em HSCC},
  volume 3927 of {\em LNCS}, pages 507--521, Heidelberg, 2006. Springer.
\newblock \doi{10.1007/11730637\_38}.

\bibitem[RRS03]{DBLP:journals/tcs/RonkkoRS03}
Mauno R{\"{o}}nkk{\"{o}}, Anders~P. Ravn, and Kaisa Sere.
\newblock Hybrid action systems.
\newblock {\em Theor. Comput. Sci.}, 290(1):937--973, 2003.
\newblock \doi{10.1016/S0304-3975(02)00547-9}.

\bibitem[RS10]{DBLP:journals/siamco/RatschanS10}
Stefan Ratschan and Zhikun She.
\newblock Providing a basin of attraction to a target region of polynomial
  systems by computation of {Lyapunov}-like functions.
\newblock {\em SIAM J. Control Optim.}, 48(7):4377--4394, 2010.
\newblock \doi{10.1137/090749955}.

\bibitem[Rud76]{MR0385023}
Walter Rudin.
\newblock {\em Principles of Mathematical Analysis}.
\newblock McGraw-Hill, third edition, 1976.

\bibitem[SJ15]{DBLP:conf/fm/SogokonJ15}
Andrew Sogokon and Paul~B. Jackson.
\newblock Direct formal verification of liveness properties in continuous and
  hybrid dynamical systems.
\newblock In Nikolaj Bj{\o}rner and Frank~S. de~Boer, editors, {\em FM}, volume
  9109 of {\em LNCS}, pages 514--531, Cham, 2015. Springer.
\newblock \doi{10.1007/978-3-319-19249-9\_32}.

\bibitem[SJJ19]{DBLP:journals/jar/SogokonJJ19}
Andrew Sogokon, Paul~B. Jackson, and Taylor~T. Johnson.
\newblock Verifying safety and persistence in hybrid systems using flowpipes
  and continuous invariants.
\newblock {\em J. Autom. Reasoning}, 63(4):1005--1029, 2019.
\newblock \doi{10.1007/s10817-018-9497-x}.

\bibitem[Sog16]{Sogokon16}
Andrew Sogokon.
\newblock {\em Direct methods for deductive verification of temporal properties
  in continuous dynamical systems}.
\newblock PhD thesis, Laboratory for Foundations of Computer Science, School of
  Informatics, University of Edinburgh, 2016.

\bibitem[TP19]{DBLP:conf/fm/TanP19}
Yong~Kiam Tan and Andr{\'{e}} Platzer.
\newblock An axiomatic approach to liveness for differential equations.
\newblock In Maurice~H. ter Beek, Annabelle McIver, and Jos{\'{e}}~N. Oliveira,
  editors, {\em FM}, volume 11800 of {\em LNCS}, pages 371--388. Springer,
  2019.
\newblock \doi{10.1007/978-3-030-30942-8\_23}.

\bibitem[TT10]{DBLP:conf/emsoft/TalyT10}
Ankur Taly and Ashish Tiwari.
\newblock Switching logic synthesis for reachability.
\newblock In Luca~P. Carloni and Stavros Tripakis, editors, {\em EMSOFT}, pages
  19--28, New York, 2010. {ACM}.
\newblock \doi{10.1145/1879021.1879025}.

\bibitem[Wal98]{Walter1998}
Wolfgang Walter.
\newblock {\em Ordinary Differential Equations}.
\newblock Springer, New York, 1998.
\newblock \doi{10.1007/978-1-4612-0601-9}.

\bibitem[WZZ15]{DBLP:conf/icfem/WangZZ15}
Shuling Wang, Naijun Zhan, and Liang Zou.
\newblock An improved {HHL} prover: An interactive theorem prover for hybrid
  systems.
\newblock In Michael~J. Butler, Sylvain Conchon, and Fatiha Za{\"{\i}}di,
  editors, {\em ICFEM}, volume 9407 of {\em LNCS}, pages 382--399. Springer,
  2015.
\newblock \doi{10.1007/978-3-319-25423-4\_25}.

\bibitem[ZJLS01]{doi:10.1002/rnc.592}
Jun Zhang, Karl~Henrik Johansson, John Lygeros, and Shankar Sastry.
\newblock Zeno hybrid systems.
\newblock {\em Int. J. Robust Nonlinear Control.}, 11(5):435--451, 2001.
\newblock \doi{10.1002/rnc.592}.

\end{thebibliography}

\end{document}